\newtheorem{theorem}{Theorem}
\newtheorem*{theorem*}{Theorem}
\newtheorem{corollary}[theorem]{Corollary}
\newtheorem*{conjecture*}{Conjecture}
\newtheorem{lemma}[theorem]{Lemma}
\newtheorem*{claim*}{Claim}
\newtheorem{proposition}[theorem]{Proposition}
\newtheorem{observation}[theorem]{Observation}
\newtheorem{definition}[theorem]{Definition}
\theoremstyle{definition}
\theoremstyle{remark}
\newtheorem*{remark}{Remark}
\newcommand{\mech}{\text{\sc Mech}}
\newcommand{\opt}{\text{\sc Opt}}
\newcommand{\x}{\mathbf{x}}
\newcommand{\R}{\mathbb{R}}
\newcounter{note}[section]
\newcommand{\ZP}{\mathbb{Z}_{>0}}
\newcommand{\fl}[2]{\lfloor #1 \rfloor_{#2}}
\newcommand{\ceil}[2]{\lceil #1 \rceil_{#2}}
\newcommand{\apsi}{\underline\psi}
\title{A proof of the Nisan-Ronen conjecture}
\author{George Christodoulou\thanks{School of Informatics,
     Aristotle University of Thessaloniki, Greece.  Email:
     \texttt{gichristo@csd.auth.gr} } \and Elias Koutsoupias \thanks{Department of Computer Science, University
     of Oxford, UK. Email: \texttt{elias@cs.ox.ac.uk}} \and Annam{\'a}ria
   Kov{\'a}cs\thanks{Department of Informatics, Goethe University,
     Frankfurt M., Germany. Email: \texttt{panni@cs.uni-frankfurt.de}}} 
\date{}
\begin{document}

\maketitle

\begin{abstract}
  Noam Nisan and Amir Ronen conjectured that the best approximation ratio of deterministic truthful mechanisms for makespan-minimization for $n$ unrelated machines is $n$. This work validates the conjecture.
  
\end{abstract}

\section{Introduction}

This work resolves the oldest open problem in {\em algorithmic mechanism design}, the Nisan-Ronen conjecture. The seminal work of Nisan and Ronen~\cite{NR01} set the foundations of the field of {\em algorithmic mechanism design} by probing the computational and information-theoretic aspects of {\em mechanism design}. Mechanism design, a celebrated branch of game theory and microeconomics, studies the design of algorithms (called mechanisms) in environments where the input is privately held and provided by selfish participants. A mechanism for an optimization problem, on top of the traditional algorithmic goal (that assumes knowledge of the input), bears the extra burden of providing incentives to the participants to report their true input.

One of the main thrusts in this research area is to demarcate the limitations imposed by truthfulness on algorithms. To what extent are mechanisms less powerful than traditional algorithms? This question was crystallized by the widely influential Nisan-Ronen conjecture for the scheduling problem, which asserts that no deterministic mechanism for $n$ machines can have approximation ratio better than $n$.

The objective of the scheduling problem is to minimize the makespan of allocating $m$ tasks to $n$ unrelated machines, where each machine $i$ needs $t_{ij}$ units of time to process task $j$. The problem combines various interesting properties. First, it belongs to the most challenging and unexplored area of {\em multi-dimensional} mechanism design, as the private information (the $t_{ij}$'s) is multi-dimensional. In contrast, {\em related} machines scheduling belongs to {\em single-dimensional} mechanism design, which is well-understood, and for which the power of truthful mechanisms does not substantially differ from the best non-truthful algorithms: not only truthful mechanisms can compute exact optimal solutions (if one disregards computational issues~\cite{AT01}), but a truthful PTAS exists~\cite{CK13}. Second, the objective of the scheduling problem has a min-max objective, which from the mechanism design point of view is much more challenging, as opposed to the min-sum objective achievable by the famous VCG~\cite{Vic61,Cla71,Gro73} mechanism. VCG is truthful and can be applied to the scheduling problem, but it achieves a very poor approximation ratio, equal to the number of machines $n$~\cite{NR01}.

Is there a better mechanism for scheduling than VCG? Nisan and Ronen~\cite{NR01} conjectured that the answer should be negative, but for the past two decades, the question has remained open. In this work we validate the Nisan-Ronen conjecture.

\begin{theorem} \label{thm:nrtheorem}
  There is no deterministic truthful mechanism with approximation ratio better than $n$ for the problem of scheduling $n$ unrelated machines.
\end{theorem}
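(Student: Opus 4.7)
The plan is to exhibit, for every $\epsilon>0$, an input on $n$ machines (with possibly many tasks) on which every deterministic truthful mechanism $\mech$ has makespan at least $(n-\epsilon)\cdot\opt$. The starting point is the classical characterization of truthful mechanisms for unrelated machines via weak monotonicity, or equivalently via the taxation principle: for each machine $i$ and any fixed declarations $t_{-i}$ of the other machines, there is a price function $p_i(\,\cdot\,;t_{-i})$ over subsets of tasks such that $\mech$ assigns to $i$ some set $S_i \in \argmin_{S}\bigl\{\sum_{j\in S}t_{ij} + p_i(S;t_{-i})\bigr\}$. This turns the design space into monotone set-valued allocations, and the lower bound becomes a structural statement about such allocations.

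Building on the sequence of earlier constructions that established lower bounds of $1+\sqrt{2}$, $1+\phi$, and $\approx 2.755$, I would begin from a base ``gadget'' instance in which the optimum is a balanced assignment of tiny makespan, and weak monotonicity forces $\mech$ to select one of a small family of rigid allocations. I would then couple $n$ copies of this gadget (one per machine) and perturb the declared type of a single machine at a time. Each perturbation should create a dichotomy: either the mechanism sticks to the rigid allocation and the taxation inequalities pile an extra unit of load onto one fixed machine, or it deviates, in which case the makespan immediately blows up by a factor approaching $n$. Aggregating over $n$ such perturbations through an averaging or LP-duality argument, the total extra load summed over the machines should be at least $n-o(1)$, so some single machine is forced to carry load $(n-\epsilon)\cdot\opt$.

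The hardest step, in my view, is going from a constant-factor lower bound to the tight bound $n$, matching VCG. Truthfulness is an agent-by-agent constraint, but all prior attempts control only one machine at a time and end up losing a constant factor per coordinate; to reach $n-\epsilon$ one needs an instance (likely with a number of tasks growing fast in $n$) on which the per-agent taxation inequalities glue into a genuinely global rigidity across all $n$ machines simultaneously. Executing this amortization tightly — so that work cannot be redistributed among the remaining $n-1$ machines without violating some monotonicity inequality on the perturbed instances — is, I expect, where the technical core of the proof lies.
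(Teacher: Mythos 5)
There is a genuine gap: what you have written is a research plan that stops exactly where the paper's work begins, and the two steps you leave as ``the technical core'' are precisely the ones that are not routine. First, the amortization you hope for does not follow from per-machine taxation inequalities on a generic coupled gadget, because of the multi-player obstacle: when you perturb machine $i$'s declaration, Lemma-style monotonicity arguments only pin down $i$'s own allocation on the perturbed tasks; the mechanism is free to reshuffle those tasks among the other $n-1$ machines, which destroys the ``rigid allocation'' half of your dichotomy. The paper dodges this by working in graph/multi-graph settings where every task is finite for only two machines (a root and one leaf), so reallocation away from the root has a unique destination. Second, and more fundamentally, knowing each task's critical value $\psi_{i,j}^e$ for the root does not imply that setting all root values simultaneously just below these thresholds allocates all tasks to the root: the root's allocation region is a polyhedron that may have diagonal ``bundling'' facets, so the corner point can lie outside it (this is exactly the distinction between a box and a chopped-off box in the paper). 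Your proposal implicitly assumes this simultaneous (box) behavior when it asserts that the taxation inequalities ``pile an extra unit of load onto one fixed machine''; establishing it occupies the bulk of the paper and requires the characterization of $2\times 2$ mechanisms (relaxed affine minimizers versus task-independent mechanisms), multi-stars of enormous multiplicity, a continuity/genericity requirement, and an induction on the number of leaves.

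For comparison, the part of your plan that could be salvaged is the averaging idea: the paper does use an averaging argument, but it is run over boundary (critical-value) functions via Young's inequality on a random multi-clique, yielding a star whose thresholds sum to at least $(1-3\varepsilon)(n-1)z$ (the Nice Multi-Star theorem); the extra $+1$ to reach ratio $n$ comes from a loop at the root, not from a tighter amortization. Without (a) the restriction to two-machine tasks, (b) a quantitative mechanism for extracting a star with large total threshold, and (c) a proof that the thresholds can be attained simultaneously (the Box theorem), the proposal does not constitute a proof of the conjecture.
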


This result resolves the Nisan-Ronen conjecture. Over the years, various research attempts with limited success have been made to improve the original lower bound of $2$ by Nisan and Ronen. For example, the bound was improved to $2.41$ in~\cite{ChrKouVid09}, and later to $2.61$ in~\cite{KV07}, which held as the best bound for over a decade. More recently, the lower bound was improved to $2.75$ by Giannakopoulos, Hamerl, and Po\c{c}as~\cite{giannakopoulos2020}, and then to $3$ by Dobzinski and Shaulker~\cite{DS20}. These improved bounds represented progress in the field, but they left a huge gap between the lower and upper bounds. The first non-constant lower bound for the truthful scheduling problem was given in~\cite{CKK21b}, which showed a lower bound of $\Omega(\sqrt{n})$.

\subsection{Further Related work}

A line of research studies randomized mechanisms with slightly
improved approximation guarantees. Note that there are two notions of
truthfulness for randomized mechanisms; a mechanism is {\em
  universally truthful} if it is defined as a probability distribution
over deterministic truthful mechanisms, and it is {\em
  truthful-in-expectation}, if in expectation no player can benefit by
lying. In \cite{NR01}, a universally truthful mechanism was proposed
for the case of two machines, and was later extended to the case of
$n$ machines by Mu'alem and Schapira~\cite{MualemS18} with an
approximation guarantee of $0.875n$, which was later improved to
$0.837n$ by \cite{LuYu08}. Lu and Yu~\cite{LuY08a} showed a
truthful-in-expectation mechanism with an approximation guarantee of
$(n+5)/2$.
Mu'alem and Schapira~\cite{MualemS18}, showed a lower bound of $2-1/n$, for both types of randomized truthful mechanisms. In~\cite{CKK10}, the lower bound was extended to fractional mechanisms, where each task can be fractionally allocated to multiple machines. They also showed a fractional mechanism with a guarantee of $(n+1)/2$.
Even for the special case of two machines the upper and lower bounds
are not tight~\cite{LuY08a,ChenDZ15}.

In the Bayesian setting, Daskalakis and Weinberg \cite{DaskalakisW15} showed a mechanism that is at most a factor of 2 from the {\em optimal truthful mechanism}, but not with respect to optimal makespan. The work of \cite{ChawlaHMS13} provided bounds of prior-independent mechanisms (where the input comes from a probability distribution unknown to the mechanism). Giannakopoulos and Kyropoulou~\cite{GiannakopoulosK17} showed that the VCG mechanism achieves an approximation ratio of $O( \log n/\log \log n )$ under some distributional and symmetry assumptions. Finally, in a recent work, \cite{CKK21} showed positive results for {\em graph} settings where each task can only be allocated to at most two machines. The construction of the lower bound of this work uses multi-cliques and multi-stars.

There is also work on variants of the scheduling problem and also of
special cases, for which significant results have been
obtained. Ashlagi, Dobzinski, and Lavi~\cite{ADL09}, resolved a restricted version
of the Nisan-Ronen conjecture, for the special but natural class of
{\em anonymous} mechanisms. Lavi and Swamy~\cite{LaviS09} studied a
restricted input domain which however retains the multi-dimensional
flavour of the setting. They considered inputs with only two possible
values ``low'' and ``high'' and they showed an elegant deterministic
mechanism with an approximation factor of 2. They also showed that
even for this setting achieving the optimal makespan is not possible
under truthfulness, and provided a lower bound of
$11/10$. Yu~\cite{Yu09} extended the results for a range of values,
and~\cite{Auletta0P15} studied multi-dimensional
domains where the private information of the machines is a single bit.

\section{Preliminaries}
\label{sec:preliminaries}

In the unrelated machines scheduling problem, a set $M$ of $m$ tasks
needs to be scheduled on a set $N$ of $n$ machines. The processing
time or value that each machine $i$ needs to process task $j$ is
$t_{ij}$, and the completion time of machine $i$ for a subset $S$ of
tasks is equal to the sum of the individual task values
$t_i(S)=\sum_{j\in S}t_{ij}$. The objective is to find an allocation
of tasks to machines that minimize the \emph{makespan}, that is, the
maximum completion time.
 
\subsection{Mechanism design setting}

We assume that each machine $i\in N$ is controlled by a selfish agent (player) that is unwilling to process tasks and the values $t_{ij}$ is private information known only to them (also called the {\em type} of agent $i$).  The set $\mathcal{T}_i$ of possible types of agent $i$ consists of all vectors $b_i=(b_{i1},\ldots, b_{im})\in \mathbb{R}_+^{m}.$ Let also $\mathcal{T} = \times_{i\in N}\mathcal{T}_i$ denote the space of type profiles.

A mechanism defines for each player $i$ a set $\mathcal{B}_i$ of available strategies the player can choose from. We consider \emph{direct revelation} mechanisms, i.e., $\mathcal{B}_i=\mathcal{T}_i$ for all $i,$ meaning that the players strategies are to simply report their types to the mechanism. Each player $i$ provides a \emph{bid} $b_i\in \mathcal{T}_i$, which not necessarily matches the true type $t_i$, if this serves their interests. A mechanism $(A,\mathcal P)$ consists of two parts:
\paragraph{An allocation algorithm:} The allocation algorithm $A$ allocates the tasks to machines based on the players' inputs (bid vector) $b=(b_1,\ldots ,b_n)$. Let $\mathcal{A}$ be the set of all possible partitions of $m$ tasks to $n$ machines. The allocation function $A:\mathcal{T}\rightarrow \mathcal{A}$ partitions the tasks into the $n$ machines; we denote by $A_i(b)$ the subset of tasks assigned to machine $i$ for bid vector $b=(b_1,\ldots,b_n)$.
\paragraph{A payment scheme:} The payment scheme $\mathcal P=(\mathcal P_1,\ldots,\mathcal P_n)$ determines the payments, which also depends on the bid vector $b$. The functions $\mathcal P_1,\ldots,\mathcal P_n$ stand for the payments that the mechanism hands to each agent, i.e., $\mathcal P_i:\mathcal{T}\rightarrow \R$.

\medskip

The {\em utility} $u_i$ of a player $i$ is the payment that they get minus the {\em actual} time that they need to process the set of tasks assigned to them, $u_i(b)=\mathcal P_i(b)-t_i(A_i(b))$, where $t_i(A_i(b))=\sum_{j\in A_i(b)} t_{i,j}$. We are interested in \emph{truthful} mechanisms. A mechanism is truthful, if for every player, reporting his true type is a \emph{dominant strategy}. Formally,
$$u_i(t_i,b_{-i})\geq u_i(t'_i,b_{-i}),\qquad \forall i\in [n],\;\;
t_i,t'_i\in \mathcal{T}_i, \;\; b_{-i}\in \mathcal{T}_{-i},$$ where notation $\mathcal{T}_{-i}$ denotes all parts of $\mathcal{T}$ except its $i$-th part.

The quality of a mechanism for a given type $t$ is measured by the makespan $\mech(t)$ achieved by its allocation algorithm $A$, $ \mech(t) = \max_{i\in N}t_{i}(A_i(t))$, which is compared to the optimal makespan $\opt(t)=\min_{A\in \mathcal{A}}\max_{i\in N}t_i(A_i)$.

It is well known that only a subset of algorithms can be allocation algorithms of truthful mechanisms. In particular, no mechanism's allocation algorithm is optimal for every $t$, so it is natural to focus on the approximation ratio of the mechanism's allocation algorithm. A mechanism is \emph{$c$-approximate}, if its allocation algorithm is $c$-approximate, that is, if $c\geq\frac{\mech(t)}{\opt(t)}\;$ for all possible inputs $t$.

In this work, we do not place any requirements on the time to compute the mechanism's allocation $A$ and payments $\mathcal P.$ In other words, the lower bound is information-theoretic and does not make use of any computational assumptions.

\subsection{Weak monotonicity}

A mechanism consists of two algorithms, the allocation algorithm and the payment algorithm. However, we are only interested in the performance of the allocation algorithm. Therefore it is natural to ask whether it is possible to characterize the class of allocation algorithms that are part of a truthful mechanism with no reference to the payment algorithm. Indeed the following definition provides such a characterization.

\begin{definition} \label{def:wmon} An allocation algorithm $A$ is called {\em weakly monotone (WMON)} if it satisfies the following property: for every $i\in N$ and two inputs $t=(t_i,t_{-i})$ and $t'=(t'_i,t_{-i})$, the associated allocations $A$ and $A'$ satisfy $$t_i(A_i)-t_i(A'_i)\leq t'_i(A_i)-t'_i(A'_i).$$
  An equivalent condition, using indicator variables $a_{ij}(t)\in\{0,1\}$ of whether task $j$ is allocated to player $i$, is
  \begin{align*}
    \sum_{j\in M} (a_{ij}(t')-a_{ij}(t))(t_{ij}'-t_{ij})\leq 0.
  \end{align*}
\end{definition}

It is well known that the allocation function of every truthful mechanism is weakly monotone~\cite{BCR+06}. Although it is not needed in establishing a lower bound on the approximation ratio, it turns out that this is also a sufficient condition for truthfulness in convex domains~\cite{SY05} (which is the case for the scheduling domain).

A useful tool in our proof relies on the following immediate consequence of weak monotonicity (see \cite{CKK20} for a simple proof). Intuitively, it states that when you fix the values of all players for a subset of tasks, then the {\em restriction} of the allocation to the rest of the tasks is still weakly monotone.

\begin{lemma}\label{lem:restriction} Let $A$ be a weakly monotone allocation, and let
  $(S,T)$ a partition of $M$. When we fix the values of the tasks of $T$, the restriction of the allocation $A$ on $S$ is also weakly monotone. %
\end{lemma}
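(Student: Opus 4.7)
The plan is to reduce weak monotonicity of the restricted allocation directly to weak monotonicity of the original allocation, using the indicator-variable formulation given in Definition~\ref{def:wmon}. Intuitively, weak monotonicity is an inequality summed over all tasks, and the terms coming from tasks in $T$ will drop out whenever we keep the bids on $T$ fixed.

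More concretely, fix an arbitrary profile $b_T = (b_{1,T},\ldots,b_{n,T})$ for the tasks in $T$ and define the restricted allocation $A^S$ by $A^S(c) := A(c, b_T)$ for $c \in \prod_i \R_+^{|S|}$. To verify weak monotonicity of $A^S$, pick any player $i$ and any two profiles $c = (c_i, c_{-i})$ and $c' = (c'_i, c_{-i})$ that differ only in player $i$'s report on $S$. Set $t := (c, b_T)$ and $t' := (c', b_T)$; these differ only in player $i$'s coordinates on $S$, so $t'_{ij} = t_{ij}$ for every $j \in T$. Applying the indicator-form weak monotonicity of $A$ to $t$ and $t'$ gives
\[
\sum_{j\in M}\bigl(a_{ij}(t')-a_{ij}(t)\bigr)\bigl(t'_{ij}-t_{ij}\bigr)\leq 0,
\]
and every term with $j \in T$ contributes zero. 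What remains is exactly
\[
\sum_{j\in S}\bigl(a^S_{ij}(c')-a^S_{ij}(c)\bigr)\bigl(c'_{ij}-c_{ij}\bigr)\leq 0,
\]
which is weak monotonicity of $A^S$.

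There is no real obstacle here: the argument is a one-line ``restriction kills the irrelevant coordinates'' observation, and the only thing to be careful about is keeping straight that the fixed bids on $T$ enter $A$ as parameters but do not appear in the summation because their differences vanish. Writing the proof in the indicator form (rather than the first form of Definition~\ref{def:wmon}) makes the cancellation transparent; one could equally write it as $t_i(A_i)-t_i(A'_i) = c_i(A_i\cap S) + b_{i,T}(A_i\cap T) - c_i(A'_i\cap S) - b_{i,T}(A'_i\cap T)$ and observe that the $T$-contributions cancel when we take the weak-monotonicity difference, but the indicator form is cleaner.
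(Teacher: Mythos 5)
Your proof is correct: fixing the bids on $T$ makes the terms for $j\in T$ vanish in the indicator-form weak monotonicity inequality, leaving exactly the inequality for the restriction to $S$. This is the same immediate cancellation argument the paper has in mind (it gives no proof itself, deferring to the simple proof in \cite{CKK20}), so there is nothing to add.
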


The following implication of weak monotonicity, that was first used in \cite{NR01}, is a standard tool for showing lower bounds for truthful mechanisms (see for example \cite{NR01,ChrKouVid09,MualemS18,ADL09,giannakopoulos2020,DS20}).

\begin{lemma}\label{lemma:tool}
  Consider a truthful mechanism $(A,\mathcal P)$ and its allocation for a bid vector $t$. Let $S$ be a subset of the tasks allocated to player $i$ and let $S'$ be a subset of the tasks allocated to the other players. Consider any bid profile $t'=(t'_i,t_{-i})$ that is obtained from $t$ by decreasing the values of player $i$ in $S$, i.e., $t'_{ij}< t_{ij}, j\in S$, and increasing the values of player $i$ in $S'$, i.e., $t'_{ij}> t_{ij}, j\in S'$. Then the allocation of player $i$ for $t$ and $t'$ agree for all tasks in $S\cup S'$.
\end{lemma}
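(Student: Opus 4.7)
The plan is to invoke weak monotonicity directly for player $i$ between the profiles $t$ and $t'$, which differ only in $i$'s bids. Using the indicator form of Definition \ref{def:wmon}, the inequality $\sum_{j\in M}(a_{ij}(t')-a_{ij}(t))(t'_{ij}-t_{ij})\le 0$ must hold. For every task $j\notin S\cup S'$, the bid $t_{ij}$ is unchanged so $t'_{ij}-t_{ij}=0$ and the corresponding summand vanishes. The sum therefore reduces to contributions indexed by $S\cup S'$, and the whole proof will come down to checking that each of these remaining summands is non-negative, so that they must all be $0$.

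Next I would handle $S$ and $S'$ separately. For $j\in S$, by hypothesis $a_{ij}(t)=1$ and $t'_{ij}-t_{ij}<0$, so the summand equals $(a_{ij}(t')-1)(t'_{ij}-t_{ij})$, which is $0$ when $a_{ij}(t')=1$ and strictly positive when $a_{ij}(t')=0$. Symmetrically, for $j\in S'$ we have $a_{ij}(t)=0$ and $t'_{ij}-t_{ij}>0$, so the summand equals $a_{ij}(t')(t'_{ij}-t_{ij})$, which is $0$ when $a_{ij}(t')=0$ and strictly positive when $a_{ij}(t')=1$. In either case the summand is non-negative, and it is strictly positive exactly when player $i$'s assignment of task $j$ changes from $t$ to $t'$.

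Since the WMON inequality forces the sum of these non-negative summands to be at most $0$, every summand must in fact equal $0$. By the analysis of the previous paragraph this means that for every $j\in S\cup S'$ the assignment of task $j$ to player $i$ is identical under $t$ and $t'$, which is the claimed agreement of the two allocations on $S\cup S'$.

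There is essentially no obstacle: the only point that requires care is the bookkeeping of signs, ensuring that the strict inequalities $t'_{ij}<t_{ij}$ on $S$ and $t'_{ij}>t_{ij}$ on $S'$ align with the initial allocation $a_{ij}(t)$ so that each summand of the WMON sum is manifestly non-negative. Once this alignment is verified, the lemma follows as a one-shot application of weak monotonicity to player $i$.
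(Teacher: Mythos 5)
Your proof is correct: splitting the WMON sum $\sum_{j\in M}(a_{ij}(t')-a_{ij}(t))(t'_{ij}-t_{ij})\le 0$ into the unchanged coordinates (which vanish) and the coordinates in $S\cup S'$ (each of which is non-negative, and strictly positive exactly when the allocation of that task to player $i$ flips) forces every term to be zero, which is the claim. The paper states this lemma without proof, as a standard consequence of weak monotonicity going back to Nisan--Ronen, and your argument is precisely that standard argument, so there is nothing further to compare.
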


Notice that this lemma guarantees that only the set $S$ of tasks allocated to player $i$ remains the same. This does not preclude changing the allocation of the other players for the tasks in $S'$, unless there are only two players. This is a major obstacle in multi-player settings, that we avoid in this work by focusing on {\em graph settings} with only two players for each task.

\section{The main argument}
\label{sec:main-argument}

We consider multi-graph instances, where edges correspond to tasks and nodes to machines~\cite{CKK21}.
For an edge $e$, we use the notation $e=\{i,j\}$ to denote its vertices $i$ and $j$, although they do not determine $e$ uniquely. An edge $e=\{i,j\}$ corresponds to a task that has extremely high values for nodes other than $i$ and $j$, which guarantees that any algorithm with approximation ratio at most $n$ must allocate it to either machine $i$ or machine $j$ (see Figure~\ref{fig:m-graphs} for an illustration).

The argument deals with multi-cliques with very high \emph{multiplicity}\footnote{The multiplicity of a multi-graph is defined to be the minimum multiplicity among its edges.}, in which \emph{every edge has an endpoint with value 0} (see Figure~\ref{fig:thm-nice-multi-star} for an example). The goal is to carefully select a subgraph of this multi-clique and change the values of some of its edges to obtain a lower bound on the approximation ratio. The fact that one of the two values of every edge is 0 is very convenient: a lower bound on the approximation ratio of the subgraph is a lower bound on the approximation ratio of the whole multi-clique as well, since the other edges do not affect the cost of the optimal allocation.

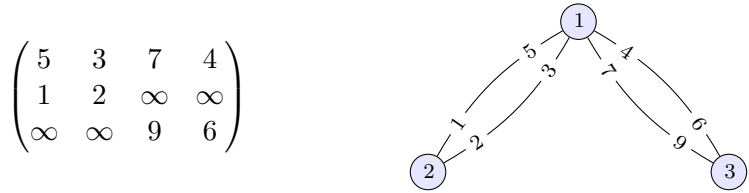
\begin{figure}[h]
  \centering
  \begin{tikzpicture}[scale=2, allow upside down,
      vertex/.style = {circle, draw, text centered, text width=0.4em, text
        height=0.5em, fill=blue!10}]

 \draw (-2,1.5) node {$\left(\begin{matrix}
  5 & 3 & 7 & 4 \\
  1 & 2 & \infty & \infty \\
  \infty & \infty & 9 & 6 \\
\end{matrix}\right)$};

\scriptsize

      \foreach \k/\x/\y in {1/1/2, 3/2/1, 2/0/1}
        \node[vertex] (\k) at (\x,\y) {$\small \k$};

\draw (2)  to [bend left=15] node[sloped,fill=white,pos=0.2] {$1$} node[sloped,fill=white,pos=0.8] {$5$}  (1);
\draw (1)  to [bend left=15] node[sloped,fill=white,pos=0.2] {$4$} node[sloped,fill=white,pos=0.8] {$6$}  (3);
\draw (2)  to [bend right=15] node[sloped,fill=white,pos=0.2] {$2$} node[sloped,fill=white,pos=0.8] {$3$}  (1);
\draw (1)  to [bend right=15] node[sloped,fill=white,pos=0.2] {$7$} node[sloped,fill=white,pos=0.8] {$9$}  (3);

    \end{tikzpicture}
    \caption{A multi-star instance with three players and four tasks in matrix form (left) and in graph form (right). The symbol $\infty$ denotes values that are extremely high compared to the other values. This instance is a multi-star, in which player 1 is the root and players 2 and 3 are the leaves.}
\label{fig:m-graphs}
\end{figure}

In this and the next section, which deal with multi-graphs, for an instance $v$, we use the notation $v_i^e$ to denote the value of node $i$ for an edge $e=\{i,j\}$. Section~\ref{sec:box} deals only with multi-stars and we use a different, more convenient, notation. In most of the argument, we fix the values of the multi-clique and we focus on the boundary functions.

\begin{definition}[Boundary function] \label{dfn:boundary function} Fix a mechanism and consider a multi-clique with values $v$. For an edge $e=\{i,j\}$, the boundary function or critical value function $\psi_{i,j}^e(z)$ is the threshold value for the allocation of $e$ to node $i$. More precisely, if we keep all the other values fixed and change the value of $e$ for node $j$ to $z$, then $e$ is allocated to $i$ if $v_i^e<\psi_{i,j}^e(z)$ and to $j$ if $v_i^e>\psi_{i,j}^e(z)$.
\end{definition}

Note that a boundary function $\psi_{i,j}^e(\cdot)$ may depend on the other values of $v$. A more precise notation would be $\psi_{i,j}^e[v](z)$ or $\psi_{i,j}^e[v_{-e}](z)$, but in this and in the next section we are dealing with a fixed $v$, so we drop it from the notation.

Boundary functions are real functions that are monotone but not necessarily continuous. An annoying, albeit minor, issue is that boundary functions are difficult to handle at points of discontinuity, so we want to avoid such points. The high level idea for achieving this is obvious: since the functions are monotone, they have only countably many points of discontinuity by Froda's theorem; by taking random instances in a continuous range, we have no points of discontinuity almost surely. This is formalized in Lemma~\ref{lemma:continuity}.
The following definition makes the requirement precise.
\begin{definition}[Continuity requirement]
\label{def:continuity}
  Fix a mechanism and consider an instance $v$. We say that $v$ has a discontinuity if there exists some edge $e=\{i,j\}$ so that $v_j^e\neq 0$ and $\psi_{i,j}^e(\cdot)$ is discontinuous at $v_j^e$. We say that an instance satisfies the continuity requirement if \emph{no rational translation of $v$ has a discontinuity}.
\end{definition}
The instances of the argument will be selected randomly so that they satisfy the continuity requirement almost surely.

The aim of the proof of Theorem~\ref{thm:nrtheorem} is to show --- by the probabilistic method --- that there exists a multi-clique of sufficiently high multiplicity that contains \emph{a star with approximation ratio at least $n$}, when we keep the values of all other edges fixed. Actually the argument aims to show that the bound on the approximation ratio for the star is arbitrarily close to $n-1$. The extra +1 in the approximation ratio comes, almost for free, by adding a loop to the root of the star, or equivalently an additional edge between the root and another node $j$ with very high value for $j$.

To show that there exists a star $S$ with approximation ratio $n-1$, we \emph{roughly} aim to show that there exists a star with some root $i$, with the following properties: 
\begin{enumerate}
\item every edge $e=(i,j)$ of $S$ has value 0 for $i$ and {\em the same} value $z$ for the leaves, for some $z>0$.
\item the sum of the values of the boundary functions over all edges $\sum_{e\in S} \psi_{i,j}^e(z)$ is at least $(n-1)z$.
\item the mechanism allocates all edges to the root, when we change its values to $\psi_{i,j}^e(z)$ for all $j\neq i$.
\end{enumerate}

It follows immediately that such a star has approximation ratio $n-1$: the mechanism allocates all tasks to the root with makespan $\sum_{e\in S} \psi_{i,j}^e(z)\geq (n-1)z$, while a better allocation is to allocate all tasks to the leaves with makespan $z$.

A star that satisfies the second property will be called \emph{nice} and the third property \emph{box}. Actually, for technical reasons we need to work with approximate notions of niceness and box-ness.

\begin{definition}[Nice star and multi-star] \label{def:nice star} For a given $\varepsilon>0$ and an instance $v$,
  a star $S$ with root $i$ and leaves all the remaining $(n-1)$ nodes is called $\varepsilon$-nice, or simply nice, if there exists $z>0$ such that, first, every edge $e=\{i,j\}$ of $S$ has value $v_i^e=0$ for root $i$ and $v_j^e\in (z, (1+\varepsilon)z)$ for leaf $j$, and second,
\begin{equation} \label{eq:sum-of-psi}
  \sum_{e\in S} \psi_{i,j}^e(v_j^e)\geq (1-3\varepsilon)(n-1)z.
\end{equation}  
  A multi-star is called nice if all of its stars with $n-1$ leaves are nice, with the same $z$.
\end{definition}

By letting $\varepsilon$ tend to 0, $1-3\varepsilon$ can be arbitrarily close to 1 and $v_i^e$ can be arbitrarily close to $z$. The factor 3 in the expression $1-3\varepsilon$ is for convenience when such a property is established in Section~\ref{sec:proof-nice-multi-star}.

\begin{definition}[Box]
\label{def:box}
For a given $\delta>0$ and instance $v$, a star $S$ with root $i$ is called $\delta$-box, or simply box, if every edge $e$ of $S$ has value 0 for $i$ and the mechanism allocates all edges to the root, when we change the root values to $\psi_{i,j}^e(v_j^e)-\delta$ for every leaf $j$ of $S$.
\end{definition}

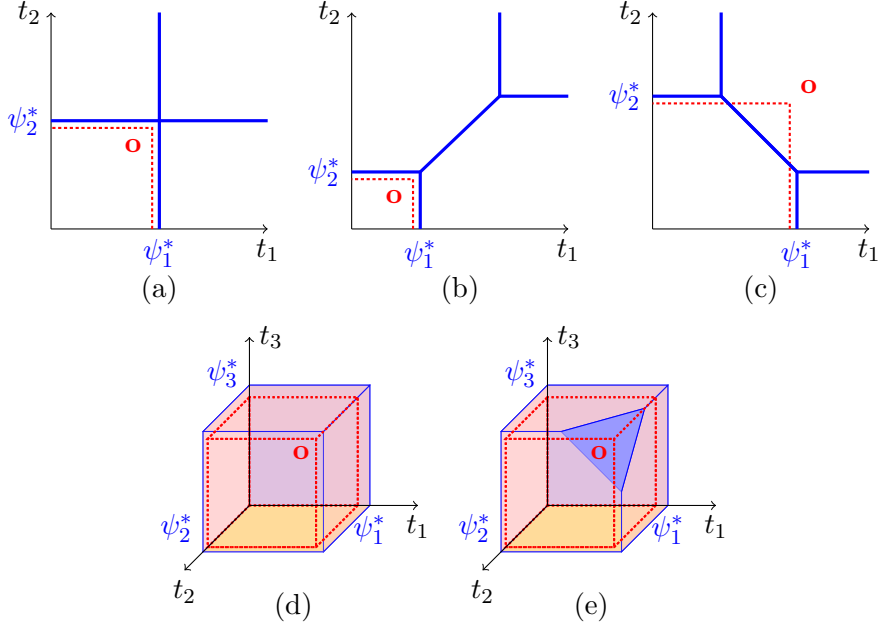
\begin{figure}[h]
\newcommand{\Depth}{2}
\newcommand{\Height}{2}
\newcommand{\Width}{2}
\newcommand{\Fraction}{0.5}
\newcommand{\deltaX}{0.2}
\newcommand{\deltaY}{0.2}
\newcommand{\psiStar}{\psi^*}

  \centering

  \begin{tikzpicture}[scale=0.48]
    \draw[->] (0,0) -- (6,0) node[anchor=north] {$t_{1}$};
    \draw[->] (0,0) -- (0,6) node[anchor=east] {$t_{2}$};
    \draw[very thick, blue]  (3,0) -- (3,6) ; \draw[very thick, blue] (0,3) node[anchor=east] {$\psiStar_{2}$} -- (6,3) ;

    \draw[very thick, red, densely dotted, thick] (3-\deltaY,0)  -- (3-\deltaY,3-\deltaY) node[anchor=north east] {\bf o} -- (0,3-\deltaY) ;

    \draw (1.5,5) node[anchor=center] {}; %

\draw[blue] (3,0) node[anchor=north] {$\psiStar_1$};
    \draw (3,-1) node[anchor=north] {(a)};
  \end{tikzpicture}
  \begin{tikzpicture}[scale=0.48]

    \draw[->] (0,0) -- (6,0) node[anchor=north] {$t_{1}$};
    \draw[->] (0,0) -- (0,6) node[anchor=east] {$t_{2}$};
    \draw[very thick, blue] (4.1,6) -- (4.1, 3.68) -- (1.9,1.58) -- (0,1.58) node[anchor=east] {$\psiStar_{2}$};

    \draw[very thick, red,densely dotted, thick] (1.9-\deltaY,0)  -- (1.9-\deltaY,1.58-\deltaY) node[anchor=north east] {\bf o} -- (0,1.58-\deltaY) ;

    \draw[very thick, blue] (1.9,0) node[anchor=north] {$\psiStar_1$}  -- (1.9, 1.58); \draw[very thick, blue] (4.1,3.68) -- (6, 3.68) ;

    \draw (1.5,5) node[anchor=center] {}; %

    \draw (3,-1) node[anchor=north] {(b)};
  \end{tikzpicture}
  \begin{tikzpicture}[scale=0.48]

    \draw[->] (0,0) -- (6,0) node[anchor=north] {$t_{1}$};
    \draw[->] (0,0) -- (0,6) node[anchor=east] {$t_{2}$};
    \draw[very thick, blue] (1.9,6) -- (1.9,3.68) -- (4,1.58) -- (6,1.58) ; \draw[very thick, blue] (0, 3.68) node[anchor=east] {$\psiStar_{2}$} -- (1.9,3.68) ;

    \draw[very thick, red,densely dotted, thick] (4-\deltaY,0)  -- (4-\deltaY,3.68-\deltaY) node[anchor=south west] {\bf o} -- (0,3.68-\deltaY) ;

    \draw[very thick, blue] (4,0) node[anchor=north] {$\psiStar_1$} -- (4, 1.58) -- (1.9, 3.68);

    \draw (1.5,5) node[anchor=center] {}; %

    \draw (3,-1) node[anchor=north] {(c)};
  \end{tikzpicture}%

\begin{tikzpicture}[scale=0.8]
\coordinate (O) at (0,0,0);
\coordinate (A) at (0,\Width,0);
\coordinate (B) at (0,\Width,\Height);
\coordinate (C) at (0,0,\Height);
\coordinate (D) at (\Depth,0,0);
\coordinate (E) at (\Depth,\Width,0);
\coordinate (F) at (\Depth,\Width,\Height);
\coordinate (G) at (\Depth,0,\Height);

\coordinate (O1) at (0,0,0);
\coordinate (A1) at (0,\Width-\deltaX,0);
\coordinate (B1) at (0,\Width-\deltaX,\Height-\deltaX);
\coordinate (C1) at (0,0,\Height-\deltaX);
\coordinate (D1) at (\Depth-\deltaX,0,0);
\coordinate (E1) at (\Depth-\deltaX,\Width-\deltaX,0);
\coordinate (F1) at (\Depth-\deltaX,\Width-\deltaX,\Height-\deltaX);
\coordinate (G1) at (\Depth-\deltaX,0,\Height-\deltaX);

\coordinate (BF) at (\Fraction*\Depth,\Width,\Height);
\coordinate (EF) at (\Depth,\Width,\Fraction*\Height);
\coordinate (GF) at (\Depth,\Fraction*\Width,\Height);

\draw[blue,fill=yellow!80] (O) -- (C) -- (G) -- (D) -- cycle;%
\draw[blue,fill=blue!30] (O) -- (A) -- (E) -- (D) -- cycle;%
\draw[blue,fill=red!10] (O) -- (A) -- (B) -- (C) -- cycle;%
\draw[blue,fill=red!20,opacity=0.8] (D) -- (E) -- (F) -- (G) -- cycle;%
 \draw[blue,fill=red!20,opacity=0.6] (C) -- (B) -- (F) -- (G) -- cycle;%
 \draw[blue,fill=red!20,opacity=0.8] (A) -- (B) -- (F) -- (E) -- cycle;%

\draw[red,densely dotted, thick] (O1) -- (C1) -- (G1) -- (D1) -- cycle;%
\draw[red,densely dotted, thick] (O1) -- (A1) -- (E1) -- (D1) -- cycle;%
\draw[red,densely dotted, thick] (O1) -- (A1) -- (B1) -- (C1) -- cycle;%
\draw[red,densely dotted, thick] (D1) -- (E1) -- (F1) -- (G1) -- cycle;%
 \draw[red,densely dotted, thick] (C1) -- (B1) -- (F1) -- (G1) -- cycle;%
 \draw[red,densely dotted, thick] (A1) -- (B1) -- (F1) -- (E1) -- cycle;%

\coordinate (AA) at (0,1.4*\Width,0);
\coordinate (CC) at (0,0,1.4*\Height);
\coordinate (DD) at (1.4*\Depth,0,0);

\draw[->] (O) -- (AA);
\draw[->] (O) -- (CC);
\draw[->] (O) -- (DD);

\draw (2.8,0,0) node[anchor=north] {$t_1$};
\draw (0,2.8,0) node[anchor=west] {$t_3$};
\draw (0,0,2.8) node[anchor=north] {$t_2$};

\draw[blue] (2,0,0) node[anchor=north] {$\psiStar_1$};
\draw[blue] (0,2.2,0) node[anchor=east] {$\psiStar_3$};
\draw[blue] (0,0,2) node[anchor=south east] {$\psiStar_2$};

    \draw[red] (1.9,1.9,1.9) node[anchor=north east] {\bf o};

    \draw (1.5,-0.5, 2) node[anchor=north] {(d)};
\end{tikzpicture}
\begin{tikzpicture}[scale=0.8]
\coordinate (O) at (0,0,0);
\coordinate (A) at (0,\Width,0);
\coordinate (B) at (0,\Width,\Height);
\coordinate (C) at (0,0,\Height);
\coordinate (D) at (\Depth,0,0);
\coordinate (E) at (\Depth,\Width,0);
\coordinate (F) at (\Depth,\Width,\Height);
\coordinate (G) at (\Depth,0,\Height);

\coordinate (O1) at (0,0,0);
\coordinate (A1) at (0,\Width-\deltaX,0);
\coordinate (B1) at (0,\Width-\deltaX,\Height-\deltaX);
\coordinate (C1) at (0,0,\Height-\deltaX);
\coordinate (D1) at (\Depth-\deltaX,0,0);
\coordinate (E1) at (\Depth-\deltaX,\Width-\deltaX,0);
\coordinate (F1) at (\Depth-\deltaX,\Width-\deltaX,\Height-\deltaX);
\coordinate (G1) at (\Depth-\deltaX,0,\Height-\deltaX);

\coordinate (BF) at (\Fraction*\Depth,\Width,\Height);
\coordinate (EF) at (\Depth,\Width,\Fraction*\Height);
\coordinate (GF) at (\Depth,\Fraction*\Width,\Height);

\draw[blue,fill=yellow!80] (O) -- (C) -- (G) -- (D) -- cycle;%
\draw[blue,fill=blue!30] (O) -- (A) -- (E) -- (D) -- cycle;%
\draw[blue,fill=red!10] (O) -- (A) -- (B) -- (C) -- cycle;%
\draw[blue,fill=red!20,opacity=0.8] (D) -- (E) -- (EF) -- (GF) -- (G) -- cycle;%
\draw[blue,fill=red!20,opacity=0.6] (C) -- (B) -- (BF) -- (GF) -- (G) -- cycle;%
\draw[blue,fill=red!20,opacity=0.8] (A) -- (B) -- (BF) -- (EF) -- (E) -- cycle;%
\draw[blue,fill=blue!40,opacity=1] (BF) -- (EF) -- (GF);

\draw[red,densely dotted, thick] (O1) -- (C1) -- (G1) -- (D1) -- cycle;%
\draw[red,densely dotted, thick] (O1) -- (A1) -- (E1) -- (D1) -- cycle;%
\draw[red,densely dotted, thick] (O1) -- (A1) -- (B1) -- (C1) -- cycle;%
\draw[red,densely dotted, thick] (D1) -- (E1) -- (F1) -- (G1) -- cycle;%
 \draw[red,densely dotted, thick] (C1) -- (B1) -- (F1) -- (G1) -- cycle;%
 \draw[red,densely dotted, thick] (A1) -- (B1) -- (F1) -- (E1) -- cycle;%

\coordinate (AA) at (0,1.4*\Width,0);
\coordinate (CC) at (0,0,1.4*\Height);
\coordinate (DD) at (1.4*\Depth,0,0);

\draw[->] (O) -- (AA);
\draw[->] (O) -- (CC);
\draw[->] (O) -- (DD);

\draw (2.8,0,0) node[anchor=north] {$t_1$};
\draw (0,2.8,0) node[anchor=west] {$t_3$};
\draw (0,0,2.8) node[anchor=north] {$t_2$};

\draw[blue] (2,0,0) node[anchor=north] {$\psiStar_1$};
\draw[blue] (0,2.2,0) node[anchor=east] {$\psiStar_3$};
\draw[blue] (0,0,2) node[anchor=south east] {$\psiStar_2$};

    \draw[red] (1.9,1.9,1.9 ) node[anchor=north east] {\bf o};

    \draw (1.5,-0.5, 2) node[anchor=north] {(e)};

\end{tikzpicture}

\caption{\small (Box). Allocation partitions of the root values for a star of 2 leaves (a)-(c) and 3 leaves (d)-(e); in the latter case, only part of the allocation partition is shown.
  This figure is about a star with root $i$ and leaves $j\in \{1,2,3\}$. If we denote the edges of the star by $e_j$, the figure uses the shorthand: $t_j=v_i^{e_j}$ for the values of the root, and $\psi_j^*=\psi_{i,j}^{e_j}(v_j^{e_j})$ for the boundary values.
  Note that monotonicity restricts the shapes and boundaries of the allocation areas, as discussed in detail in Sections~\ref{sec:region-r_p} and~\ref{sec:facts-about-two}.
  The dotted red lines correspond to values $\psiStar_j-\delta$ of the box definition. Cases (a), (b), and (d) are boxes, as the corner $o$ is inside the region where the root gets all the tasks. On the other hand, cases (c) and (e) are not boxes, since the corner point $o$ lies outside this region.}
  \label{fig:box}
\end{figure}

See Figure~\ref{fig:box} for an illustration. Note that while nice stars have $n-1$ leaves, box stars may have fewer leaves. This will be useful to facilitate induction in Section~\ref{sec:box}, where this property is established.

Now that we have the definitions of nice multi-stars and box stars, we can state the two main propositions that almost immediately establish the main result.

First, the following theorem, proved in Section~\ref{sec:proof-nice-multi-star}, says that there exist nice multi-stars of arbitrarily high multiplicity (see Figure~\ref{fig:thm-nice-multi-star}).

\begin{theorem}[Nice Multi-Star] \label{thm:nice multi-star}
  For every mechanism with bounded approximation ratio and every $q$, there exists a multi-clique that satisfies the continuity requirement and contains a nice multi-star with multiplicity $q$.
\end{theorem}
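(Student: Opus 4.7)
The plan is to apply the probabilistic method on a sufficiently large auxiliary multi-clique, and then to extract the desired nice multi-star from it. First, fix $z>0$ and a large multiplicity $Q$ to be determined as a function of $n$, $c$, $\varepsilon$, and $q$. I would build a multi-clique $K$ on $n$ vertices of multiplicity $Q$ in which every edge $e=\{i,j\}$ has value $0$ at one endpoint and a value drawn independently from a continuous density on $(z,(1+\varepsilon)z)$ at the other endpoint. The continuous density, together with Froda's theorem and Lemma~\ref{lemma:continuity}, ensures that the continuity requirement of Definition~\ref{def:continuity} holds almost surely, so a realization satisfying it exists.

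Since the optimum makespan on $K$ is $0$ and the mechanism has bounded approximation ratio $c$, it must allocate every edge to its $0$-endpoint. The central claim would then be: for some root $i\in N$ and every leaf $j\neq i$, at least $q$ of the edges between $i$ and $j$ with $v_i^e=0$ are \emph{good}, meaning $\psi_{i,j}^e(v_j^e)\geq (1-3\varepsilon)\,v_j^e$. Once this is established, picking any $q$ good edges per leaf produces a sub-multi-star of multiplicity $q$ in which every star on $n-1$ leaves has boundary sum at least $(1-3\varepsilon)(n-1)z$, so the multi-star is nice with the common parameter $z$ of the construction.

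The quantitative heart of the argument is to upper bound the number of \emph{bad} edges per pair by some function of $n,c,\varepsilon$ that is independent of $Q$. I would proceed by perturbation: raise the root's value on all bad edges of $(i,j)$ simultaneously to just above their respective boundaries. By the definition of the boundary function, each such edge flips from $i$ to $j$, producing $\approx z$ of load on leaf $j$ per bad edge, while the optimum on the perturbed instance can reabsorb these edges at $i$ at cost at most $(1-3\varepsilon)z$ each. Lemma~\ref{lem:restriction} would be used to justify that the restriction of the allocation to the perturbed edges remains weakly monotone, so the boundary-triggered flips really take place without uncontrolled reshuffles on the frozen edges. Comparing the mechanism's makespan with this optimum through the approximation inequality then yields the desired upper bound on the number of bad edges, after which taking $Q$ large enough provides the required $q$ good edges.

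The main obstacle I foresee is making this quantitative bound strong enough to produce the factor $(1-3\varepsilon)$ rather than only $1/(c-1)$: a naive single-pair perturbation only forces each individual boundary to be at least $z/(c-1)$, which is far short of the target when $c$ is large. Overcoming this seems to require a coordinated perturbation that spans the whole star rooted at $i$, and likely several roots at once, so that the adversarial optimum is forced to balance load across all $n$ machines while the mechanism is compelled to pile $z$ on each leaf for every bad edge flipped. A final averaging or pigeonhole over the $n$ candidate roots, together with an appropriate choice of $Q$ relative to the per-pair bound on bad edges, should then isolate a single root $i$ at which every leaf simultaneously has at least $q$ good edges, yielding the nice multi-star.
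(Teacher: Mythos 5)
There is a genuine gap at the quantitative heart of your argument. Your central claim --- that for a value $z$ fixed \emph{in advance}, some root $i$ has, for every leaf $j$, at most boundedly many ``bad'' edges with $\psi_{i,j}^e(v_j^e)<(1-3\varepsilon)v_j^e$ --- is not something a bounded approximation ratio can give you, and the perturbation you sketch cannot establish it. Pointwise, raising the root's value on a single edge just above its boundary only forces $\psi_{i,j}^e(z)\gtrsim z/n$ (essentially Proposition~\ref{prop:bounded slope} read in reverse), and your coordinated version makes things worse, not better: if a star rooted at $i$ has many bad edges with boundaries near $\beta z$ for some constant $\beta<1-3\varepsilon$, and you lift the root's values just above these boundaries, the mechanism's makespan is about $Bz$ per leaf while the optimum can split the flipped edges between the root and the $n-1$ leaves, achieving makespan about $Bz\,\beta(n-1)/(1+\beta(n-1))$; the resulting ratio is $1+1/(\beta(n-1))$, which is close to $1$ and nowhere near contradicting approximation ratio $n$. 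So no choice of coordinated perturbation or averaging over roots rescues the factor $(1-3\varepsilon)$: for a \emph{fixed} scale $z$ a truthful mechanism with ratio at most $n$ may have both $\psi_{i,j}^e$ and $\psi_{j,i}^e$ well below $z$ near that scale, compensating at other scales. (A secondary flaw: raising several root values simultaneously does not ``flip'' each edge by the definition of the boundary function, since the boundaries are defined with all other values fixed and Lemma~\ref{lemma:tool} only protects allocations when you move values in the monotone-safe directions.)

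The paper obtains the $(1-3\varepsilon)$ factor by an entirely different, structural route that your proposal is missing: the two boundary functions of an edge are pseudo-inverses, so Young's inequality (Proposition~\ref{prop:young's ineq}) gives $\int_0^1\psi_{i,j}^e+\int_0^1\psi_{j,i}^e\geq 1$ for \emph{every} truthful mechanism, independently of its ratio. Summing this over all pairs and averaging over the direction (which of the two endpoints plays the root) and over the value $z$ ranging in $D_\varepsilon$ yields the existence of \emph{some} root $i$ and \emph{some} $z$ with $\sum_{j\neq i}\apsi_{i,j}(z)\geq(1-3\varepsilon)(n-1)z$ (Lemma~\ref{lemma:existence of i and z}). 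This is why the random multi-clique must carry edges at all scales in $(0,1]$ rather than near a single pre-chosen $z$, and why the paper introduces dipoles and the finite family of approximate boundary functions together with Proposition~\ref{prop:2x2-independence}: identical approximate boundary functions across parallel edges are what make the same $(i,z)$ work simultaneously for all stars of the multi-star, a uniformity your per-edge ``good edge'' notion would also need but does not address. Without Young's inequality (or an equivalent structural input), the approximation-ratio perturbations you propose cannot close this gap.
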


\begin{figure}[h]
\centering
\begin{tikzpicture}[scale=2, allow upside down,
      vertex/.style = {circle, draw, text centered, text width=0.4em, text
        height=0.5em, fill=blue!10}]

      \scriptsize

      \foreach \k/\x/\y in {0/1/2, 3/2/1, 2/1/0, 1/0/1}
        \node[vertex] (\k) at (\x,\y) {$\small \k$};

      \foreach \from/\to in {0/1,0/2,0/3,1/0,1/2,1/3,2/0,2/1,2/3,3/0,3/1,3/2}
        \draw (\from)  to [bend left=15] node[sloped,fill=white,pos=0.2] {$z$} node[sloped,fill=white,pos=0.8] {$0$}  (\to);
\draw (1,-0.5) node {(a)};
  \end{tikzpicture}
\hspace{10mm}
  \begin{tikzpicture}[scale=2, allow upside down,
      vertex/.style = {circle, draw, text centered, text width=0.4em, text
        height=0.5em, fill=blue!10}]

      \scriptsize

      \foreach \k/\x/\y in {0/1/2, 3/2/1, 2/1/0, 1/0/1}
        \node[vertex] (\k) at (\x,\y) {$\small \k$};

      \foreach \from/\to in {0/1,0/2,0/3}
      \draw (\from)  to [bend left=15] node[sloped,fill=white,pos=0.2] {$0$} node[sloped,fill=white,pos=0.8] {$z$}  (\to);
      \foreach \from/\to in {0/1,0/2,0/3}
      \draw (\from)  to [bend right=15] node[sloped,fill=white,pos=0.2] {$0$} node[sloped,fill=white,pos=0.8] {$z$}  (\to);
    \draw (1,-0.5) node {(b)};
    \end{tikzpicture}

    \caption{This is an illustration of the components that appear in the statement of Theorem~9. In (a), a multi-clique with four nodes and multiplicity 2 is depicted. For simplification, we use $z$ to denote non-zero values, which are not necessarily the same for all edges. It should be noted that the actual multiplicity needed is much higher. In (b), a multi-star which is a subgraph of the multi-clique is depicted. Again the actual multiplicity needed is very high. If this is a {\em nice} multi-star, the value $z$ is approximately the same for all leaves. }
 \label{fig:thm-nice-multi-star}
\end{figure}
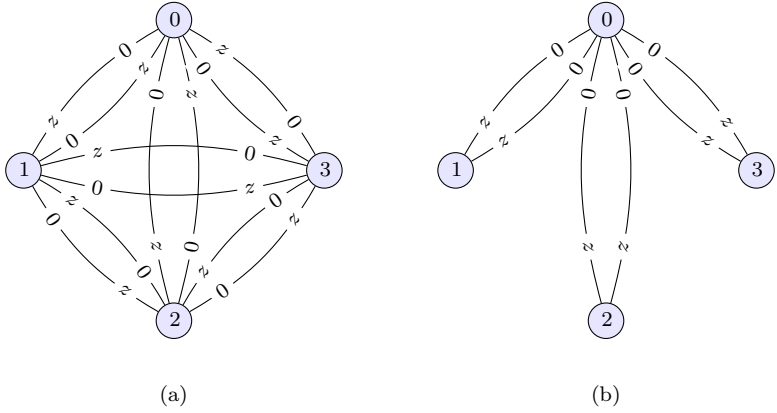

Second, the following theorem says that nice multi-stars with sufficiently high multiplicity --- guaranteed to exist by the above theorem --- contain a spanning box i.e., a box star of $n-1$ leaves (see Figure~\ref{fig:thm-box}).

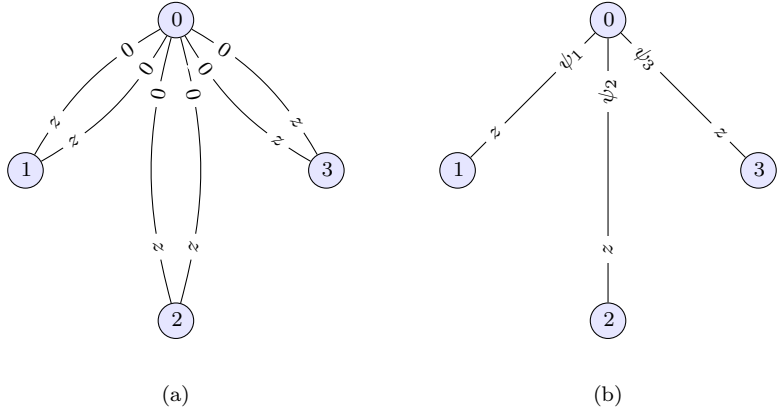
\begin{figure}[h]
\centering
  \begin{tikzpicture}[scale=2, allow upside down,
      vertex/.style = {circle, draw, text centered, text width=0.4em, text
        height=0.5em, fill=blue!10}]

      \scriptsize

      \foreach \k/\x/\y in {0/1/2, 3/2/1, 2/1/0, 1/0/1}
        \node[vertex] (\k) at (\x,\y) {$\small \k$};

      \foreach \from/\to in {0/1,0/2,0/3}
      \draw (\from)  to [bend left=15] node[sloped,fill=white,pos=0.2] {$0$} node[sloped,fill=white,pos=0.8] {$z$}  (\to);
      \foreach \from/\to in {0/1,0/2,0/3}
      \draw (\from)  to [bend right=15] node[sloped,fill=white,pos=0.2] {$0$} node[sloped,fill=white,pos=0.8] {$z$}  (\to);
    \draw (1,-0.5) node {(a)};
    \end{tikzpicture}
\hspace{10mm}
  \begin{tikzpicture}[scale=2, allow upside down,
      vertex/.style = {circle, draw, text centered, text width=0.4em, text
        height=0.5em, fill=blue!10}]

      \scriptsize

      \foreach \k/\x/\y in {0/1/2, 3/2/1, 2/1/0, 1/0/1}
        \node[vertex] (\k) at (\x,\y) {$\small \k$};

      \foreach \from/\to in {0/1,0/2}
      \draw (\to)  to  node[sloped,fill=white,pos=0.2] {$z$} node[sloped,fill=white,pos=0.8] {$\psi_{\to}$}  (\from);
    
      \draw (0)  to  node[sloped,fill=white,pos=0.2] {$\psi_{3}$} node[sloped,fill=white,pos=0.8] {$z$}  (3);

\draw (1,-0.5) node {(b)};
    \end{tikzpicture}
    \caption{The figure depicts a multi-star with high multiplicity
      ($a$) and the final spanning star $(b)$ selected by
      Theorem~\ref{thm:box}. If the multi-star satisfies the niceness
      property, then the final star also satisfies the niceness property (hence roughly
      $\psi_1+\psi_2+\psi_3\geq 3z$) and also the box-ness property
      (see Figure~\ref{fig:box}(d)) which guarantees that these edges
      must be allocated to the root. Note that this final star is a
      subgraph of the original multi-clique, the contribution of the
      remaining edges (which do not appear in the figure) to the
      optimal makespan is $0$.}
\label{fig:thm-box}    
\end{figure}

\begin{theorem}[Box] \label{thm:box} Fix $\delta, \varepsilon>0$ and a mechanism with approximation ratio at most $n$. Consider an instance that satisfies the continuity requirement and contains a multi-star, of sufficiently high multiplicity, in which all values of the root $i$ are 0 and all values of the leaves are in $(z, (1+\varepsilon)z)$. Then the multi-star contains a star with $n-1$ leaves, which is a $\delta$-box.
\end{theorem}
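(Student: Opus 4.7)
The plan is to prove Theorem~\ref{thm:box} by induction on the number of leaves $k$ of the extracted star, strengthening the statement so that at every $k \leq n-1$ one can extract a $k$-leaf $\delta$-box from any sub-multi-star on $k$ leaves of sufficiently high multiplicity $q_k$, where $q_k$ grows fast in $k$, $1/\delta$, and $n$. The base case $k=1$ is immediate: for any single edge $e=\{i,j\}$, setting root's value on $e$ to $\psi^e_{i,j}(v^e_j)-\delta$ alone allocates $e$ to the root by the definition of the boundary function, and the continuity requirement rules out pathological behavior at the threshold.

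For the inductive step, I first apply the inductive hypothesis to a sub-multi-star on $k-1$ leaves to obtain a $(k-1)$-leaf $\delta$-box $B$, witnessed by the bid profile $b$ at which every edge of $B$ sits at the root. I then attempt to extend $B$ by selecting one of the $q_k$ parallel edges from root $i$ to the remaining leaf $j_k$. For each candidate edge $e_r$, let $b^{(r)}$ be the profile obtained from $b$ by setting root's bid on $e_r$ to $\psi^{e_r}_{i,j_k}(v^{e_r}_{j_k})-\delta$. The extended star $B\cup\{e_r\}$ is a $\delta$-box exactly when at $b^{(r)}$ every edge of $B\cup\{e_r\}$ remains with the root. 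By Lemma~\ref{lem:restriction}, fixing all other bids and varying only root's bid on $e_r$ yields a weakly monotone restricted allocation, producing a well-defined shifted threshold $\tilde\psi^{e_r}$ at profile $b$; the box extension succeeds precisely when $\tilde\psi^{e_r} > \psi^{e_r}_{i,j_k}(v^{e_r}_{j_k})-\delta$ and no edge of $B$ migrates under the same perturbation.

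I would show these two conditions hold for some $e_r$ by a double pigeonhole argument over the $q_k$ parallel edges, driven by the $n$-approximation bound. For the first condition, suppose $\tilde\psi^{e_r} \leq \psi^{e_r}_{i,j_k}(v^{e_r}_{j_k})-\delta$ for every parallel edge; then the $(k-1)$-box perturbation uniformly depresses every threshold into $j_k$ by at least $\delta$, which at $b$ or a nearby probe profile forces leaf $j_k$ to absorb work that exceeds $n$ times the optimal makespan (the latter being bounded independently of $q_k$ since the optimum can place each parallel edge on the root at cost $0$), a contradiction once $q_k$ is large enough. For the second condition, only a bounded number of choices of $e_r$ can trigger migrations inside $B$, again bounded by the approximation-ratio budget applied to the root and the affected leaves. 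Combining the two counts leaves at least one valid extension provided $q_k$ is chosen sufficiently large.

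The main obstacle is the precise quantitative interaction between the shifted boundary $\tilde\psi^{e_r}$ and the original $\psi^{e_r}$, and separately, controlling the potential collateral migration of edges in $B$ when we perturb only $e_r$; weak monotonicity alone constrains only the directly perturbed edge, so these two effects must be bounded via careful global arguments invoking the $n$-approximation guarantee. Consequently, the multiplicity $q_k$ required grows super-polynomially (plausibly tower-like) in $k$, $1/\delta$, and $n$, which is exactly what ``sufficiently high multiplicity'' encodes in the theorem statement. The continuity requirement is used throughout to keep boundary functions away from their countable discontinuity sets, ensuring that the strict threshold inequalities on which the box definition relies hold generically.
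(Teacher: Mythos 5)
Your plan follows the paper's broad outline (induction on the number of leaves of the star), but the engine you propose for the inductive step does not work, and it is precisely the point where the paper has to do something much heavier. You want to extend a $(k-1)$-leaf box $B$ by probing each of the $q_k$ parallel edges to the new leaf and to rule out, by an approximation-ratio pigeonhole, that all probes fail. But a failed probe does not violate the approximation ratio: if you raise the root's value on a single parallel edge $e_r$ to $\psi^{e_r}-\delta$ and $e_r$ migrates to the leaf, the leaf carries one task of value about $z$, which is comparable to the optimum at that profile, and the root carrying the $B$-edges at values $\psi_j-\delta$ is likewise harmless (the Box Theorem does not assume niceness, so nothing forces $\sum_j\psi_j$ to be large relative to $z$). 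If instead you try to raise many parallel edges simultaneously so that the leaf ``absorbs'' a lot of work, then the optimum is no longer bounded independently of $q_k$, since those edges now have strictly positive values on both endpoints. So the ``double pigeonhole driven by the $n$-approximation bound'' has no contradiction to pigeonhole on, and the same objection applies to your bound on collateral migrations inside $B$. This is exactly the obstacle the paper circumvents by a completely different route: for a task $p_k$ and a \emph{sibling} $p_k'$, it invokes the characterization of $2\times 2$ mechanisms on the $(p_k,p_k')$-slice; if the slice is task-independent the sibling swap yields a box (Lemma~\ref{lem:crossing}), if it is a relaxed affine minimizer the linearity of the boundary is used to shift the bundling facet rectilinearly in $s_{p_k}$ and contradict a \emph{strengthened} inductive hypothesis in which sub-stars remain boxes for many discretely reduced values of $s_{p_k}$ (Lemma~\ref{lemma:linearBoundary}), and the half-bundling cases are excluded for all but few siblings by the approximation ratio (Lemma~\ref{lem:noHalfBundling}). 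None of this machinery, nor the strengthened hypothesis it requires, appears in your proposal, and your acknowledged ``main obstacle'' is in fact the whole content of the proof.

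Two further structural gaps. First, your base case $k=1$ is too weak: the paper's inductive argument cannot start there (a chopped-off one-dimensional box is still a box), and the case $k=2$ needs a separate, harder argument (Theorem~\ref{thm:basecase}) exploiting linearity in two orthogonal directions together with an extremal graph-theoretic (Zarankiewicz-type) count to get that \emph{most} two-leaf stars are boxes. Second, the induction has to be carried out probabilistically (bounding the probability $b_k$ that a random $k$-leaf star is \emph{not} a box, Lemma~\ref{lemma:recurrence}), because to run the step you need a candidate star all of whose $k$ sub-stars (and their $s_{p_k}$-reduced variants) are simultaneously boxes; a single box produced per level, as in your formulation, does not supply this.
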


The proof of the Box Theorem is given in Section~\ref{sec:box}. To keep the notation as clean as possible, in the proof we assume without loss of generality that the root of the multi-star is node 0, the leaves are nodes 1 to $n-1$, and we adapt the notation accordingly. %

The proof of the main result (Theorem~\ref{thm:nrtheorem}) follows immediately from the above two theorems. We consider multi-cliques that contain loops with value 0 for every node and observe that the addition of loops does not affect the proofs of the above theorems. Use Theorem~\ref{thm:nice multi-star} to find a multi-clique that satisfies the continuity requirements and contains a nice multi-star with sufficiently high multiplicity. Use Theorem~\ref{thm:box} to find a nice box inside it.  

\begin{lemma}
  A $\delta$-nice box with a loop in the root, in which all values of the root are 0 and all values of the leaves are in $(z, (1+\varepsilon)z)$, has approximation ratio $n$, as $\delta$ and $\varepsilon$ tend to 0.
\end{lemma}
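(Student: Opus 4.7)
The plan is to exhibit an instance $t^{*}$ derived from the given nice-box-plus-loop on which the mechanism's ratio tends to $n$ as $\delta,\varepsilon\to 0$. Writing $\psi_j := \psi_{0,j}^{e_j}(v_j^{e_j})$ for the boundary value of each star edge $e_j=\{0,j\}$, I would take $t^{*}$ to be obtained from the given instance by changing only the root's values: for each star edge, set the root's value to $\psi_j-\delta$. The leaf values remain in $(z,(1+\varepsilon)z)$, and the loop at the root — realized as an edge from the root to a node with essentially infinite value and with root-side contribution $\approx z$ — is left as it is.

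On $t^{*}$ the mechanism's allocation is pinned down on the relevant tasks. By the $\delta$-box property, the $n-1$ star edges all go to the root. Because the loop's other endpoint carries essentially infinite value, any mechanism with bounded approximation ratio must also send the loop to the root — any other choice would make the makespan exceed every multiple of the optimum. Consequently the mechanism's makespan is at least the root's completion time,
\[
\mech(t^{*}) \;\geq\; \sum_{j=1}^{n-1}(\psi_j - \delta) + z \;\geq\; (1-3\varepsilon)(n-1)z - (n-1)\delta + z,
\]
where the second inequality uses the niceness estimate $\sum_j \psi_j \geq (1-3\varepsilon)(n-1)z$.

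For the optimum I would display the explicit allocation that routes each star edge $(0,j)$ to leaf $j$ (leaf completion time $v_j^{e_j}<(1+\varepsilon)z$) and the loop to the root (root completion time $\approx z$). This gives $\opt(t^{*})\leq(1+\varepsilon)z$, so
\[
\frac{\mech(t^{*})}{\opt(t^{*})} \;\geq\; \frac{(1-3\varepsilon)(n-1)z - (n-1)\delta + z}{(1+\varepsilon)z} \;\xrightarrow[\delta,\varepsilon\to 0]{}\; n,
\]
which is the claim.

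The delicate point, and what I expect to be the main obstacle, is coordinating the loop with the box property. Box-ness is defined only for shifts of the root's $n-1$ star values, whereas here the loop is an additional edge at the root whose allocation I also need. The key is that the loop's other endpoint carries essentially infinite value: the loop is forced to the root for any finite root-side value (for any mechanism with bounded approximation ratio), independently of the box shift performed on the star edges; a weak-monotonicity argument (in the spirit of Lemma~\ref{lemma:tool}, applied to the loop alone) shows that this forced loop assignment is compatible with the star-edge allocations dictated by the box. Thus the loop's contribution of $\approx z$ adds cleanly to the $\approx(n-1)z$ coming from the star edges — precisely the ``$+1$'' of the main-argument sketch — lifting the approximation ratio from $n-1$ to $n$ in the limit.
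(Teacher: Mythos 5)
Your plan is the same as the paper's in outline, but there is a genuine gap at exactly the point you flag as delicate, and your proposed resolution does not work. In the lemma's hypothesis (and in the main argument where the lemma is applied) the loop's root value is $0$: the Nice Multi-Star and Box theorems are proved for the multi-clique whose loops are zero-valued, so the $\delta$-box property (Definition~\ref{def:box}) only determines the allocation at the instance where the star root values are $\psi_j-\delta$ \emph{and the loop's root value is still $0$}. Your instance $t^*$ instead has the loop at root value $\approx z$ (and you need this, since otherwise the root's load is only $\approx (n-1)z$ and the ratio tends to $n-1$, not $n$). Box-ness does not cover $t^*$: going from the box instance to $t^*$ the only coordinate that changes is the loop's, which is allocated to the root both before and after, so the weak-monotonicity inequality $\sum_j (a_j(t^*)-a_j(t^0))(t^*_j-t^0_j)\le 0$ reduces to $0\le 0$ and says nothing about the star edges. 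Lemma~\ref{lemma:tool} does not apply either, because its hypotheses allow only \emph{decreasing} values of tasks held by the player and \emph{increasing} values of tasks held by others, whereas here you increase the value of a task held by the root. Nor does the approximation bound help: a star edge flipping to its leaf when $t_{\mathrm{loop}}$ increases adds only $\approx z$ to the makespan, so it is not excluded by the ratio. Hence the claim that ``the forced loop assignment is compatible with the star-edge allocations dictated by the box'' is an assertion, not an argument.

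The paper closes this gap by paying an extra $\delta$: simultaneously with raising the loop's root value from $0$ to $z$, it lowers the star-edge root values further, from $\psi_j-\delta$ to $\psi_j-2\delta$. The loop is still forced to the root (huge value at its other endpoint), so its term in the WMON inequality vanishes, and the strict decrease on every star edge turns the inequality into $\sum_j (1-a_j)\,\delta\le 0$, which forces every star edge to stay with the root. The makespan lower bound then reads $z+(1-3\varepsilon)(n-1)z-2(n-1)\delta$ instead of your $z+(1-3\varepsilon)(n-1)z-(n-1)\delta$; with $\opt\le (1+\varepsilon)z$ the ratio still tends to $n$, so the rest of your computation survives once this two-step modification is inserted.
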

\begin{proof}
  Take a nice box and consider the instance when we change the values of the root $i$ to $\psi_{i,j}^e(v_j^e)-\delta$ for all $j\neq i$. By the box-ness property all the edges are allocated to the root. Change now the value of the loop to $z$ and decrease the values of the root to $\psi_{i,j}^e(v_j^e)-2\delta$. By using the fact that the task that corresponds to the loop must still be allocated to root, even when we increase its value to $z$, and by also applying monotonicity (Lemma~\ref{lemma:tool}), the allocation of the edges remains the same. The makespan of the mechanism is
  \begin{align*}
    z+\sum_{j\neq i} (\psi_{i,j}^e(v_j^e)-2\delta)\geq z+(1-3\varepsilon)(n-1)z-2(n-1)\delta,
  \end{align*}
  while the optimal makespan is at most $(1+\varepsilon)z$, (when the root gets the loop and the leaves get all the remaining edges). The ratio tends to $n$ as $\delta$ and $\varepsilon$ tend to 0.
\end{proof}

The lemma, together with the Nice-Multi-Star and the Box theorems, establishes the main result. Actually it establishes something stronger. The approximation ratio of truthful scheduling is $n$ even when the domain is the set of multi-graphs.

\begin{corollary}
  The approximation ratio of truthful scheduling of multi-graphs is $n$.
\end{corollary}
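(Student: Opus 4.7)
The plan is to verify that the tight bound of $n$ holds even when the instance space is restricted to multi-graphs, by splitting the argument into matching upper and lower bounds that follow directly from results already in hand.

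For the upper bound, I would invoke the VCG mechanism, which is deterministic, truthful, and known to achieve approximation ratio $n$ on any scheduling instance with unrelated machines~\cite{NR01}. Since multi-graph instances form a subclass of general scheduling instances, VCG applies unchanged and yields an upper bound of $n$ also in this restricted domain.

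For the lower bound, I would fix an arbitrary deterministic truthful mechanism with approximation ratio at most $n$ together with small parameters $\varepsilon,\delta>0$, and chain the three results that precede the corollary. First, I would invoke Theorem~\ref{thm:nice multi-star} (applied to multi-cliques augmented with a value-$0$ loop at every node, which the text notes does not affect that proof) to obtain a multi-clique satisfying the continuity requirement and containing a nice multi-star of sufficiently high multiplicity. Next, I would apply Theorem~\ref{thm:box} to that nice multi-star to extract a $\delta$-box spanning star with $n-1$ leaves. Finally, I would apply the preceding lemma, which on this nice box with a loop produces an instance where the mechanism's makespan is at least $z + (1-3\varepsilon)(n-1)z - 2(n-1)\delta$ while the optimum is at most $(1+\varepsilon)z$. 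Letting $\varepsilon,\delta \to 0$ drives the ratio to $n$.

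The only real thing to check is that every instance produced along this chain stays in the multi-graph domain. This is essentially immediate from the construction: the underlying multi-clique is a multi-graph by definition, the added loops are single-machine tasks, and the nice-multi-star and box operations only rewrite the two finite values of an existing edge or the single finite value of an existing loop, never introducing a third finite coordinate in any task vector. The main obstacle is therefore not a new argument but rather bookkeeping to confirm that the lower-bound construction never steps outside the multi-graph class; once this is verified, combining with the VCG upper bound yields the claimed tight bound of $n$.
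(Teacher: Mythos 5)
Your proposal matches the paper's argument: the lower bound is exactly the chain Nice Multi-Star theorem $\to$ Box theorem $\to$ the preceding lemma (with the value-$0$ loop), all of which takes place entirely within the multi-graph domain, and the matching upper bound is VCG as noted in the introduction. The paper treats the corollary as immediate from this chain, so your write-up is correct and essentially identical in approach.
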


\section{Proof of the Nice Multi-Star theorem (Theorem~\ref{thm:nice multi-star})}
\label{sec:proof-nice-multi-star}

\newcommand{\AVMclique}{multi-clique of dipoles}
\newcommand{\AVdipole}{dipole}

The aim of this section is to prove Theorem~\ref{thm:nice multi-star}. To keep the argument clean, we first collect and prove some useful statements. These include %
an immediate application of Young's inequality on boundary functions of one-dimensional mechanisms, and two easy properties of mechanisms with two players and one or two tasks.

\subsection{Boundary functions and integrals}
\label{sec:bound-funct-integr}

For an edge $e=\{i,j\}$, we fix the values of every other edge and consider the boundary functions $\psi_{i,j}^e(\cdot)$ and $\psi_{j,i}^e(\cdot)$. By the definition of the boundary functions (Definition~\ref{dfn:boundary function}), if the values of an edge $e$ are $v_i^e$ and $v_j^e$, $e$ is allocated to node $i$ when $v_i^e<\psi_{i,j}^e(v_j^e)$ and to node $j$ when $v_j^e<\psi_{j,i}^e(v_i^e)$, where $v_i^e$ and $v_j^e$ are the values of nodes $i$ and $j$ on edge $e$, respectively. Note that the boundary functions do not determine the allocation when we have equality.

These functions are nondecreasing and, roughly speaking, inverse of
each other. More precisely, $\psi_{j,i}^e$ is a pseudo-inverse of
$\psi_{i,j}^e$, that is,
\begin{align}
  \sup\{x\geq 0 \colon \psi_{i,j}^e(x)<y\} \,\leq\, \psi_{j,i}^e(y) \,\leq\, \inf\{x\geq 0 \colon \psi_{i,j}^e(x)>y\}.
\end{align}
By convention the supremum on the left is 0 when $y=0$. 

The following proposition is a special case of Young's Inequality\footnote{The original Young's Inequality states that if $f$ is nonnegative, continuous, and strictly increasing, then $\int_0^a f(x)\, dx+\int_0^b f^{-1}(x)\, dx\geq ab$.}. The original proof required continuous and strictly increasing functions, but subsequent proofs relaxed these conditions to nondecreasing pseudo-inverse functions~\cite{Cunningham1971}.

\begin{proposition}[Young's inequality] \label{prop:young's ineq}
  If $\psi_{i,j}^e$ and $\psi_{j,i}^e$ are the boundary functions of an edge, then for every $a\geq 0$:
  \begin{align}
    \int_0^a \psi_{i,j}^e(x)\, dx+\int_0^a \psi_{j,i}^e(x) \, dx&\geq a^2.
  \end{align}
\end{proposition}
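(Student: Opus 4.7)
The plan is to give the standard area/measure-theoretic proof of Young's inequality, adapted to the pseudo-inverse setting cited from Cunningham. The key observation is that each integral equals the Lebesgue measure of a suitable subgraph region in the plane, and the two regions together cover the square $[0,a]\times[0,a]$, so their combined measure is at least $a^{2}$.

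Concretely, I would introduce the two regions
\begin{align*}
A &= \{(x,y) : 0 \le x \le a,\; 0 \le y \le \psi_{i,j}^e(x)\}, \\
B &= \{(x,y) : 0 \le y \le a,\; 0 \le x \le \psi_{j,i}^e(y)\},
\end{align*}
and note by Tonelli's theorem that their Lebesgue measures are, respectively, $\int_0^a \psi_{i,j}^e(x)\,dx$ and $\int_0^a \psi_{j,i}^e(y)\,dy$. (Both boundary functions are nondecreasing and nonnegative, hence measurable, so this step is routine; if either integral is $+\infty$ the inequality is trivial, so we may assume both are finite.)

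Next I would verify that $[0,a]^{2}\subseteq A\cup B$. Fix $(x_0,y_0)\in [0,a]^{2}$. If $y_0\le \psi_{i,j}^e(x_0)$, then $(x_0,y_0)\in A$ by definition. Otherwise $\psi_{i,j}^e(x_0)<y_0$, and the left-hand pseudo-inverse inequality displayed just before the proposition gives
$$x_0 \;\le\; \sup\{x\ge 0 : \psi_{i,j}^e(x)<y_0\} \;\le\; \psi_{j,i}^e(y_0),$$
so $(x_0,y_0)\in B$. Then subadditivity of Lebesgue measure immediately yields
$$\int_0^a \psi_{i,j}^e(x)\,dx + \int_0^a \psi_{j,i}^e(y)\,dy \;=\; \mu(A)+\mu(B) \;\ge\; \mu(A\cup B) \;\ge\; \mu([0,a]^{2}) \;=\; a^{2}.$$

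The only real subtlety is that $\psi_{i,j}^e$ and $\psi_{j,i}^e$ are merely pseudo-inverses rather than true inverses, so the classical Young argument for continuous, strictly increasing functions cannot be invoked directly. However, the pseudo-inverse inequalities stated earlier are precisely what the set-containment step needs (they say exactly that whenever $\psi_{i,j}^e(x_0)<y_0$ one has $x_0\le \psi_{j,i}^e(y_0)$), so no extra machinery is required; alternatively one can just cite \cite{Cunningham1971}. I do not anticipate any serious obstacle beyond carefully writing out these set-theoretic and measure-theoretic details.
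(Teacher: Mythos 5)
Your proof is correct. The paper itself gives no argument for this proposition: it simply remarks that the classical Young inequality extends to nondecreasing pseudo-inverse pairs and cites \cite{Cunningham1971}, so the statement is used as a black box. What you have written is, in effect, the standard area/covering proof of exactly that cited generalization, and it goes through in this setting: the two integrals are the planar measures of the subgraph regions $A$ and $B$ (Tonelli applies since the boundary functions are nonnegative and monotone, hence measurable), and the covering $[0,a]^2\subseteq A\cup B$ is precisely what the left-hand pseudo-inverse inequality stated before the proposition provides --- if $\psi_{i,j}^e(x_0)<y_0$ then $x_0\le\sup\{x\ge 0:\psi_{i,j}^e(x)<y_0\}\le\psi_{j,i}^e(y_0)$. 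The only edge cases are harmless and you handle or can dismiss them in a line: if either integral is infinite the claim is trivial, and for $y_0=0$ the point lies in $A$ because the boundary functions are nonnegative (this is also where the paper's convention about the supremum at $y=0$ lives). So your write-up buys self-containedness where the paper relies on a literature citation; the mathematical content is the same as in the reference, and no step of yours would fail.
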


\subsection{Independence of boundary functions of parallel edges}
\label{sec:indep-bound-funct}

\begin{proposition} \label{prop:2x2-independence}
  For every mechanism with bounded approximation ratio and every multi-graph in which one value of every edge is 0, the boundary function $\psi_{i,j}^e(\cdot)$ of an edge $e=\{i,j\}$ is independent of the values of the other edges between $i$ and $j$, except perhaps at its points of discontinuity.
\end{proposition}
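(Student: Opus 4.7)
By Lemma~\ref{lem:restriction} I restrict attention to the two-player, two-edge sub-instance on $\{e,e'\}$, whose restricted allocation is itself weakly monotone. Without loss of generality I assume $v_i^{e'}=0$ (the case $v_j^{e'}=0$ is symmetric by swapping the roles of $i$ and $j$). Writing $x:=v_i^e$, $z:=v_j^e$, and $a:=v_j^{e'}$, with every other coordinate fixed, the goal is to show that $\psi_{i,j}^e(z;a)$ is independent of $a$ at $z$-continuity points of $\psi_{i,j}^e(\cdot)$.

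Suppose for contradiction there exist $a_1<a_2$ and a common $z$-continuity point $z$ with $\psi(z;a_1)<\psi(z;a_2)$. Pick $x$ strictly between them, so that $e\to j$ at $(x,z,a_1)$ while $e\to i$ at $(x,z,a_2)$. WMON applied to player $j$ (with $v_j^e$ constant and $v_j^{e'}$ increasing from $a_1$ to $a_2$) forces $a_{je'}(a_2)\le a_{je'}(a_1)$, leaving three subcases for $e'$'s allocation across the two profiles: (A) $e'\to i$ at both profiles; (B) $e'\to j$ at both profiles; and (C) the simultaneous swap $e'\to j$ at $a_1$ and $e'\to i$ at $a_2$.

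Subcases (A) and (B) are dispatched by Lemma~\ref{lemma:tool}. In (A), at $(x,z,a_1)$ I take $S=\{e\}$ and $S'=\{e'\}$, strictly decrease $v_j^e$ to $z-\eta$, and strictly increase $v_j^{e'}$ to $a_2$ (for small $\eta>0$); the lemma preserves the allocation of $j$ on $\{e,e'\}$, so $e$ remains on $j$ at $(x,z-\eta,a_2)$, i.e.\ $x>\psi(z-\eta;a_2)$. Letting $\eta\to 0^+$ and invoking the $z$-continuity of $\psi(\cdot;a_2)$ at $z$ gives $x\ge \psi(z;a_2)$, contradicting the choice of $x$. Subcase (B) is symmetric: at $(x,z,a_2)$ I take $S=\{e'\}$, $S'=\{e\}$, decrease $v_j^{e'}$ to $a_1$ and increase $v_j^e$ to $z+\eta$, then obtain the analogous contradiction via the $z$-continuity of $\psi(\cdot;a_1)$.

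The main obstacle is subcase (C): $e$ and $e'$ swap their sides together, so Lemma~\ref{lemma:tool} no longer applies directly. Here the plan is to combine a bisection-in-$a$ argument with bounded approximation ratio. Iterating subcases (A) and (B) on any intermediate $a_3\in(a_1,a_2)$ shows that the flip of $e$ and the flip of $e'$ must coincide at a single critical value $a^*$, otherwise a sub-interval reduces to (A) or (B) and produces a contradiction. Just below $a^*$ the mechanism places both $e$ and $e'$ on $j$, loading $j$ with $z+a^*+O_j$ (where $O_j$ is $j$'s fixed other-edge contribution), while $\opt$ can always route $e'\to i$ at cost $0$, making $\opt$ independent of $a^*$. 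Since every other edge also has one value equal to $0$, the other-edge contributions can be tuned to make $\opt$ arbitrarily small relative to $z+a^*+O_j$, violating the approximation guarantee. The delicate part of the argument is carrying out this final ratio violation at $z$-continuity points rigorously, where the $z$-continuity of $\psi$ is used to transport the flip into a neighborhood of $z$ where the perturbation of other edges can be chosen without creating new discontinuities.
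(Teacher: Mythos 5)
Your subcases (A) and (B) are fine, and they correctly isolate the hard case, but the resolution you sketch for subcase (C) does not work, and this is precisely the point where the paper's proof imports outside structure rather than arguing from weak monotonicity alone. In case (C) all you have is a single ``simultaneous swap'' event at the fixed values $x,z,a^*$ supplied by the contradiction hypothesis. The resulting lower bound on the mechanism's makespan is $z+a^*$ (you cannot even count $O_j$, since the mechanism may give every other edge to its zero endpoint), while $\opt\le\min(x,z)$: edge $e'$ goes to $i$ at cost $0$, edge $e$ to its cheaper endpoint, and every remaining edge to its zero endpoint. Your claim that ``the other-edge contributions can be tuned to make $\opt$ arbitrarily small'' is empty --- the other edges already contribute $0$ to $\opt$, their values are part of the fixed multi-graph in the statement and cannot be retuned, and the quantity that actually controls $\opt$, namely $\min(x,z)$, is pinned by the hypothesis ($z$ is the given continuity point and $x$ lies between $\psi(z;a_1)$ and $\psi(z;a_2)$). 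So all you obtain is a ratio of roughly $(z+a^*)/\min(x,z)$, a fixed finite number (e.g.\ with $z=a_1=1$, $a_2=1.1$, $\psi(z;a_1)=0.5$, $\psi(z;a_2)=0.6$ it is about $4$), which contradicts nothing for a mechanism whose approximation ratio is merely assumed bounded. A single bundling event at moderate values is perfectly compatible with a bounded ratio; to turn it into an \emph{unbounded} ratio one needs to know that the bundling boundary persists with essentially the same height as the leaf values are driven to $0$, and that persistence is a structural fact about affine minimizers, not a consequence of WMON at one point.

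This is exactly how the paper proceeds: it invokes the characterization of $2\times 2$ mechanisms (Theorem~\ref{theo:addchar}, via Theorem~\ref{thm:sibling-independence} and Lemma~\ref{prop:quasi}, or \cite{DS08}), under which a non-crossing slice must be a (relaxed) affine minimizer, whose bundling/flipping boundary has height independent of the leaf values; shrinking those values (using Lemma~\ref{obs:alphas}) then makes $\opt$ vanish while the mechanism's makespan stays bounded below, forcing an unbounded ratio, so the slice must be (relaxed) task independent and the proposition follows since the remaining edges contribute $0$ to $\opt$. Your outline has no substitute for this global linearity/constancy input, so the proof is incomplete at its central step. Two smaller issues: your bisection needs continuity of $\psi(\cdot;a_3)$ at $z$ for the intermediate values $a_3$, which the hypothesis (continuity only at $a_1,a_2$) does not provide and which is not automatic since the exceptional set of $a$'s for a fixed $z$ need not be small; and you only treat $\psi(z;a_1)<\psi(z;a_2)$, whereas the reverse inequality also needs an argument.
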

\begin{proof}
  Let $e'\neq e$ be an edge between $i$ and $j$. It suffices to show that $\psi_{i,j}^e(\cdot)$ is independent of the values of $e'$ when we fix the values of the remaining edges. It is known~\cite{DS08} and it also follows directly from Theorem~\ref{thm:sibling-independence}  that a $2\times 2$ mechanism has unbounded approximation ratio unless it is a (relaxed) task-independent mechanism. This proves the proposition, since the contribution of the remaining edges to the optimal makespan is 0.
\end{proof}

\begin{proposition} \label{prop:bounded slope}
  For every mechanism with approximation ratio less than $n$ and every multi-graph in which one value of every edge is 0, the boundary function $\psi_{i,j}^e(\cdot)$ of an edge $e=\{i,j\}$ satisfies $\psi_{i,j}^e(x)<n\, x$.
\end{proposition}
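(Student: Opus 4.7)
The plan is to argue by contradiction, exploiting the boundary-function definition together with the fact that, in a multi-graph where every edge has a zero-value endpoint, the optimal makespan depends only on the edge $e$. Suppose there exists $x > 0$ with $\psi_{i,j}^e(x) \geq n\,x$; I would construct an instance whose approximation ratio is arbitrarily close to $n$, contradicting the assumption that the mechanism's ratio is strictly less than $n$.

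Concretely, I would fix the values of all edges other than $e$ as in the given multi-graph (so each such edge has a zero endpoint), set $v_j^e = x$, and set $v_i^e = \psi_{i,j}^e(x) - \varepsilon$ for a small $\varepsilon > 0$ (if $\psi_{i,j}^e(x) = +\infty$, just take $v_i^e$ arbitrarily large). By Definition~\ref{dfn:boundary function}, since $v_i^e < \psi_{i,j}^e(x)$, the mechanism allocates $e$ to $i$, so its makespan is at least the load of machine $i$, which in turn is at least $v_i^e \geq n\,x - \varepsilon$. To upper-bound $\opt$, I would exhibit the allocation that sends $e$ to $j$ and every other edge to its zero-value endpoint: this loads machine $j$ with exactly $x$ (contribution $x$ from $e$ and $0$ from every other edge assigned to $j$) and every other machine with $0$, hence $\opt \leq x$. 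Therefore the approximation ratio at this instance is at least $(n\,x - \varepsilon)/x = n - \varepsilon/x$, and letting $\varepsilon \to 0$ contradicts the strict bound on the approximation ratio.

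There are essentially no obstacles: the argument is a direct combination of the definition of the boundary function with the multi-graph structure. The only subtleties are minor and handled for free: the strict inequality $v_i^e < \psi_{i,j}^e(x)$ sidesteps any concern about discontinuities of $\psi_{i,j}^e$ at the point $x$, the $\psi_{i,j}^e(x) = +\infty$ case is dealt with by taking $v_i^e$ as large as desired, and the mechanism's possible reallocation of the other edges when $v_i^e$ changes does not affect the upper bound on $\opt$, which is witnessed by an explicit allocation.
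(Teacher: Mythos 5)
Your proposal is correct and follows essentially the same argument as the paper: assume $\psi_{i,j}^e(x)\geq n\,x$, set the edge's values to $x$ for $j$ and slightly below the threshold for $i$ so the mechanism must give $e$ to $i$, and use the zero-endpoint structure of the remaining edges to bound $\opt\leq x$, yielding a ratio arbitrarily close to $n$. The only cosmetic difference is your additive perturbation $\psi_{i,j}^e(x)-\varepsilon$ versus the paper's $\psi_{i,j}^e(x)-\epsilon x$, which changes nothing.
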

\begin{proof}
  If for some $x$, $\psi_{i,j}^e(x)\geq n\, x$, by setting the values of $e$ for nodes $j$ and $i$ to $x$ and $\psi_{i,j}^e(x)-\epsilon x$, respectively, for some $\epsilon>0$, the approximation ratio is at least $n-\epsilon$. The proposition follows by letting $\epsilon\rightarrow 0$.
\end{proof}

\subsection{The construction}
\label{sec:construction}

Fix some $\varepsilon$ so that $1/\varepsilon$ is a positive integer
and let $\varepsilon\ZP=\{k \varepsilon \colon k\in \ZP\}$ denote the
integer multiples of $\varepsilon$. We will also use the notation
$\fl{x}{\varepsilon}=\varepsilon\lfloor x/\varepsilon \rfloor$ and
$\ceil{x}{\varepsilon}=\varepsilon\lceil x/\varepsilon \rceil$.

Let $D_{\varepsilon}$ denote the subset $\varepsilon\ZP$ with values in $(0,1]$, i.e.,  $D_{\varepsilon}=\{\varepsilon,2\varepsilon,\ldots,1-\varepsilon,1\}$, and define $W_{\varepsilon}=\{(0, z)\colon z\in D_{\varepsilon}\} \cup \{(z, 0)\colon z\in D_{\varepsilon}\}$.

In this section, we will say that a multi-graph has \emph{multiplicity} $k$ when \emph{every} edge of the multi-graph has multiplicity $k$. The instances of the argument are selected as follows:
\begin{definition}[Random multi-clique]
   Take a multi-clique of sufficiently large multiplicity $q'$. For each edge $e$ of this multi-clique, select its values in two steps: in the first step, uniformly and independently select a value $(0,z)$ or $(z,0)$ from $W_{\varepsilon}$ and, in the second step, change $z$ to a random value uniformly distributed in $(z, (1+\varepsilon)z)$.
  We will refer to the result as a \emph{random multi-clique}.
\end{definition}
Once we select the values, \emph{we fix them once and for all in this section}. Since every edge of a random multi-clique has a value equal to 0, the optimal makespan for these multi-graphs is 0.

The only reason of having the second step in selecting the values of a random multi-clique is to satisfy the continuity requirements, which is established in the next lemma. This second step plays no other role in this section and it can be mostly ignored, in the sense that the argument would be more straightforward if the values of an edge $e$ were simply in $W_{\varepsilon}$.

\begin{lemma} \label{lemma:continuity}
  A random multi-clique satisfies the continuity requirement almost surely. 
\end{lemma}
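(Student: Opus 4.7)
The argument should combine Froda's theorem (every monotone real function has at most countably many discontinuities) with the absolute continuity of the second-step uniform randomization in the construction of the random multi-clique. Since the set of rational translations is countable and the multi-clique has only finitely many edges, two union bounds reduce the statement to showing that for a fixed rational translation $r$ and a fixed edge $e=\{i,j\}$, the probability of the bad event
\[ \bigl\{ (v+r)_j^e \neq 0 \ \text{and} \ \psi_{i,j}^e[(v+r)_{-e}] \ \text{is discontinuous at } (v+r)_j^e \bigr\} \]
is zero.

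To bound this probability I would condition on all the randomness of the construction for edges other than $e$ and on the first-step outcome for $e$. After this conditioning the boundary function $\psi := \psi_{i,j}^e[(v+r)_{-e}]$ is a deterministic nondecreasing function on $\mathbb{R}_+$, so by Froda's theorem its discontinuity set is at most countable. The only remaining randomness is the second-step perturbation of $v_j^e$: when the first-step outcome for $e$ places the zero on the $i$-side, $v_j^e$ is uniform on an open interval of positive length, hence $(v+r)_j^e = v_j^e + r_j^e$ has an absolutely continuous distribution and lands in the countable discontinuity set with probability zero. The complementary first-step outcome, in which $v_j^e = 0$ deterministically, either makes the continuity requirement vacuous at $e$ (since the requirement only applies when the translated value is nonzero, and for translations preserving the zero-structure of $v$ the translated value remains $0$) or else reduces to the same argument with the roles swapped: one then uses the second-step randomness of another edge whose value enters $(v+r)_{-e}$ and hence shifts the discontinuity set $D$ away from the fixed rational $r_j^e$ almost surely. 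Countable additivity over the edges of the multi-clique and over the rational translations then finishes the proof.

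The main obstacle here is notational rather than mathematical: one has to pin down precisely what ``rational translation of $v$'' means so that both first-step sub-cases can be handled uniformly, and so that the countable family over which the union bound is taken is genuinely the one needed in the subsequent arguments of Section~\ref{sec:proof-nice-multi-star}. Once this is clarified, everything else is a routine application of Froda's theorem together with the observation that an absolutely continuous random variable almost surely avoids any fixed countable set.
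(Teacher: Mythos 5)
Your main argument is exactly the paper's proof, made slightly more explicit: condition on all randomness except the second-step perturbation of the edge's nonzero value, apply Froda's theorem to the (now deterministic) monotone boundary function $\psi_{i,j}^e[(v+r)_{-e}]$, observe that the perturbed value has an absolutely continuous distribution and therefore avoids the countable discontinuity set almost surely, and finish with a union bound over the finitely many edges and the countably many rational translations. The paper's proof is a compressed version of precisely this.

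The one place where you go beyond the paper is the sub-case in which the evaluation coordinate is the zero side of the edge, so that $(v+r)_j^e=r_j^e$ is a fixed rational. Your claim that the second-step randomness of \emph{other} edges ``shifts the discontinuity set away from $r_j^e$ almost surely'' is not justified: changing $(v+r)_{-e}$ changes the boundary function, but nothing forces its jump locations to move with it --- for instance, a task-independent mechanism whose threshold for $e$ has a jump at a fixed rational, independently of all other values, would place a discontinuity at $r_j^e$ with probability one. Note, however, that the paper's own proof silently ignores this sub-case too (its argument only exploits the randomness of the nonzero values), and the way the continuity requirement is actually invoked later is always at evaluation points of the form ``random leaf value plus a rational shift,'' with the remaining coordinates translated by rationals, so the sub-case is never needed. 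Your instinct that the real issue is pinning down what ``rational translation'' must mean for the downstream arguments is the right one; the specific shifting argument you offer for that sub-case is the only part that would not survive scrutiny.
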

\begin{proof}
  For a fixed instance $v$, Froda's theorem applied to the monotone boundary function $\psi_{i,j}^e(\cdot)$ guarantees that its set of discontinuity points is countable. Since the set of edges is finite, and the rational translations of $v$ are countable, the set of discontinuity points for $v$ and its rational translations is also countable. The lemma follows because, every value of a random multi-clique is selected uniformly in $(z, (1+\varepsilon)z)$, for some $z, \varepsilon>0$, which is an uncountable set.
\end{proof}

The \emph{objective in this section} is to show that for every truthful mechanism with bounded approximation ratio and for any given positive integer $q$, if the multiplicity $q'$ of the random multi-clique is sufficiently high, it contains a nice multi-star of multiplicity $q$. To do this, we work with \emph{a finite set of functions} with domain $D_{\varepsilon}$ that approximate the boundary functions with multiples of $\varepsilon$: 
\begin{equation}
  \label{eq:1}
  \apsi_{i,j}^e(z)=\fl{\psi_{i,j}^e(z)}{\varepsilon},  %
\end{equation}
for every $z\in D_{\varepsilon}$.

The important property is that the set of \emph{approximate boundary functions} is finite. Their domain $D_{\varepsilon}$ consists of $1/\varepsilon$ values and their range consists of $n/\varepsilon$ values $\{0,\varepsilon,2\varepsilon,\ldots,n-\varepsilon\}$, which follows from the fact that a mechanism with approximation ratio strictly less than $n$ must satisfy $\psi_{i,j}^e(z)<n z$ (Proposition~\ref{prop:bounded slope}). We will use this fact to show that a large random multi-clique contains many edges with identical approximate boundary functions. Actually, we will need something stronger, because the edges of nice multi-stars must also agree on their values. To achieve this, we consider sets of edges with identical approximate boundary functions that cover the whole spectrum of values. The precise requirements are captured by the following definition.

\begin{definition}
  A set $A$ of $2/\varepsilon$ edges between nodes $i$ and $j$ is called a full-range-dipole or simply \AVdipole\ if
\begin{itemize}
\item for every value $z$ in $D_{\varepsilon}$, there is a pair of edges $e,e'\in A$  with values such that $\fl{v_i^e}{\varepsilon}=z, v_j^e=0$, $v_i^{e'}=0,$ and $\fl{v_j^{e'}}{\varepsilon}=z$, and 
\item all edges $e\in A$ have the same $\apsi_{i,j}^e(\cdot)$, which we simply denote by $\apsi_{i,j}(\cdot)$. Similarly all edges have the same $\apsi_{j,i}^e(\cdot)$.
\end{itemize}
\end{definition}

Dipoles are useful because any clique of them --- that is, a multi-graph with a dipole between every pair of nodes --- contains a nice star as the following lemma shows. Actually the lemma shows the \emph{stronger statement} that bound~\eqref{eq:sum-of-psi} in the definition of nice stars (Definition~\ref{def:nice star}) holds when we replace $\psi_{i,j}^e(v_j^e)$ by its lower bound $\apsi_{i,j}^e(z)$. It is this stronger statement that will be useful to establish the main result of the section. 

\begin{lemma} \label{lemma:existence of i and z}
  In every clique of \AVdipole{}s there exist $i\in [n]$ and $z\in D_{\varepsilon}$ such that the following holds for the star $S$ that has root $i$, leaves the remaining $n-1$ nodes, and edges with value $0$ for $i$ and in $(z, (1+\varepsilon)z)$ for every $j\neq i$: 
  \begin{equation}
    \label{eq:sum-of-apsi}
    \sum_{j\neq i} \apsi_{i,j}(z) \geq (1-3\varepsilon)(n-1)z.
  \end{equation}
\end{lemma}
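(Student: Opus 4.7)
The plan is to combine Young's inequality with a counting/averaging argument. First, for each pair $\{i,j\}$ of nodes in the clique I would note that by Proposition~\ref{prop:2x2-independence}, together with the continuity requirement that handles the possibly finitely many discontinuities, the boundary function $\psi_{i,j}^e$ is the same function for every edge $e$ in the $(i,j)$-dipole (it does not depend on any of the parallel $(i,j)$-edges), so I may simply write $\psi_{i,j}$ for this common function, and likewise $\psi_{j,i}$. Young's inequality (Proposition~\ref{prop:young's ineq}) with $a=1$ then gives
\[
\int_0^1 \psi_{i,j}(x)\,dx + \int_0^1 \psi_{j,i}(x)\,dx \;\geq\; 1,
\]
and summing over the $\binom{n}{2}$ unordered pairs in the clique yields $\sum_{i\neq j}\int_0^1 \psi_{i,j}(x)\,dx \geq n(n-1)/2$.

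The second step is to discretize. Since $\psi_{i,j}$ is nondecreasing, a Riemann upper-sum estimate on the partition $\{(k-1)\varepsilon,k\varepsilon\}_{k=1}^{1/\varepsilon}$ gives $\int_0^1 \psi_{i,j}(x)\,dx \leq \varepsilon\sum_{z\in D_\varepsilon}\psi_{i,j}(z)$, and the defining inequality $\psi_{i,j}(z)<\apsi_{i,j}(z)+\varepsilon$, together with $|D_\varepsilon|=1/\varepsilon$, costs only an additive $\varepsilon$ per pair. Substituting, the lower bound becomes
\[
\sum_{z\in D_\varepsilon}\sum_{i\neq j}\apsi_{i,j}(z)\;\geq\;\frac{n(n-1)(1-2\varepsilon)}{2\varepsilon}.
\]

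Finally I would close by contradiction. Assume that for every $i\in[n]$ and every $z\in D_\varepsilon$ one had $\sum_{j\neq i}\apsi_{i,j}(z)<(1-3\varepsilon)(n-1)z$. Summing over $i\in[n]$ and $z\in D_\varepsilon$ and computing $\sum_{z\in D_\varepsilon} z = \varepsilon\cdot\frac{(1/\varepsilon)(1/\varepsilon+1)}{2}=\frac{1+\varepsilon}{2\varepsilon}$ would produce the strict upper bound $\frac{n(n-1)(1-3\varepsilon)(1+\varepsilon)}{2\varepsilon}$ on the same double sum. Comparing with the lower bound above collapses to $1-2\varepsilon < (1-3\varepsilon)(1+\varepsilon)=1-2\varepsilon-3\varepsilon^2$, which is absurd; hence the desired $(i,z)$ must exist.

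The main obstacle is calibrating the $\varepsilon$-slacks so that the whole argument is not swamped. Two independent $O(\varepsilon)$ losses appear: the floor operation in the definition of $\apsi_{i,j}$, and the slight excess $\sum_{z\in D_\varepsilon} z - \tfrac{1}{2\varepsilon}=\tfrac{1}{2}$ of the discrete sum over its continuous analog $\int_0^1 x\,dx = 1/2$. The coefficient $(1-3\varepsilon)$ in the statement is chosen precisely so that, after both losses are accounted for, the comparison leaves a strictly positive margin of order $\varepsilon^2$; once this bookkeeping is in place, Young's inequality does all the real work.
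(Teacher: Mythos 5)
Your proposal is correct and takes essentially the same route as the paper's proof: per-pair Young's inequality with $a=1$, a monotone Riemann upper-sum to pass from $\psi_{i,j}^e$ to the approximate functions $\apsi_{i,j}$ at an additive cost of $2\varepsilon$ per pair, and the same averaging/contradiction using $\sum_{z\in D_\varepsilon} z=\frac{1+\varepsilon}{2\varepsilon}$ and the identity $(1-3\varepsilon)(1+\varepsilon)=1-2\varepsilon-3\varepsilon^2$. One caveat: Proposition~\ref{prop:2x2-independence} does \emph{not} give that all parallel edges of a dipole share the same exact boundary function $\psi_{i,j}^e$ (only the approximate functions $\apsi_{i,j}^e$ coincide, by the very definition of a dipole; the paper explicitly notes that the exact ones may differ), but this is harmless for your argument, since it suffices to apply Young's inequality to a single representative edge of each dipole and the discretization step only uses the common $\apsi_{i,j}$.
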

\begin{proof}
  First, for every edge $e$ between $i$ and $j$, since $\psi_{i,j}^e(x)$ and $\psi_{j,i}^e(x)$ are pseudo-inverse functions, Young's inequality (Proposition~\ref{prop:young's ineq}) gives
  \begin{align*}
    \int_0^1 \psi_{i,j}^e(x) +\psi_{j,i}^e(x)\, dx \geq 1.
  \end{align*}
  Second, by the monotonicity property of the function $\psi_{i,j}^e(\cdot)$ and the definition of the approximate boundary functions, we get
  \begin{align*}
    \int_0^1 \psi_{i,j}^e(x)\, dx \leq \int_0^1 \psi_{i,j}^e(\ceil{x}{\varepsilon})\, dx
    \leq \int_0^1 \apsi_{i,j}(\ceil{x}{\varepsilon})+\varepsilon\, dx
    = \varepsilon + \sum_{z\in D_{\varepsilon}} \varepsilon \apsi_{i,j}(z).
  \end{align*}
  Putting both facts together, we get that for every edge $e$ between $i$ and $j$:
  \begin{align*}
    \sum_{z\in D_{\varepsilon}} (\apsi_{i,j}(z) +\apsi_{j,i}(z)) \geq \frac{1}{\varepsilon} - 2.
  \end{align*}
  By summing for all pairs $\{i,j\}$, we get
  \begin{align*}
    \sum_{z\in D_{\varepsilon}} \sum_{\{i,j\}} (\apsi_{i,j}(z) +\apsi_{j,i}(z)) \geq \binom{n}{2} \left(\frac{1}{\varepsilon} - 2\right), \\
    \sum_{z\in D_{\varepsilon}} \sum_i \sum_{j\neq i} \apsi_{i,j}(z)  \geq \binom{n}{2} \left(\frac{1}{\varepsilon} - 2\right).
  \end{align*}
  Therefore, by a variant of the mean value theorem, there exists $z\in D_{\varepsilon}$ and $i\in [n]$ such that \eqref{eq:sum-of-apsi} holds. To see this, assume that there is no such $z$ and $i$ and therefore,
  \begin{align*}
    \sum_{z\in D_{\varepsilon}} \sum_i \sum_{j\neq i} \apsi_{i,j}(z)
         &<  \sum_{z\in D_{\varepsilon}} \sum_i (n-1)(1-3\varepsilon) z \\
         &= n(n-1)(1-3\varepsilon)(\varepsilon+2\varepsilon+\cdots+1) \\
         &=n(n-1)(1-3\varepsilon)\frac{1+\varepsilon}{2\varepsilon} \\
         &\leq \binom{n}{2}\left(\frac{1}{\varepsilon}-2\right),
  \end{align*}
a contradiction.

The edges that have value 0 for $i$ and $z$ for every $j\neq i$ form a star that satisfies the properties of the lemma.
\end{proof}

By definition $\apsi_{i,j}(z)$ is a lower bound of $\psi_{i,j}^e(z)$. Furthermore, if $z=\fl{v_j^e}{\varepsilon}$, we have $\apsi_{i,j}(z)\leq \psi_{i,j}^e(z)\leq \psi_{i,j}^e(v_j^e)$. Therefore the star guaranteed by the above lemma is a nice star. However, we need a nice multi-star with appropriately high multiplicity. There is a straightforward strategy to achieve it: take a random multi-clique with sufficiently high multiplicity; between every pair of nodes, it will contain many dipoles with the same approximate boundary functions with nonzero probability, due to the fact that the number of values and the number of approximate boundary functions are bounded; by the above lemma every multi-clique formed from these dipoles will have a nice star with the same $i$ and $z$.

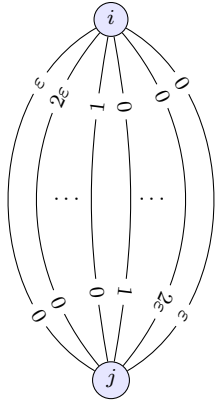
\begin{figure}
  \centering\scriptsize

  \begin{tikzpicture}[scale=2.4, allow upside down,
      vertex/.style = {circle, draw, text centered, text width=0.4em, text height=0.4em, fill=blue!10}]

      \foreach \k/\a/\x/\y in {j/2/0/0, i/2/0/2%
      }
        \node[vertex] (\k\a) at (\x,\y) {$\small \k$};

        \tikzset{decoration={snake,amplitude=.8mm}}

        \draw (i2) to [bend left=10] node[sloped,fill=white,pos=0.2] {$0$} node[sloped,fill=white,pos=0.8] {$1$} node[right] {$\cdots$} (j2);
        \draw (j2) to [bend left=10] node[sloped,fill=white,pos=0.2] {$0$}  node[sloped,fill=white,pos=0.8] {$1$} node[left] {$\cdots$} (i2);
        \draw (i2) to [bend left=40] node[sloped,fill=white,pos=0.2] {$0$} node[sloped,fill=white,pos=0.8] {$2\varepsilon$} (j2);
        \draw (j2) to [bend left=40] node[sloped,fill=white,pos=0.2] {$0$} node[sloped,fill=white,pos=0.8] {$2\varepsilon$} (i2);
        \draw (i2) to [bend left=60] node[sloped,fill=white,pos=0.2] {$0$} node[sloped,fill=white,pos=0.8] {$\varepsilon$} (j2);
        \draw (j2) to [bend left=60] node[sloped,fill=white,pos=0.2] {$0$} node[sloped,fill=white,pos=0.8] {$\varepsilon$} (i2);
        
  \end{tikzpicture}
  \caption{A dipole. All pairs ($\apsi_{i,j}^e(\cdot)$, $\apsi_{j,i}^e(\cdot)$) are identical and the set of values $\{\fl{v_i^e}{\varepsilon}, \fl{v_j^e}{\varepsilon}\}$ is equal to $W_{\varepsilon}=\{(0, z)\colon z\in D_{\varepsilon}\} \cup \{(z, 0)\colon z\in D_{\varepsilon}\}$. The figure shows $\fl{v_i^e}{\varepsilon}$ instead of $v_i^e$.
  }
  \label{fig:dipole} 
\end{figure}

\begin{lemma} \label{lemma:existence_of_all-values-multi-clique}
  Define a \AVMclique\  of multiplicity $q$ to be a multi-graph that contains $q$ disjoint \AVdipole{}s between every pair of nodes $\{i,j\}$ with the same approximate boundary functions $\apsi_{i,j}(\cdot)$ and $\apsi_{j,i}(\cdot)$. For any given $q$, there is $q'$ such that a random multi-clique of multiplicity $q'$ contains a \AVMclique\  of multiplicity $q$ with strictly positive probability.
\end{lemma}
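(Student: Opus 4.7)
The plan is to combine a pigeonhole argument over a finite space of types with a Chernoff-style concentration argument over the random values. For every pair $\{i,j\}$, call the \emph{type} of a parallel edge $e$ the pair of approximate boundary functions $(\apsi_{i,j}^e(\cdot),\apsi_{j,i}^e(\cdot))$. Each such function has domain $D_{\varepsilon}$ of size $1/\varepsilon$ and, by Proposition~\ref{prop:bounded slope}, takes values in a set of size at most $n/\varepsilon$, so the total number of possible types is bounded by a constant $T=T(n,\varepsilon)$ independent of $q'$. The crucial structural input is Proposition~\ref{prop:2x2-independence}: the type of a parallel edge between $i$ and $j$ depends only on the mechanism and on the values of edges \emph{not} between $i$ and $j$; in particular it is independent of the values of all parallel edges between $i$ and $j$, including its own value.

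Fix a pair $\{i,j\}$ and condition on the values of all non-parallel edges. The types of the $q'$ parallel edges between $i$ and $j$ are now deterministic, while their floor values remain i.i.d.\ uniform on the $2/\varepsilon$-element set $W_{\varepsilon}$. Partitioning the $q'$ edges by type into at most $T$ classes, pigeonhole produces a ``heavy'' class containing at least $q'/T$ edges. Within that class the values are still i.i.d.\ uniform on $W_{\varepsilon}$, so a standard Chernoff bound shows that, once $q'/T$ exceeds a sufficiently large function of $q$, $1/\varepsilon$, $n$, and $\log(1/\eta)$, every one of the $2/\varepsilon$ atoms of $W_{\varepsilon}$ is realized by at least $q$ heavy edges, with failure probability at most $\eta/\binom{n}{2}$ for any prescribed $\eta>0$. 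On this good event I partition the heavy edges into $q$ disjoint \AVdipole{}s, all sharing the common $(\apsi_{i,j},\apsi_{j,i})$ of the heavy class by construction. Because the Chernoff estimate is uniform in the non-parallel values we conditioned on, the same bound holds unconditionally after averaging.

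A union bound over the $\binom{n}{2}$ pairs then shows that, for $q'$ sufficiently large, with probability at least $1-\eta>0$ every pair simultaneously supplies $q$ dipoles with a common pair of approximate boundary functions, which is exactly a \AVMclique\ of multiplicity $q$. The main subtlety — and the reason for invoking Proposition~\ref{prop:2x2-independence} — is that the Chernoff step requires the values within a heavy type class to remain unbiased after conditioning on type; any residual dependence between a parallel edge's value and its type would invalidate the concentration estimate. Once this independence is in hand, the rest is routine pigeonhole-and-concentration bookkeeping, and I expect no further obstacle.
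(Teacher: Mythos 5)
Your proposal is correct and follows essentially the same route as the paper: pigeonhole over the finite set of possible pairs of approximate boundary functions, Proposition~\ref{prop:2x2-independence} to decouple an edge's type from the i.i.d.\ values of the parallel edges, a concentration bound within the heavy type class, and a union bound over the $\binom{n}{2}$ pairs. The only (immaterial) difference is in the final tail estimate: the paper partitions the heavy class into blocks of $2/\varepsilon$ edges and counts blocks that happen to realize all values (independent Bernoulli trials, bounded via Cantelli), whereas you run a coupon-collector-style Chernoff bound ensuring every value in $W_{\varepsilon}$ appears at least $q$ times and then assemble the $q$ dipoles greedily; both are valid, and your explicit conditioning on the non-parallel values makes the independence step slightly more transparent than the paper's phrasing.
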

\begin{proof}
  Consider a random multi-clique and focus on the edges between two nodes $i$ and $j$. We  argue that for sufficiently high $q'$, the probability that there are $q$ \AVdipole{}s between $i$ and $j$ is close to 1. Actually it suffices this probability to be more than $1-1/\binom{n}{2}$ to get the lemma by the union bound.

  Each one of the $q'$ edges between $i$ and $j$ is associated with a pair of values in $W_{\varepsilon}$ and a pair of approximate boundary functions. There are two crucial properties here. First, since the domain and the range of the approximate boundary functions take $2/\varepsilon$ and $(n/\varepsilon)^2$ values, respectively, there are at most $K=(n/\varepsilon)^{4/\varepsilon}$ possible pairs of approximate boundary functions. Second, the boundary functions --- and therefore the approximate boundary functions --- are independent of the values of the edges between $i$ and $j$ (Proposition~\ref{prop:2x2-independence}).

  It follows from the first property that there exists a set of at least $q'/K$ edges with identical approximate boundary functions. Fix such a set, partition it into subsets of $2/\varepsilon$ edges, and for each part consider the event that it forms an \AVdipole. The probability for each such event is equal to the probability that all the values in a part are distinct, which is $p=(2/\varepsilon)! (\varepsilon/2)^{2/\varepsilon}$. To see this, note that the probability that the $2/\varepsilon$ edges have values $(0,\varepsilon),(\varepsilon,0),\ldots,(0,1),(1,0)$ is $(\varepsilon/2)^{2/\varepsilon}$ and that there are $(2/\varepsilon)!$ permutations of these pairs of values. Taking into account that the values on the edges are selected independently, the events for each part are independent. It follows immediately that if $q'/K$ is sufficiently high, there are $q$ \AVdipole{}s with probability more than $1-1/\binom{n}{2}$.

  Although it is not needed in the proof, we give a bound of $q'$ in relation to $q$. The situation is captured by a random experiment in which there are $(q'/K)/(2/\varepsilon)$ independent Bernoulli trails, each with probability $p$ of success, and the question is to determine the probability that there are at least $q$ successes. It follows\footnote{Proof: For $X\in B(m,p)$, we have $\sigma_X^2=mp(1-p)\leq E[X]=m p$, so substituting them to Cantelli's inequality $P[X>E[X]-\lambda \sigma_X]\geq 1-1/(1+\lambda^2)$, we get $P[X>m p-\lambda\sqrt{m p}]>1-1/(1+\lambda^2)$. When $m=(2k+\lambda^2)/p$, it can be easily verified that $m p-\lambda\sqrt{m p}\geq k$, so we get the result.} from Cantelli's inequality that a binomially distributed random variable $X\sim B((2k+\lambda^2)/p,p)$ satisfies $P[X>k]\geq 1- 1/(1+\lambda^2)$. In our case, $k=q$, $\lambda^2=\binom{n}{2}$, and $p=(2/\varepsilon)! (\varepsilon/2)^{2/\varepsilon}$. Actually, in the rest of the proof, we will need $q$ to be much larger than $\binom{n}{2}$, so we can simplify $2k+\lambda^2$ to $3k$ to obtain the bound $(q'/K)/(2/\varepsilon)\geq 3q/p$. We conclude that $q' \geq 6 K q/(p\varepsilon)$ suffices for the statement of the lemma.
\end{proof}

The proof of Theorem~\ref{thm:nice multi-star} follows directly from Lemma~\ref{lemma:existence of i and z} and Lemma~\ref{lemma:existence_of_all-values-multi-clique}.

\begin{proof}[Proof of Theorem~\ref{thm:nice multi-star}]
  Lemma~\ref{lemma:existence_of_all-values-multi-clique} guarantees the existence of a \AVMclique\  of multiplicity $q$. By Lemma~\ref{lemma:existence of i and z}, there exist $i$ and $z$ such that every clique formed by its \AVdipole{}s satisfy~\eqref{eq:sum-of-apsi}. Since parallel \AVdipole{}s have the same approximate boundary functions~\footnote{but not necessarily the same boundary functions; this is the reason for using the stronger statement with approximate boundary functions in \eqref{eq:sum-of-apsi}.}, the same $i$ and $z$ work for every clique of \AVdipole{}s. 

  Therefore there exists a multi-star of multiplicity $q$ such that all its stars satisfy~\eqref{eq:sum-of-apsi}. Since $\psi_{i,j}^e(v_j^e)\geq \psi_{i,j}^e(\fl{v_j^e}{\varepsilon}) \geq \apsi_{i,j}^e(\fl{v_j^e}{\varepsilon})$, this is a nice multi-star.
\end{proof}

\newcommand{\slicedoff}{chopped off}
\newcommand{\Slicedoff}{Chopped off}
\newcommand{\slicedoffx}{strictly chopped off}

\section{Proof of the Box Theorem (Theorem~\ref{thm:box})
}
\label{sec:box}

In this section we prove the Box Theorem~\ref{thm:box}, which states
that all multi-stars with sufficiently high multiplicity contain a
box.  Roughly, a box is a star $S$ for which when we fix the values of
its leaves, and we set the values of the root to all other tasks to
$0$, the allocation region $R_S$ of the root for obtaining all the tasks in
$S$ is rectangular (see Figure~\ref{fig:box} for an illustration,
Definition~\ref{def:box} for the precise definition, and Section~\ref{sec:region-r_p} for a precise definition and properties of the allocation region $R_S$).

Actually the proof establishes that the probability that a random star is a box tends to 1, as the multiplicity tends to infinity.

The Nice Multi-Star theorem of the previous section (Section~\ref{sec:proof-nice-multi-star}) shows that there exist nice multi-stars of any desired multiplicity and values in $(z, (1+\varepsilon)z)$, for some $z\in[\varepsilon,1]$, for a given small parameter $\varepsilon>0$. To keep the notation simple, the main theorem of this section is about multi-stars with values in $(\xi,1)$, where $\xi$ can be any strictly positive value. Naturally the argument applies to any interval of values not only to $(\xi, 1)$, and in particular to the interval $(z, (1+\varepsilon)z)$, so the two theorems can be combined to obtain the main result of this work. 

The proof of the Box Theorem is by induction on the number of leaves $k$. The case of one leaf is trivial --- all stars are boxes. However, we don't use this as the base case of the induction, because proving the theorem for $k=2$ requires different handling than the general case, and it is harder than the inductive step, in many aspects. The difference between the case of two leaves and more than two leaves is due to simple geometric facts that will become clear when we discuss the idea of the inductive step.  

Roughly speaking the inductive step goes as follows: suppose that we have established that most stars of multi-stars with $k-1$ leaves are boxes. Then we can find a star $S$ of $k$ leaves --- actually many such stars --- in which all its sub-stars of $k-1$ leaves are boxes. Now $S$ may be a box itself, in which case there is nothing to prove, or a box with a single corner cut off by a diagonal cut (see Figures~\ref{fig:box} and~\ref{fig:3dshift}). Let's call this shape ``\slicedoff\ box''.

The heart of the argument, which is based on the characterization of $2\times 2$ (2 players, 2 tasks) mechanisms, is to establish that either the \slicedoff\ box is actually a box, or we can obtain a box when one of the tasks in $S$ is replaced by a sibling task (i.e., a task of the same leaf). To show this, we take a task $p$ of the star $S$ and some sibling $p'$ of $p$ and consider the $(p,p')$-slice mechanism, that is, the mechanism for these two tasks when we fix the values of every other task. By the characterization of $2\times 2$ mechanisms, the slice mechanism is either a (relaxed) affine minimizer or a (relaxed) task independent mechanism.

If the slice mechanism is task independent the proof is relatively straightforward. The fact that tasks $p$ and $p'$ are independent implies by geometry of the allocation region that if we replace task $p$ by $p',$ %
we obtain a \emph{box} that includes $p'$ and the remaining $k-1$ tasks. 
The only real complication for this case would arise at discontinuity points, but we have excluded them by the continuity requirement.

The other case is when the $(p,p')$-slice mechanism is an affine minimizer and the allocation boundaries are linear functions. The key idea is to exploit this linearity. We keep only the linearity property for $p$ and completely ignore task $p'$ and focus on the original box. Linearity is used to show that the allocation boundaries move rectilinearly, when we change the leaf value of task $p$. In other words, the whole upper envelope of the \slicedoff\ box moves rectilinearly. But then if it is moved sufficiently close to the side, the \slicedoff\ piece will reach the boundary and create a \slicedoff\ box on it (see the 3-dimensional case in Figure~\ref{fig:3dshift}). This would contradict the inductive hypothesis that the sub-stars are boxes or equivalently that the sides are lower dimensional boxes, except for the fact that we have changed the leaf value of task $p$. To actually reach a contradiction, we use a stronger inductive hypothesis in which the side (sub-star) is a box for many values of the leaf (see the definition of \slicedoff\ box, below).

Notice however that the above argument fails for $k=2$, because a \slicedoff\ 1-dimensional interval is still an interval, that is, a box. Thus the case of $k=2$ needs to be handled differently. For this, we consider a star with two leaves and two edges per leaf and show that at least one of its four stars is a box. This is done by an argument similar to the argument for the inductive step, only that linearity is exploited for moving boundaries in two distinct directions. The fact that a multi-star with two leaves and two edges per leaf contains a box, can be combined by a known extremal graph theory result --- related to the Zarankiewicz problem --- to show that almost all stars are boxes.

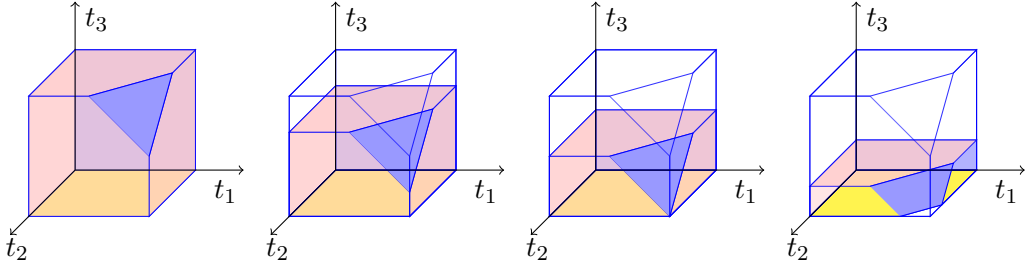
\begin{figure}
  \centering
  
\newcommand{\Depth}{2}
\newcommand{\Height}{2}
\newcommand{\Width}{2}
\newcommand{\Fraction}{0.5}

\begin{tikzpicture}[scale=0.8]
\coordinate (O) at (0,0,0);
\coordinate (A) at (0,\Width,0);
\coordinate (B) at (0,\Width,\Height);
\coordinate (C) at (0,0,\Height);
\coordinate (D) at (\Depth,0,0);
\coordinate (E) at (\Depth,\Width,0);
\coordinate (F) at (\Depth,\Width,\Height);
\coordinate (G) at (\Depth,0,\Height);

\coordinate (BF) at (\Fraction*\Depth,\Width,\Height);
\coordinate (EF) at (\Depth,\Width,\Fraction*\Height);
\coordinate (GF) at (\Depth,\Fraction*\Width,\Height);

\draw[blue,fill=yellow!80] (O) -- (C) -- (G) -- (D) -- cycle;%
\draw[blue,fill=blue!30] (O) -- (A) -- (E) -- (D) -- cycle;%
\draw[blue,fill=red!10] (O) -- (A) -- (B) -- (C) -- cycle;%
\draw[blue,fill=red!20,opacity=0.8] (D) -- (E) -- (EF) -- (GF) -- (G) -- cycle;%
\draw[blue,fill=red!20,opacity=0.6] (C) -- (B) -- (BF) -- (GF) -- (G) -- cycle;%
\draw[blue,fill=red!20,opacity=0.8] (A) -- (B) -- (BF) -- (EF) -- (E) -- cycle;%

\draw[blue,fill=blue!40,opacity=1] (BF) -- (EF) -- (GF);

\coordinate (AA) at (0,1.4*\Width,0);
\coordinate (CC) at (0,0,1.4*\Height);
\coordinate (DD) at (1.4*\Depth,0,0);

\draw[->] (O) -- (AA);
\draw[->] (O) -- (CC);
\draw[->] (O) -- (DD);

\draw (2.5,0,0) node[anchor=north] {$t_1$};
\draw (0,2.5,0) node[anchor=west] {$t_3$};
\draw (0,0,2.5) node[anchor=north] {$t_2$};

\end{tikzpicture}
\begin{tikzpicture}[scale=0.8]
\coordinate (O) at (0,0,0);
\coordinate (A) at (0,\Width,0);
\coordinate (B) at (0,\Width,\Height);
\coordinate (C) at (0,0,\Height);
\coordinate (D) at (\Depth,0,0);
\coordinate (E) at (\Depth,\Width,0);
\coordinate (F) at (\Depth,\Width,\Height);
\coordinate (G) at (\Depth,0,\Height);

\coordinate (BF) at (\Fraction*\Depth,\Width,\Height);
\coordinate (EF) at (\Depth,\Width,\Fraction*\Height);
\coordinate (GF) at (\Depth,\Fraction*\Width,\Height);

\coordinate (Aa) at (0,\Width-0.6,0);
\coordinate (Ba) at (0,\Width-0.6,\Height);
\coordinate (Ea) at (\Depth,\Width-0.6,0);
\coordinate (BFa) at (\Fraction*\Depth,\Width-0.6,\Height);
\coordinate (EFa) at (\Depth,\Width-0.6,\Fraction*\Height);
\coordinate (GFa) at (\Depth,\Fraction*\Width-0.6,\Height);

\draw[blue,fill=yellow!80] (O) -- (C) -- (G) -- (D) -- cycle;%
\draw[blue,fill=blue!30] (O) -- (Aa) -- (Ea) -- (D) -- cycle;%
\draw[blue,fill=red!10] (O) -- (Aa) -- (Ba) -- (C) -- cycle;%
\draw[blue,fill=red!20,opacity=0.8] (D) -- (Ea) -- (EFa) -- (GFa) -- (G) -- cycle;%
\draw[blue,fill=red!20,opacity=0.6] (C) -- (Ba) -- (BFa) -- (GFa) -- (G) -- cycle;%
\draw[blue,fill=red!20,opacity=0.8] (Aa) -- (Ba) -- (BFa) -- (EFa) -- (Ea) -- cycle;%

\draw[blue,fill=blue!40,opacity=1] (BFa) -- (EFa) -- (GFa);

\draw[blue] (O) -- (C) -- (G) -- (D) -- cycle;%
\draw[blue] (O) -- (A) -- (E) -- (D) -- cycle;%
\draw[blue] (O) -- (A) -- (B) -- (C) -- cycle;%
\draw[blue,opacity=0.8] (D) -- (E) -- (EF) -- (GF) -- (G) -- cycle;%
\draw[blue,opacity=0.6] (C) -- (B) -- (BF) -- (GF) -- (G) -- cycle;%
\draw[blue,opacity=0.8] (A) -- (B) -- (BF) -- (EF) -- (E) -- cycle;%

\coordinate (AA) at (0,1.4*\Width,0);
\coordinate (CC) at (0,0,1.4*\Height);
\coordinate (DD) at (1.4*\Depth,0,0);

\draw[->] (O) -- (AA);
\draw[->] (O) -- (CC);
\draw[->] (O) -- (DD);

\draw (2.5,0,0) node[anchor=north] {$t_1$};
\draw (0,2.5,0) node[anchor=west] {$t_3$};
\draw (0,0,2.5) node[anchor=north] {$t_2$};

\end{tikzpicture}
\begin{tikzpicture}[scale=0.8]
\coordinate (O) at (0,0,0);
\coordinate (A) at (0,\Width,0);
\coordinate (B) at (0,\Width,\Height);
\coordinate (C) at (0,0,\Height);
\coordinate (D) at (\Depth,0,0);
\coordinate (E) at (\Depth,\Width,0);
\coordinate (F) at (\Depth,\Width,\Height);
\coordinate (G) at (\Depth,0,\Height);

\coordinate (BF) at (\Fraction*\Depth,\Width,\Height);
\coordinate (EF) at (\Depth,\Width,\Fraction*\Height);
\coordinate (GF) at (\Depth,\Fraction*\Width,\Height);

\coordinate (Aa) at (0,\Width-1,0);
\coordinate (Ba) at (0,\Width-1,\Height);
\coordinate (Ea) at (\Depth,\Width-1,0);
\coordinate (BFa) at (\Fraction*\Depth,\Width-1,\Height);
\coordinate (EFa) at (\Depth,\Width-1,\Fraction*\Height);
\coordinate (GFa) at (\Depth,\Fraction*\Width-1,\Height);

\draw[blue,fill=yellow!80] (O) -- (C) -- (G) -- (D) -- cycle;%
\draw[blue,fill=blue!30] (O) -- (Aa) -- (Ea) -- (D) -- cycle;%
\draw[blue,fill=red!10] (O) -- (Aa) -- (Ba) -- (C) -- cycle;%
\draw[blue,fill=red!20,opacity=0.8] (D) -- (Ea) -- (EFa) -- (GFa) -- (G) -- cycle;%
\draw[blue,fill=red!20,opacity=0.6] (C) -- (Ba) -- (BFa) -- (GFa) -- (G) -- cycle;%
\draw[blue,fill=red!20,opacity=0.8] (Aa) -- (Ba) -- (BFa) -- (EFa) -- (Ea) -- cycle;%

\draw[blue,fill=blue!40,opacity=1] (BFa) -- (EFa) -- (GFa);

\draw[blue] (O) -- (C) -- (G) -- (D) -- cycle;%
\draw[blue] (O) -- (A) -- (E) -- (D) -- cycle;%
\draw[blue] (O) -- (A) -- (B) -- (C) -- cycle;%
\draw[blue,opacity=0.8] (D) -- (E) -- (EF) -- (GF) -- (G) -- cycle;%
\draw[blue,opacity=0.6] (C) -- (B) -- (BF) -- (GF) -- (G) -- cycle;%
\draw[blue,opacity=0.8] (A) -- (B) -- (BF) -- (EF) -- (E) -- cycle;%

\coordinate (AA) at (0,1.4*\Width,0);
\coordinate (CC) at (0,0,1.4*\Height);
\coordinate (DD) at (1.4*\Depth,0,0);

\draw[->] (O) -- (AA);
\draw[->] (O) -- (CC);
\draw[->] (O) -- (DD);

\draw (2.5,0,0) node[anchor=north] {$t_1$};
\draw (0,2.5,0) node[anchor=west] {$t_3$};
\draw (0,0,2.5) node[anchor=north] {$t_2$};

\end{tikzpicture}
\begin{tikzpicture}[scale=0.8]
\coordinate (O) at (0,0,0);
\coordinate (A) at (0,\Width,0);
\coordinate (B) at (0,\Width,\Height);
\coordinate (C) at (0,0,\Height);
\coordinate (D) at (\Depth,0,0);
\coordinate (E) at (\Depth,\Width,0);
\coordinate (F) at (\Depth,\Width,\Height);
\coordinate (G) at (\Depth,0,\Height);

\coordinate (BF) at (\Fraction*\Depth,\Width,\Height);
\coordinate (EF) at (\Depth,\Width,\Fraction*\Height);
\coordinate (GF) at (\Depth,\Fraction*\Width,\Height);

\coordinate (T0) at (1.5,1.5,2);

\coordinate (BGF) at (1.5,1.5,2);
\coordinate (EGF) at (2,1.5,1.5);

\coordinate (Aa) at (0,\Width-1.5,0);
\coordinate (Ba) at (0,\Width-1.5,\Height);
\coordinate (Ea) at (\Depth,\Width-1.5,0);
\coordinate (BFa) at (\Fraction*\Depth,\Width-1.5,\Height);
\coordinate (EFa) at (\Depth,\Width-1.5,\Fraction*\Height);
\coordinate (GFa) at (\Depth,\Fraction*\Width-1.5,\Height);

\coordinate (BGFa) at (1.5,0,2);
\coordinate (EGFa) at (2,0,1.5);

\draw[blue,fill=yellow!80] (O) -- (C) -- (BGFa) -- (EGFa) -- (D) -- cycle;%
\draw[blue,fill=blue!30] (O) -- (Aa) -- (Ea) -- (D) -- cycle;%
\draw[blue,fill=red!10] (O) -- (Aa) -- (Ba) -- (C) -- cycle;%
\draw[blue,fill=red!20,opacity=0.8] (Aa) -- (Ba) -- (BFa) -- (EFa) -- (Ea) -- cycle;%

\draw[blue,fill=blue!40,opacity=1] (BFa) -- (EFa) -- (EGFa) -- (BGFa);

\draw[blue] (O) -- (C) -- (G) -- (D) -- cycle;%
\draw[blue] (O) -- (A) -- (E) -- (D) -- cycle;%
\draw[blue] (O) -- (A) -- (B) -- (C) -- cycle;%
\draw[blue,opacity=0.8] (D) -- (E) -- (EF) -- (GF) -- (G) -- cycle;%
\draw[blue,opacity=0.6] (C) -- (B) -- (BF) -- (GF) -- (G) -- cycle;%
\draw[blue,opacity=0.8] (A) -- (B) -- (BF) -- (EF) -- (E) -- cycle;%

\coordinate (AA) at (0,1.4*\Width,0);
\coordinate (CC) at (0,0,1.4*\Height);
\coordinate (DD) at (1.4*\Depth,0,0);

\draw[->] (O) -- (AA);
\draw[->] (O) -- (CC);
\draw[->] (O) -- (DD);

\draw (2.5,0,0) node[anchor=north] {$t_1$};
\draw (0,2.5,0) node[anchor=west] {$t_3$};
\draw (0,0,2.5) node[anchor=north] {$t_2$};

\end{tikzpicture}

\caption{The figure is a high-level illustration of the main argument of the proof of Lemma~\ref{lemma:linearBoundary} for $k=3$. The left figure corresponds to a \slicedoffx\ box, while the other three figures illustrate the shift of the boundaries when $s_3$ is decreased, which leads to a contradiction, as the last figure is not a box for tasks $1,2$.}
\label{fig:3dshift}
\end{figure}

The rest of this section makes the above outline precise. In particular, in Subsection~\ref{sec:def-notation} we set up the notation and provide definitions, a formal statement of the main result of this section (Thm~\ref{thm:box-restated}), and useful facts and lemmas. In Section~\ref{sec:box_main_argument} we present the main argument. In Section~\ref{sec:inductionbase} we provide the proof of the base case, in Section~\ref{sec:induction_step} the proof of the induction step, and finally in Section~\ref{sub:existence} we wrap up to show existence of a box.

\subsection{Notation, definitions and preliminaries}
\label{sec:def-notation}

Most of the following definitions depend on the mechanism at hand, therefore it will be convenient to \emph{fix an arbitrary truthful mechanism throughout this section}. Since we are dealing with multi-stars and not general multi-graphs, it will be convenient to change the notation.
 
We consider multi-stars with $n$ nodes: the root is node 0 and the leaves are nodes $1,\ldots, n-1$. We name the edges $1,2,\ldots, m$, where $m=\ell(n-1)$. Edges correspond to tasks; two parallel edges are called \emph{siblings}. The set of tasks for leaf $i$ is denoted by $C_i$, and we use the term leaf for both $i$ and $C_i$.  The multiplicity of every edge is $\ell$, i.e., $|C_i|=\ell$, for every $i\in [n-1]$.

The values of an edge $j\in [m]$
between node 0 and leaf $i$ has two nonnegative values $(v_0^j,v_i^j)$, which represent the processing time for the two players; for simplicity we denote them by $(t_j, s_j)$. All the instances in this section satisfy $s_j\in [0,B)$ for some arbitrarily high value $B$, although in most of the argument $s_j$ takes values in $[0,1]$. The set of values for all edges $T=(t,s)=(t_j, s_j)_{j\in [m]}$ is called an instance.

For a given instance $T=(t,s)$, we denote the boundary function
$\psi_{0,i}^j(v_i^j)$ of the root for edge $j\in [m]$ by
$\psi_j(s_j)$. Recall that the interpretation is that, having fixed the values of
all other edges, \emph{edge $j$ is given to the root if $t_j<
  \psi_j(s_j)$, and it is given to the leaf when
  $t_j>\psi_j(s_j)$}. Since the argument sometimes considers more than
one instance, it will be useful to extend the notation to
$\psi_j(t_{-j}, s_{-j}, s_j)$ to explicitly indicate the values of the
other edges. We also write it as $\psi_j[t_{-j},s_{-j}](s_j)$ when we
want to treat it as a function of $s_j$ only. Since the values of most
other edges can be inferred from the context, we only indicate the
values that have changed inside the optional part (the part inside the
square brackets). For example, for an instance $T=(t, s)$, which can
be inferred from the context, if we change $t_1$ from its current
value to $x$, the new boundary function of edge 2 is written as
$\psi_2[t_1=x](s_2)$ or simply $\psi_2[x](s_2)$.

In this section, we use instances that satisfy the \emph{continuity requirement} (Definition~\ref{def:continuity}), although their values do not have to be rational.
We use the following fixed parameters.

\begin{definition}[Parameters] \label{def:parameters} The values of these parameters are assumed to be rational numbers.
  \begin{itemize}
  \item $\xi\in (0,1);$
  \item $\nu\in (0,\frac{\xi}{n^2\cdot 4^n})$, a very small fixed strictly positive value; %
  \item $\nu'=\nu/4$.
  \end{itemize}
\end{definition}

The aim is to prove that a multi-star $(t,s)$ with $t_j=0$ and $s_j\in (\xi, 1)$, for all $j\in[m]$ has a box with parameter $\delta=4^n\nu$. Obviously the range of values $(\xi,1)$ can be replaced by any other interval, so the theorem can be used for the nice multi-stars of the main argument, where the values are in the range $(z, (1+\varepsilon)z)$, where $\varepsilon$ is fixed small parameter.

To apply induction, we need the definition of box for smaller stars, so we repeat the definition of box (Definition~\ref{def:box}) using the
notation of this section and for the specific parameter $\delta$ that we are going to use. Note that $\delta$ depends on the size of the star.

\begin{definition}[Box]
  For a fixed $\nu$, a star $S$ with $k$ leaves, is called a \emph{$\delta$-box or simply box}, if the mechanism allocates all edges to the root, when we set $t_j=\psi_j(s_j)-\delta$ for every leaf $j$ of $S$, where $\delta=4^{k}\nu$.
\end{definition}

The following derived instances are used many times in what follows. %

\begin{definition}[Instance $T_\nu(S)$]\label{def:nu}
  For a given instance $T=(t,s)$, let $T_\nu(S)=(t^\nu, s)$ be the instance that agrees with $T$ everywhere, except for tasks in a star $S$ with $k$ leaves for which $t_j = \psi_j(s_j)-4^k\nu$.  
\end{definition}

We can now restate the main theorem of this section, a restatement of Theorem~\ref{thm:box}.
\begin{theorem}[Box] \label{thm:box-restated} For every $\nu$ and $\xi$
 that satisfy the requirements of
  Definition~\ref{def:parameters},  a multi-star instance $T=(t,\bar
  s)$ with values $t_j=0$, $\bar s_j\in [\xi,1]$ and multiplicity
   $\ell>\left (\frac{5n}{\nu}\right )^{2n}$, that satisfies the continuity requirement contains a
  $\delta$-box with $n-1$ leaves, where $\delta=4^{n-1}\nu$.
\end{theorem}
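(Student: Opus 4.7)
The plan is to prove the statement by induction on the number of leaves $k\ge 2$, strengthening the inductive hypothesis so that it asserts not only existence but that almost every star of $k$ leaves drawn from a sufficiently thick multi-star is a $4^k\nu$-box. The extra strength is needed because the inductive step requires a single star of $k-1$ leaves to be a box for \emph{many} choices of the leaf value that we will vary; a quantitative pigeonhole/Zarankiewicz-style counting is what turns this into a usable statement. Once the inductive statement is in place, the theorem follows by instantiating $k=n-1$ at a sufficiently high multiplicity $\ell>(5n/\nu)^{2n}$.

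For the base case $k=2$, I would analyze a 2-leaf sub-structure with two parallel edges per leaf and consider its four stars; using the characterization of $2\times 2$ mechanisms (a mechanism is either a relaxed affine minimizer or relaxed task-independent) on the sibling pair $(p,p')$ of a leaf, I would argue that at least one of the four stars is a $16\nu$-box. In the task-independent case the boundary for $p$ does not depend on $s_{p'}$, so replacing $p$ by $p'$ yields a rectangular allocation region for the other two tasks; in the affine minimizer case I would exploit linearity in \emph{both} leaf directions to shift boundaries rectilinearly, producing the desired box or a contradiction. Then a Zarankiewicz-type extremal graph argument upgrades ``at least one'' among four to ``a positive fraction of all stars,'' which is what the induction consumes.

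For the inductive step, assume most $(k-1)$-leaf sub-stars are $4^{k-1}\nu$-boxes. Using the multiplicity, select a star $S$ with $k$ leaves for which, for many sibling substitutions of a chosen task $p$ by some $p'\in C_{i(p)}$, all $(k-1)$-leaf sub-stars obtained by removing $p$ or $p'$ remain boxes. Such an $S$ is either already a $4^k\nu$-box or a \emph{chopped off box}: a rectangular allocation region with a single diagonal corner cut (see Figure~\ref{fig:3dshift}). I would then run the $(p,p')$-slice analysis via the $2\times 2$ characterization: in the task-independent case, the fact that $\psi_p(\cdot)$ does not move with $s_{p'}$, combined with the continuity requirement which rules out ambiguities at discontinuities, lets me swap $p$ for $p'$ and fill in the chopped corner to produce a genuine box; in the affine minimizer case I would exploit the linearity of the boundary in $s_p$ to show that decreasing $s_p$ causes the entire chopped corner to translate rectilinearly until it reaches a side facet, so the facet itself becomes a chopped $(k-1)$-dimensional box, contradicting the inductive hypothesis at a substitution which is box by choice of $S$.

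The main obstacle I anticipate is the affine minimizer subcase of the inductive step: one must control the way the upper envelope of the allocation region deforms as $s_p$ varies and show that the chopped-off corner does meet a facet \emph{before} $s_p$ leaves the allowed range $[\xi,1]$, which is where the numerical choice $\nu<\xi/(n^2\cdot 4^n)$ from Definition~\ref{def:parameters} and the careful decrement schedule $\delta_k=4^k\nu$ are tuned; one also has to argue that the shifted region still respects monotonicity and that the allocations inferred at the contradicting instance lie within the box promised by induction (here Lemma~\ref{lemma:tool} and Lemma~\ref{lem:restriction} handle the cross-leaf consistency). The base case $k=2$ is a second, essentially independent, difficulty because the one-dimensional shift argument degenerates and must be replaced by the two-direction shift on a $2\times 2$ tile.
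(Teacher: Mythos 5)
Your plan follows essentially the same route as the paper: a probabilistic induction on the number of leaves (tracking the failure probability $b_k$), a base case on a two-leaf, two-tasks-per-leaf tile combined with the Zarankiewicz-type extremal bound, and an inductive step through chopped-off boxes and the $(p_k,p_k')$-slice dichotomy, where the task-independent case yields a box by sibling substitution and the affine-minimizer case is killed by the rectilinear shift of the bundling facet against the strengthened hypothesis that $P_{-k}$ stays a box under the discretized decrements of $s_{p_k}$. The only slight mismatches are cosmetic: the shift argument does not need $s_{p_k}$ to stay above $\xi$ (the strengthened hypothesis covers decrements down toward $0$), and the base case must give failure probability $O(1/\sqrt{\ell})$ rather than merely a positive fraction of boxes, which is exactly what the cited Zarankiewicz bound provides.
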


Usually we fix the values of most tasks and consider the allocation of the mechanism on the remaining tasks. In particular, we do this when we employ the characterization of $2\times 2$ mechanisms. The next definition formalizes this.

\begin{definition}[Slice and slice mechanism] \label{def:slice} Fix an instance $T$ and two tasks $p$ and $p'$. The set of instances that agree with $T$ on all tasks except for the tasks $p$ and $p'$ is called a \emph{$(p,p')$ slice}. The allocation function of these two tasks by the mechanism is called the $(p,p')$ \emph{slice mechanism} for the given values of the other tasks, or simply the $(p,p')$ slice mechanism, when the other values can be inferred by the context. Similarly, we define slice mechanisms for larger sets of tasks.
\end{definition}

\begin{definition}[Trivial leaf]
  A leaf $i$ with set of edges $C_i$ is called \emph{trivial} for a given instance $T$ if $t_j=0$ and $s_j<1$ for every task $j\in C_i.$ 
\end{definition}

We now provide the definition of the main type of instances that we consider throughout.

\begin{definition}[Standard instance] \label{def:standard} %
  For a given truthful mechanism, an instance $T=(t,\bar s)$ is a \emph{standard instance for a set of leaves} $\cal C$ if the following conditions hold
  \begin{itemize}
  \item $t_j=0$, for every $j\in [m]$
  \item $\bar s_j\in (\xi,1)$, for every task $j$ of these leaves, i.e., $j\in \cup_{i\in \cal C}C_i$,
  \item $T$ satisfies the continuity requirement. 
  \end{itemize}
  The leaves in $\cal C$ are then called \emph{standard leaves} for the instance $T.$ 
\end{definition}

Henceforth the notation $\bar s$ will denote some fixed standard instance (clear from the context). We reserve the notation $s$ for the variable, which can take any value. Note that instance $T=(t,\bar s)$ in the statement of Theorem~\ref{thm:box-restated} is standard for the set of all leaves.

The central part of the argument is an induction on the number of
leaves $k$. In the induction step from $(k-1)$ to $k$, the remaining
leaves are trivial with fixed values, and therefore play no role; in
particular they do not affect the approximation ratio.

\begin{definition}[critical values $\alpha_j$ for singletons]\label{def:alphai}
  For standard instances $T=(t, \bar s)$, we will use the shorthand $\alpha_j$ for $\psi_j(\bar s_j)$.
\end{definition}

The following observation about the boundary functions $\psi_j(s_j)$ is a straightforward generalization of Proposition~\ref{prop:bounded slope}. The proof of the lower bound is almost identical to the proof of the upper bound given in Proposition~\ref{prop:bounded slope}.

\begin{lemma} \label{obs:alphas} Let $T=(t,s)$ be an instance so that all leaves are trivial. Assuming that the approximation factor is less than $n$,  for every task $j\in [m]$  it holds that 
\begin{enumerate}
\item[(i)] $s_j/n<\psi_j(s_j)<ns_j;$%
\item[(ii)] $\lim_{s_j\rightarrow 0}\psi_j(s_j)=\psi_j(0)=0;$ 
\end{enumerate}
\end{lemma}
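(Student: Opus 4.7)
The proof is a direct generalization of Proposition~\ref{prop:bounded slope}. For the upper bound in (i), I would argue by contradiction: suppose $\psi_j(\bar s_j) \geq n\bar s_j$ for some task $j$ belonging to leaf $i$. Consider the instance $T'$ obtained from $T$ by changing $t_j$ to $\psi_j(\bar s_j)-\epsilon$ for a small $\epsilon>0$, keeping every other value fixed. By the definition of the boundary function, since $t'_j<\psi_j(\bar s_j)$, the mechanism assigns $j$ to the root, so the mechanism's makespan is at least $\psi_j(\bar s_j)-\epsilon\geq n\bar s_j-\epsilon$. On the other hand, the allocation that puts $j$ on leaf $i$ and every other task on the root has makespan $\bar s_j$, because all leaves are trivial (so $t_{j'}=0$ for every $j'\neq j$). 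Thus $\mech(T')/\opt(T')\geq n-\epsilon/\bar s_j$, and letting $\epsilon\to 0$ contradicts the assumption that the approximation ratio is strictly less than $n$.

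For the lower bound I would use the symmetric trick. Suppose instead that $\psi_j(\bar s_j)\leq \bar s_j/n$, and now change $t_j$ to $\psi_j(\bar s_j)+\epsilon$. By the boundary function definition task $j$ is allocated to leaf $i$, and leaf $i$'s load is at least $\bar s_j$, so $\mech\geq \bar s_j$. The allocation that puts every task (including $j$) on the root has makespan at most $\psi_j(\bar s_j)+\epsilon \leq \bar s_j/n +\epsilon$, since all other $t_{j'}$ are zero. Hence $\mech/\opt\geq \bar s_j/(\bar s_j/n+\epsilon)$, which tends to $n$ as $\epsilon\to 0$, again contradicting the approximation bound.

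Part (ii) then follows from (i) and monotonicity of the boundary function: for any $s_j>0$ we have $0\leq\psi_j(0)\leq\psi_j(s_j)<ns_j$, so letting $s_j\to 0^+$ forces $\lim_{s_j\to 0^+}\psi_j(s_j)=0$ and $\psi_j(0)=0$.

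The argument is essentially routine; the only thing to be careful about is verifying that the optimal makespan in each of the two constructions is really the claimed value, which is immediate here because the triviality hypothesis ($t_{j'}=0$ for all $j'$) lets us freely push every task other than $j$ to the root at zero cost. No use of the continuity requirement is needed because the strict inequality form of the boundary function definition already pins down the allocation of $j$ in each of the perturbed instances.
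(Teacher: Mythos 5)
Your proof is correct and follows essentially the same route as the paper, which simply notes that both bounds in (i) are obtained by the threshold-perturbation argument of Proposition~\ref{prop:bounded slope} (set $t_j$ just below or just above $\psi_j(\bar s_j)$, use the trivial leaves to bound $\opt$ by $\bar s_j$ or by $\psi_j(\bar s_j)+\epsilon$, and let $\epsilon\to 0$), with (ii) following from (i) and monotonicity exactly as you argue.
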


\subsubsection{Region $R_P$}
\label{sec:region-r_p}

The next definition of region %
$R_P$ of a given mechanism and instance $T,$ concerns a $k$-dimensional slice defined by a task set $P.\,$ The other tasks $[m]\setminus P$ have fixed values and for simplicity they are not treated in the definition. In most cases when $R_P$ is considered, $P$ will be a star, or a pair of siblings, and  the tasks $[m]\setminus P$ will be trivial.

\begin{definition}[region $R_P$ and $R_{\emptyset|P}$]\label{def:regionR} Let $T=(t,s)$ be a given instance and $P=\{p_1,\ldots,p_k\}$ be a set of tasks. The allocation region  $R^T_P\subset \mathbb R^k_{0,+}$ consists of all vectors $t'_P$ such that for input $T'=((t'_P, t_{-P}), s),$ all tasks of $P$ are allocated to the root. Similarly, $R^T_{\emptyset|P}$ consists of all the $t'_P$ so that for $((t'_P, t_{-P}), s),$ all tasks of $P$ are allocated to the leaves. $T$ is omitted from the notation when it is clear from the context.
\end{definition}

It follows directly by the Weak Monotonicity property that $R_P $ is a
(possibly degenerate) $k$-dimensional polyhedron defined by linear
constraints of the form $\sum_{i\in I} t_{p_i}\leq c_I,$ for some
$I\subset [k]$, and some $c_I\in \mathbb R_+$. In particular, for
$k=2$, $R_P$ is either a rectangle or a rectangle from which we cut
off a piece by a $-45^o$ cut. For higher dimensions, it is a
hyperrectangle (orthotope) with pieces cut off by specific
hypercuts. See Figure~\ref{fig:box} for examples of $R_P$ for $k=2$
and $k=3$ and see also~\cite{Vid09} for a geometric interpretation of
truthfulness. Note that $R_P\subseteq \times_{i=1}^k[0,\alpha_{p_i}]$,
where $\alpha_{p_i}=\psi_{p_i}(s_{p_i})$.

Two particular shapes of $R_P$ play significant role in the argument,
boxes (complete orthotopes), and \slicedoff\ boxes, which are boxes
with a single corner removed by a diagonal cut of the form $\sum_{i
  \in [k]} t_{p_i} = c_{[k]}$. Strictly speaking, in the definition of
box and \slicedoff\ box, we allow thin pieces of width at most
$\delta$ to be missing from its faces.

More generally, when the facet $\sum_{i\in [k]} t_{p_i}=c_{[k]}$ of
$R_P$ exists, we call it \emph{bundling facet}. When the bundling
boundary facet exists, it separates the regions $R_P$ and
$R_{\emptyset|P}$.

The following obvious lemma will be useful later.

\begin{lemma}\label{obs:Rnonempty} Let $T=(t,s)$ be an instance with
  $t_j=0$ for all $j$, and assume that for some set of tasks
  $P=\{p_1,p_2,\ldots, p_k\}$ the region $R_P$ has a full dimensional
  bundling facet, i.e., it contains a point with strictly positive
  coordinates. Then $\psi_{p_j}(s_{p_j})>0$ and $s_{p_j}>0$ for every
  task $p_j\in P$.
\end{lemma}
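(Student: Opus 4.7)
The plan is to deduce both conclusions almost directly from the inclusion $R_P \subseteq \prod_{i=1}^{k}[0,\alpha_{p_i}]$ (with $\alpha_{p_i}=\psi_{p_i}(s_{p_i})$) recorded in the note just above the statement, combined with monotonicity of the boundary functions and the base value $\psi_{p_j}(0)=0$ supplied by Lemma~\ref{obs:alphas}(ii). Nothing deeper seems needed.

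For the first claim, I would fix a point $t^*$ in the full-dimensional bundling facet of $R_P$; by hypothesis this facet contains a point with $t^*_{p_i}>0$ for every $i\in[k]$. Reading off the $i$-th coordinate of the inclusion immediately yields $0 < t^*_{p_i} \leq \alpha_{p_i} = \psi_{p_i}(s_{p_i})$, which is the desired $\psi_{p_j}(s_{p_j})>0$ for every $p_j\in P$. For the second claim, recall from Section~\ref{sec:bound-funct-integr} that every $\psi_{p_j}$ is nondecreasing, and from Lemma~\ref{obs:alphas}(ii) that $\psi_{p_j}(0)=0$; combined with the strict inequality just derived, monotonicity forces $s_{p_j}>0$.

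The only real obstacle is a bookkeeping point rather than a mathematical one: Lemma~\ref{obs:alphas}(ii) is formulated for instances in which every leaf is trivial, whereas the present hypothesis only requires $t_j=0$ for all $j$. In every use of this lemma inside Section~\ref{sec:box} the non-$P$ leaves are in fact trivial and the citation applies verbatim; if a standalone argument is preferred, the equality $\psi_{p_j}(0)=0$ also follows directly from the bounded-approximation-ratio assumption, since when $s_{p_j}=0$ the optimum assigns $p_j$ to its leaf at zero cost, so any bounded-approximation mechanism must likewise place $p_j$ at the leaf whenever $t_{p_j}>0$.
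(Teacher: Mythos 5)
Your proposal is correct and takes essentially the same route as the paper: the paper obtains $\psi_{p_j}(s_{p_j})\geq \hat t_{p_j}-\epsilon>0$ by an explicit weak-monotonicity perturbation of the facet point (which is precisely the content of the noted inclusion $R_P\subseteq \times_{i}[0,\alpha_{p_i}]$ that you cite, modulo passing to a point just inside the facet), and it deduces $s_{p_j}>0$ from Lemma~\ref{obs:alphas}(i) via $s_{p_j}>\psi_{p_j}(s_{p_j})/n$, while you use the equivalent part (ii) ($\psi_{p_j}(0)=0$) with monotonicity. Your bookkeeping remark about the trivial-leaves hypothesis of Lemma~\ref{obs:alphas} applies verbatim to the paper's own citation, so it is not a gap particular to your argument.
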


\begin{proof} Let $\hat t_P$ be a point on the bundling facet with
  strictly positive coordinates. For some sufficiently small
  $\epsilon>0$, consider the point $(t_{p_j}=\hat
  t_{p_j}-\epsilon\,,\, (t_{p_i}=0)_{i\in[k]\setminus{j}})$. By weak
  monotonicity, this input point is in $R_P,$ so all tasks in $P$
  are given to the root. It follows that $\psi_{p_j}(s_{p_j})\geq \hat
  t_{p_j}-\epsilon>0$. The fact that $s_{p_j}$ is also strictly
  positive follows from Lemma~\ref{obs:alphas}, since $s_{p_j}>
  \psi_{p_j}(s_{p_j})/n$.
\end{proof}

The following is a useful proposition from~\cite{CKK21b}.

\begin{proposition}[\cite{CKK21b} Lemma 21. Claim (i)]\label{prop:lipschitz}
For every truthful mechanism, and arbitrary $T=(t,s),$ the boundary function $\psi_r(t_{-r},s)$ is 1-Lipschitz in $t_{-r}$, i.e.,
    $|\psi_r(t_{-r},s)-\psi_r(t_{-r}',s)|\leq |t_{-r}-t_{-r}'|_1$.
\end{proposition}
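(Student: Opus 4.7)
The plan is to deduce the bound directly from weak monotonicity applied to the root player, by comparing two instances that sit on opposite sides of the allocation boundary. Let $\psi = \psi_r(t_{-r}, s)$ and $\psi' = \psi_r(t'_{-r}, s)$, and write $\Delta = \|t'_{-r} - t_{-r}\|_1$. I aim to show $\psi - \psi' \leq \Delta$; the reverse inequality then follows by swapping the roles of the two instances. Crucially, the two instances I will compare differ \emph{only} in the root's values (both the value on task $r$ and the other root values $t_{-r}$), so WMON for the root is the right tool.

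Fix a small $\epsilon > 0$ and define $T_1 = ((\psi - \epsilon, t_{-r}), s)$ and $T_2 = ((\psi' + \epsilon, t'_{-r}), s)$. By the definition of the boundary function, task $r$ is allocated to the root in $T_1$ and to its leaf in $T_2$. Let $A_0$ and $A'_0$ be the root's allocations in $T_1$ and $T_2$, so $r \in A_0$ and $r \notin A'_0$. Weak monotonicity (Definition~\ref{def:wmon}) applied to the root gives
\[
t^{T_1}_0(A_0) - t^{T_1}_0(A'_0) \;\leq\; t^{T_2}_0(A_0) - t^{T_2}_0(A'_0).
\]
Expanding both sides using the root's value vectors in the two instances, isolating the contribution of task $r$, and collecting the differences $t'_j - t_j$ for $j \neq r$ over $A_0 \setminus \{r\}$ and $A'_0$, the inequality reduces to
\[
\psi - \psi' - 2\epsilon \;\leq\; \sum_{j \in A_0 \setminus \{r\}} (t'_j - t_j) \;-\; \sum_{j \in A'_0} (t'_j - t_j) \;\leq\; \Delta.
\]
Letting $\epsilon \to 0$ yields $\psi - \psi' \leq \Delta$.

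For the reverse direction, the same argument with $T_1 = ((\psi + \epsilon, t_{-r}), s)$ and $T_2 = ((\psi' - \epsilon, t'_{-r}), s)$ (so now $r \notin A_0$ and $r \in A'_0$) gives $\psi' - \psi \leq \Delta$. Combining the two bounds yields the claim. The only subtlety — and the reason for introducing the $\pm \epsilon$ perturbations rather than evaluating at $\psi, \psi'$ directly — is that the boundary functions may be discontinuous, so the allocation at a critical value itself is not determined; the $\epsilon$-shifts place us strictly on the desired side of the boundary, and the conclusion is recovered in the limit. No further case analysis is needed, and the argument does not use anything beyond WMON for a single player.
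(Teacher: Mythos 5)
Your proof is correct, and it is exactly the intended argument: the paper imports this statement from \cite{CKK21b} without reproving it, but the same weak-monotonicity comparison of two inputs placed $\epsilon$ below and $\epsilon$ above the respective boundaries, followed by collecting the per-task differences and letting $\epsilon\to 0$, is precisely the argument mirrored in the paper's proof of Proposition~\ref{prop:FarPsiAreBundling} (which is described there as ``almost the same'' as the proof of Proposition~\ref{prop:lipschitz}). Your handling of the undetermined allocation at the critical value via the $\pm\epsilon$ perturbations is the right (and only) subtlety.
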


\subsubsection{Facts about two tasks.} \label{sec:facts-about-two}

In this section we summarize known concepts and results for two tasks, i.e., two edges that can belong to the same leaf (sibling tasks) or to two different leaves (star of two tasks) (see for example~\cite{CKK21b}). Here we assume that all other tasks are fixed, and omit them from the notation. So let simply $t=(t_1, t_2)$ and $s=(s_1,s_2)$.
First we consider allocation regions for the root depending on his own
bids $(t_1,t_2)$ for fixed values $s=(s_1,s_2)$. Here the $s_1$ and $s_2$ do
not necessarily belong to the same leaf. It is known that in the case
of two tasks, for a fixed $s$ the four possible allocation regions of
the root in a weakly monotone allocation subdivide $\mathbb R_{\geq
  0}^2$ basically in three possible ways summarized in the following definition (see
Figure~\ref{fig:shapesPure} for an illustration).

\begin{definition}\label{def:crossing}
  For given $(s_1,s_2)$ we call the allocation for the root 
  \begin{itemize}
  \item \emph{quasi-bundling}, if there are at least two points $t\neq t'$ on
    the boundary of $R_{\{1,2\}}$ and $R_{\emptyset|\{1,2\}};$ 
  \item \emph{quasi-flipping}, if $R_{\{1,2\}}$ and $R_{\emptyset}$ have no common boundary point;  
  \item \emph{crossing}, otherwise, if there is a unique common boundary point.
  \end{itemize}
  We sometimes refer collectively to both quasi-bundling and quasi-flipping allocations as \emph{non-crossing}.
  \end{definition}

\begin{figure}[h]
  \centering
  \begin{tikzpicture}[scale=0.48]

    \draw[->] (0,0) -- (6,0) node[anchor=north] {$t_{1}$};
    \draw[->] (0,0) -- (0,6) node[anchor=east] {$t_{2}$};
    \draw[very thick, blue] (1.9,6) -- (1.9,3.68) -- (4,1.58) -- (6,1.58) ; \draw[very thick, blue] (0, 3.68) -- (1.9,3.68) ;

    \draw[ultra thick, red, dashed] (2,6) node[anchor = south] {$\psi_1[t_2](s_1)$} -- (2,3.7) -- (4.1,1.6) ; \draw[ultra thick, red, dashed] (4.1,1.6) -- (4.1,0);

    \draw[very thick, blue] (4,0) -- (4, 1.58) -- (1.9, 3.68);

    \draw (1.5,5) node[anchor=center] {}; \draw (1.7,1) node[anchor=center] {$R_{\{1,2\}}$}; \draw (5.5,1) node[anchor=center] {}; \draw (5.5,5) node[anchor=east] {$R_{\emptyset|{\{1,2\}}}$};

    \draw (3,-0.5) node[anchor=north] {(a)};
  \end{tikzpicture}\hspace{20mm}
  \begin{tikzpicture}[scale=0.48]

    \draw[->] (0,0) -- (6,0) node[anchor=north] {$t_{1}$};
    \draw[->] (0,0) -- (0,6) node[anchor=east] {$t_{2}$};
    \draw[very thick, blue] (4.1,6) -- (4.1, 3.68) -- (1.9,1.58) -- (0,1.58) ;

    \draw[ultra thick, red, dashed] (4.2, 6) node[anchor = south] {$\psi_1[t_2](s_1)$} -- (4.2, 3.7) -- (2,1.6) ; \draw[ultra thick, red, dashed] (2,1.6) -- (2,0);

    \draw[very thick, blue] (1.9,0) -- (1.9, 1.58); \draw[very thick, blue] (4.1,3.68) -- (6, 3.68) ;

    \draw (1.5,5) node[anchor=center] {}; \draw (1.7,1) node[anchor=east] {$R_{\{1,2\}}$}; \draw (5.5,1) node[anchor=east] {}; \draw (5.5,5) node[anchor=west] {$R_{\emptyset|\{1,2\}}$};

    \draw (3,-0.5) node[anchor=north] {(b)};
  \end{tikzpicture}
  
  \vspace{6mm}
  
  \begin{tikzpicture}[scale=0.48]
    \draw[->] (0,0) -- (6,0) node[anchor=north] {$t_{1}$};
    \draw[->] (0,0) -- (0,6) node[anchor=east] {$t_{2}$};
    \draw[very thick, blue] (3,0) -- (3,6) ; \draw[very thick, blue] (0,3) -- (6,3) ;

    \draw[ultra thick, red, dashed] (3.1, 0) -- (3.1,6) node[anchor = south] {$\psi_1[t_2](s_1)$} ;

    \draw (1.5,5) node[anchor=center] {}; \draw (1.7,1) node[anchor=center] {$R_{\{1,2\}}$}; \draw (5.5,1) node[anchor=east] {}; \draw (5.5,5) node[anchor=center] {$R_{\emptyset|\{1,2\}}$};

    \draw (3,-0.5) node[anchor=north] {(c)};
  \end{tikzpicture}\hspace{20mm}
  \begin{tikzpicture}[scale=0.48]

    \draw[->] (0,0) -- (6,0) node[anchor=north] {$t_{1}$};
    \draw[->] (0,0) -- (0,6) node[anchor=east] {$t_{2}$};
    \draw[very thick, blue] (1.9,6) -- (1.9,3.68) -- (5.58,0);%
    \draw[very thick, blue] (0, 3.68) -- (1.9,3.68) ;

    \draw (1.5,5) node[anchor=center] {}; \draw (1.7,1) node[anchor=center] {$R_{\{1,2\}}$}; \draw (5.5,1) node[anchor=center] {}; \draw (5.5,5) node[anchor=east] {$R_{\emptyset|{\{1,2\}}}$};

    \draw (3,-0.5) node[anchor=north] {(d)};
  \end{tikzpicture}

  \caption{\small The possible allocation to the root depending on their own bid vector $(t_1,t_2)$ for fixed values $s=(s_1,s_2)$ of the other player: (a) quasi-bundling allocation; (b) quasi-flipping allocation; (c) crossing allocation. The boundary $\psi_1[t_2](s_1)$ for task $1$ is shown by a dashed line. Part (d) depicts a half-bundling allocation at task $1$, a degenerate case of quasi-bundling allocation.}
  \label{fig:shapesPure}
\end{figure}
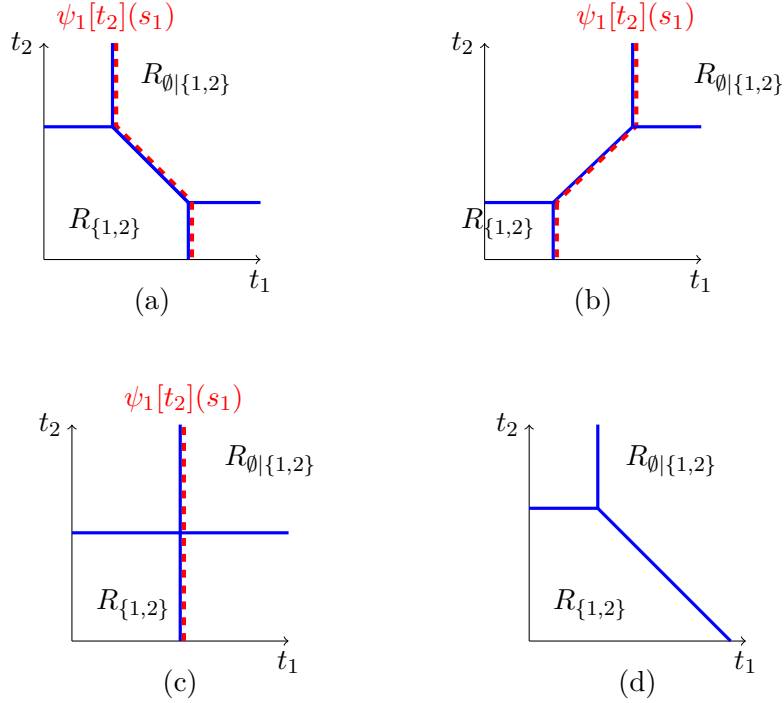

  In case the allocation is quasi-bundling, the boundary between $R_{\{1,2\}}$ and $R_{\emptyset|\{1,2\}}$ is a bundling boundary.  In general the facets defining $R_{\{1,2\}}$ in $\mathbb R^2_{0,+}$ are of the possible forms $t_1=c_1,\, t_{2}=c_2,\,$ and $t_1+t_{2}= c.$ A nontrivial facet $t_1+t_{2}= c$ exists if and only if the allocation is quasi-bundling.

  We need to distinguish and treat separately a special sort of
  quasi-bundling allocation, because in this case the boundary
  function $\psi_1$ can become nonlinear function of $s_1$, even in affine
  minimizers  (more precisely, in their generalizations called \emph{relaxed} affine minimizers, see Theorem~\ref{theo:addchar} and \cite{CKK20}).
  If the facet $t_1+t_2= c$ exists but the facet $t_1=c_1$ is missing
  (i.e., $c_1>c$), then we will call the allocation of the tasks
  half-bundling, as defined next (see Figure~\ref{fig:shapesPure}
  (d)).

  \begin{definition}\label{def:half-bundling} The allocation of two tasks $(1,2)$ (siblings or not) is called \emph{half-bundling at task $1$,} if the region $R_{\{1,2\}}$ is nonempty and is defined by at most two facets $t_{2}=c_2$ and $t_1+t_{2}=c,$ but no nontrivial facet $t_1=c_1$ exists i.e., $c_1> c$.
    The allocation will be called \emph{fully bundling,} if it is half-bundling at both task $1$ and task $2$ i.e. both $c_1,c_2>c$.
\end{definition}

Finally, we include here the characterization result for $2$ machines
and $2$ tasks ($2\times 2$ case).\footnote{For a description of these
  mechanisms see \cite{CKK20, CKK21b}. For related characterization
  results see \cite{ChristodoulouKV08, DS08}.}  That is, from here on
we consider two sibling tasks: the input $(t_1,t_2)$ is of the root,
and $(s_1,s_2)$ belong to the same leaf. The characterization assumes
that for fixed values $s$ and sufficiently high values of the root,
both tasks will be allocated to the leaf; otherwise the approximation
ratio is unbounded (even when there exist other players and tasks with
fixed values). Recall that in our instances $T=(t,s)$ the values for
the root can be arbitrarily high, but for the $s$ players they belong in
$[0,B)$. %
The theorem from \cite{CKK20} that we use, characterizes mechanisms
with values exactly in this domain.

\begin{theorem}[Characterization of $2\times 2$ mechanisms~\cite{CKK20}] \label{theo:addchar} For every $B\in \R_{>0}\cup\{\infty\}$, every weakly monotone allocation for two tasks and two players with values $t\in [0,\infty)\times [0,\infty)$ and $s\in[0,B)\times [0,B)$ -- such that for every $s$ there exists $t$ for which both tasks are allocated to the second player -- belongs to the following classes: (1) relaxed affine minimizers (including the special case of affine minimizers), (2) relaxed task independent mechanisms (including the special case of task independent mechanisms), (3) 1-dimensional mechanisms, (4) constant mechanisms.
\end{theorem}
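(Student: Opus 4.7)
The plan is to combine weak monotonicity (WMON, Definition~\ref{def:wmon}) with a trichotomy on the shape of the root's allocation region. First, fixing $s=(s_1,s_2)$ and viewing the allocation as a function of $t=(t_1,t_2)\in [0,\infty)^2$, WMON forces each of the four allocation regions to be convex and separated by axis-aligned boundaries $t_i=c_i$ or one diagonal $t_1+t_2=c$; other shapes are ruled out by directly applying WMON between two root-instances within the slice. Combined with the hypothesis that $R_{\emptyset|\{1,2\}}$ is always nonempty, the possibilities reduce to the three shapes in Definition~\ref{def:crossing}: crossing, quasi-bundling, and quasi-flipping (with half-bundling as a degenerate quasi-bundling).

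Next, I would show that the shape type is essentially constant across $s$, and then treat each case. In the \emph{crossing} case, I would argue that the boundary $t_1=\psi_1[t_2](s_1,s_2)$ cannot shift when only $s_2$ changes: if it did, two instances differing only in $s_2$ would reallocate task~$1$ while the leaf's value only changes on task~$2$, violating the WMON inequality on the leaf. Hence $\psi_1$ depends only on $s_1$ and symmetrically $\psi_2$ only on $s_2$, which is precisely the relaxed task-independent class. In the \emph{quasi-bundling} case, I would show the diagonal boundary $c(s_1,s_2)$ is affine, of the form $\lambda_1 s_1+\lambda_2 s_2+\gamma$ with $\lambda_i>0$. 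Applying WMON to two instances that differ only in $s_1$ yields $1$-Lipschitz behavior in each coordinate (cf.\ Proposition~\ref{prop:lipschitz}), and applying WMON around a closed rectangular cycle of four instances in $(s_1,s_2)$-space pins down the coefficients as constants. This is exactly the allocation rule of a relaxed affine minimizer. Quasi-flipping is handled analogously (with the diagonal oriented in the opposite direction), and the constant and $1$-dimensional classes correspond to degeneracies where some task is insensitive to its own $t_j$ or the whole allocation collapses to a point.

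The main obstacle will be promoting the local linearity of the bundling boundary to a single global affine rule in the quasi-bundling case. Each WMON application on a single slice is only a local constraint, and the boundary functions may have jump discontinuities; reconciling slopes over the whole $(s_1,s_2)$-plane therefore requires exploiting WMON over many overlapping instances simultaneously, and ruling out piecewise-affine structures with genuinely different slopes in different regions. The ``relaxed'' suffix in classes~(1) and~(2) is precisely what absorbs the tie-breaking freedom on the measure-zero locus where several allocations minimize the same affine objective, and verifying that these ties can be resolved consistently with WMON is the most delicate part of the argument.
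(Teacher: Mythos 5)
There is a genuine gap here, and it is worth noting first that the paper itself does not prove this statement at all: Theorem~\ref{theo:addchar} is imported verbatim from~\cite{CKK20}, where the characterization of $2\times 2$ weakly monotone allocations is a substantial standalone result. Your sketch treats it as something recoverable from a few applications of WMON, and the one step you do spell out is incorrect. In the crossing case you claim that if $\psi_1[t_2](s_1,s_2)$ shifted when only $s_2$ changes, the leaf's WMON inequality would be violated. It would not: for two leaf bids differing only in $s_2$, the WMON inequality for the leaf reads $\bigl(a_{22}(s')-a_{22}(s)\bigr)(s_2'-s_2)\leq 0$ --- the task-$1$ term is multiplied by $s_1'-s_1=0$ and vanishes --- so a reallocation of task~$1$ caused by a change in $s_2$ is perfectly consistent with weak monotonicity. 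Indeed, non-task-independent affine minimizers are weakly monotone and their task-$1$ boundary genuinely depends on $s_2$ (and on $t_2$); any correct proof of task independence in the crossing case must exploit the global structure (crossing for \emph{all} $s$, plus the nonemptiness hypothesis), not a single unilateral WMON comparison.

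The quasi-bundling/quasi-flipping case is where essentially all of the content of the theorem lives, and your proposal only gestures at it: asserting that a rectangular cycle of WMON constraints ``pins down the coefficients as constants'' is exactly the part that requires the long argument of~\cite{CKK20}, and you acknowledge but do not close this gap. Two further points are glossed over: the claim that the shape type is (essentially) constant in $s$ is itself part of the conclusion, not an easy preliminary (it is Observation~\ref{obs:possibleFigures}(ii) in this paper, again quoted from~\cite{CKK20}); and the ``relaxed'' classes are not merely a tie-breaking convenience on a measure-zero set --- relaxed affine minimizers admit jump discontinuities at $s_i=0$ and genuinely nonlinear boundary behavior in the fully bundling regime (see the footnote to Observation~\ref{obs:possibleFigures}(iii)), so the intended conclusion is weaker and more delicate than the globally affine rule your outline aims for. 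As it stands the proposal is a plausible roadmap but not a proof, and its only concrete deduction fails.
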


\begin{remark} Two important remarks are due. %

  (a) Strictly speaking, the allocation of the root in an affine minimizer adheres to the definition over the open region $s\in (0,B)\times(0,B)$. If $s_1=0$ or $s_2=0$ the $\psi$ boundaries of (even) an affine minimizer can have jump discontinuities (and vice versa for the leaf, see \cite{CKK20}). However, this will not affect our proof because (i) no jump in $s_p=0$ in a $2\times 2$ mechanism can occur if every task other than the siblings $(p,p')$ is trivial and the approximation is finite (by Lemma~\ref{obs:alphas}, $\lim_{s_p\rightarrow 0}\psi_p(s_p)=\psi_p(0)=0$); (ii) whenever the other tasks are non trivial, and the characterization for $(p,p')$ is used, we never consider an instance with $s_p=0.$ 

(b) 1-dimensional bundling mechanisms (where the 2 tasks are always bundled) are special relaxed affine minimizers with some additive constants that are $\infty$. 1-dimensional task independent mechanisms (where one of the tasks is always given to the same player) are special relaxed task independent mechanisms. Constant mechanisms (these are independent of $(s_1,s_2)$) are special affine minimizers with a multiplicative constant $0$. In this sense,  the only possible $2\times 2$ mechanisms are \emph{relaxed affine minimizers} and \emph{relaxed task independent mechanisms}. 

Some mechanisms can be of two types; for example the VCG mechanism is both an affine minimizer and task independent, and there exist other task independent affine minimizers. These will be treated as task independent mechanisms.
\end{remark}

We summarize the most relevant properties of these  mechanisms in terms of the possible allocation figures (see \cite{CKK20, CKK21b}). Roughly speaking, when there exists a bundling (or flipping) boundary and three or four allocation regions, the boundary functions are affine.

\begin{observation}\label{obs:possibleFigures}
  The following properties hold for a $2\times 2$ truthful mechanism.
\begin{itemize}
\item[(i)] the allocation of a relaxed task independent mechanism is crossing for every $(s_1,s_2),$ except for countably many points $(s_1',s_2'),$ where both $\psi_1 (s_1)$ has jump discontinuity in $s_1',$ \emph{and} $\psi_2 (s_2)$ has jump discontinuity in $s_2';$

\item[(ii)] the allocation of a (non task independent) relaxed affine minimizer is non crossing for every $(s_1, s_2);$\footnote{unless, because of small $s,$ the allocation has at most two regions} either it is always quasi-bundling, or always quasi-flipping, with the same length of slanted boundary for every high enough $s$ --- as $s_1$ and/or $s_2$ gets smaller, part of the slanted boundary may disappear, thus it may get shorter (moreover, in degenerate relaxed affine minimizers, the possible length of the slanted boundary may be unbounded); %

\item[(iii)] the boundary function $\psi_1[t_{2},s_{2}](s_1)$ of a relaxed affine minimizer is a truncated linear function for every $(t_2, s_1, s_2),$ unless the allocation is fully bundling:\footnote{Therefore, if the allocation is half-bundling at task 1, then for reduced $s_1$ the $\psi_1(s_1)$ can become non-linear when the figure becomes fully bundling.} 
\begin{align*}
\psi_1[t_2,s_2](s_1)=\max(0\,, \,\lambda(t_2,s_2)\, s_1 - \gamma(t_2,s_2)\,).
\end{align*}
      
Symmetric statements hold for $\psi_2.$       
\end{itemize}
\label{ref:obs-characterization}
\end{observation}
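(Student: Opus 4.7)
The plan is to derive all three claims from Theorem~\ref{theo:addchar}, which classifies every weakly monotone $2\times 2$ allocation (over the relevant domain) into one of four classes. As the remark after the theorem points out, the two degenerate classes (1-dimensional and constant mechanisms) are special cases of the two main classes, so it suffices to prove (i) for relaxed task independent mechanisms and (ii)--(iii) for relaxed affine minimizers.

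For (i), in a relaxed task independent mechanism the allocation of each task $j$ is determined by comparing $t_j$ to a threshold $\psi_j(s_j)$ that does not depend on the other task's value, so the partition of the root's bid space $\R_{\geq 0}^2$ splits as a product of two axis-aligned half-line partitions. For generic $(s_1,s_2)$ the four resulting rectangular regions meet at the single interior point $(\psi_1(s_1),\psi_2(s_2))$, which is exactly a crossing allocation. A non-crossing shape can arise only when the product structure of the partition is disturbed by the tie-breaking ambiguity introduced by the relaxation; this requires the tie-breaking range of both coordinates to hit the corner point simultaneously, that is, a jump discontinuity of $\psi_1$ at $s_1'$ together with a jump discontinuity of $\psi_2$ at $s_2'$. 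Since each monotone $\psi_i$ has at most countably many discontinuities by Froda's theorem, the exceptional set of pairs $(s_1',s_2')$ is countable.

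For (ii) and (iii), a relaxed affine minimizer picks the allocation by minimizing an objective of the form $\lambda_r(a_{r,1} t_1 + a_{r,2} t_2) + \mu(a)$ over allocation vectors $a$, plus an analogous expression for the leaf. Every boundary between two allocation regions in the $(t_1,t_2)$-plane is therefore a piece of an affine hyperplane, and $R_{\{1,2\}}$ is bounded by at most the three facets $t_1=c_1$, $t_2=c_2$, and a diagonal $\lambda_1 t_1 + \lambda_2 t_2 = c$ (with analogous facets for $R_{\emptyset|\{1,2\}}$). When the diagonal facet is present, $R_{\{1,2\}}$ and $R_{\emptyset|\{1,2\}}$ either share a positive-length boundary (quasi-bundling) or are separated by a positive-area gap (quasi-flipping), never meeting at a single point; the slope of the slanted boundary is fixed by the coefficients of the objective, so its length is entirely controlled by where it is truncated by the axis-parallel facets, and it shrinks or disappears as $s$ shrinks, giving (ii). For (iii), fix $t_2$ and $s_2$ and intersect the root's allocation partition with the horizontal line at height $t_2$: the largest $t_1$ for which the root still receives task $1$ is the minimum of $c_1$ and $(c-\lambda_2 t_2)/\lambda_1$, where both quantities depend affinely on $s_1$ through the coefficients of the relaxed affine minimizer. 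This gives the claimed truncated-linear form $\max(0,\lambda(t_2,s_2)s_1-\gamma(t_2,s_2))$, unless the axis-parallel facet $t_1=c_1$ is absent (the fully-bundling case), which is exactly the exception carved out in (iii).

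The main subtlety is handling the degenerate boundary behavior of the relaxed classes --- behavior at $s_i=0$, infinite additive constants and vanishing multiplicative constants, and the disappearance of facets as $s_i$ decreases --- which is why (ii) hedges with ``for every high enough $s$'' and why (iii) excludes the fully bundling case. These corner cases are recorded explicitly in~\cite{CKK20, CKK21b}; our proof essentially amounts to translating the definitions of the relaxed classes given there into geometric statements about $R_{\{1,2\}}$ in the two-task bid plane of the root.
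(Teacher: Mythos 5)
The paper never actually proves Observation~\ref{obs:possibleFigures}: it is presented as a summary of the $2\times 2$ characterization, with an explicit pointer to \cite{CKK20,CKK21b}, so there is no internal proof to compare against. Your reconstruction takes the natural route --- reduce the degenerate classes of Theorem~\ref{theo:addchar} to the two main ones, then read off the geometry of the root's allocation regions --- and for genuine affine minimizers and task independent mechanisms the computations you sketch are right: task independence makes the two thresholds decouple, giving a crossing at $(\psi_1(s_1),\psi_2(s_2))$; for an affine minimizer the four allocation costs are affine in $(t,s)$ with equal weight on $t_1$ and $t_2$, so the slanted facet has slope $-1$ and its extent is a difference of additive constants that does not depend on $s$ (this is the short computation you gesture at but should actually do, since it is exactly what yields ``the same length of slanted boundary for every high enough $s$''), and the induced $\psi_1[t_2,s_2](s_1)$ is $\max(0,\lambda s_1-\gamma)$ with $\lambda$ the ratio of the two players' multiplicative constants.

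The caveat is that the substance of the observation lies precisely in the clauses about the \emph{relaxed} classes, and there your argument is closer to a restatement than a derivation: in (i), ``non-crossing can only happen at simultaneous jumps of $\psi_1$ and $\psi_2$'' is the very property to be established and needs the precise definition of relaxed task independent mechanisms; in (ii), the degenerate relaxed affine minimizers with unbounded slanted boundary are only mentioned; in (iii), the fully-bundling exception (where the relaxation lets $\psi_1$ go nonlinear for small $s_1$) is named but not analyzed. You openly defer these to \cite{CKK20,CKK21b}, which is also what the paper does, so this is an acceptable level of rigor in context --- just be aware that without importing those definitions the relaxed-class clauses remain unproven assertions. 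Two smaller points: in (iii) your ``minimum of the two facet values of $R_{\{1,2\}}$'' only covers the regime where $t_2$ is below the corner; for larger $t_2$ the threshold for task 1 comes from comparing the allocations $\{1\}$ and $\emptyset$, and the min of your two candidate boundaries is affine in $s_1$ only because both have the same slope --- both facts drop out of the same four-cost computation, so the stated conclusion still holds.
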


The fact that relaxed task independent mechanisms have discontinuities could create complications, but we avoid them by considering instances that satisfy the continuity requirement.

\subsection{Proof of the main Box Theorem~\ref{thm:box-restated}}
\label{sec:box_main_argument}

The proof of the Box Theorem has some similarities to the main lemma in \cite{CKK21b}, but both the general structure and the details of the proof must be handled differently.

\begin{proof}[Proof of Theorem~\ref{thm:box-restated}]
  Fix some standard instance of $n-1$ leaves and sufficiently high multiplicity $\ell$. We show by induction on $k$ that in every subset of $k\in [n-1]$ leaves, \emph{a random star selected uniformly and independently is a box} with a certain positive probability.

  More precisely, let $1-b_k$ be (a lower bound on) the probability that a random star of $k$ leaves is a box. We establish a recurrence on $b_k$ that shows that $b_{n-1}<1$, which proves the existence of at least one box of $n-1$ leaves.

Specifically, Theorem~\ref{cor:b2} establishes the base case of the induction ($k=2$), which shows that $b_2\leq 2/\sqrt{\ell}$. Lemma~\ref{lemma:recurrence} based on the proof of the inductive hypothesis establishes the recurrence $$b_k\leq \left (\frac{5n}{\nu}-1 \right )b_{k-1}+ \frac{2n^3}{\xi \sqrt{\ell} },$$ for $k\geq 3$. It follows that for sufficiently large $\ell$, $b_{n-1}$ can be arbitrarily small (see~Corollary~\ref{cor:bound-on-b} for a more precise bound on $b_k$).
\end{proof}

\subsubsection{Preliminary observations}
\label{sec:preliminary}

The following theorem, essentially a restatement of
Proposition~\ref{prop:2x2-independence}, says that if the mechanism
has bounded approximation ratio, the allocation of the slice mechanism of two siblings must be crossing. It also excludes the degenerate case of constant
mechanisms.

\begin{theorem} \label{thm:sibling-independence} Let $T=(t,\bar s)$ be
  a standard instance. If the mechanism has approximation ratio at
  most $n$, the boundary function $\psi_p[t_{-p},\bar s_{-p}](s_p)$ of
  a task $p$ is independent of the values of its
  siblings. Furthermore, for an arbitrary sibling $p'$ the
  $(p,p')$-slice mechanism cannot be constant, even when we change $t$
  to non-zero values.
\end{theorem}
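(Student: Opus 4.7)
The plan is to reduce the theorem to the characterization of $2\times 2$ mechanisms (Theorem~\ref{theo:addchar}). Fix all tasks other than $p$ and an arbitrary sibling $p'$ at their current values. Since $p,p'$ are parallel edges between the root and the leaf $i$, and all other nodes have extremely high values for them, any mechanism of finite approximation must allocate each of $p,p'$ to either the root or to $i$. Thus the slice is a $2$-player, $2$-task weakly monotone mechanism. The condition that for every leaf-bid $(s_p,s_{p'})$ there exists a root-bid pushing both tasks to the leaf is met, because otherwise one could set the root bids $(t_p,t_{p'})$ arbitrarily large and force unbounded approximation ratio. Hence Theorem~\ref{theo:addchar} classifies the slice as a relaxed affine minimizer, a relaxed task-independent mechanism, a $1$-dimensional mechanism, or a constant mechanism.

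Next I would rule out the non task-independent cases using the $n$-approximation assumption. For the constant case, one of the two tasks, say $p$, is always allocated to the same player under every bid; sending the value of that player on $p$ to infinity while keeping the other player's value on $p$ small blows up the mechanism's makespan with bounded optimum, contradicting $n$-approximation. The $1$-dimensional case is analogous, applied to whichever task is fixedly allocated. For a non task-independent relaxed affine minimizer, Observation~\ref{obs:possibleFigures}(ii) guarantees a nontrivial slanted bundling or flipping boundary; by choosing the root's bid inside the slanted facet and picking $(s_p,s_{p'})$ so that the leaf is heavily loaded on one task and light on the other, one forces the mechanism to bundle both tasks to the wrong player while the split allocation is cheap. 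Observation~\ref{obs:possibleFigures}(iii) expresses $\psi_p$ as $\max(0,\lambda s_p-\gamma)$, and together with the bounds $s_j/n<\psi_j(s_j)<n s_j$ from Lemma~\ref{obs:alphas} this gives enough quantitative room to make the ratio exceed $n$.

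Eliminating the three bad cases leaves the relaxed task-independent class, in which by definition $\psi_p[t_{-p},\bar s_{-p}](s_p)$ does not depend on $s_{p'}$. Repeating the argument with $p$ held fixed and each other sibling in the role of $p'$ yields independence from the entire sibling block. The second assertion, that the $(p,p')$-slice cannot be constant even when the other $t$-values are changed from $0$, is already contained in the constant-case ruling out above, since that argument did not rely on $t=0$ at any task other than $p,p'$. The main obstacle, as is typical in these characterization-driven arguments, is the non task-independent relaxed affine minimizer case: the slanted facet may be very short when $s$ is small (because of the truncation at $0$), and one must use Lemma~\ref{obs:alphas} together with the fact that all other tasks contribute $0$ to the optimum makespan to guarantee that the contradictory instance actually achieves the required gap between mechanism makespan and optimum.
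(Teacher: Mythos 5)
Your overall route is the paper's route: fix the other tasks, apply the $2\times 2$ characterization (Theorem~\ref{theo:addchar}) to the $(p,p')$-slice, rule out the constant case and the non task-independent relaxed affine minimizer case using the approximation bound, and conclude task-independence. However, two of your three eliminations have genuine gaps. First, the exclusion of a non task-independent relaxed affine minimizer is the crux of the theorem (the paper's Lemma~\ref{prop:quasi}), and your sketch of it is not an argument: you say one ``picks $(s_p,s_{p'})$ so that the leaf is heavily loaded on one task and light on the other,'' but the actual proof must split into the quasi-flipping and quasi-bundling cases and use specific structural facts --- in the quasi-flipping case that $\psi_p(s_p)\to 0$ as $s_p\to 0$ (Lemma~\ref{obs:alphas}) while the flipping boundary keeps its height, so at $(t_p,t_{p'})=(\alpha_{p,p'}/2,\alpha_{p,p'}/2)$ exactly one task goes to the root; in the quasi-bundling case that the height $\alpha_{p,p'}$ of the bundling boundary is \emph{constant} in $s$ (Observation~\ref{obs:possibleFigures}(ii)), so one may set $\psi_p(s_p)=\alpha_{p,p'}+\epsilon/2$, shrink $s_{p'}$ to $\epsilon/(2n)$, and force both tasks to the leaf at root bids $(\epsilon,\epsilon)$; plus the degenerate case $\alpha_{p,p'}=\infty$. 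None of this is recoverable from your sketch, and without it the theorem is not proved.

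Second, your final step ``the relaxed task-independent class, in which by definition $\psi_p$ does not depend on $s_{p'}$'' is wrong as stated: a \emph{relaxed} task independent mechanism is allowed to have sibling-dependence at the (countably many) points where both boundary functions have jump discontinuities (Observation~\ref{obs:possibleFigures}(i)). The paper closes this hole by invoking the continuity requirement of the standard instance; you never use it. Finally, your constant-case argument misreads what a constant mechanism is: in the characterization it means the allocation is independent of $(s_p,s_{p'})$ (it may still depend on the root's bids), so ``one task is always allocated to the same player under every bid, send that player's value to infinity'' does not apply; the paper's Lemma~\ref{prop:noconstant} instead argues that independence of the root's bids would force the root never to win, hence independence of $s$ in any case, and then builds the contradiction with $(t_p,t_{p'})=(n^2,n^2)$ and suitable $s$, bounding the contribution of the raised tasks to $\opt$ by $n-1$ --- this care is exactly what makes the second assertion (non-constancy even after raising other $t$-values) go through.
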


\begin{proof}
  The theorem is a direct consequence of the following two lemmas. The
  first lemma excludes the degenerate case of constant mechanisms. The
  next one states that slice mechanism of two sibling tasks cannot
  have a quasi-bundling or quasi-flipping boundary, when the
  approximation ratio is bounded. By the characterization, the slice
  mechanism must be a relaxed task independent mechanism. By the
  continuity requirement, there are no discontinuities, so
  $\psi_p[t_{-p},\bar s_{-p}](s_p)$ of a task $p$ is independent of
  the values of its siblings.
\end{proof}

\begin{lemma}\label{prop:noconstant} Assume that the mechanism has approximation ratio at most $n$. Let $T=(t,\bar s)$ be a standard instance and $Q$ any leaf, with sibling tasks $p, p'\in Q.$ Furthermore, for some $k\leq n-1$ let $P=\{p_1, p_2, \ldots , p_k\}$ be an arbitrary set of $k$ other tasks.
 If in $T$ we increase each $t_{p_i}\quad (p_i\in P) $ to some arbitrary value $\hat t_{p_i}\geq 0,$ %
 then in the obtained instance $\hat T$ the slice mechanism $(p, p')$ is not a constant mechanism, unless the approximation ratio of the whole mechanism is at least $n.$%
\end{lemma}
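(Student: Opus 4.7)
The plan is to argue by contradiction. Assume that the $(p,p')$-slice mechanism in $\hat T$ is constant, so that the allocation of the two sibling tasks $\{p,p'\}$ is the same for every choice of $(t_p, t_{p'}, s_p, s_{p'})$. In particular $p$ (say) is always assigned to one fixed player $\pi \in \{0, Q\}$. I will construct, in each sub-case, a one-parameter family of instances with scaling parameter $M$ on which the mechanism's makespan is at least $M$ while the optimum stays bounded; sending $M \to \infty$ then forces the ratio above $n$.

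If $\pi = Q$, I modify $\hat T$ by setting $s_p := M$ and keeping $t_p = 0$ along with every other coordinate. By constancy the mechanism still places $p$ on leaf $Q$, so leaf $Q$ carries load at least $s_p = M$. If instead $\pi = 0$, I set $t_p := M$ and keep $s_p = \bar s_p$; the mechanism still puts $p$ on the root, which then carries load at least $M$. The symmetric sub-cases where $p'$ (rather than $p$) has a fixed allocation are treated identically by swapping $p$ and $p'$.

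It remains to bound the optimum of the modified instance by a constant $K = K(\hat T)$ independent of $M$. The optimum diverts the offending task $p$ to the opposite endpoint of its edge, paying either $t_p = 0$ (first sub-case) or $\bar s_p < B$ (second sub-case). Every other task is unaffected by the modification: its root-value is either $0$ or one of the finitely many $\hat t_{p_i}$, and its leaf-value lies in $[0,B)$. A crude assignment that places each remaining task on its own leaf therefore yields $\opt \le K$ for some finite $K$. Taking $M > nK$ produces the desired ratio $\mech/\opt > n$, contradicting the hypothesis. The only delicate step is this final boundedness estimate, and it is immediate precisely because only a single coordinate of $\hat T$ has been sent to infinity while the offending task admits a cheap alternative assignment.
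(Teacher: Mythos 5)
There is a genuine gap, and it lies in your very first sentence: you take ``constant mechanism'' to mean that the allocation of $\{p,p'\}$ is literally the same for \emph{every} choice of $(t_p,t_{p'},s_p,s_{p'})$. That is not what the term means in the $2\times 2$ characterization the lemma is guarding against. In the paper (see the remark after Theorem~\ref{theo:addchar} and the opening of the paper's own proof), a constant mechanism is one whose allocation is independent of the values of \emph{one side} --- canonically independent of the leaf's values $(s_p,s_{p'})$ --- but it may still depend on the root's bids $(t_p,t_{p'})$. A typical example is an allocation with fixed thresholds: $p$ goes to the root iff $t_p<c_p$ and $p'$ iff $t_{p'}<c_{p'}$, with $c_p,c_{p'}$ independent of $(s_p,s_{p'})$. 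For such a mechanism your dichotomy ``$p$ is always assigned to one fixed player $\pi$'' is simply false, and both of your sub-cases collapse: with $t_p=0$ and $s_p=M$ the mechanism may happily put $p$ on the root (so the leaf carries no large load), and in the other sub-case raising $t_p$ to $M$ may flip $p$ to the leaf, so ``the mechanism still puts $p$ on the root'' is unjustified. Your argument therefore only rules out the degenerate case of a completely input-independent allocation, which is a strictly weaker statement than the lemma.

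The paper's proof shows how to handle the dependence on $t$. First it reduces to the case where the allocation is independent of $s$: if instead it were independent of the root's values, then (since $R_{\emptyset|\{p,p'\}}$ must be nonempty, else the ratio is already unbounded) the root would never receive either task, and that allocation is in particular independent of $s$ as well. Then, instead of fixing a player who ``always'' gets $p$, it fixes the root's bids at $(t_p,t_{p'})=(n^2,n^2)$ and splits on what the mechanism does \emph{there}: if the root gets neither task, set $(s_p,s_{p'})=(2n^3,0)$ --- the allocation cannot change because it is independent of $s$ --- giving $\mech\geq 2n^3$ against $\opt<n+n^2$; if the root gets at least one task, set $(s_p,s_{p'})=(0,0)$, giving $\mech\geq n^2$ against $\opt<n$. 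The key maneuver you are missing is exactly this: exploit the $s$-independence to move $s$ to an extreme value \emph{after} pinning down the allocation at a suitable fixed $t$, rather than assuming the allocation is pinned down globally.
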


\begin{proof} Consider the $(p, p')$ slice mechanism, and let
  $(t,t')=(t_{p}, t_{p'})$ and $(s,s')=(s_{p}, s_{p'})$ denote an
  input for these two tasks in general. A $2\times 2$ constant
  mechanism is by definition independent of the values of (at least)
  the root or the leaf. Looking at the allocation of the root (for any
  fixed $s$), w.l.o.g. the region $R_{\emptyset|\{p,p'\}}$ is nonempty
  (otherwise we could increase to $(t, t')=(\infty,\infty),$ so that
  the root still gets a task, showing unbounded approximation
  ratio). So, if the mechanism were independent of the root, then for
  \emph{every} $(s, s')$ the allocation would consist only of
  $R_{\emptyset|\{p,p'\}}$ -- the root would never get any task, also
  independently of $s$.
  Thus, {\em in any case the
  allocation must be independent of $s.$} The contribution to $OPT$ of
  every task other than $p, p',$ is at most $\sum_{p_i\in P} \bar
  s_{p_i}\leq (n-1)\cdot 1,$ since $T$ is a standard
  instance. %

Now, set $(t,t')=(n^2, n^2).$ If the root receives no task from $\{p,p'\},$ then increase $(s,s')=(2n^3, 0).$ Then $\mech\geq 2n^3,$ and $\opt< n+n^2< 2n^2,$ proving high approximation. If the root receives at least one of the tasks, then $\mech\geq n^2.$ Setting $(s,s')=(0,0),$ we have $\opt< n,$ so again, we have high approximation ratio.\end{proof}

\begin{figure}
  \centering
  \begin{tikzpicture}[scale=0.35]

    \draw[->] (0,0) -- (8,0) node[anchor=north] {$t_p$};
    \draw[->] (0,0) -- (0,8) node[anchor=east] {$t_{p'}$};

    \draw[very thick, blue] (0, 3) node[anchor=east,black] {$\alpha_{p'}$} -- (4, 3) -- (6, 5) -- (8, 5); 
\draw[very thick, blue ] (4, 0) node[anchor=north,black] {$\alpha_p$}-- (4, 3) -- (6, 5) -- (6, 8);
    \draw[thick, dotted, red ] (0,5) node[anchor=east,black] {$\alpha_{p'}+\alpha_{p,p'}$} -- (6,5);

\draw (1.5,7) node[anchor=west] {}; \draw (2,1) node {$R_{\{p,p'\}}$}; \draw (7.7,2) node[anchor=east] {}; \draw (7.7,7) node[anchor=west] {$R_{\emptyset|\{p,p'\}}$};

    \draw (1,-1) node[anchor=north] {(a)};
  \end{tikzpicture}
\hspace*{1cm}
  \begin{tikzpicture}[scale=0.35]

    \draw[->] (0,0) -- (8,0) node[anchor=north] {$t_p$};
    \draw[->] (0,0) -- (0,8) node[anchor=east] {$t_{p'}$};

    \draw[very thick, blue] (0, 0.5)  -- (1.5, 0.5) -- (3.5, 2.5) -- (8, 2.5); 
\draw[very thick, blue ] (1.5, 0) -- (1.5, 0.5) -- (3.5, 2.5) -- (3.5, 8);

\draw (1.5,7) node[anchor=west] {}; 
\draw (7.7,2) node[anchor=east] {}; 
\draw (7.7,7) node[anchor=west] {$R_{\emptyset|\{p,p'\}}$};
\draw[thick, dotted, red ] (0,1.5) node[anchor=east,black] {$\frac{\alpha_{p,p'}}{2}$} -- (2.3,1.5);  
\draw[thick, dotted, red ] (2.3,0) node[anchor=north,black] {$\frac{\alpha_{p,p'}}{2}$} -- (2.3,1.5);  

    \draw (1,-1) node[anchor=north] {(b)};
  \end{tikzpicture}
\\
\hspace*{1cm}
  \begin{tikzpicture}[scale=0.35]

    \draw[->] (0,0) -- (8,0) node[anchor=north] {$t_p$};
    \draw[->] (0,0) -- (0,8) node[anchor=east] {$t_{p'}$};

    \draw[very thick, blue] (0, 6) node[anchor=east,black] {$\alpha_{p'}$} -- (4, 6) -- (6, 4) -- (8, 4); \draw[very thick, blue ] (6, 0) node[anchor=north,black] {$\alpha_p$}-- (6, 4) -- (4, 6) -- (4, 8);

\draw (1.5,7) node[anchor=west] {}; \draw (1.7,2) node[anchor=west] {$R_{\{p,p'\}}$}; \draw (7.7,2) node[anchor=east] {}; \draw (8.5,7) node[anchor=east] {$R_{\emptyset|\{p,p'\}}$};
\draw[thick, dotted, red ] (0,4) node[anchor=east,black] {$\alpha_{p'}-\alpha_{p,p'}$} -- (6,4);  

    \draw (1,-1) node[anchor=north] {(c)};
  \end{tikzpicture}
\hspace*{1cm}
  \begin{tikzpicture}[scale=0.35]

    \draw[->] (0,0) -- (8,0) node[anchor=north] {$t_p$};
    \draw[->] (0,0) -- (0,8) node[anchor=east] {$t_{p'}$};

    \draw[very thick, blue] (0, 6) -- (0.5, 6) -- (2.5, 4) -- (8, 4); 
\draw[very thick, blue ] (2.5, 0) node[anchor=north,black] {$\alpha_{p,p'}+\frac{\epsilon}{2}$}-- (2.5, 4) -- (0.5, 6) -- (0.5, 8);

\draw (1.5,7) node[anchor=west] {}; \draw (1.7,2) node[anchor=west] {}; \draw (7.7,2) node[anchor=east] {}; \draw (7.7,7) node[anchor=east] {$R_{\emptyset|\{p,p'\}}$};

    \draw (1,-1.5) node[anchor=north] {(d)};
  \end{tikzpicture}
\hspace*{1cm}
  \begin{tikzpicture}[scale=0.35]

    \draw[->] (0,0) -- (8,0) node[anchor=north] {$t_p$};
    \draw[->] (0,0) -- (0,8) node[anchor=east] {$t_{p'}$};

    \draw[very thick, blue] (0, 0.5) -- (0.5, 0.5) -- (1, 0); 
\draw[very thick, blue ] (1, 0) node[anchor=north,black] {}-- (1, 0) -- (0.5, 0.5) -- (0.5, 8);

    \draw (-0.23,0.5) node[anchor=west] {};
    \draw (7,7) node[anchor=east] {$R_{\emptyset|\{p,p'\}}$};
    \draw[very thick, dotted, red] (1.5,0) node[anchor=north,black]
    {$\epsilon$} -- (1.5,1.5) -- (0,1.5) node[anchor=east,black]    {$\epsilon$};
    \draw (1,-1) node[anchor=north] {(e)};
  \end{tikzpicture}
  
  \caption{\small High-level illustration of the proof of Lemma~\ref{prop:quasi} }
  \label{fig:affine-unbounded}
\end{figure}
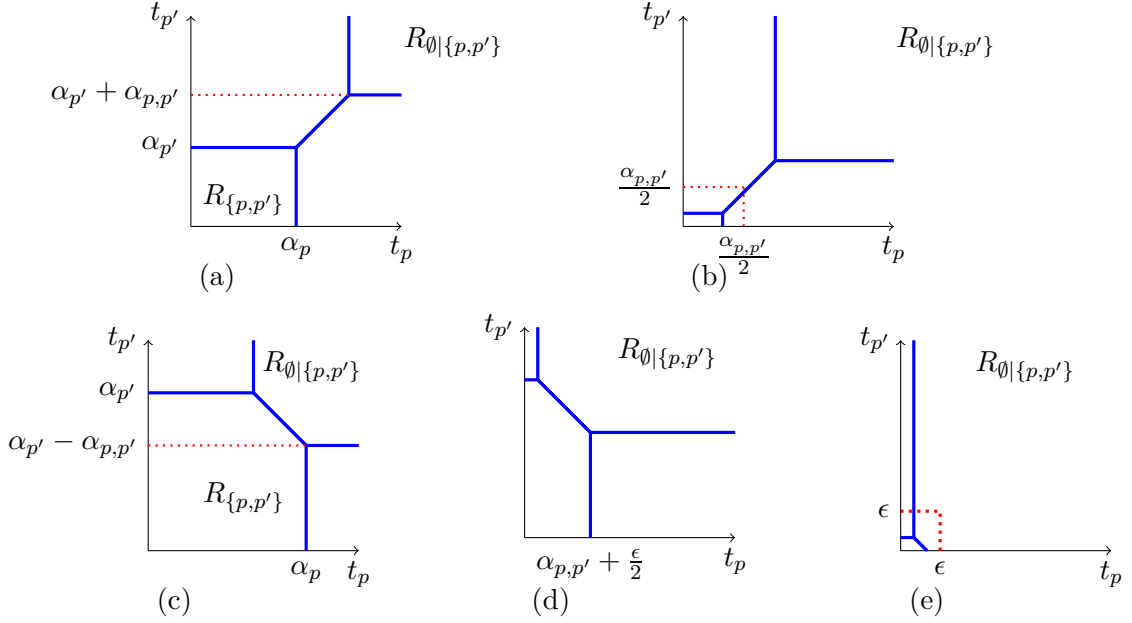

\begin{lemma}\label{prop:quasi} Let $T=(t,\bar s)$ be a standard instance, and $Q$ be any leaf. Assume that two sibling tasks $p, p' \in Q$ exist so that the allocation in the $(p, p')$-slice mechanism is quasi-bundling or quasi-flipping. Then the mechanism has infinite approximation ratio.
\end{lemma}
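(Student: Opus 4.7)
The plan is to combine the characterization of $2\times 2$ truthful mechanisms (Theorem~\ref{theo:addchar}) with the fact that $p$ and $p'$ share the leaf $Q$, so that OPT is free to split them across the root--leaf pair while the mechanism, being pinned down to an affine minimizer on the slice, is not.

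First I will classify the $(p,p')$-slice mechanism. By Lemma~\ref{prop:noconstant} it is not constant, so by Theorem~\ref{theo:addchar} it is either a (relaxed) affine minimizer or a (relaxed) task-independent mechanism. Observation~\ref{obs:possibleFigures}(i) says that a relaxed task-independent mechanism has a crossing allocation at every $(s_p,s_{p'})$ outside a countable set of simultaneous jump discontinuities of $\psi_p$ and $\psi_{p'}$; the continuity requirement on the standard instance $T$ excludes $(\bar s_p,\bar s_{p'})$ from that set. Hence the hypothesis of quasi-bundling or quasi-flipping forces the slice mechanism to be a (relaxed) affine minimizer, whose slanted boundary is an affine segment of fixed length $\alpha_{p,p'}$ (Observation~\ref{obs:possibleFigures}(ii)--(iii)) that translates rigidly in the $(t_p,t_{p'})$-plane as $(s_p,s_{p'})$ vary.

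Next I will construct the bad instances. Keeping every task outside $\{p,p'\}$ at its value in $T$, so that these tasks contribute nothing to either $\mech$ or $\opt$, I place $(t_p,t_{p'},s_p,s_{p'})$ at a carefully chosen corner of the allocation picture: in the quasi-bundling case, inside $R_{\{p,p'\}}$ near an endpoint of the slanted segment, so that the mechanism must bundle both tasks to the root (as in Figure~\ref{fig:affine-unbounded}(c)--(e)); in the quasi-flipping case, across the slanted $A_2/A_3$ boundary, so that the mechanism must route one sibling to the leaf at a high leaf-cost (as in Figure~\ref{fig:affine-unbounded}(a)--(b)). Scaling a leaf-value parameter drives the mechanism's contribution to the makespan to infinity, while OPT exploits the sibling structure by splitting the two tasks between the root and the leaf $Q$ --- assigning the expensive one to the root at essentially zero $t$-cost and the cheap one to the leaf --- keeping $\opt$ bounded by a constant.

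The main obstacle is precisely the choice of corner: a naive placement lets OPT also scale with the parameter and the ratio remains bounded. The resolution is to select the endpoint of the slanted segment so that the coordinate of $(t_p,t_{p'})$ aligned with the scaled $s$-value stays small while the other coordinate absorbs the cost of the boundary. Figure~\ref{fig:affine-unbounded} enumerates the various sub-cases, including degenerate configurations where one of the $s$-values is small and the slanted segment partially collapses. Persistence of the slanted length for sufficiently large $s$ (Observation~\ref{obs:possibleFigures}(ii)) guarantees that the chosen witness survives the scaling, while the continuity requirement excludes boundary jumps at the evaluation points, so the affine prediction of the mechanism's allocation is rigorous.
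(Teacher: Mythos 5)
Your classification step is fine and matches the paper: excluding constant mechanisms, invoking Theorem~\ref{theo:addchar}, and using the continuity requirement to rule out a relaxed task-independent mechanism at a joint discontinuity correctly pins the $(p,p')$-slice down to a (non task-independent) relaxed affine minimizer. The gap is in the construction of the bad instances, which is essentially backwards and cannot certify an \emph{infinite} ratio. In the quasi-bundling case you propose to sit inside $R_{\{p,p'\}}$ near an endpoint of the slanted segment so that the mechanism bundles both siblings onto the root. But every point of $R_{\{p,p'\}}$ satisfies $t_p\le\alpha_p=\psi_p(\bar s_p)<n\,\bar s_p$ and $t_{p'}\le\alpha_{p'}<n\,\bar s_{p'}$ (Lemma~\ref{obs:alphas}), so comparing with the assignment that gives $p$ to the root and $p'$ to its leaf, $\mech/\opt\le (t_p+t_{p'})/\max(t_p,\bar s_{p'})\le 1+n$: no witness of this shape can ever show more than a constant-factor gap, whereas the lemma claims an unbounded one. (Note also that Figure~\ref{fig:affine-unbounded}(c)--(e), which you cite, depicts the opposite event: the bundling boundary is pushed to the origin so that \emph{both} tasks are forced onto the leaves.)

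The quasi-flipping/scaling part fails for a related reason. You want to scale a leaf value up so that $\mech\to\infty$ while $\opt$ stays bounded, with $\opt$ placing the heavy sibling on the root ``at essentially zero $t$-cost''. But for the affine minimizer the boundary is $\psi(s)=\lambda s-\gamma$ with a slanted segment of \emph{fixed} length, so forcing the mechanism to put the heavy sibling (leaf value $s$) on the leaf requires its root value to exceed roughly $\lambda s-\gamma-\alpha_{p,p'}$, which grows linearly in $s$; then every assignment, including the split you propose for $\opt$, costs $\Omega(s)$ and the ratio stays bounded (about $1/\lambda+O(1)$). The working direction is the opposite one, which is what the paper does: drive $\opt\to 0$ by shrinking the leaf values while the invariant height $\alpha_{p,p'}$ of the slanted boundary keeps $\mech$ bounded below by a constant. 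Concretely, in the flipping case set $(s_p,s_{p'})=(\epsilon,\epsilon)$ and $t=(\alpha_{p,p'}/2,\alpha_{p,p'}/2)$, so exactly one task is forced onto the root and $\mech\ge\alpha_{p,p'}/2$ while $\opt\le 2\epsilon$; in the bundling case choose $s_p$ with $\psi_p(s_p)=\alpha_{p,p'}+\epsilon/2$, set $s_{p'}=\epsilon/(2n)$ and $t=(\epsilon,\epsilon)$, so both tasks go to the leaves and $\mech\ge s_p>\alpha_{p,p'}/n$ while $\opt\le 2\epsilon$. Finally, the persistence you invoke from Observation~\ref{obs:possibleFigures}(ii) holds for \emph{large} $s$; what the argument actually needs is control of the slanted boundary as $s$ \emph{decreases}, together with the degenerate cases $\alpha_{p,p'}=\infty$ and half-bundling, none of which your plan treats.
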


\begin{proof} By the characterization result  (Theorem~\ref{theo:addchar} and Observation~\ref{ref:obs-characterization}), the $(p,p')$-slice
  mechanism is a (non task independent) relaxed affine minimizer.

Consider first the case that the affine minimizer is quasi-flipping (Fig.~\ref{fig:affine-unbounded} (a)),
with height $\alpha_{p,p'}$ of the flipping boundary (i.e., the height
of the $45^o$ boundary). Lemma~\ref{obs:alphas} implies
$\psi_{p}(s_p)\rightarrow 0$ when $s_p\rightarrow 0,$ and
$\psi_{p'}(s_{p'})\rightarrow 0$ when $s_{p'}\rightarrow 0$, hence for
$(s_p,s_{p'})=(\epsilon,\epsilon)$ the allocation is still
quasi-flipping.  In particular, because $0<\psi_p(\epsilon)\leq
n\epsilon,$ and $0<\psi_{p'}(\epsilon)\leq n\epsilon,$ and the
flipping boundary has height $\alpha_{p,p'},$ exactly one of the tasks
is given to the root for values $(t_p,
t_{p'})=(\alpha_{p,p'}/2,\alpha_{p,p'}/2)$ (Fig.~\ref{fig:affine-unbounded} (b)). The approximation ratio is
$\mech/\opt>(\alpha_{p,p'}/2)/ 2\epsilon\rightarrow \infty$ when
$\epsilon \rightarrow 0.$ Finally, in the quasi-flipping case $\alpha_{pp'}=\infty$ can be excluded by Lemma~\ref{obs:alphas} (i) (to put it simply, both $R_{\{p,p'\}}$ and $R_{\emptyset|\{p,p'\}}$ must be nonempty).

Now consider the case that the relaxed affine minimizer is
quasi-bundling with height of the bundling boundary equal to
$\alpha_{p,p'}$ (Fig.~\ref{fig:affine-unbounded} (c)). The $2\times 2$ characterization implies that $\alpha_{p,p'}$
is constant (i.e., independent of the $(s,t)$ values when $s$ large enough, see Observation~\ref{ref:obs-characterization} ). Set $s_{p}$ so that
$\psi_p(s_p)=\alpha_{p,p'}+\epsilon/2$ (Fig.~\ref{fig:affine-unbounded} (d)).  Now for some large $s_{p'}$,
the allocation is almost half-bundling at $p'$, and by the definition
of relaxed affine minimizers, the allocation remains quasi-bundling
when we decrease $s_{p'}.$ If we set $s_{p'}=\epsilon/(2n)$ and
$t=(\epsilon, \epsilon)$, taking into account that
$\psi_{p'}(s_{p'})<n\, s_{p'}=\epsilon/2$, the existence of the
bundling boundary guarantees that both tasks are given to the leaf (Fig.~\ref{fig:affine-unbounded} (e)). In
turn this gives unbounded approximation ratio because $\mech\geq
s_p>\psi_p(s_p)/n>\alpha_{p,p'}/n$, while $\opt\leq 2\epsilon$.

If in the quasi-bundling case $\alpha_{pp'}=\infty$ (e.g., for one-dimensional bundling mechanisms) the argument is analogous. For $s_p=1,\, s_{p'}=\epsilon/2n$ and  $t=(\epsilon, \epsilon),$ both tasks are allocated to the leaves and we obtain unbounded ratio.
\end{proof}

\begin{corollary}\label{cor:bundlingsiblings} Let $T=(t,\bar s)$ be a
  standard instance, $p,p'$ be sibling tasks and $q$ be an arbitrary
  third task. If the region $R_{\{p,p',q\}}$ has a bundling facet of
  equation $t_p+t_{p'}+t_q=D$ for some $D<\alpha_p+\alpha_{p'},$ then
  $R_{\{p,p'\}}$ is quasi-bundling, so by Lemma~\ref{prop:quasi} the
  mechanism has unbounded ratio.
\end{corollary}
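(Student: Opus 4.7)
The plan is to transfer the 3D bundling structure of $R_{\{p,p',q\}}$ down to the $(p,p')$-slice at $t_q=0$, exhibit a positive-length segment on the common boundary of $R_{\{p,p'\}}$ and $R_{\emptyset|\{p,p'\}}$ in that slice, and then invoke Lemma~\ref{prop:quasi}. The key geometric observation is that when a facet of $R_{\{p,p',q\}}$ has equation $t_p+t_{p'}+t_q=D$ with $D<\alpha_p+\alpha_{p'}$, the planar cross-section $E:=R_{\{p,p',q\}}\cap\{t_q=0\}$ must contain a slanted edge of slope $-1$ that properly cuts off the corner $(\alpha_p,\alpha_{p'})$.

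First I would argue that $E$ is bounded by $t_p\leq\alpha_p$ and $t_{p'}\leq\alpha_{p'}$ (these are the critical values when the other two coordinates vanish, by the very definition of $\alpha_p,\alpha_{p'}$), together with a slanted constraint $t_p+t_{p'}\leq c^0$, where $c^0=\min(D,c_{p,p'})$ if a pairwise facet $t_p+t_{p'}=c_{p,p'}$ is also present in the 3D polyhedron, and $c^0=D$ otherwise. In either case $c^0\leq D<\alpha_p+\alpha_{p'}$, so the slanted constraint genuinely cuts off a corner of the rectangle $[0,\alpha_p]\times[0,\alpha_{p'}]$ and the resulting edge of $E$ has positive length. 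I then pick two distinct points $\hat t_1,\hat t_2$ in the relative interior of this edge, with both coordinates strictly inside $(0,\alpha_p)$ and $(0,\alpha_{p'})$. Slightly moving each point toward the origin keeps all of $p,p',q$ allocated to the root, placing the neighboring point in the slice's $R_{\{p,p'\}}$; slightly moving it away crosses either the 3-way facet (landing in $R_{\emptyset|\{p,p',q\}}$) or the pairwise facet (landing in $R_{\{q\}}$). Crucially, in both cases $p$ and $p'$ are reallocated to the leaf, so the outward neighbor lies in the slice's $R_{\emptyset|\{p,p'\}}$. Thus $\hat t_1,\hat t_2$ both sit on the common boundary of $R_{\{p,p'\}}$ and $R_{\emptyset|\{p,p'\}}$ in the slice, matching the definition of a quasi-bundling allocation.

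The main obstacle is exactly the case analysis in the previous paragraph: the 3-way facet need not reach $\{t_q=0\}$ if an intermediate pairwise bundling facet of the 3D polyhedron dominates there. The payoff of doing this split carefully is that both alternatives have the same consequence for the slice, namely that crossing the slanted edge moves $p$ and $p'$ simultaneously to the leaf rather than to an intermediate region where only one of them flips; meanwhile the inequality $D<\alpha_p+\alpha_{p'}$ supplies precisely the slack needed to guarantee that the slanted segment has positive length, so that \emph{two} distinct boundary points (and not just one) can be found. Once quasi-bundling is established, Lemma~\ref{prop:quasi} closes the argument.
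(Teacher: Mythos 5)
Your proposal is correct in substance, but it follows a different route than the paper's (much shorter) proof. The paper argues by corner exclusion plus shape classification: the points $(\alpha_p,0)$ and $(0,\alpha_{p'})$ lie on the boundary of $R_{\{p,p'\}}$, and the restriction of the constraint $t_p+t_{p'}+t_q\le D$ to the plane $t_q=0$ (valid for $R_{\{p,p'\}}$ because the constants in these constraints are payment differences, cf.\ the Remark after Proposition~\ref{prop:tech1}) excludes the corner $(\alpha_p,\alpha_{p'})$ from $R_{\{p,p'\}}$; since in a crossing or quasi-flipping allocation the closure of $R_{\{p,p'\}}$ is the whole rectangle $[0,\alpha_p]\times[0,\alpha_{p'}]$, Definition~\ref{def:crossing} leaves only the quasi-bundling shape. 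You instead verify the definition of quasi-bundling directly, by exhibiting a positive-length slanted edge of the cross-section $R_{\{p,p',q\}}\cap\{t_q=0\}$ and checking, at two interior points of that edge, that the inward neighbour lies in $R_{\{p,p'\}}$ and the outward neighbour in $R_{\emptyset|\{p,p'\}}$. This buys independence from the crossing/quasi-bundling/quasi-flipping classification, but it costs an appeal to facet adjacency --- that crossing a facet $\sum_{i\in I}t_i=c_I$ of $R_{\{p,p',q\}}$ sends exactly the tasks of $I$ to the leaves, so that across either the three-way facet or the $\{p,p'\}$ facet both $p$ and $p'$ flip --- a fact the paper states only for the bundling facet (``it separates $R_P$ and $R_{\emptyset|P}$''), although it does follow from the same taxation-principle structure invoked in that Remark. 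Two small points you should make explicit: the cross-section can also be clipped by facets of types $\{p,q\}$ and $\{p',q\}$, which at $t_q=0$ bound $t_p$ and $t_{p'}$, so it is not literally the rectangle $[0,\alpha_p]\times[0,\alpha_{p'}]$ cut by one slanted constraint (harmless, since under bounded ratio those bounds are at least $\alpha_p$ and $\alpha_{p'}$, but it is exactly what your positive-length claim rests on); and positivity of the length also needs $\min(D,c_{p,p'})>0$, which holds because $\alpha_p,\alpha_{p'}>0$ (Lemma~\ref{obs:alphas}) and the bundling facet is a genuine, non-degenerate facet. With these points spelled out your argument is sound; the paper's corner-exclusion argument simply gets there with less machinery.
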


\begin{proof} By definition, in the $(p,p')$ slice the points
  $(\alpha_p,0)$ and $(0, \alpha_{p'})$ are on the boundary of
  $R_{\{p,p'\}}.$ On the other hand, in $R_{\{p,p',q\}}$ the point
  $(\alpha_p,\alpha_{p'},0)$ strictly violates $t_p+t_{p'}+t_q\leq D.$ 
  Therefore, the point $(\alpha_p,\alpha_{p'})$ does not belong in $R_{\{p,p'\}}$ so this allocation region can only be quasi-bundling (See also
  Figure~\ref{fig:LemmaToClaimIV}.).
\end{proof}

\subsection{Base case}
\label{sec:inductionbase}

In this section we prove the special case of the Box Theorem (Theorem~\ref{thm:box-restated}) for two leaves. In this particular case, multiplicity 2 is sufficient.

 \begin{theorem} \label{thm:basecase}
   For a mechanism with approximation ratio at most $n$, every standard instance with 2 leaves and 2 tasks per leaf contain a box of 2 leaves.   
 \end{theorem}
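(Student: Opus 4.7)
The plan is to proceed by contradiction: suppose that none of the four candidate stars $\{p_a, p_b\}$ (with $a\in\{1,1'\}$, $b\in\{2,2'\}$) is a $\delta$-box, where $\delta = 4^2\nu$. For each such star, fix the remaining two tasks at $t=0$ and consider the 2D region $R_{\{p_a,p_b\}}$ in the $(t_{p_a},t_{p_b})$-plane. By the non-box assumption the point $(\alpha_{p_a}-\delta,\alpha_{p_b}-\delta)$ lies outside $R_{\{p_a,p_b\}}$. A relaxed task-independent slice would produce the full rectangle $[0,\alpha_{p_a}]\times[0,\alpha_{p_b}]$ (with the continuity requirement ruling out jumps of $\psi$), which contains that point---contradiction. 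Hence by Theorem~\ref{theo:addchar} and Lemma~\ref{prop:noconstant}, each of the four $2\times 2$ slice mechanisms is a quasi-bundling relaxed affine minimizer, with affine bundling facet $t_{p_a}+t_{p_b}=D_{ab}$ satisfying $\max(\alpha_{p_a},\alpha_{p_b})\leq D_{ab}<\alpha_{p_a}+\alpha_{p_b}-2\delta$; Observation~\ref{obs:possibleFigures}(iii) records the linear form of $\psi$ along these facets.

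The core step is the analogue of the inductive step for $k\geq 3$, but with linearity exploited in \emph{two distinct directions} rather than one. In the inductive step one picks a task $p\in S$ and a sibling $p'$, and uses the $(p,p')$ slice---either relaxed task-independent or relaxed affine minimizer by the $2\times 2$ characterization---to either produce a box when $p$ is replaced by $p'$ (task-independent case) or shift the chopped-off corner rectilinearly along one axis until it hits a $(k{-}1)$-dimensional sub-box, which by the inductive hypothesis was supposed to be a box. For $k=2$ the sub-stars are one-dimensional intervals, trivially boxes, so this one-directional shift argument provides no contradiction. Instead, I would perform the analogous shift along \emph{both} leaves' sibling directions simultaneously: using the linear form of $\psi$ in the four $(p_a,p_b)$-slice relaxed affine minimizers, together with sibling independence $\psi_{p_1}\perp t_{p_1'}$ and $\psi_{p_2}\perp t_{p_2'}$ guaranteed by Theorem~\ref{thm:sibling-independence}, build up a three-task bundling facet of $R_{\{p_a,p_a',p_b\}}$ whose offset is forced below $\alpha_{p_a}+\alpha_{p_a'}$ by combining the four inequalities $D_{ab}<\alpha_{p_a}+\alpha_{p_b}-2\delta$ with the structural bounds $D_{ab}\geq\max(\alpha_{p_a},\alpha_{p_b})$ and the two linear slopes of $\psi$. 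Applying Corollary~\ref{cor:bundlingsiblings} then forces the sibling slice $R_{\{p_a,p_a'\}}$ (in the standard instance, where Lemma~\ref{prop:quasi} does apply) to be quasi-bundling, contradicting Lemma~\ref{prop:quasi}.

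The main obstacle is the two-directional linearity computation itself---concretely, deriving the offset of the forced three-way bundling facet from the four 2D bundling offsets $D_{12},D_{12'},D_{1'2},D_{1'2'}$, their affine slopes (Observation~\ref{obs:possibleFigures}(iii)), and the above inequalities, and showing this offset is strictly below $\alpha_{p_a}+\alpha_{p_a'}$. A naive vertex of $R_{\{p_a,p_a',p_b\}}$ at $(\alpha_{p_a},\alpha_{p_a'},0)$ only forces the three-way offset $\geq\alpha_{p_a}+\alpha_{p_a'}$, so the contradiction has to be coaxed out of a vertex or facet that genuinely uses both leaf-2 sibling directions at once, which is exactly the point where ``two-directional'' enters. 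Degenerations that need standard case analysis include half-bundling configurations (Definition~\ref{def:half-bundling}) and extreme affine parameters; the continuity requirement neutralizes $\psi$-discontinuities along all the shifts considered.
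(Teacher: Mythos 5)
Your proposal has a genuine gap, and it sits exactly where the paper says the difficulty is. The four pairs $\{1,2\},\{1,2'\},\{1',2\},\{1',2'\}$ you want to control are \emph{cross-leaf} pairs: in the $(p_a,p_b)$-slice with $a\in\{1,1'\}$, $b\in\{2,2'\}$ there are three relevant players (the root and the two leaves), so Theorem~\ref{theo:addchar} and Observation~\ref{obs:possibleFigures}(iii) simply do not apply to these slices. Weak monotonicity alone gives you that each $R_{\{p_a,p_b\}}$ is a rectangle possibly truncated by a $-45^\circ$ bundling facet (that much is Lemma~\ref{obs:tech2dim}, and your bound $D_{ab}<\alpha_{p_a}+\alpha_{p_b}-2\delta$ is fine), but it gives you no linearity whatsoever of $\psi_2$ as a function of $s_2$ or of $\psi_1$ as a function of $s_1$ along those facets. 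The paper is explicit that this is the crux: linearity for the non-sibling pair $(1,2)$ is ``far from obvious'' and is obtained only indirectly, by looking at the \emph{sibling} slice $(2,2')$ at many values of $t_1$ lying on the $(1,2)$ bundling boundary (Lemma~\ref{lemma:tech2dim2}), excluding half-bundling there via Lemma~\ref{prop:tech2dim} (which needs the symmetry-breaking assumption that $\alpha_{2'}$ is the maximum, via Corollary~\ref{cor:bundlingsiblings}), excluding constants via Lemma~\ref{prop:noconstant} and task-independence via the continuity requirement, and only then invoking the $2\times2$ characterization to get $\psi_2[t_1](s_2)$ truncated linear in $s_2$.

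The second problem is that the step you yourself flag as ``the main obstacle''---producing a three-task bundling facet of $R_{\{p_a,p_a',p_b\}}$ with offset strictly below $\alpha_{p_a}+\alpha_{p_a'}$ from the four static offsets $D_{ab}$---is not carried out, and it is doubtful it can be coaxed out of static data at the standard instance at all; as you note, the natural vertex only yields offset $\geq\alpha_{p_a}+\alpha_{p_a'}$. The paper's contradiction is not static: having established linearity of $\psi_2$ in $s_2$ (with a common slope, via the Lipschitz property, Claim (i)), it \emph{decreases} $s_2$ to a carefully discretized $s_2^*>30\nu/n$ at which both $R_{\{1,2\}}$ and $R_{\{1',2\}}$ become half-bundling (Claims (ii)--(iii)); then, at the new instance $T^*$, it repeats the scheme in the orthogonal direction on the sibling slice $(1,1')$ for $t_2\in(0,c)$ (Claim (iv) excludes half-bundling there, again via a three-way facet and Corollary~\ref{cor:bundlingsiblings}), obtaining linearity of $\psi_1$ in $s_1$ with a common slope, and finally drives $s_1\to0$ so that both tasks $1$ and $2$ are allocated to the leaves at root values tending to $0$ while the makespan stays at least $s_2^*>30\nu$---an unbounded approximation ratio. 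So your high-level instinct (exploit linearity in two directions and use Corollary~\ref{cor:bundlingsiblings}/Lemma~\ref{prop:quasi}) points in the right direction, but the two load-bearing ingredients---where the linearity comes from (sibling slices along the cross-pair bundling boundaries, not the cross-pair slices themselves) and how the contradiction is actually extracted (a two-stage shift of $s_2$ then $s_1$, not a static facet-offset inequality)---are missing.
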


 The proof of the theorem is given in this section. We consider a multi-star with two leaves and two tasks per leaf, as shown in Figure~\ref{fig:2+2-multi-star}. We rename the two tasks of leaf 1 as 1,1' and the two tasks of leaf 2 as 2,2'.

 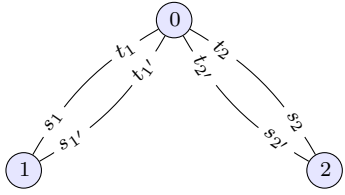
\begin{figure}
   \centering
   
 \begin{tikzpicture}[scale=2, allow upside down,
      vertex/.style = {circle, draw, text centered, text width=0.4em, text
        height=0.5em, fill=blue!10}]

      \scriptsize

      \foreach \k/\x/\y/\l in {1/1/2/0, 3/2/1/2, 2/0/1/1}
        \node[vertex] (\k) at (\x,\y) {$\small \l$};

      \foreach \from/\to/\la/\lb in {2/1/$s_1$/$t_1$, 1/3/$t_2$/$s_2$}
      \draw (\from)  to [bend left=15] node[sloped,fill=white,pos=0.2] {\la} node[sloped,fill=white,pos=0.8] {\lb}  (\to);
      \foreach \from/\to/\la/\lb in {2/1/$s_{1'}$/$t_{1'}$, 1/3/$t_{2'}$/$s_{2'}$}
      \draw (\from)  to [bend right=15] node[sloped,fill=white,pos=0.2] {\la} node[sloped,fill=white,pos=0.8] {\lb}  (\to);
    \end{tikzpicture}

   \caption{The base case instance}
   \label{fig:2+2-multi-star} 
 \end{figure}

 The next claim follows directly from the definition of a box. The factor $32=2\cdot 4^2$ comes from the definition of the box and simple geometric considerations.
 
 \begin{lemma}\label{obs:tech2dim} If the pair of tasks $\{1,2\}$, is not a box, then the $R_{\{1,2\}}$ region of the slice-mechanism for these two tasks has a bundling boundary. In particular, there exists $\alpha_{12}\geq 32\nu$ such that the bundling boundary is defined by the equation $t_1+t_2 = \alpha_1+\alpha_2-\alpha_{12}$. Similarly for the other pairs $\{1,2'\}$, $\{1',2\}$, and $\{1',2'\}$.
 \end{lemma}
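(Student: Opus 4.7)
The plan is to translate the defining condition of a $\delta$-box with $\delta=4^2\nu=16\nu$ into a geometric fact about the allocation region $R_{\{1,2\}}$ of the $(1,2)$-slice mechanism (with $t_{1'}=t_{2'}=0$ held fixed), and then read off its shape from the $2\times 2$ characterization (Theorem~\ref{theo:addchar} together with Observation~\ref{obs:possibleFigures}).

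First I would note that, by Definition~\ref{def:box}, the pair $\{1,2\}$ is a $16\nu$-box iff the corner point $O:=(\alpha_1-16\nu,\alpha_2-16\nu)$ lies in $R_{\{1,2\}}$, where $\alpha_i=\psi_i(\bar s_i)$. By the characterization, $R_{\{1,2\}}$ is a polytope in $\R^2_{\geq 0}$ cut out by at most three facets $t_1\leq\beta_1$, $t_2\leq\beta_2$, and $t_1+t_2\leq c$ (any subset possibly absent). Since $\psi_1[t_2=0](\bar s_1)=\alpha_1$, every point $(t_1,0)$ with $t_1<\alpha_1$ belongs to $R_{\{1,2\}}$; thus either $\beta_1\geq\alpha_1$ or that facet is absent altogether, and symmetrically for $\beta_2$. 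Combining with Lemma~\ref{obs:alphas}, which gives $\alpha_i>\bar s_i/n\geq\xi/n$, and the choice $\nu<\xi/(n^2\cdot 4^n)$ from Definition~\ref{def:parameters}, one gets $32\nu<\alpha_i$. Hence the coordinates of $O$ are strictly positive and strictly below the corresponding $\alpha_i$, so $O$ automatically satisfies the first two facet constraints.

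Consequently, the only way $O$ can fail to lie in the interior of $R_{\{1,2\}}$ is by violating the bundling constraint $t_1+t_2\leq c$. If no bundling facet exists (the crossing and quasi-flipping configurations of Observation~\ref{obs:possibleFigures}), then $R_{\{1,2\}}=[0,\alpha_1]\times[0,\alpha_2]\ni O$, making $\{1,2\}$ a box and contradicting the hypothesis. So the bundling facet must be present; writing it as $t_1+t_2=c$ and setting $\alpha_{12}:=\alpha_1+\alpha_2-c$, the failure of $O$ to be strictly inside the polytope yields $(\alpha_1-16\nu)+(\alpha_2-16\nu)\geq c$, i.e., $\alpha_{12}\geq 32\nu$, as desired. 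The three remaining pairs $\{1,2'\},\{1',2\},\{1',2'\}$ are handled by the identical symmetric argument, with the siblings of $1,2$ playing the role of the fixed tasks.

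The only mild obstacle I anticipate is uniformly describing the degenerate half-bundling and fully-bundling sub-configurations in which the facet $t_1=\beta_1$ or $t_2=\beta_2$ is absent because the bundling line already passes through $(\alpha_1,0)$ or $(0,\alpha_2)$. In those cases one gets $c\leq\alpha_1$ or $c\leq\alpha_2$, so $\alpha_{12}\geq\min(\alpha_1,\alpha_2)>\xi/n\gg 32\nu$ by Lemma~\ref{obs:alphas}, and the same linear equation $t_1+t_2=\alpha_1+\alpha_2-\alpha_{12}$ still parametrizes the bundling boundary. Thus the statement holds uniformly over all sub-cases permitted by the $2\times 2$ characterization.
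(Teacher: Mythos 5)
Your argument is correct and is essentially the paper's intended one: the paper leaves this lemma as an immediate consequence of the box definition (the corner $(\alpha_1-16\nu,\alpha_2-16\nu)$ must lie in $R_{\{1,2\}}$) together with the rectangle-or-diagonally-cut-rectangle geometry of two-task regions, which is exactly what you spell out, including the factor $32=2\cdot 4^2$. The only nitpick is that your claim that every $(t_1,0)$ with $t_1<\alpha_1$ lies in $R_{\{1,2\}}$ presupposes that the configuration is not half-bundling at task $2$ (otherwise task $2$ need not go to the root there), but since your final paragraph treats the half-bundling/fully-bundling configurations separately and still obtains $\alpha_{12}\geq\min(\alpha_1,\alpha_2)>32\nu$, the case analysis as a whole is sound.
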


 We continue with two technical lemmas. We provide some intuition
 of the base case after the lemmas. Notice that the first lemma, which is a consequence of Corollary~\ref{cor:bundlingsiblings},
 breaks the symmetry between tasks, i.e., the conclusion does not
 apply if we exchange the role of the two leaves.

 \begin{lemma}\label{prop:tech2dim} Assume that $\alpha_{2'}=\max
   \{\alpha_1, \alpha_{1'}, \alpha_2, \alpha_{2'}\}$. If there exists some
   $\bar t_1\in (\alpha_1-\alpha_{12}, \alpha_1)$ such that the allocation in the
   slice $(2,2')$ is half-bundling at task $2$, then the mechanism has unbounded approximation ratio.
   Analogously, if there exists some $\bar t_{1'}\in
   (\alpha_{1'}-\alpha_{1'2}\,,\, \alpha_{1'})$ such that the allocation in
   the slice $(2\,,\,2')$ is half-bundling at task $2$, then the
   mechanism has unbounded approximation ratio.
\end{lemma}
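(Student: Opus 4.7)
The plan is to show that the half-bundling hypothesis forces the three-dimensional region $R = R_{\{1,2,2'\}}$ of the standard instance (with $t_{1'} = 0$ held fixed) to have a bundling facet $t_1 + t_2 + t_{2'} \leq D_1$, where $D_1 := \alpha_1 + \alpha_2 - \alpha_{12}$, and then to invoke Corollary~\ref{cor:bundlingsiblings} with the sibling pair $p = 2, p' = 2'$ and third task $q = 1$.

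First I gather the relevant two-dimensional information. By Lemma~\ref{obs:tech2dim}, the $(1,2)$-slice of $R$ at $t_{2'} = 0$ is quasi-bundling with bundling boundary $t_1 + t_2 = D_1$; since $\bar t_1$ lies on the slanted segment, the critical value of $t_2$ at $t_1 = \bar t_1, t_{2'} = 0$ equals $c := D_1 - \bar t_1$. The hypothesis of the lemma states that the $(2,2')$-slice of $R$ at $t_1 = \bar t_1$ is half-bundling at task $2$, so the allocation region to the root in this slice has a slant $t_2 + t_{2'} = c$.

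Next I identify which facet of $R$ carries this slant. Weak monotonicity forces every defining inequality of $R$ to have $\{0,1\}$-coefficients, so a 2D boundary piece of the form $t_2 + t_{2'} = \text{const}$ at fixed $t_1 = \bar t_1$ can only be inherited, at that slice, from either the pair inequality $t_2 + t_{2'} \leq c^\ast$ or the triple inequality $t_1 + t_2 + t_{2'} \leq D$; matching at $t_1 = \bar t_1$ pins down $c^\ast = c$ or $D = D_1$ respectively. The pair option is excluded because the same inequality would also constrain the 2D slice at $t_1 = 0$, adding a slanted boundary to the standard $(2,2')$ sibling slice, which Theorem~\ref{thm:sibling-independence} together with the continuity requirement forces to be crossing. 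Thus the triple facet $t_1 + t_2 + t_{2'} \leq D_1$ is a facet of $R$.

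Finally I apply Corollary~\ref{cor:bundlingsiblings} with $p = 2, p' = 2', q = 1$ and $D = D_1$. The required strict inequality $D_1 < \alpha_2 + \alpha_{2'}$ reduces to $\alpha_1 - \alpha_{12} < \alpha_{2'}$, which follows from the maximality assumption $\alpha_{2'} \geq \alpha_1$ combined with $\alpha_{12} \geq 32\nu > 0$. The corollary, via Lemma~\ref{prop:quasi}, yields the claimed unbounded approximation ratio. The second, analogous, part of the lemma is obtained by the identical argument with task $1$ replaced by task $1'$ throughout, using $D_{1'} := \alpha_{1'} + \alpha_2 - \alpha_{1'2}$ and the maximality $\alpha_{2'} \geq \alpha_{1'}$. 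The main obstacle is the identification step in the previous paragraph: extracting from the observed 2D slant the presence of a 3D facet of $R$ and pinning its constant requires simultaneously exploiting the $\{0,1\}$-coefficient structure from weak monotonicity, sibling independence to rule out the pair alternative, and the matching of constants coming from the $(1,2)$-bundling data at $t_{2'} = 0$.
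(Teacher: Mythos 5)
Your overall strategy is the paper's: produce a bundling facet $t_1+t_2+t_{2'}=D$ of $R_{\{1,2,2'\}}$ with $D<\alpha_2+\alpha_{2'}$ (using the maximality of $\alpha_{2'}$ and $\alpha_{12}>0$) and then invoke Corollary~\ref{cor:bundlingsiblings}. The gap is in your identification step. The half-bundling boundary you observe in the $(2,2')$-slice at $t_1=\bar t_1$ is a boundary between the slice regions $R_{\{2,2'\}}$ and $R_{\emptyset|\{2,2'\}}$, which constrain only the allocation of tasks $2$ and $2'$ and say nothing about task $1$. It is therefore \emph{not} automatically the trace of a facet of the three-task region $R=R_{\{1,2,2'\}}$: since $\bar t_1>\alpha_1-\alpha_{12}$, along the slant (as $t_{2'}$ grows and $t_2$ shrinks) task $1$ may well be on the leaf side, so the slant can separate ``root gets exactly $\{2,2'\}$'' from ``root gets neither'' while the region $R$ does not even touch that part of the slice. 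Hence your dichotomy ``the slant is inherited from the pair facet $t_2+t_{2'}\le c^{*}$ or from the triple facet of $R$'' is unjustified, and so is the pinning of the constant. The same conflation affects your exclusion of the pair facet: a facet $t_2+t_{2'}\le c^{*}$ of $R_{\{1,2,2'\}}$ only restricts points where task $1$ is \emph{also} given to the root, so it does not directly force a slanted boundary in the sibling slice mechanism $R_{\{2,2'\}}$ at $t_1=0$, and no contradiction with Theorem~\ref{thm:sibling-independence} follows without further argument.

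The paper closes exactly this hole with a pointwise perturbation argument rather than a structural inheritance claim. The endpoint $\bar t=(\bar t_1,\bar t_2,0)$ of the half-bundling boundary lies on the $(1,2)$-bundling boundary (this is precisely where $\bar t_1\in(\alpha_1-\alpha_{12},\alpha_1)$ is used), hence $\bar t$ is in the closure of $R_{\{1,2,2'\}}$ and satisfies every constraint defining it, with $\bar t_1<\alpha_1$, $\bar t_2<\alpha_2$. The perturbed point $\bar t'=(\bar t_1,\bar t_2-\epsilon,2\epsilon)$ lies above the slant, so tasks $2$ and $2'$ go to the leaves there and $\bar t'\notin R_{\{1,2,2'\}}$; a short case analysis (using $\bar t$'s feasibility for the facets $t_1+t_2\le d$, $t_2+t_{2'}\le e$, $t_1+t_{2'}\le f$ and the single-coordinate facets) shows the only constraint $\bar t'$ can violate is a triple one, with $D\le \bar t_1+\bar t_2+\epsilon<\alpha_1+\alpha_2\le\alpha_2+\alpha_{2'}$. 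If you want to keep your write-up, you should replace the ``inheritance'' paragraph by this kind of explicit exterior-point argument (note also that you do not need $D=D_1$ exactly; the strict bound $D<\alpha_1+\alpha_2$ suffices).
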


\begin{proof} See Figure~\ref{fig:LemmaToClaimIV} for an illustration
  of the proof. Assume that for $\bar t_1\in (\alpha_1-\alpha_{12},
  \alpha_1)$ the allocation in the slice $(2,2')$ is half-bundling at
  task $2$. Let the endpoint of the half-bundling boundary be $\bar t = (\bar
  t_1, \bar t_2, t_{2'}=0).$ Since this is the boundary point for task
  $2$ (i.e., $\bar t_2$ is a critical value), it also lies  on the bundling boundary of $R_{\{1,2\}},$ and obviously $\bar
  t_1<\alpha_1$ and $\bar t_2<\alpha_2$ hold.

  We will show that the region $R_{\{1,2,2'\}}$ must have a bundling facet
  of equation $t_1+t_2+t_{2'}=D,$ for $D<\alpha_1+\alpha_2\leq
  \alpha_{2'}+\alpha_2.$ By Corollary~\ref{cor:bundlingsiblings} this
  will imply that the mechanism has infinite approximation.

  By the geometry of the half-bundling boundary, for an input point
  $\bar t' = (\bar t_1, \bar t_2-\epsilon, t_{2'}=2\epsilon)$ the tasks $2$ and
  $2'$ are not allocated to the root player. So $\bar t'$ is outside
  the region $R_{\{1,2,2'\}},$ and violates at least one linear
  constraint defining $R_{\{1,2,2'\}}.$ 

  What can these constraints be? We will eliminate all other
  possibilities, by choosing $\epsilon$ to be sufficiently
  small, concluding that there must exist a bundling facet
  of equation $t_1+t_2+t_{2'}=D$. Firstly, $R_{\{1,2,2'\}}$ may have facets of equations
  $t_1=\alpha_1,\,\, t_2=\alpha_2,\,\, t_{2'}=\alpha_{2'},$ but $\bar t'$ does not violate any of
  these. Further, there might be a facet of the form $t_1+t_2=d$. But since $\bar t$ satisfies this constraint, 
  $d\geq \bar t_1+\bar t_2>\bar t_1+\bar t_2-\epsilon$, so this constraint cannot be violated for $\bar t'$ either. Similarly a facet of equation
  $t_2+t_{2'}=e$ is not violated since $e\geq \alpha_2> \bar t_2+\epsilon$, and finally a facet of
  equation $t_1+t_{2'}=f$ cannot be violated since $f\geq \alpha_1>\bar t_1+2\epsilon,$
  for small enough $\epsilon.$  The only
  remaining possibility is a facet $t_1+t_2+t_{2'}=D,$ so that $D\leq
  \bar t_1+(\bar t_2-\epsilon)+2\epsilon=\bar t_1+\bar
  t_2+\epsilon<\alpha_1+\alpha_2,$ which concludes the proof.  The
  proof of the second statement is analogous with task $1'$ instead of
  task $1.$
\end{proof}

\begin{figure}
  \centering

\begin{tikzpicture}[scale=0.4]

\coordinate (AA) at (10,0,0);
\coordinate (BB) at (0,10,0);
\coordinate (CC) at (0,0,10);

\draw[->] (O) -- (AA);
\draw[->] (O) -- (CC);
\draw[->] (O) -- (BB);

\coordinate (a) at (9,0,0);
\coordinate (b) at (0,9,0);
\coordinate (c) at (0,0,9);

\coordinate (d) at (0,5,0);
\coordinate (e) at (0,5,4);
\coordinate (f) at (4,5,0);

\coordinate (g) at (0,0,7.0);
\coordinate (h) at (0.6,0,7.0);
\coordinate (i) at (0.6,1.4,7.0);
\coordinate (j) at (0,2,7.0);

\coordinate (k) at (5.7,0,0);
\coordinate (l) at (5.7,0,1.9);
\coordinate (m) at (5.7,1.4,1.9);
\coordinate (n) at (5.7,3.3,0);

\draw[blue,fill=blue!30] (O) -- (d) -- (f) -- (n) -- (k) -- cycle;%

\draw[blue,fill=yellow!80] (O) -- (k) -- (l) -- (h) -- (g) -- cycle;%

\draw[blue,fill=red!10] (O) -- (d) -- (e) -- (j) -- (g) -- cycle;%

\draw[black,fill=red!20,opacity=0.6] (d) -- (e) -- (f) -- cycle;%

\draw[black,fill=red!20,opacity=0.6] (g) -- (h) -- (i) -- (j) -- cycle;%

\draw[black,fill=red!20,opacity=0.6] (k) -- (l) -- (m) -- (n) -- cycle;%

\draw[black,fill=red!20,opacity=0.6] (i) -- (h) -- (l) -- (m) -- cycle;%

\draw[black,fill=red!20,opacity=0.4] (e) -- (f) -- (n) -- (m) -- (i) -- (j) -- cycle;%

\draw[blue,dotted,fill=blue!20,opacity=0.4] (a) -- (b) -- (c) -- cycle;%

\draw[red,very thick,dashed] (5,4,0) -- (5,1.4,2.6) -- (5,0,2.6);%

\coordinate (O) at (0,0,0);

\draw[red,thick] (5,4,0) circle (3pt) node[anchor=west] {$\bar t =(\bar t_1,\bar t_2,0)$};

\draw (5.7,0,0) node[anchor=north] {$\alpha_1$};
\draw (9,0,0) node[anchor=north] {$D$};
\draw (0,5,0) node[anchor=east] {$\alpha_2$};
\draw (0,0,7) node[anchor=south] {$\alpha_{2'}$};

\draw (10,0,0) node[anchor=south] {$t_1$};
\draw (0,10,0) node[anchor=west] {$t_2$};
\draw (0,0,10) node[anchor=south] {$t_{2'}$};

\end{tikzpicture}

\caption{Illustration of the proof of Lemma~\ref{prop:tech2dim}. The half-bundling boundary in the slice $(t_2,t_{2'})$ for $t_1=\bar t_1$ is drawn by the dashed lines. It can exist  only if $R_{\{1,2,2'\}}$ has a bundling facet on the plane $t_1+t_2+t_{2'}=D$\,\,(where $D=\bar t_1+\bar t_2$). Given that $D< \alpha_2+\alpha_{2'},$ the allocation of $R_{\{2,2'\}}$ (for $t_1=0$) cannot be crossing, even if arbitrary other defining facets for $R_{\{1,2,2'\}}$ exist. }
\label{fig:LemmaToClaimIV}
\end{figure}
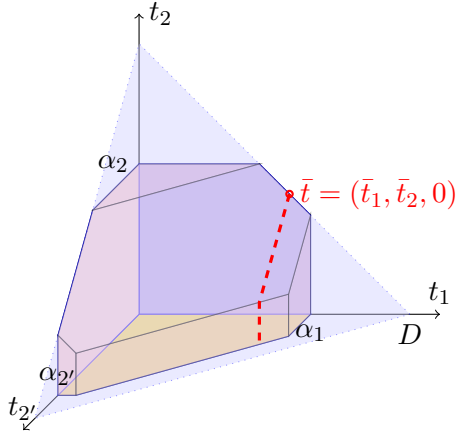

The claim of the next lemma holds under quite general conditions. We will apply it to different triples of the four tasks, and even different instances. 
  
 \begin{lemma} \label{lemma:tech2dim2} 
   If both $R_{\{1,2\}}$ and $R_{\{1,2'\}}$ are quasi-bundling, then the allocation of the slice mechanism $(2, 2')$ is non-crossing when we set $t_1$ to any value in $(\alpha_1-\min\{\alpha_{12}, \alpha_{12'}\}, \alpha_1)$, with the possible exception of a single value in this interval.
   The same holds for task $1'$ instead of 1.
 \end{lemma}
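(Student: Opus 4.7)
The plan is a proof by contradiction: assume the $(2,2')$-slice at some $\bar t_1$ in the prescribed interval is crossing, and derive a contradiction for every $\bar t_1$ in the interval except possibly one value.

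First, I apply the $2\times 2$ characterization (Theorem~\ref{theo:addchar}) and Observation~\ref{obs:possibleFigures}. The continuity requirement guarantees we are at a continuity point of the boundaries, and Lemma~\ref{prop:noconstant} excludes constant mechanisms; a crossing $(2,2')$-slice at $\bar t_1$ must therefore be a relaxed task independent mechanism. Task independence means the thresholds $c_2(\bar t_1) = \psi_2[\bar t_1, t_{2'}](\bar s_2)$ and $c_{2'}(\bar t_1) = \psi_{2'}[\bar t_1, t_2](\bar s_{2'})$ are independent of $t_{2'}$ and $t_2$, respectively.

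Second, I use the quasi-bundling hypotheses to pin these thresholds down. The bundling facet of $R_{\{1,2\}}$ is $t_1 + t_2 = \alpha_1 + \alpha_2 - \alpha_{12}$, so for $\bar t_1 \in (\alpha_1 - \alpha_{12}, \alpha_1)$ the threshold at $t_{2'}=0$ lies on this diagonal and equals $c_2(\bar t_1) = \alpha_1 + \alpha_2 - \alpha_{12} - \bar t_1 < \alpha_2$; symmetrically from the bundling facet of $R_{\{1,2'\}}$ I get $c_{2'}(\bar t_1) = \alpha_1 + \alpha_{2'} - \alpha_{12'} - \bar t_1 < \alpha_{2'}$.

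Third, and most importantly, I derive a contradiction by tracking how the task~$2$ boundary behaves as $t_{2'}$ varies. The $(1,2)$-slice at any fixed $t_{2'}$ is itself a $2 \times 2$ mechanism subject to the characterization; at $t_{2'}=0$ it is a relaxed affine minimizer (being quasi-bundling), so its bundling diagonal $t_1 + t_2 = C_{12}(t_{2'})$ persists for $t_{2'}$ close to $0$, with $C_{12}(\cdot)$ itself $1$-Lipschitz by Proposition~\ref{prop:lipschitz}. Task independence of the $(2,2')$-slice at $\bar t_1$ forces the task~$2$ threshold to remain equal to $c_2(\bar t_1)$ for every $t_{2'}$; combined with the symmetric statement for the $(1,2')$-slice, this imposes strong rigidity on the $3$D allocation polytope $R_{\{1,2,2'\}}$. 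I then exhibit a configuration showing that this rigidity forces either a half-bundling at task~$2$ configuration of the $(2,2')$-slice for some nearby $\bar t_1$—at which Lemma~\ref{prop:tech2dim} (under the WLOG assumption that $\alpha_{2'}$ is the maximum of $\{\alpha_1, \alpha_{1'}, \alpha_2, \alpha_{2'}\}$) delivers unbounded approximation ratio—or a direct violation of $1$-Lipschitzness of a boundary function. The single excluded value of $\bar t_1$ is the unique value at which all the involved linear relations align and the rigidity argument degenerates. The statement for task~$1'$ then follows by the same argument applied to the analogous interval $(\alpha_{1'} - \alpha_{1'2}, \alpha_{1'})$.

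The main obstacle lies in the third step: translating the task independence of the $(2,2')$-slice together with the quasi-bundling of both pair-slices into a concrete geometric incompatibility. Observation~\ref{obs:possibleFigures}~(iii) supplies the linear structure of boundaries in affine minimizers, but corralling the mechanism into either a half-bundling configuration (so that Lemma~\ref{prop:tech2dim} applies) or a Lipschitz violation is the delicate core of the proof.
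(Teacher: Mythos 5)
Your proposal has a genuine gap at its core. The decisive step (your third) is never carried out: you assert that task-independence of the $(2,2')$-slice at a single value $\bar t_1$, together with the two quasi-bundling facets, ``imposes strong rigidity'' forcing either a half-bundling configuration or a Lipschitz violation, but you never exhibit the configuration, and you yourself flag this as the unfinished ``delicate core.'' More fundamentally, the route cannot work from one crossing value: the lemma explicitly permits a single exceptional value, so no contradiction is derivable from assuming crossing at just one $\bar t_1$; a correct proof must use \emph{two} crossing values simultaneously, which your plan never does. The paper's proof does exactly that: assume the $(2,2')$-slice is crossing at two values $\bar t_1<\bar t_1'$, both lying on the bundling facet of $R_{\{1,2\}}$; task-independence at these two values pins the task-$2$ critical values there under any change of $t_{2'}$, and by weak monotonicity (equality in the Lipschitz bound) the whole diagonal segment --- hence also the task-$1$ boundary over $[\bar t_1,\bar t_1']$ --- is unaffected by $t_{2'}$. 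Consequently at $(t_1=\bar t_1,\,t_2=0,\,t_{2'}=x)$ tasks $1$ and $2$ stay with the root for every $x$, which for $x\ge\alpha_{2'}$ contradicts the quasi-bundling of $R_{\{1,2'\}}$, since $\bar t_1>\alpha_1-\alpha_{12'}$ puts the point beyond the task-$1$ boundary there. Your treatment of the exceptional value (``the unique value at which all the involved linear relations align'') is unsupported; in the correct argument the exception falls out automatically because the contradiction needs two crossing values.

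A secondary problem is that you import hypotheses the lemma does not have. Lemma~\ref{prop:tech2dim} requires $\alpha_{2'}=\max\{\alpha_1,\alpha_{1'},\alpha_2,\alpha_{2'}\}$, and concluding ``unbounded approximation ratio'' is only a contradiction if a ratio bound is assumed --- the statement of Lemma~\ref{lemma:tech2dim2} assumes neither. The paper later applies this lemma with permuted roles (to the triple $\{1',2,2'\}$ and, in the modified instance $T^*$, to $\{2,1,1'\}$), so a proof conditioned on that maximality or on the ratio bound would not support those applications. The intended proof needs only weak monotonicity and the geometry of the two bundling facets; no $2\times 2$ characterization, no appeal to Lemma~\ref{prop:tech2dim}, and no approximation-ratio argument are required.
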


 \begin{proof}
   Towards a contradiction, assume that there are two values $\bar t_1$ and $\bar t_1'$ with $\bar t_1<\bar t_1'$ in this interval, for both of which the allocation in the slice $(2,2')$ is crossing. This means that for $t_1\in \{\bar t_1, \bar t_1'\}$ the slice mechanism for tasks 2 and 2' is task-independent, i.e., the boundary of task 2 is unaffected by changing the value of task 2'. 

   The fact that these two values of $t_1$ are in $(\alpha_1-\min\{\alpha_{12}, \alpha_{12'}\}, \alpha_1) \subseteq (\alpha_1-\alpha_{12}, \alpha_1)$ means that they fall in the bundling boundary between 1 and 2 (Figure~\ref{fig:tech2dim2}). In the interval $[\bar t_1, \bar t_1']$ the bundling boundary is common between tasks 1 and 2. Since the boundary of task 2 is unaffected by changing the value of task 2', the boundary of $t_1$ in $[\bar t_1, \bar t_1']$ also remains unaffected by changing the value of task 2'. In particular, this means that for instances $(t_1=\bar t_1, t_2=0, t_{2'}=x)$, the mechanism allocates both task 1 and 2 to the root, for all $x$.

   But for large $x$, say $x\geq \alpha_{2'}$, the above points are in the region $R_{\emptyset|\{1, 2'\}}$, because $\bar t_1>\alpha_1-\alpha_{12'}$ and the point $(t_1=\bar t_1, t_2=0, t_{2'}=x)$ is above the bundling boundary of 1 and 2'. This contradicts the fact that 1 is allocated to the root. 
 \end{proof}

 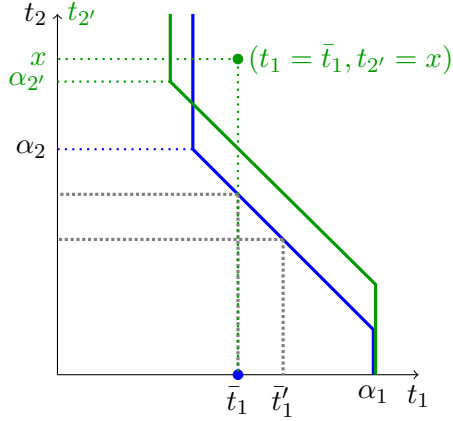
\begin{figure}[h]
  \centering
  \begin{tikzpicture}[scale=0.6]

    \draw[->] (0,0) -- (8,0) node[anchor=north] {$t_1$};
    \draw[->] (0,0) -- (0,8) node[anchor=east] {$t_2$};
    \draw[green!60!black] (0,8) node[anchor=west] {$t_{2'}$};
    \draw[thick, dotted, blue] (0, 5) node[anchor=east,black] {$\alpha_2$} -- (3, 5);
    \draw[very thick, blue ] (7, 0) node[anchor=north,black] {$\alpha_1$}-- (7, 1) -- (3, 5) -- (3, 8);

    \draw[densely dotted, very thick, gray] (4, 4) -- (4,0) node[anchor=north,black] {$\bar t_1$};
    \draw[densely dotted, very thick, gray] (5, 3) -- (5, 0) node[anchor=north,black] {$\bar t_1'$};
    \draw[densely dotted, very thick, gray] (4, 4) -- (0, 4);
    \draw[densely dotted, very thick, gray] (5, 3) -- (0, 3);

    \filldraw[blue] (4,0) circle (3pt);

    \filldraw[green!60!black] (4,7) circle (3pt) node[anchor=west] {$(t_1=\bar t_1, t_{2'}=x)$};

    \draw[green!60!black, very thick] (2.5,8) -- (2.5,6.5) -- (7.05,2) -- (7.05,0);
    \draw[green!60!black, thick, dotted] (2.5, 6.5) -- (0, 6.5) node[anchor=east] {$\alpha_{2'}$};
    \draw[green!60!black, thick, dotted] (4,7) -- (0, 7) node[anchor=east] {$x$};
     \draw[green!60!black, thick, dotted] (4,7) -- (4, 0);
  \end{tikzpicture}
  
  \caption{\small High-level argument of the proof of Lemma~\ref{lemma:tech2dim2}. The blue part depicts the values of tasks 1 and 2 and the green part depicts the values of task 1 and 2'. The blue continuous line shows the boundaries for tasks 1 and 2 when $t_{2'}=0$. Similarly, the green continuous line shows the boundaries for tasks 1 and 2' when $t_2=0$.  
    $t_1=\bar t_1$ and $t_1=\bar t_1'$ are two values on the bundling boundary (i.e., the part with slope $-45^o$) between tasks 1 and 2. The lemma shows that the (2,2') slice mechanism cannot be task independent at both of these values of $t_1$. Indeed 
if tasks 2 and 2' are independent at $t_1\in\{\bar t_1, \bar t_1'\}$, changing $t_{2'}$ does not change the blue line at these values, and therefore the boundary between 1 and 2 remains bundling in the interval $[\bar t_1, \bar t_1']$. This leads to a contradiction, when we consider the allocation of an instance $(t_1=\bar t_1, t_2=0, t_{2'}=x)$, for $x\geq \alpha_{2'}$. This instance, depicted as a blue dot at $(t_1=\bar t_1, t_2=0)$ and as a green dot at $(t_1=\bar t_1, t_{2'}=x)$, is to the west of the blue line (which is the boundary between 1 and 2) and east of the green line (which is the boundary between 1 and 2'); the first says that task 1 is given to the root and the latter that it is not given to the root.}
 \label{fig:tech2dim2} 
\end{figure}

Now we are ready to state the main lemma of the base case. We will show that if none of the four stars from $\{1, 1', 2, 2'\}$ is a box, then the mechanism has high approximation ratio.

Intuitively, we will argue as follows: By reducing $s_2$ to some \emph{positive} value $s_2^*,$ we shift the  2-dimensional bundling boundary facets of both $R_{\{1,2\}}$ and of $R_{\{1',2\}}$ parallel to themselves until both of them reach the $t_2=0$ plane (i.e., bundling boundaries reach the $t_1$-axis, and the $t_{1'}$-axis, respectively, but they do not disappear). Hereby we will use that by the previous lemma, the $(2,2')$ slices are not task-independent, and therefore the boundary positions $\psi_2$ are linear functions of $s_2.$ 

Subsequently, --- applying the lemma in another, orthogonal, direction --- we reduce $s_1,$ in order to shift one of the bundling boundaries (say, of $R_{\{1,2\}}$) in the other direction to reach the $t_2$-axis and further, until the bundling boundary 'disappears'.  This will imply high approximation ratio in the end, because $s_2^*$ remains positive and bounded from below, but the $(t_1,t_2)\approx (0,0)$ input will be allocated to the leaves.

We remark that such a parallel shift of the boundary by linear function of $s_i$ is trivial for an affine minimizer, but far from obvious for tasks $1$ and $2$ that {\em are not siblings}, since the $2\times 2$ characterization does apply to the $(1,2)$-slice mechanism.

We are now ready to prove the base case (Theorem~\ref{thm:basecase}).
\begin{proof}[Proof of Theorem~\ref{thm:basecase}]
  Consider a standard instance $T=(t,\bar s)$, and assume w.l.o.g. that $\alpha_{2'}=\max \{\alpha_1, \alpha_{1'}, \alpha_2, \alpha_{2'}\}$.
  First, we apply Lemma~\ref{lemma:tech2dim2} to the triple $\{1, 2, 2'\}$. According to the lemma, for all but at most one $t_1 \in (\alpha_1-C,\,\alpha_1)$ the corresponding slice mechanism $(2, 2')$ has a non-crossing allocation, where $C=\min\{\alpha_{12}, \alpha_{12'}\}\geq 32\nu$.

  By the continuity requirement, the $(2,2')$-slice mechanism cannot be task independent (for $t_1$ rational) because it has a non-crossing allocation. Also, Theorem~\ref{thm:sibling-independence} establishes that it cannot be a constant mechanism, if the approximation ratio is at most $n$. Thus the $(2,2')$-slice mechanism can only be a relaxed affine minimizer. Lemma~\ref{prop:tech2dim} excludes the possibility that it is half-bundling at task 2, for $t_1 \in (\alpha_1-C,\,\alpha_1)$.
  We conclude that for (all but one) rational $t_1 \in
  (\alpha_1-C,\,\alpha_1)$ the $(2, 2')$ slice mechanism is in the
  linear part of a relaxed affine minimizer, that is, by Observation~\ref{ref:obs-characterization} (iii) it has critical
  values for $t_2$ that are truncated linear functions of $s_2:$
  \begin{align*}
    \psi_2[t_1](s_2)=\max(0, \,\lambda(t_1)\, s_2 - \gamma(t_1)\,).
  \end{align*}
  Recall that this notation omits the parameters that have their
  original values $\bar s$ and $t=0$. In particular, both $\lambda(t_1)$ and
  $\gamma(t_1)$ may depend on values in $s_{-2}$, but this dependency
  will not play any role in the argument.

Analogously, we can apply Proposition~\ref{lemma:tech2dim2} to the triple $\{1', 2, 2'\},$ obtaining that with $C'=\min\{\alpha_{1'2}, \alpha_{1'2'}\}$ for all but at most one, rational $t_{1'}\in (\alpha_{1'}-C',\,\alpha_{1'})$ the $\psi_2[t_{1'}](s_2)$ are truncated linear functions of $s_2.$

The following claim is a variant of (\cite{CKK20}, Lemma 4); we show that for different points on the same bundling boundary, due to linearity, the multiplicative constant of $\psi_2()$ must be the same. 

\begin{figure}
  \centering
  \begin{tikzpicture}[scale=0.35]

    \draw[->] (0,0) -- (8,0) node[anchor=north] {$t_1$};
    \draw[->] (0,0) -- (0,8) node[anchor=east] {$t_2$};

    \draw[very thick, blue] (0, 6) node[anchor=east,black] {$\alpha_2$} -- (4, 6) -- (6, 4) -- (8, 4); \draw[very thick, blue ] (6, 0) node[anchor=north,black] {$\alpha_1$}-- (6, 4) -- (4, 6) -- (4, 8);

\draw (1.5,7) node[anchor=west] {}; \draw (1.7,2) node[anchor=west] {$R_{\{1,2\}}$}; \draw (7.7,2) node[anchor=east] {}; \draw (7.7,7) node[anchor=east] {$R_{\emptyset}$};

    \draw (1,-1) node[anchor=north] {(a)};
  \end{tikzpicture}
\hspace*{1cm}
  \begin{tikzpicture}[scale=0.35]

    \draw[->] (0,0) -- (8,0) node[anchor=north] {$t_1$};
    \draw[->] (0,0) -- (0,8) node[anchor=east] {$t_2$};

    \draw[very thick, blue] (0, 2) node[anchor=east,black] {$\alpha_2$} -- (4, 2) -- (6, 0) ; \draw[very thick, blue ] (6, 0) node[anchor=north,black] {$\alpha_1$} -- (4, 2) -- (4, 8);

\draw (1.5,7) node[anchor=west] {}; \draw (1.7,2) node[anchor=west] {}; \draw (7.7,2) node[anchor=east] {}; \draw (7.7,7) node[anchor=east] {$R_{\emptyset}$};

    \draw (1,-1) node[anchor=north] {(b)};
  \end{tikzpicture}
\hspace*{1cm}
  \begin{tikzpicture}[scale=0.35]

    \draw[->] (0,0) -- (8,0) node[anchor=north] {$t_1$};
    \draw[->] (0,0) -- (0,8) node[anchor=east] {$t_2$};

    \draw[very thick, blue] (1,0) node[anchor=north,black]
    {$\varepsilon$} -- (0,1) node[anchor=east,black]
    {$\varepsilon$};

    \draw (-0.23,0.5) node[anchor=west] {};
    \draw (2,2) node[anchor=west] {$R_{\emptyset}$};

    \draw (1,-1) node[anchor=north] {(c)};
  \end{tikzpicture}
  
  \caption{\small High-level argument of the proof of Theorem~\ref{thm:basecase}. Figure (a) shows the $(1,2)$ slice. Figure (b) shows how the area of the $R_{1,2}$ changes when we reduce the values of $s_2$ to $s_2^*>0$ and Figure (c) when we reduce $s_1$ to $s_1^*$. In Figure (c) none of the tasks is allocated to the root, even for small values of $t_1=t_2=\varepsilon$ which implies an unbounded ratio (see Lemma~\ref{prop:quasi}). Note that this shift of the bundling boundary would be straightforward for an affine-minimizer, because of linearity of the boundaries. But here we need to establish this carefully, by using the sibling tasks $1',2'$. By applying Lemma~\ref{lemma:tech2dim2} first to the triple $\{1, 2, 2'\}$ and then to the triple $\{1', 2, 2'\},$ we establish linearity of the boundary $\psi_2(s_2)$ for an interval of $t_1$ values. Figure (b) illustrates that this allows to shift the bundling boundary by reducing $s_2$ to $s^*_2>0$ reaching a half-bundling position (Claims (i), (ii), (iii)). Then Claim (iv) establishes linearity of $\psi_1(s_1)$, which allows us to reduce $s_1$ to $s_1^*>0$ and obtain an unbounded approximation ratio. }
  \label{fig:high-level-base-case}
\end{figure}
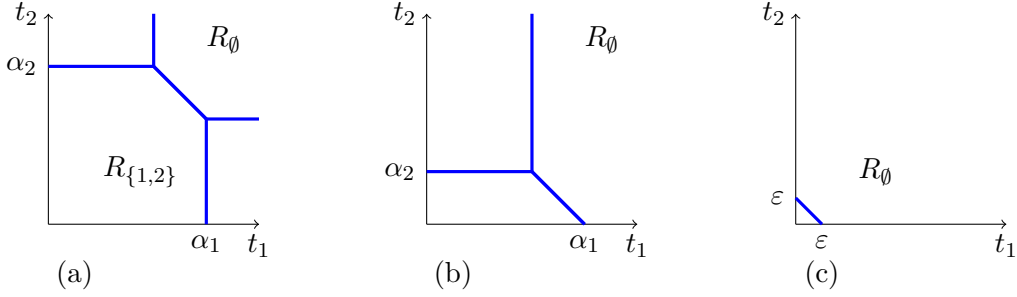

\begin{claim*}[i]\label{cla:magic}
  In the interval $t_1 \in (\alpha_1-C,\,\alpha_1)$, $\lambda(t_1)$ is fixed, i.e., independent of $t_1$. Similarly $\lambda(t_{1'})$ is independent of $t_{1'}\in (\alpha_{1'}-C',\,\alpha_{1'})$.
\end{claim*}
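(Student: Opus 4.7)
The plan is to exploit the shape of the bundling boundary of $R_{\{1,2\}}$: in the range $t_1 \in (\alpha_1 - \alpha_{12}, \alpha_1)$ the bundling facet has equation $t_1 + t_2 = \alpha_1 + \alpha_2 - \alpha_{12}$, which has slope $-1$ in the $(t_1, t_2)$-plane. I would pick two rational values $t_1, t_1' \in (\alpha_1 - C, \alpha_1)$ (avoiding the at most one exceptional value from Lemma~\ref{lemma:tech2dim2} and avoiding discontinuities thanks to the continuity requirement) at which the $(2,2')$-slice is a relaxed affine minimizer in the linear regime, so that
\begin{equation*}
\psi_2[t_1](s_2) = \lambda(t_1)\,s_2 - \gamma(t_1), \qquad \psi_2[t_1'](s_2) = \lambda(t_1')\,s_2 - \gamma(t_1'),
\end{equation*}
hold in a neighbourhood of $\bar s_2$. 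Since at $s_2 = \bar s_2$ both $t_1$ and $t_1'$ lie on the bundling facet of $R_{\{1,2\}}$, we have the identity
\begin{equation*}
t_1 + \psi_2[t_1](\bar s_2) \;=\; t_1' + \psi_2[t_1'](\bar s_2) \;=\; \alpha_1 + \alpha_2 - \alpha_{12}.
\end{equation*}

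Next I would argue that the same identity persists for $s_2$ slightly below $\bar s_2$. The point is that $t_1$ and $t_1'$ are strictly interior to the bundling segment at $\bar s_2$ (both are strictly greater than $\alpha_1 - \alpha_{12}$ and strictly less than $\alpha_1$). By the 1-Lipschitz property of boundary functions (Proposition~\ref{prop:lipschitz}) together with the linear dependence on $s_2$, the bundling facet of $R_{\{1,2\}}$ at $s_2 = \bar s_2 - \epsilon$ is a small perturbation of the original one; in particular, for small enough $\epsilon>0$, $t_1$ and $t_1'$ remain strictly inside the bundling segment, so
\begin{equation*}
t_1 + \psi_2[t_1](\bar s_2 - \epsilon) \;=\; t_1' + \psi_2[t_1'](\bar s_2 - \epsilon).
\end{equation*}
Substituting the linear expressions and subtracting from the analogous equation at $\bar s_2$ yields $(\lambda(t_1) - \lambda(t_1'))\,\epsilon = 0$, hence $\lambda(t_1) = \lambda(t_1')$. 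An identical argument with $t_{1'}$ in place of $t_1$ handles the second statement.

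I expect the main obstacle to be the justification that, for small $\epsilon$, the two selected points $t_1, t_1'$ remain on the bundling facet of the perturbed $R_{\{1,2\}}$. This requires showing that the relevant bundling facet cannot suddenly vanish or be replaced by a different facet (e.g.\ an axis-parallel one) under an arbitrarily small change of $s_2$. The continuity requirement excludes jump discontinuities in the relevant boundary functions, so the shape of $R_{\{1,2\}}$ varies continuously with $s_2$ in a neighbourhood of $\bar s_2$; since both chosen $t_1$ values are strictly interior to the bundling segment, they persist on it for small $\epsilon$. Once this is in place, the rest reduces to the one-line comparison above.
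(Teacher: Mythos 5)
Your setup (both points on the bundling facet at $\bar s_2$, hence $\psi_2[t_1](\bar s_2)-\psi_2[t_1'](\bar s_2)=t_1'-t_1$, combined with the truncated-linear form of $\psi_2[\cdot](s_2)$) matches the paper, but the step you yourself flag as the obstacle is a genuine gap, and your proposed justification for it does not hold up. You need that at $s_2=\bar s_2-\epsilon$ both $(t_1,\psi_2[t_1](s_2))$ and $(t_1',\psi_2[t_1'](s_2))$ still lie on a \emph{common} slope $-1$ facet of $R_{\{1,2\}}$. Neither tool you invoke gives this: the continuity requirement (Definition~\ref{def:continuity}) only rules out jump discontinuities of single-edge boundary functions at the particular leaf values of the instance (and rational translates); it says nothing about the facet structure of $R_{\{1,2\}}$ being stable as $s_2$ varies. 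Proposition~\ref{prop:lipschitz} is 1-Lipschitzness in the \emph{$t$-coordinates} at a fixed $s$, not in $s$, so it cannot prevent the bundling segment from shrinking, vanishing, or being replaced by axis-parallel pieces when $s_2$ is perturbed. Worse, the persistence you assume is essentially equivalent to the claim being proved: given the linear forms, ``the identity $t_1+\psi_2[t_1](s_2)=t_1'+\psi_2[t_1'](s_2)$ holds on an interval of $s_2$'' is the same statement as $\lambda(t_1)=\lambda(t_1')$. Indeed, in the paper this persistence of the bundling boundary for all $s_2$ is deduced \emph{after} Claim (i), as its corollary, so using it as an input is circular in spirit. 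Note also that perturbing only downward cannot suffice even granting persistence-free reasoning: if $\lambda(t_1)>\lambda(t_1')$ (with $t_1<t_1'$), decreasing $s_2$ makes the difference of the two boundaries \emph{smaller} than $t_1'-t_1$, which violates nothing.

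The paper's proof avoids the problematic step entirely: at $\bar s_2$ the difference $\psi_2[t_1]-\psi_2[\hat t_1]$ already equals $\hat t_1-t_1$, the maximum permitted by the Lipschitz bound $|\psi_2[t_1](s_2)-\psi_2[\hat t_1](s_2)|\leq \hat t_1-t_1$, which holds at \emph{every} $s_2$. If the slopes differed, one perturbs $s_2$ in the direction chosen according to the sign of $\lambda(t_1)-\lambda(\hat t_1)$ (upward if positive, downward if negative); the linear forms then force the difference at the perturbed $s_2$ to strictly exceed $\hat t_1-t_1$, contradicting the Lipschitz property there --- with no need to know anything about the shape of $R_{\{1,2\}}$ at the perturbed instance. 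Replacing your persistence argument by this two-sided perturbation closes the gap.
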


\begin{proof} Fix two values $t_1,\hat t_1 \in (\alpha_1-C,\,\alpha_1)\subset (\alpha_1-\alpha_{12},\,\alpha_1)$ with $t_1 < \hat t_1$. The critical value points for $t_2$ are on a bundling boundary of $R_{\{1,2\}}$, so
  \begin{align*}
    \psi_2[t_1](\bar s_2)-\psi_2[\hat t_1](\bar s_2)=\hat t_1-t_1.
  \end{align*}

  Assume for contradiction that $\lambda(t_1)>\lambda(\hat t_1).$ Then we slightly increase $s_2$ to  $\bar s_2+\epsilon,$ and obtain $\psi_2[t_1](\bar s_2+\epsilon)-\psi_2[\hat t_1](\bar s_2+\epsilon)=\hat t_1-t_1+\epsilon(\lambda(t_1)-\lambda(\hat t_1))>\hat t_1-t_1$, which contradicts the Lipschitz property (Proposition~\ref{prop:lipschitz}).

For the other direction, $\lambda(t_1)<\lambda(\hat t_1),$ we slightly decrease the value to $s_2-\epsilon,$ and again violate the Lipschitz property analogously.
 \end{proof}

 From Claim (i) it follows that $\psi_{2}[ t_1](s_2)-\psi_2[t_1'](s_2)=t_1'-t_1$ for \emph{every} $s_2$ for which these $\psi_2$ values are positive. Thus for every $s_2$ the following set of boundary points of $R_{\{1,2\}},$
 $$\left\{(t_1\,, \,\psi_2[t_1](s_2)) \colon t_1\in (\alpha_1-C,\,\alpha_1)\,\,\land\,\, \psi_{2}[t_1](s_2)>0 \,\right\},$$
 remain part of a bundling boundary. An analogous statement holds for the boundary of $R_{\{1'2\}}$ for $t_{1'}\in (\alpha_1-C',\,\alpha_1]$. Next we give simple bounds for $\lambda$ and $\gamma.$
  
\begin{claim*}[ii]\label{cla:lambdaGamma} Let $t_1\in (\alpha_1-C,\,\alpha_1],$ and   \begin{align*}\psi_2[t_1](s_2)=\max(0\,,\,\lambda(t_1)\cdot s_2 - \gamma( t_1)\,);
      \end{align*} 
  Then   $0< \gamma(t_1)<\lambda(t_1)<2n$ or the approximation ratio is at least $n.$ 
  Similarly for every $ t_1'\in (\alpha_1-C',\,\alpha_1].$  
\end{claim*}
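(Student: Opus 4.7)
The proof of Claim (ii) establishes the three bounds $\lambda(t_1)<2n$, $\gamma(t_1)<\lambda(t_1)$, and $0<\gamma(t_1)$ via three separate approximation-ratio arguments: for each inequality I would assume it fails and construct an input forcing the ratio to reach $n$. The key observation exploited throughout is that in every standard instance considered here, all tasks other than $2$ carry leaf values $\bar s_j\in(\xi,1)$, so the total contribution of those tasks to any schedule on the leaves is $O(n)$. This keeps $\opt$ bounded against any construction that forces a large load on the root.

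For $\lambda(t_1)<2n$, the plan is to push $s_2$ deep into the active linear regime and set $t_2=\lambda s_2-\gamma-\epsilon$, so that task $2$ is allocated to the root. The root's load is then at least $\lambda s_2-\gamma-\epsilon\approx\lambda s_2$, while $\opt$ places task $2$ on leaf $2$ at cost $s_2$ and handles the remaining tasks at cost $O(n)$. The ratio thus tends to $\lambda$ as $s_2\to\infty$, so $\lambda\ge 2n$ forces a ratio $\ge n$.

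For $\gamma(t_1)<\lambda(t_1)$, I would use the fact (established just before Claim (i)) that for $t_1\in(\alpha_1-C,\alpha_1]$ the critical value $\psi_2[t_1](\bar s_2)$ sits on the bundling facet of $R_{\{1,2\}}$ at a strictly positive $t_2$-coordinate. Combining the linear formula $\psi_2[t_1](\bar s_2)=\lambda\bar s_2-\gamma>0$ with $\bar s_2<1$ then immediately yields $\gamma<\lambda\bar s_2<\lambda$.

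For $0<\gamma(t_1)$, I would assume $\gamma\le 0$. Then $\psi_2[t_1](s_2)\ge -\gamma\ge 0$ for every $s_2\ge 0$, and in particular $\psi_2[t_1](0)=-\gamma$. Setting $s_2=0$ and $t_2$ just below $-\gamma$ sends task $2$ to the root even at zero leaf value, and the analogous argument applied to the sibling $2'$ (whose linear intercept is determined similarly by the bundling facets involving $2'$) sends $t_{2'}$ to the root as well. The root's load is then at least $t_1+t_2+t_{2'}$, while $\opt$, placing both siblings on the leaves at zero cost, is $O(n)$, which forces ratio $\ge n$ once $|\gamma|$ contributes a non-negligible term. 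The main obstacle is the borderline case $\gamma\approx 0$, where the naive threshold argument yields only a constant-factor ratio; closing this gap will require exploiting the identity $\gamma(t_1)=\lambda\bar s_2+t_1-(\alpha_1+\alpha_2-\alpha_{12})$, obtained by equating the linear formula for $\psi_2$ with its value on the bundling facet, together with the bound on $\lambda$ from the first step, to rule out the borderline and conclude the strict inequality $\gamma>0$.
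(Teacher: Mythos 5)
Your first two bounds are essentially fine: the argument for $\gamma(t_1)<\lambda(t_1)$ is exactly the paper's (positivity of $\psi_2[t_1](\bar s_2)$ on the bundling facet plus $\bar s_2<1$), and your construction for $\lambda(t_1)<2n$ (sending $s_2\to\infty$ with $t_2$ just below $\lambda s_2-\gamma$) is a legitimate, even stronger, alternative to the paper's use of $\psi_2[t_1](s_2)<ns_2$, provided you note that the truncated-linear form is valid for arbitrarily large $s_2$ (which the $2\times 2$ characterization over $[0,B)$, $B$ arbitrary, supplies). The genuine gap is the strict inequality $0<\gamma(t_1)$. Your threshold instance at $s_2=0$ gives a ratio only of order $-\gamma$, so it excludes only $\gamma\lesssim -n$, as you concede; and the proposed repair does not close the gap. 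From the identity $\gamma(t_1)=\lambda\bar s_2+t_1-(\alpha_1+\alpha_2-\alpha_{12})$, positivity of $\gamma(t_1)$ is \emph{equivalent} to $\lambda>\bigl(\alpha_1+\alpha_2-\alpha_{12}-t_1\bigr)/\bar s_2=\psi_2[t_1](\bar s_2)/\bar s_2$, i.e.\ to a \emph{lower} bound on $\lambda$ strictly above the secant slope, whereas your first step only provides an \emph{upper} bound $\lambda<2n$; the inequality points the wrong way, and the lower bound you would need is precisely the statement being proved, so this route is circular.

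What the paper actually does, and what your sketch is missing, has two ingredients. First, $\gamma(t_1)\ge 0$ for \emph{every} $t_1$ in the interval is obtained not from an approximation-ratio instance at $s_2=0$ but from the fact that $\psi_2[t_1](0)=0$: a surviving bundling piece of $R_{\{1,2\}}$ would force strictly positive critical values (Lemma~\ref{obs:Rnonempty}), while $\psi_2(s_2)\to 0$ as $s_2\to 0$ by Lemma~\ref{obs:alphas}(ii), so no bundling boundary can persist at $s_2=0$ and the truncated-linear function must vanish there, giving $\gamma(t_1)\ge 0$. Second, strictness comes from Claim (i): for $t_1-\epsilon$ still in $(\alpha_1-C,\alpha_1]$, the two critical points at $\bar s_2$ lie on the same $-45^\circ$ bundling facet, so with equal slopes $\gamma(t_1)=\gamma(t_1-\epsilon)+\epsilon\ge\epsilon>0$. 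Neither ingredient appears in your proposal, and the strictness is not cosmetic: it is exactly what makes $\hat s_2=\gamma(\alpha_1)/\lambda(\alpha_1)>0$, and hence $s_2^*$ bounded away from $0$, in the remainder of the proof of Theorem~\ref{thm:basecase}.
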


\begin{proof} Fix a $t_1\in (\alpha_1-C,\,\alpha_1],$ and let $\lambda=\lambda(t_1)$ and $\gamma=\gamma(t_1).$

  First we prove $0< \gamma.$ For every $s_2,$ for which
  $R_{\{1,2\}}$ (or $R_{\{1',2\}}$) has a bundling boundary, it
  obviously holds that $\psi_2(s_2)>0$ (Lemma~\ref{obs:Rnonempty}). On
  the other hand, by Lemma~\ref{obs:alphas}, $\psi_2(s_2)\rightarrow
  0$ as $s_2\rightarrow 0.$ Therefore, for $s_2=0,$ no piece of
  bundling boundary can remain for $R_{\{1,2\}}.$ Thus for $t_1$ it
  holds that $\psi_2[t_1](0)=0,$ and therefore $\lambda\cdot
  0-\gamma\leq 0,$ which implies $0\leq \gamma.$ Finally, since by the same argument $\gamma(t_1-\epsilon)\geq 0,$ and due to the bundling boundary $\gamma(t_1)-\epsilon=\gamma(t_1-\epsilon)\geq 0,$ we even obtain strict inequality $\gamma(t_1)>0$ for every $t_1\in (\alpha_1-C,\,\alpha_1].$
 
  Second, we prove $\gamma < \lambda.$ By $R_{\{1,2\}}$ being
  quasi-bundling, and by the position of $t_1,$ we obviously have
  $\psi_2[t_1](\bar s_2)>0$. Since $\bar s_2\leq 1$, we get $\gamma<
  \lambda$. Then, $\lambda < 2n $ follows from the fact that
  $\psi_2[t_1](s_2)<ns_2$, for all values of $s_2$
  (Lemma~\ref{obs:alphas}).
\end{proof}

Let $\hat s_2=\gamma(\alpha_1)/\lambda(\alpha_1)>0.$ Note that $\hat s_2$ is the largest $s_2$ value such that
$\psi_2[t_1=\alpha_1](s_2)=0$ (that is, for $\hat s_2$ the
bundling boundary of $R_{\{1,2\}}$ touches the $t_1$ axis). %
Analogously, define $\hat s'_{2}>0$
to be the largest $s_2$ value such that
$\psi_2[t_{1'}=\alpha_{1'}](s_{2})=0.$ Assume w.l.o.g that $\hat
s_2\leq \hat s'_{2},$ that is
  $$\,\psi_2[t_1=\alpha_1](\hat s_2)=0 \quad\Rightarrow\quad \psi_2[t_{1'}=\alpha_{1'}](\hat s_2)=0.$$
 We would like to reduce $s_2=\bar s_2$ to some appropriate $s_2=s_2^*>0,$ where $\psi_2[t_1=\alpha_1](s_2^*)=0$ and $\psi_2[t_{1'}=\alpha_{1'}](s_2^*)=0,$ but for $(t_1, t_{1'})=(0,0)$ it still holds that $\psi_2[t_1=0,t_{1'}=0](s_2^*)>0.$
 The next claim provides such an $s_2^*:$
 
\begin{claim*}[iii]
  There exists $q^*\in \{0,1,\ldots,4n/\nu\}$ such that for $s_2^*:=\bar s_{2}- q^*  \nu/(4n)$ the following hold:
  \begin{enumerate}
  \item[(a)]  $\psi_2[t_1=\alpha_1](s_2^*)\,\,=\,\,\psi_2[t_{1'}=\alpha_{1'}](s_2^*)\,\,=\,\,0;$ 
  \item[(b)]  $\psi_2[t_1=0,t_{1'}=0](s_2^*)>30\nu$
  \item[(c)]  $s_2^*>30\nu/n.$
  \end{enumerate}
\end{claim*}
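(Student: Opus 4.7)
My plan is to choose $q^*$ to be the smallest non-negative integer for which $s_2^* := \bar s_2 - q^*\nu/(4n) \leq \hat s_2$, placing $s_2^* \in [\hat s_2 - \nu/(4n),\, \hat s_2]$. The range bound $q^* \leq 4n/\nu$ holds because $\bar s_2 \leq 1$ and $\hat s_2 \geq 0$. Part~(a) is then immediate from the w.l.o.g.\ assumption $\hat s_2 \leq \hat s_2'$: $s_2^* \leq \hat s_2 \leq \hat s_2'$, and so both $\psi_2[t_1 = \alpha_1](s_2^*) = 0$ and $\psi_2[t_{1'} = \alpha_{1'}](s_2^*) = 0$ follow from the definitions of $\hat s_2$ and $\hat s_2'$.

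The degenerate case $q^* = 0$ (which occurs precisely when $\alpha_2 \leq \alpha_{12}$, equivalently $\bar s_2 \le \hat s_2$) is easy: $s_2^* = \bar s_2$, and both (b) and (c) follow from $\bar s_2 > \xi$, $\alpha_2 = \psi_2(\bar s_2) > \bar s_2/n > \xi/n$ (Lemma~\ref{obs:alphas}), and $\xi/n > 30\nu$, which is guaranteed by the parameter choice $\nu < \xi/(n^2\cdot 4^n)$ (since $n\cdot 4^n \geq 32 > 30$ for $n\ge 2$).

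The main case is $q^* \geq 1$, which requires a careful lower bound on $\psi_2[0,0](s_2^*)$. The difficulty is that Claim~(i) establishes linearity of $\psi_2[t_1](\cdot)$ in $s_2$ only for $t_1 \in (\alpha_1 - C, \alpha_1]$, and there is no direct linearity at $t_1 = 0$. I bypass this by picking an auxiliary value $\bar t_1 := \alpha_1 - 31\nu$, which lies in $(\alpha_1 - C, \alpha_1)$ since $C = \min\{\alpha_{12}, \alpha_{12'}\} \geq 32\nu$ by Lemma~\ref{obs:tech2dim}. On the bundling boundary at $\bar s_2$ one computes $\psi_2[\bar t_1](\bar s_2) = \alpha_2 - \alpha_{12} + 31\nu$. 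By Claim~(i) and Observation~\ref{ref:obs-characterization}(iii), $\psi_2[\bar t_1](\cdot)$ is a truncated linear function of $s_2$ with the same slope $\lambda$ as $\psi_2[\alpha_1](\cdot)$; Claim~(ii) gives $\lambda < 2n$, and evaluating $\psi_2[\alpha_1](\cdot)$ at $\hat s_2$ yields $\bar s_2 - \hat s_2 = (\alpha_2 - \alpha_{12})/\lambda$. Combining these with $\bar s_2 - s_2^* \leq (\bar s_2 - \hat s_2) + \nu/(4n)$ a short computation gives
\begin{align*}
\psi_2[\bar t_1](s_2^*) \;\geq\; (\alpha_2 - \alpha_{12} + 31\nu) - \lambda\left(\frac{\alpha_2 - \alpha_{12}}{\lambda} + \frac{\nu}{4n}\right) \;\geq\; 31\nu - \frac{\nu}{2} \;>\; 30\nu,
\end{align*}
and in particular $\psi_2[\bar t_1](s_2^*)$ stays in the positive (linear) part throughout $[s_2^*, \bar s_2]$, justifying the use of the linear formula. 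Monotonicity of $\psi_2[t_1, 0](s_2^*)$ in $t_1$ (a consequence of weak monotonicity: $R_{\{1,2\}}$ is downward closed in each coordinate) then yields $\psi_2[0, 0](s_2^*) \geq \psi_2[\bar t_1, 0](s_2^*) > 30\nu$, which is (b). Part~(c) is then a one-liner: by Lemma~\ref{obs:alphas}, $\psi_2[0,0](s_2^*) < n \cdot s_2^*$, so $s_2^* > 30\nu/n$.

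The step I expect to need the most care is the monotonicity $\psi_2[0, 0](s_2^*) \geq \psi_2[\bar t_1, 0](s_2^*)$: it is visually obvious from the bundling picture (raising $t_1$ can only lower the critical $t_2$, because the bundling facet has slope $-1$ while the flat facet is unchanged) but should be spelled out explicitly from weak monotonicity. A minor subtlety is the identification $\lambda(\alpha_1) = \lambda$ of the slope at the endpoint $t_1 = \alpha_1$ with the common slope inside $(\alpha_1 - C, \alpha_1)$; this follows by continuity of the bundling boundary (as noted immediately after Claim~(i) in the excerpt).
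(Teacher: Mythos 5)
Your proof is correct and follows essentially the same route as the paper's: the same choice of $s_2^*$ as the first grid value $\bar s_2-q\nu/(4n)$ at or below $\hat s_2$, the same ingredients (the common slope $\lambda<2n$ from Claims (i)--(ii), a loss of at most $\nu/2$ over one grid step, monotonicity of $\psi_2$ in $t_1$, and Lemma~\ref{obs:alphas} for part (c)), the only cosmetic difference being that you anchor the computation at $\bar t_1=\alpha_1-31\nu$ via the explicit linear formula, whereas the paper anchors it at $t_1=\alpha_1-C$ via the persistence of the bundling boundary at $\hat s_2$. Two small touch-ups, both at the same level of informality as the paper itself: the inequality $\psi_2[0,0](s_2^*)\geq\psi_2[\bar t_1,0](s_2^*)$ rests on the point $(\bar t_1,\psi_2[\bar t_1](s_2^*))$ still lying on a bundling facet (so that points just below it are in $R_{\{1,2\}}$, which is downward closed), not on downward closedness alone, since in a quasi-flipping configuration the threshold for task $2$ would move the other way; and $\bar t_1=\alpha_1-31\nu$ may be irrational or the single exceptional value, so it should be replaced by a nearby rational non-exceptional point, which the slack in your constants permits.
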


\begin{proof} For any $t_1\in [\alpha_1-C,\,\alpha_1],$ consider the sequence of values $\psi_2[t_1](\bar s_{2}-q \frac{\nu}{4n})$, for $q=0,1,2, \ldots .$ By Claim (ii), $\lambda(t_1)\leq 2n$, so successive values in this sequence differ by at most $\lambda (\nu/4n)\leq 2n\nu/4n=\nu/2.$ 

As above, let $\hat s_2$ be the largest value such that $\psi_2[t_1=\alpha_1](\hat s_2)=0.$  Then for this $\hat s_2,$ over the whole interval $t_1\in [\alpha_1-C,\,\alpha_1]$ there is still a bundling boundary of $R_{\{1,2\}}.$ It follows that $$\psi_2[t_1=\alpha_1-C](\hat s_2)-0=\psi_2[t_1=\alpha_1-C](\hat s_2)-\psi_2[t_1=\alpha_1](\hat s_2)=\alpha_1-(\alpha_1-C)=C\geq 32\nu.$$ Now we further reduce $\hat s_2$ to the next possible value of the form $\bar s_{2}- q  \nu/(4n),$ and define this to be $s_2^*.$ We have $\psi_2[t_1=0,t_{1'}=0](s_2^*)\geq \psi_2[t_1=\alpha_1-C,t_{1'}=0](s^*_2)\geq 32\nu-\nu/2>30\nu,$ so (b) holds. Also, (a) holds by the assumption that $\,\psi_2[t_1=\alpha_1](s_2)=0 \quad\Rightarrow\quad \psi_2[t_{1'}=\alpha_{1'}](s_2)=0.$ Finally, (c) holds, since by Observation~\ref{obs:alphas} $s_2^*\cdot n>\psi_2(s_2^*)=\psi_2[t_1=0,t_{1'}=0](s_2^*)> 30\nu.$ 
\end{proof}

Let $T^*$ denote the instance that we get by changing  $\,\bar s_2\,$ to $\,s_2^*\,$ in $\,T=(t,\bar s).$ We claim that in $T^*$ the region $R^{T^*}_{\{1,2\}}$ is half-bundling at $1,$ and the region $R^{T^*}_{\{1',2\}}$ is half-bundling at $1'$. This follows immediately from Claim (iii) (a) and (b). Indeed, if $\psi_2[t_1=0](s_2^*)>\psi_2[t_1=t_1'](s_2^*)=0$ for some $t_1'>0,$ then geometry implies that $R_{\{1,2\}}$ can only be half-bundling (and similarly for $R_{\{1',2\}}$).

Observe, that $T^*$ is a standard instance for leaf set $\{1\}$. We apply Lemma~\ref{prop:quasi} to $(1, 1'),$ and obtain that $R_{\{1,1'\}}$ (for $t_2=0$) must be crossing, otherwise the approximation ratio is $\infty.$

Let $c=\psi_2[t_1=0,t_{1'}=0](s_2^*)>30\nu$ (see also Figure~\ref{fig:claimIV}). 
For the critical value of $t_1,$ we will use the notation  $\alpha_1^*=\psi_1[T^*](\bar s_1).$ Analogously, let $\alpha_{1'}^*=\psi_{1'}[T^*](\bar s_{1'}).$ By $R_{\{1',2\}}$ being half-bundling at $1'$, we have $\alpha_1^*\geq c$ and $\alpha_{1'}^*\geq c.$

Consider now the $(1,1')$ slice allocation for $T^*$ for different positive rational values of $t_2\in (0,c).$ What kind of allocations can these be? By Proposition~\ref{lemma:tech2dim2} applied to $\{2, 1, 1'\}$, for almost all (in fact, for all) $t_2\in (0,c),$ it cannot be a crossing allocation. %
The next claim excludes that it is half-bundling for any $t_2\in (0,c).$ The only remaining possibility is, that for every rational $t_2\in (0,c),$ it is (the linear part of) a relaxed affine minimizer. This will imply that for such a $t_2,$ the critical value fuction $\psi_1[t_2](s_1)$ is truncated linear in the variable $s_1.$

\begin{figure}
  \centering

\begin{tikzpicture}[scale=1.2]
\coordinate (O) at (0,0,0);
\coordinate (A) at (0,1,0);
\coordinate (B) at (0,1,1);
\coordinate (C) at (0,0,2);
\coordinate (D) at (3,0,0);
\coordinate (E) at (2,1,0);
\coordinate (F) at (1,1,1);
\coordinate (G) at (1,0,2);

\draw[blue,fill=yellow!80] (O) -- (C) -- (G) -- (D) -- cycle;%
\draw[blue,fill=blue!30] (O) -- (A) -- (E) -- (D) -- cycle;%
\draw[blue,fill=red!10] (O) -- (A) -- (B) -- (C) -- cycle;%
\draw[blue,fill=red!20,opacity=0.8] (D) -- (E) -- (F)  -- (G) -- cycle;%
\draw[blue,fill=red!20,opacity=0.6] (C) -- (B) -- (F) -- (G) -- cycle;%
\draw[blue,fill=red!20,opacity=0.8] (A) -- (B) -- (F) -- (E) -- cycle;%

\coordinate (ED) at (2.3,0.7,0);
\coordinate (FG) at (1,0.7, 1.3);
\coordinate (BC) at (0,0.7,1.3);

\coordinate (AA) at (4.5,0,0);
\coordinate (BB) at (0,1.5,0);
\coordinate (CC) at (0,0,3);

\draw[-, thick,red,dashed] (ED) -- (FG);
\draw[-, thick,red,dashed] (BC) -- (FG);
\draw[->] (O) -- (AA);
\draw[->] (O) -- (CC);
\draw[->] (O) -- (BB);

\draw[red,thick] (2.3,0.7,0) circle (1pt) node[anchor=west] {$(\hat t_1,\hat t_2,0)$};

\draw (3,0,0) node[anchor=north] {$\alpha_1^*$};
\draw (0,1,0) node[anchor=east] {$c$};
\draw (0,0,2.0) node[anchor=east] {$\alpha_{1'}^*$};

\draw (4.5,0,0) node[anchor=north] {$t_1$};
\draw (0,1.5,0) node[anchor=west] {$t_2$};
\draw (0,0,3) node[anchor=north] {$t_{1'}$};

\end{tikzpicture}

\caption{Illustration to the proof of Claim iv. The dashed line is a half-bundling boundary in the slice $(t_1,t_{1'}).$ It implies the existence of a bundling facet of $R_{\{1,1',2\}}.$ Such a facet excludes that $R_{\{1,1'\}}$ (for $t_2=0$) is crossing, even if other facets of $R_{\{1,1',2\}}$ exist.}
\label{fig:claimIV}
\end{figure}
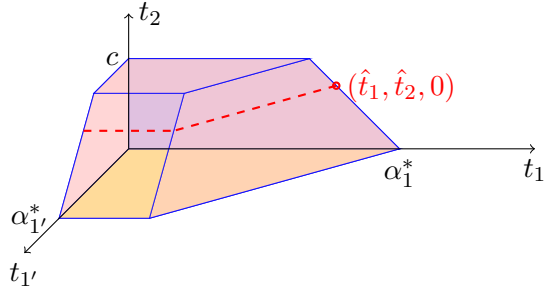

\begin{claim*}[iv] If the allocation of the slice $(1, 1')$ is half bundling for some $t_2\in (0,c),$ then the approximation ratio of the mechanism is unbounded.
\end{claim*}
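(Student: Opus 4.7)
\medskip

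My plan is to mirror the geometric argument of Lemma~\ref{prop:tech2dim}, but apply it to the triple $(1, 1', 2)$ instead of $(2, 2', 1)$. The essential new feature is that here $1$ and $1'$ are sibling tasks (both belonging to leaf $1$), so Corollary~\ref{cor:bundlingsiblings} is available: the moment I produce a bundling facet of $R_{\{1, 1', 2\}}$ of equation $t_1 + t_{1'} + t_2 = D$ with $D < \alpha_1^* + \alpha_{1'}^*$, that corollary forces $R_{\{1,1'\}}$ to be quasi-bundling and Lemma~\ref{prop:quasi} closes the argument with unbounded approximation ratio.

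Concretely, suppose the $(1, 1')$-slice at some $\hat t_2 \in (0, c)$ is half-bundling, say at task $1$ (the other case is symmetric). Then in the $(t_1, t_{1'})$-plane at $t_2 = \hat t_2$ the region $R_{\{1, 1'\}}$ has a slanted bundling boundary $t_1 + t_{1'} = d'$ with no vertical facet $t_1 = c_1$. I pick an endpoint $\bar t = (\bar t_1, \bar t_{1'}, \hat t_2)$ of this bundling segment and perturb it to $\bar t' = (\bar t_1 - \epsilon, \bar t_{1'}, \hat t_2 + 2\epsilon)$ for a small $\epsilon > 0$. By the geometry of half-bundling at task $1$ together with weak monotonicity, the point $\bar t'$ cannot lie in $R_{\{1,1',2\}}$, so some linear facet of this region is violated at $\bar t'$.

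I then eliminate the candidate facets one by one, exactly as in Lemma~\ref{prop:tech2dim}. The axis-aligned facets $t_1 = \alpha_1^*$, $t_{1'} = \alpha_{1'}^*$, and $t_2 = c$ are not violated: $\bar t_1 - \epsilon < \alpha_1^*$ trivially, $\bar t_{1'} \leq \alpha_{1'}^*$ because $\bar t$ is inside the slice of $R_{\{1,1',2\}}$, and $\hat t_2 + 2\epsilon < c$ for small enough $\epsilon$ since $\hat t_2 < c$. The pairwise-sum facets $t_1 + t_{1'} = d$, $t_{1'} + t_2 = e$, and $t_1 + t_2 = f$ are likewise not violated for small $\epsilon$ because $\bar t$ itself respects them with at least a fixed slack (using $\bar t_1 < \alpha_1^*$ and $\hat t_2 < c \leq \alpha_{1'}^*$). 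The only remaining possibility is a full bundling facet $t_1 + t_{1'} + t_2 = D$; the violated inequality then yields $D \leq \bar t_1 + \bar t_{1'} + \hat t_2 + \epsilon$, and since $\bar t_1 + \bar t_{1'} = d' \leq \alpha_1^*$ and $\hat t_2 < c \leq \alpha_{1'}^*$, for $\epsilon$ small I obtain $D < \alpha_1^* + \alpha_{1'}^*$. Applying Corollary~\ref{cor:bundlingsiblings} to the siblings $(1, 1')$ and the third task $2$ then forces $R_{\{1,1'\}}$ to be quasi-bundling, and Lemma~\ref{prop:quasi} gives the unbounded approximation ratio.

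The main obstacle is making sure the strict inequality $D < \alpha_1^* + \alpha_{1'}^*$ actually survives: I need a definite slack between $\bar t_1 + \bar t_{1'} + \hat t_2$ and $\alpha_1^* + \alpha_{1'}^*$ to absorb the $\epsilon$ perturbation. This relies on choosing the endpoint $\bar t$ of the half-bundling segment on the correct side (namely where the segment meets the axis or the $t_{1'} = c_{1'}$ facet with $c_{1'} < \alpha_{1'}^*$) and using the strict gap $\hat t_2 < c$, which is guaranteed by the hypothesis $\hat t_2 \in (0, c)$ of the claim. Once this margin is secured, the rest of the elimination is routine case checking.
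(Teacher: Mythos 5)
Your high-level plan --- produce a bundling facet $t_1+t_{1'}+t_2=D$ of $R_{\{1,1',2\}}$ with $D<\alpha_1^*+\alpha_{1'}^*$ and then invoke Corollary~\ref{cor:bundlingsiblings} and Lemma~\ref{prop:quasi} --- is exactly the paper's, but the execution has a genuine gap at the crucial step. The paper perturbs the axis endpoint $(\hat t_1,\,t_{1'}=0,\,\hat t_2)$ of the half-bundling segment \emph{inside the $(1,1')$-slice}, to $(\hat t_1-\epsilon,\,2\epsilon,\,\hat t_2)$: this crosses the slanted boundary ($t_1+t_{1'}$ increases by $\epsilon$) while $t_2$ stays fixed, so tasks $1$ and $1'$ are provably not allocated to the root and the point is certainly outside $R_{\{1,1',2\}}$. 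Your perturbation $(\bar t_1-\epsilon,\,\bar t_{1'},\,\hat t_2+2\epsilon)$ instead \emph{decreases} $t_1+t_{1'}$, i.e., it moves into $R_{\{1,1'\}}$ within the slice at $\hat t_2$, and nothing established so far controls the $(1,1')$ allocation at the new value $t_2=\hat t_2+2\epsilon$, nor the critical value of task $2$ at $(\bar t_1-\epsilon,\bar t_{1'})$; weak monotonicity alone does not yield that $\bar t'\notin R_{\{1,1',2\}}$, and if the point happens to lie inside, no constraint is violated and the argument produces nothing. Even granting that it lies outside, your facet elimination breaks: since $t_2$ is increased, the sums $t_1+t_2$ and $t_{1'}+t_2$ increase, and the ``fixed slack'' you invoke for facets $t_1+t_2=f$ and $t_{1'}+t_2=e$ at $\bar t$ is not available in general --- in particular if you take the upper endpoint, where the segment meets $t_{1'}=c_{1'}$, that bound may itself come from a facet $t_{1'}+t_2=e$ satisfied with equality.

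A second gap is the assertion $\bar t_1+\bar t_{1'}=d'\le \alpha_1^*$, which you state without proof. The Lipschitz property (Proposition~\ref{prop:lipschitz}) only gives $|d'-\alpha_1^*|\le \hat t_2$, so this is not automatic, and with merely $d'\le\alpha_1^*+\hat t_2$ the strict inequality $D<\alpha_1^*+\alpha_{1'}^*$ need not follow. The paper derives it from structure already established earlier in the proof of Theorem~\ref{thm:basecase}: because $R_{\{1,2\}}$ is half-bundling at task $1$ in $T^*$ with bundling line $t_1+t_2=\alpha_1^*$ and $\hat t_2<c$, the endpoint satisfies the exact identity $\hat t_1+\hat t_2=\alpha_1^*$, which both pins down $D=\alpha_1^*$ and supplies the margin $\alpha_{1'}^*\ge c>\hat t_2$. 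Repairing your write-up essentially amounts to reverting to the paper's perturbation (within the $(t_1,t_{1'})$-plane, from the $t_{1'}=0$ endpoint, with $t_2=\hat t_2$ fixed) and deriving $d'=\alpha_1^*-\hat t_2$ from the half-bundling of $R_{\{1,2\}}$, at which point the proof coincides with the paper's.
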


\begin{proof}  The proof is similar to that of Lemma~\ref{prop:tech2dim}.
Assume that for some $\hat t_2\in(0,c)$ the $(1,1')$ allocation is half-bundling at task $1.$  Let $(\hat t_1, t_{1'}=0, \hat t_2)$ denote the endpoint of the half-bundling boundary (see Figure~\ref{fig:claimIV}). By definition, this point is on the half-bundling boundary of the slice $(1,1')$ (for $t_2=\hat t_2$ fixed). On the other hand, (since the critical value $\hat t_1$ for task 1 is unique) it is also on the half-bundling boundary of the slice $(1,2)$ (for $t_{1'}=0$ fixed), because $R_{\{1,2\}}$ is half-bundling at task 1 and $\hat t_2<c.$ 
From the latter it follows that $\hat t_1+\hat t_2= \alpha_1^*.$

We show that the region $R_{\{1,1',2\}}$ has a bundling facet of equation $t_1+t_{1'}+t_{2}=\alpha_1^*<\alpha_1^*+\alpha_{1'}^*.$  By Corollary~\ref{cor:bundlingsiblings} this will imply that the mechanism has unbounded approximation ratio.

For some small enough $\epsilon>0,$ the point $(\hat t_1-\epsilon, t_{1'}=2\epsilon, \hat t_2)$ is not in $R_{\{1,1', 2\}},$ since (at least) the tasks $1$ and $1'$ are not allocated to the root  by the definition of the half-bundling boundary of the slice $(1,1').$ So this point violates at least one linear constraint defining  $R_{\{1,1',2\}}.$
 
$R_{\{1,1',2\}}$ may have facets of equations $t_1=\alpha_1^*,\,$ $t_{1'}=\alpha_{1'}^*\,$ and $t_2=c,$ respectively. 
The point $(\hat t_1-\epsilon, t_{1'}=2\epsilon, \hat t_2)$ is within these boundaries for $2\epsilon<\alpha_{1'}^*.$ Further, there might be boundaries of equations $t_1+t_2=d$ (for some $d\geq \alpha_1^*$), or $t_1+t_{1'}=e$ (for $e\geq \alpha_1^*$), or $t_{1'}+t_2=f$ (for $f\geq c $). However,  for small $\epsilon$ these would not exclude the point $(\hat t_1-\epsilon, t_{1'}=2\epsilon, \hat t_2)$ from $R_{\{1,1',2\}}$ (as follows from $\hat t_1+\hat t_2=\alpha_1^*\leq d,$ and $\hat t_1<\alpha_1^*\leq e,$ and $\hat t_2< c\leq f,$ respectively).

The only remaining possibility is a facet  $t_1+t_{1'}+t_2=D$ such that $\hat t_1+\epsilon+\hat t_2\geq D\geq \alpha_1^*=\hat t_1+\hat t_2. $ Taking $\epsilon\rightarrow 0,$ we obtain $D=\alpha_1^*,$ which concludes the proof.
\end{proof}

We therefore conclude that for every (rational) $t_2\in (0,c)$ the $(1,1')$ slice is (in the linear part of) an affine minimizer and so the critical value function (at $t_{1'}=0$) for $t_1$ is linear in $s_1,$ that is $\psi_1[t_2](s_1)=\max(0\,,\,\lambda(t_2)\cdot s_1 - \gamma( t_2)\,).$ 
 With role change between tasks $1$ and $2,$ with precisely the same argument as in Claim (i), we obtain that $\lambda(t_2)$ is equal for all rational $t_2\in (0,c),$ and therefore, for all $t_2\in (0,c).$ 
 
 We concentrate only on the $(1, 2)$ allocation now, while every other $t$-value is $0.$ Let $\psi_1(s_1)=\psi_1[t_2=0](s_1).$ We reduce $s_1$ so that $s_1\rightarrow 0.$  By Observation~\ref{obs:alphas} (ii) applied to $T^*,$ we have $\psi_1(s_1)\rightarrow 0$ (that is, $\gamma(t_2=0)=0$). Because of linearity of $\psi_1$ with unique $\lambda$ for every $t_2\in (0,c),$ the $R_{\{1,2\}}$ has a half-bundling boundary, as long as $\psi_1(s_1)>0,$ that is, as long as $s_1>0.$ This implies that for every $s_1>0,$ the point $(t_1,t_2)=(\psi_1(s_1), \psi_1(s_1))$ is in the region $R_{\emptyset|\{1,2\}},$ i.e., tasks $1$ and $2$ are allocated to the leaves.
 As $s_1$ tends to $0,$ the point $(\psi_1(s_1), \psi_1(s_1))$ tends to $(0,0).$  Clearly, $\opt \rightarrow 0,$ and $\mech\geq s_2^*>30 \nu.$ Taking $s_1\rightarrow 0,$ we obtain unbounded approximation ratio.
\end{proof}      

Theorem~\ref{thm:basecase} established that every pair of leaves with two tasks per leaf contains a box of size two. This immediately suggests that there are many boxes of size two. The following statement makes it precise. 

\begin{theorem}\label{cor:b2} Let $T$ be a standard instance. If the approximation ratio is at most $n$, a random star of two leaves is not a box with probability at most $2/\sqrt{\ell}$. 
\end{theorem}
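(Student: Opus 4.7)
The plan is to translate Theorem~\ref{thm:basecase} from an existence statement into a density statement via an extremal (Zarankiewicz-type) graph argument. First I would fix the two leaves and construct a bipartite graph $H$ with vertex sets $C_1$ and $C_2$ (the $\ell$ tasks of each leaf), placing an edge $\{p,q\}$ precisely when the two-edge star formed by $p \in C_1$ and $q \in C_2$ fails to be a $\delta$-box (with $\delta = 4^{2}\nu$). A random star corresponds to a uniformly random element of $C_1 \times C_2$, so the quantity to bound is $|E(H)|/\ell^{2}$.

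The core claim is that $H$ contains no $K_{2,2}$. To see this, suppose for contradiction that some $p, p' \in C_1$ and $q, q' \in C_2$ yield four non-box stars. Fixing the remaining $2(\ell-2)$ tasks at their values in $T$, I would consider the slice mechanism on $\{p,p',q,q'\}$; since fixing values preserves truthfulness and can only tighten the approximation ratio, this slice is itself a truthful mechanism with approximation ratio at most $n$. Moreover, the boundary functions $\psi_p, \psi_{p'}, \psi_q, \psi_{q'}$ inherited from $T$ are exactly those that determine the $\delta$-box property for each of the four stars. Applying Theorem~\ref{thm:basecase} to this standard sub-instance with two leaves and two tasks per leaf then yields a box among the four stars, contradicting the assumption.

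With $K_{2,2}$-freeness in hand, the K\H{o}v\'ari--S\'os--Tur\'an theorem gives
\[
|E(H)| \;\leq\; \tfrac{1}{2}\bigl(1 + \sqrt{4\ell - 3}\bigr)\,\ell \;\leq\; 2\,\ell^{3/2},
\]
and dividing by the total count $\ell^{2}$ of stars produces the claimed probability bound of $2/\sqrt{\ell}$.

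The main subtlety is the transfer of Theorem~\ref{thm:basecase} to a four-task sub-selection inside a larger instance. The theorem is stated for a stand-alone instance with exactly two tasks per leaf, but its proof is entirely local — it uses only slice-mechanism arguments and the $2\times 2$ characterization of Theorem~\ref{theo:addchar} — so it transfers verbatim when we hold extra background tasks fixed at their values in $T$. Making this transfer explicit is the one place where care is needed.
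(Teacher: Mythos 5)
Your proposal is correct and follows essentially the same route as the paper: build the bipartite graph of non-box two-task stars, use Theorem~\ref{thm:basecase} to rule out a $C_4$ (equivalently a $K_{2,2}$), and invoke an extremal bound to get at most roughly $2\ell^{3/2}$ non-box stars, hence probability $2/\sqrt{\ell}$. The only cosmetic difference is that you quote the K\H{o}v\'ari--S\'os--Tur\'an bound while the paper cites a $C_4$-counting result of the same Zarankiewicz type, and your explicit remark that Theorem~\ref{thm:basecase} applies verbatim with the remaining tasks held fixed is a point the paper leaves implicit.
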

\begin{proof}
  The proof is an immediate consequence of the following extremal graph theory result~\cite{Ram90}: Every $\ell\times \ell$ bipartite graph with at least $2\ell\sqrt{\ell}$ edges, contains a cycle $C_4$.

  Take an $\ell\times \ell$ bipartite graph with edges indicating that the associated star is not a box. Theorem~\ref{thm:basecase} guarantees that there exists no $C_4$, which implies that there are at most $2\ell\sqrt{\ell}$ stars that are not boxes.
\end{proof}

\subsection{Induction step}
\label{sec:induction_step}

For the induction step, we consider a star of $k\geq 3$ tasks, which
we call \emph{\slicedoff\ box}, such that all its subsets of $k-1$
tasks are boxes. The precise structure of a \slicedoff\ box is
detailed in the following
definition. %

\begin{definition}[\Slicedoff\ box] \label{def:potentially-good} Fix a
  mechanism. A star $P=\{p_1,\ldots,p_k\}$ from a set of leaves $\cal
  C,$ for $|\mathcal C|=k\geq 3$ is called a \emph{\slicedoff\ box for
    an instance $T$} if $T=(t,\bar s)$ is standard for $P$ and the
  following conditions hold
  \begin{itemize}
  \item for every $i\in [k]$, $P_{-i}=P\setminus \{p_i\}$ is a box for $T$;
  \item for every $q= 1,\ldots,4n/\nu,$ such that $\bar s_{p_k}>q \nu/(4n),$ the instance that results from $T$ when we replace the values of task $p_k$ with $[t_{p_k}=0, \, \bar s_{p_k}-q \nu/(4n)]$, the  $P_{-k}$ is a box. 
  \end{itemize}
  If a \slicedoff\ box is not a box itself, it will be called \slicedoffx\ box.
\end{definition}

The definition of a box is about the allocation area $R_P$, in which
all tasks are allocated to the root. Recall that a box is a set of
tasks for which $R_P$ is (almost) an orthotope (a hyperrectangle). On
the other hand, the first condition in the definition of a \slicedoff\
box essentially says that it is a box from which a simplex was cut off
by a diagonal cut (i.e., by a hyperplane of the form $\sum_{i \in [k]}
t_{p_i} = c_{[k]}$). While the definition allows the removed simplex
to be empty (in which case it is simply a box), when we want to emphasize
that it is not empty, we call it a \slicedoffx\ box (see
Figure~\ref{fig:box} (c) and (e)).

In the rest of this section we establish that the probability that a
\slicedoff\ box is not a box, is small. We consider a standard
instance $T=(t,s)$ and a star $P=\{p_1,\ldots, p_k\}$ from a set of
leaves $\mathcal C=(Q_1,\ldots,Q_k)$, and we assume that $P$ is a
\slicedoffx\ box. Let $p_k'$ be a random sibling of $p_k$, i.e.,
another task of leaf $Q_k.$ The main result of this section
establishes that {\em almost all other sets $(P_{-k},p_k')$ are
  boxes}.  In particular, we will show via a probabilistic argument
that $(P_{-k},p_k')$ is a box with probability at least
$1-2n^3/\ell\xi$ (Lemma~\ref{lemma:many-good-sets}).  In order to show
this result, we consider the $(p_k,p_k')$-slice mechanism which is a
mechanism between two players and two tasks, utilizing the $2\times 2$
characterization (Theorem~\ref{theo:addchar} and
Observation~\ref{obs:possibleFigures}).

We will focus on input points that lie on a specific area around the
allocation region $R_P$, for which we can establish useful properties
when we use the $2\times 2$ characterization. In what follows, two important input
points are $T_\nu(P)$ and $T_{\nu'}(P)$. Recall that
by Definition~\ref{def:nu}, $T_\nu(P)$ is an instance $(t^\nu, s)$
that agrees with $T$ everywhere, except for tasks in $P,$ where
$t^{\nu}_{p_i}=\alpha_{p_i}-4^{k}\nu.$ We define
$T_{\nu'}(P)=(t^{\nu'}, s)$ similarly, but for $\nu'=\nu/4$. Then
$t^{\nu'}_P$ is a point between $t^\nu_P$ and $(\alpha_{p_i})_{p_i\in
  P}.$ Moreover, $T_\nu(P_{-k})$ is the projection in direction of the
$p_k$-axis of $T_{\nu'}(P),$ i.e., all tasks $p_i$ in $P\setminus p_k.$
 have $t$-values equal to $\alpha_{p_i}-4^{k-1}\nu$. We will
use the short notation $\,T_\nu=T_\nu(P)$ and $T_{\nu'}=T_{\nu'}(P).$

We proceed with some useful properties of \slicedoff\ boxes
(Section~\ref{sec:ind-general-observations}), and then we examine the
different possibilities for a $(p_k,p_k')$-slice mechanism
(Section~\ref{sec:ind-p_k-p'k-slice}). Then we conclude with the main
result of this section, that shows that the probability that a \slicedoff\ box is not a
box, is small (Section~\ref{sec:ind-main-lemma}).

\subsubsection{General observations.}
\label{sec:ind-general-observations}
We will need a few simple observations about weakly monotone mechanisms. 
First of all, the following proposition and its corollaries provide some intuition about \slicedoff\ box sets.

\begin{proposition}\label{prop:tech1} Suppose that
  $P=\{p_1,p_2,\ldots,p_k\}$ is a \slicedoffx\ box for $T=(t,\bar s).$
  Consider an arbitrary instance $T'=[(t'_P, t_{[m]\setminus P}) ,\bar
  s]$ such that $t^\nu_{P}< t'_P < t^{\nu'}_P$ coordinate-wise -- and
  so, $T_\nu\leq T'\leq T_{\nu'}.$ Then the point $t'_P$ obeys every
  linear constraint defining the region $R^T_P,$ except for a single
  violated non-redundant constraint $\sum_{i=1}^k t_{p_i}=c$ for some
  $c\leq \sum_{i=1}^k \alpha_{p_i}-k\cdot 4^k\cdot\nu,$ which defines
  a bundling boundary facet of $R_P.$
\end{proposition}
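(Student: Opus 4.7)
The plan is to prove the proposition in two parts, by showing that the shape of $R^T_P$ is an orthotope with a single grand-bundling corner cut. First I would establish that $t'_P\notin R^T_P$, so at least one defining constraint of $R^T_P$ is violated at $t'_P$; then I would rule out every defining constraint of the form $\sum_{i\in I} t_{p_i}\leq c_I$ with $I\subsetneq [k]$, leaving the grand bundling $\sum_{i=1}^k t_{p_i}\leq c$ as the unique candidate, and read off the bound on $c$ from the not-a-box hypothesis.

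For the first part, observe that $R^T_P$ is down-closed under coordinatewise inequality, since by weak monotonicity lowering a root bid cannot push a task out of the root. Because $t'_P\geq t^\nu_P$ coordinatewise and the \emph{strictly} chopped off hypothesis forces $t^\nu_P\notin R^T_P$ (otherwise $P$ would itself be a $4^k\nu$-box), we obtain $t'_P\notin R^T_P$. For the second part, fix a defining constraint $\sum_{i\in I} t_{p_i}\leq c_I$ of $R^T_P$ with $I\subsetneq [k]$ and choose $j\in [k]\setminus I$. Consider the auxiliary point $\hat t_P$ given by $\hat t_{p_i}=\alpha_{p_i}-4^{k-1}\nu$ for $i\neq j$ and $\hat t_{p_j}=0$. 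Because $P_{-j}$ is a $4^{k-1}\nu$-box at $T$ by the chopped-off-box hypothesis, all tasks of $P_{-j}$ are allocated to the root at $\hat t_P$. Once we verify that $p_j$ is also allocated to the root at $\hat t_P$, we have $\hat t_P\in R^T_P$, so every defining constraint is satisfied at $\hat t_P$; in particular $c_I\geq \sum_{i\in I}(\alpha_{p_i}-4^{k-1}\nu)$. Combined with $t'_{p_i}<\alpha_{p_i}-4^{k-1}\nu$, this yields $\sum_{i\in I} t'_{p_i}<c_I$, so this constraint is not the violated one; ranging over all proper $I$ leaves only the grand bundling as a candidate. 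The bound $c\leq \sum_i \alpha_{p_i}-k\cdot 4^k\nu$ then follows because $t^\nu_P\notin R^T_P$ can only violate the grand bundling, forcing $\sum_i t^\nu_{p_i}=\sum_i \alpha_{p_i}-k\cdot 4^k\nu>c$. Non-redundancy of the violated facet is automatic: $t'_P$ violates it while satisfying every other defining constraint, so it is not implied by them.

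The main obstacle is verifying the subclaim that $p_j$ is allocated to the root at $\hat t_P$. Although $\hat t_{p_j}=0$ is the most attractive root bid possible, the cumulative shift of root bids on $P_{-j}$ can be as large as $\sum_{i\neq j}\alpha_{p_i}$, potentially exceeding $\alpha_{p_j}$, so the $1$-Lipschitz property of $\psi_{p_j}$ in the other root bids (Proposition~\ref{prop:lipschitz}) alone is insufficient. I plan to prove the subclaim by contradiction: if $\psi_{p_j}(\hat t_{-p_j},\bar s_{p_j})=0$, then by monotonicity of $\psi_{p_j}$ in $s_{p_j}$ together with the continuity requirement that rules out jump discontinuities at $\bar s_{p_j}$, $\psi_{p_j}$ remains at $0$ on a neighborhood of $\bar s_{p_j}$, so task $p_j$ is forced to a leaf for every positive $t_{p_j}$; scaling the leaf bid $s_{p_j}$ upwards then yields an adversarial instance with unbounded approximation ratio, contradicting the hypothesis that the mechanism has approximation ratio at most $n$.
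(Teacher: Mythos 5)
Your overall skeleton matches the paper's proof: show $t'_P\notin R^T_P$, then use boxness of the sub-stars $P_{-j}$ to rule out every violated constraint $\sum_{i\in I}t_{p_i}\leq c_I$ with $I\subsetneq[k]$, and read the bound on $c$ from $t^\nu_P\notin R^T_P$. But the crucial step is handled differently, and this is where there is a genuine gap. The paper transfers information from ``$P_{-j}$ is a box'' to the constraints of $R_P$ via the fact (stated as a remark after the proof) that the restriction of any defining constraint of $R_P$ to the hyperplane $t_{p_j}=0$ is itself a defining constraint of $R_{P_{-j}}$, because the constants $c_I$ are differences of payments, which are fixed for the given mechanism and $\bar s$. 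With that fact, a proper constraint $\sum_{i\in I}t_{p_i}\leq c_I$ ($j\notin I$) weakly violated at $t^\nu_P$ would be strictly violated at $t^{\nu'}_{P_{-j}}$, directly contradicting $t^{\nu'}_{P_{-j}}\in R_{P_{-j}}$ --- and crucially, nothing needs to be said about the allocation of task $p_j$ itself. Your route instead needs the stronger statement that the full point $\hat t_P$ (with $\hat t_{p_j}=0$) lies in $R_P$, i.e.\ that $p_j$ is \emph{also} given to the root there. That statement is essentially Corollary~\ref{prop:tech2}(iii), which the paper derives \emph{from} Proposition~\ref{prop:tech1}; using it here without an independent proof makes the argument circular, and boxness of $P_{-j}$ by itself says nothing about $p_j$.

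Your attempted independent proof of the subclaim does not work. If $\psi_{p_j}(\hat t_{-p_j},\bar s_{-p_j})(\bar s_{p_j})=0$, monotonicity of $\psi_{p_j}$ in $s_{p_j}$ only forces $\psi_{p_j}$ to vanish \emph{below} $\bar s_{p_j}$; continuity at $\bar s_{p_j}$ gives that it is small nearby, not that it stays $0$ on a neighborhood, and it gives no control for large $s_{p_j}$. So when you ``scale the leaf bid upwards'' the mechanism may simply raise the threshold proportionally (e.g.\ $\psi_{p_j}(s)\approx\lambda(s-\bar s_{p_j})$) and allocate $p_j$ to the root for large $s_{p_j}$; no unbounded ratio follows, and indeed having $\psi_{p_j}=0$ at this heavily loaded root configuration is not in itself inconsistent with an $n$-approximation (the leaf value is at most $1$). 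Moreover, at your exact corner point $\hat t_{p_i}=\alpha_{p_i}-4^{k-1}\nu$ the bundling constant can be tight, in which case $\hat t_P$ lies on the bundling facet and $p_j$ may legitimately go to the leaf at $t_{p_j}=0$, so the subclaim can fail as stated. To close the gap you need the payment-based restriction fact (or an equivalent argument identifying the constraints of $R_{P_{-j}}$ with the restrictions of those of $R_P$), rather than any claim about where $p_j$ itself is allocated.
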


\begin{proof} 
  For input $t^\nu_P$ at least one task $p_i\in P$ is not allocated to
  the root, otherwise $P$ would be a box. In other words,
  $(\alpha_{p_i}-4^k\nu)_{i=1}^k =t^{\nu }_P\not\in R_P,$, and
  therefore a linear constraint of the form $\sum_{i\in I} t_{p_i}
  \leq c_I$ for $R_P$ is either violated or satisfied with equality by
  $t^{\nu }_P$. Clearly, then all points $t'_P>t^\nu_P$ violate this
  constraint as well.

  We need to show that $I=[k].$ Assume for contradiction that $I\neq
  [k],$ and let $j\in[k]\setminus I.$ Then in the $(k-1)$-dimensional
  space of the tasks $P_{-j},$ the point
  $(t^{\nu}_{p_i})_{i\in [k]\setminus \{j\}} $ violates (at
  least) the same constraint $\sum_{i\in I} t_{p_i}\leq c_I$. But then the
  higher point $(t^{\nu'}_{p_i})_{i\in [k]\setminus \{j\}}=(\alpha_{p_i}-4^{(k-1)}\nu)_{i\in [k]\setminus \{j\}}
  $ strictly violates the same constraint contradicting the fact, that
  $P_{-j}$ is a box. By the same argument, no point $t'_P\leq
  t_P^{\nu'}$ can violate any other linear constraint of $R_P,$ but
  this single constraint with $I=[k].$

\begin{remark} We used the fact that a restriction of a constraint for
  $R_P$ to the (hyper-)plane $t_{p_j}=0$ is a constraint for
  $R_{P\setminus \{p_j\}}$ for the same instance $T.$ One short
  explanation for this is that the constant terms in the linear
  constraints for allocation regions correspond to (differences of)
  payments which are unique for a given mechanism and given $\bar s.$
\end{remark}
\end{proof}

\begin{corollary}\label{prop:tech2} Let the conditions be like in Proposition~\ref{prop:tech1}. Then
\begin{itemize}
\item[(i)] the region $R_P$ has a non-redundant bundling facet of equation $\sum_{i=1}^k t_{p_i}=c$ for some $c;$ 

\item[(ii)] for $T'$ every task $p_i\in P$ is allocated to the leaf;

\item[(iii)] for each  $j\in [k],$ if in $T'$ we change the $t$-value of only task $p_j$ to $0,$ then all tasks $i\in P$ are allocated to the root.

\item[(iv)] in $T'$ for the task $p_k$ and its  critical value $\psi_{p_k}=\psi_{p_k}(t'_{-p_k},\bar s),$ it holds that $\psi_{p_k}>0,$ and the point $(t'_{P\setminus\{p_k\}}, \psi_{p_k})\in \mathbb R^k$ is a point of the bundling boundary facet of $R_P$. 

\end{itemize}
\end{corollary}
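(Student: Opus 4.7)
Plan. The four assertions are direct consequences of Proposition~\ref{prop:tech1} together with the structural property (stated in the paragraph preceding Lemma~\ref{obs:Rnonempty}) that a bundling facet of $R_P$ separates $R_P$ from $R_{\emptyset|P}$. Part (i) is essentially a restatement: Proposition~\ref{prop:tech1} already identifies the unique violated non-redundant constraint as $\sum_{i=1}^k t_{p_i}=c$, which by definition (it involves all $k$ coordinates symmetrically) defines a bundling facet.

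For part (ii), the hypothesis $t^\nu_P<t'_P<t^{\nu'}_P$ gives $t'_{p_i}<t^{\nu'}_{p_i}<\alpha_{p_i}$ for each $i$, so $t'_P$ lies inside the orthotope $\prod_{i\in[k]}[0,\alpha_{p_i}]$; since Proposition~\ref{prop:tech1} says only the bundling constraint is violated, $t'_P$ sits on the far side of the bundling facet within that orthotope, and the separation property then identifies this region as $R_{\emptyset|P}$, so every task of $P$ is allocated to its leaf.

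For part (iii), fix $j\in[k]$ and consider the point $(t'_{-p_j},0)$. I would verify that it belongs to $R_P$ by checking every defining constraint. The non-bundling constraints are satisfied by $t'_P$ (Proposition~\ref{prop:tech1}) and remain satisfied after decreasing one coordinate. For the bundling constraint, I would invoke the box hypothesis on $P_{-j}$ at the instance $T_\nu(P_{-j})$, whose $t$-values on $P_{-j}$ coincide with $(\alpha_{p_l}-4^{k-1}\nu)_{l\neq j}=(t^{\nu'}_{p_l})_{l\neq j}$: the box gives that every task of $P_{-j}$ is at the root, and task $p_j$ with $t_{p_j}=0$ is also at the root (its critical value there is strictly positive, by Lemma~\ref{obs:alphas}-type reasoning using $\bar s_{p_j}>\xi$). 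Hence $(t^{\nu'}_{-p_j},0)\in R_P$, so in particular $\sum_{l\neq j}(\alpha_{p_l}-4^{k-1}\nu)\leq c$; combining with the strict inequality $t'_{p_l}<\alpha_{p_l}-4^{k-1}\nu$ for each $l\neq j$ yields $\sum_{l\neq j}t'_{p_l}<c$, so the bundling constraint is satisfied as well and $(t'_{-p_j},0)\in R_P$.

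Part (iv) combines (ii) and (iii): as $t_{p_k}$ traverses $[0,t'_{p_k}]$ with other coordinates held at $t'_{-p_k}$, task $p_k$ moves from the root (at $0$, by (iii)) to the leaf (at $t'_{p_k}$, by (ii)), so $0<\psi_{p_k}\leq t'_{p_k}$. On this one-dimensional segment, Proposition~\ref{prop:tech1} guarantees that no constraint of $R_P$ other than the bundling one can become violated, so the transition must occur precisely at the point where $\sum_{l\neq k}t'_{p_l}+\psi_{p_k}=c$; this simultaneously places $(t'_{-p_k},\psi_{p_k})$ on the bundling facet and yields $\psi_{p_k}=c-\sum_{l\neq k}t'_{p_l}$, with strict positivity coming from the bound $\sum_{l\neq k}t'_{p_l}<c$ proved in (iii) specialized to $j=k$. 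The main subtlety I anticipate is justifying the separation property in part (ii) in arbitrary dimension $k\geq 3$---ruling out that the region immediately across the bundling facet (but still inside the orthotope) could be some mixed allocation rather than $R_{\emptyset|P}$---together with the auxiliary positivity of $\psi_{p_j}$ used in (iii); both are handled in the paper's framework via weak monotonicity and the established bounds on critical values.
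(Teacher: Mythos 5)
Parts (i), (ii) and (iv) of your argument are essentially the paper's own reasoning (you even flag, correctly, that the dichotomy ``across the bundling facet $\Rightarrow$ in $R_{\emptyset|P}$'' is what carries (ii); the paper gets it from Proposition~\ref{prop:tech1} plus the stated fact that the bundling facet separates $R_P$ from $R_{\emptyset|P}$, applied first to $t^\nu_P\notin R_P$ and then pushed to $t'_P$ by weak monotonicity). The genuine gap is in (iii). You verify the bundling constraint at $(t'_{-p_j},0)$ by claiming that at the corner instance $T_\nu(P_{-j})$, i.e.\ at $\bigl((t^{\nu'}_{p_l})_{l\neq j}, t_{p_j}=0\bigr)$, task $p_j$ is \emph{also} allocated to the root because ``its critical value there is strictly positive, by Lemma~\ref{obs:alphas}-type reasoning using $\bar s_{p_j}>\xi$''. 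Lemma~\ref{obs:alphas} only applies when all leaves are trivial, i.e.\ when every other root value is $0$; here the tasks of $P_{-j}$ carry root values $\alpha_{p_l}-4^{k-1}\nu$, so that lemma gives you nothing. Indeed, on the bundling facet the critical value of $p_j$ equals $c-\sum_{l\neq j}t_{p_l}$, which at the corner can be arbitrarily small or even $0$; so the strict positivity you assert is not available, and the conclusion you draw from it, $\sum_{l\neq j}t^{\nu'}_{p_l}\leq c$, is exactly (the limiting form of) the statement (iii) you are trying to prove --- the argument is circular at this point. The box hypothesis on $P_{-j}$ only tells you where the tasks of $P_{-j}$ go; it is silent about $p_j$.

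The paper closes this hole without ever needing positivity of $\psi_{p_j}$ at high co-values: it first uses the box property of $P_{-j}$ together with weak monotonicity (Lemma~\ref{lemma:tool}) to get that all tasks of $P_{-j}$ are at the root at $(t'_{-p_j},0)$; then it observes that this point strictly satisfies every constraint of $R_P$ with $I\neq[k]$ (it does so at $t'_P$ by Proposition~\ref{prop:tech1}, and lowering $t_{p_j}$ only helps), so by the same dichotomy as in (ii) the point lies either in $R_P$ or in $R_{\emptyset|P}$; the latter is impossible because the tasks of $P_{-j}$ are at the root, hence the point is in $R_P$ and $p_j$ is at the root as well. If you replace your corner argument by this allocation-based dichotomy, the rest of your write-up of (iii) and your derivation of (iv) (including $\psi_{p_k}=c-\sum_{l\neq k}t'_{p_l}>0$) goes through as in the paper.
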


\begin{proof} (i) follows trivially by
  Proposition~\ref{prop:tech1}.

(ii) For $T_\nu(P)=(t^\nu,\bar s)$ it holds that $t^{\nu}_P\not\in
   R_P$, since $P$ is not a box. 
 Therefore, by
   Proposition~\ref{prop:tech1}, it must be either a point on the
   bundling facet between $R_P$ and $R_{\emptyset|P},$ or a point in
   $R_{\emptyset|P},$ since it cannot violate any other type of linear
   constraint (i.e., with $I\neq [k]$) for $R_P.$ By weak
   monotonicity, $t'_P>t^\nu_P$ implies that $t'_P\in R_{\emptyset|P}$
   for $T'.$

(iii) By definition, since $P$ is a  {\slicedoffx } box, $P_{-j}$ must be a box, and therefore for input $(t_{p_j}=0, t^{\nu'}_{-p_j})$  all tasks  $i\in P\setminus \{p_j\}$ are allocated to the root.
Further, by weak monotonicity the same holds for every $t'_P\leq t^{\nu'}_P.$

Next we argue that task $p_j$ is also allocated to the root, when we
change $t'_{p_j}$ to $0.$ The only constraint that $t^{\nu'}_P$
violates, is defined by the bundling facet between $R_P$ and
$R_{\emptyset|P}.$ Thus, $t^{\nu'}_P$ obeys, and $t'_P$ strictly obeys
every other constraint defining $R_P.$ Obviously, this holds also
after the change to $t'_{p_j}=0.$ Since, by
Proposition~\ref{prop:tech1}, no other type of constraint is violated
(i.e., with $I\neq [k]$), after this change to $t'_{p_j}=0,$ the point
can be either in $R_P,$ or in $R_{\emptyset|P}.$ However, the fact
that the tasks in $P\setminus \{p_j\}$ are allocated to the root,
implies that the point cannot be in $R_{\emptyset|P}$, and
therefore it must be in $R_P$.

(iv) $\psi_{p_k}>0$ follows by (iii). The boundary point
$(t'_{P\setminus\{p_k\}}, \psi_{p_k})\in \mathbb R^k$ lies on at least
one facet defined by a linear constraint for a set $I$ of tasks such
that $p_k\in I.$ By Proposition~\ref{prop:tech1}, this can only be the
constraint with $I=[k].$
\end{proof}

The next propositions are about points on the bundling boundary facet
of $R_P.$ We consider inputs $T_1=[(t^1_P, t_{[m]\setminus P}) ,\bar
s]$ and $T_2=[(t^2_P, t_{[m]\setminus P}) ,\bar s]$ that differ from
$T_{\nu'}$ and $T_{\nu}$ only in their coordinates $t_{p_i}$ for tasks
$p_i\in P,$ such that $t^\nu_{p_i}\leq t^1_{p_i}< t^2_{p_i}\leq
t^{\nu'}_{p_i}$. All other $t$-coordinates $j\in [m]\setminus P$ stay
the same, i.e.,  $t^\nu_{j}= t^1_{j}= t^2_{j}= t^{\nu'}_{j},$ as well
as the $s$-coordinates.

Due to the relative position of $T_1$ and $T_2,$ and since the
critical value points for task $p_k$ are on a bundling boundary of
$R_P,$ (Corollary~\ref{prop:tech2} (iv)), it is intuitively clear that
the Lipschitz property (Proposition~\ref{prop:lipschitz}) for any two
such points is fulfilled with equality.

\begin{proposition}[\cite{CKK21b} Lemma 21. Claim (ii)]\label{prop:psiAreFar} Let $T_\nu \leq T_1\leq T_2 \leq T_{\nu'}$  coordinate-wise, and $t^1_{p_i}<t^2_{p_i}$ hold with strict inequality for every $p_i\in P.$ Then  $$\psi_{p_k}(t_{-p_k}^1,\bar s)-\psi_{p_k}(t_{-p_k}^2,\bar s)=|t_{-p_k}^1-t_{-p_k}^2|_1=\sum_{i\in [k-1]} (t_{p_i}^2-t_{p_i}^1).$$
  \end{proposition}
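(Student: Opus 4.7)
My plan is to prove the identity by observing that both critical values lie on a common bundling facet of $R_P$ and then simply subtracting the two facet equations.

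First, I would note that $T_1$ and $T_2$ agree with $T$ on every coordinate outside $P$ — both the $s$-coordinates and the $t$-values on $[m]\setminus P$ — so they induce the same allocation region $R_P$ for the tasks in $P$. By Corollary~\ref{prop:tech2}(i), since $P$ is a \slicedoffx{} box for $T$, this $R_P$ has a unique non-redundant bundling facet with equation
\[
\sum_{i=1}^{k} t_{p_i}=c
\]
for some constant $c$ depending only on $T$.

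Next, for each $j\in\{1,2\}$, I would apply Corollary~\ref{prop:tech2}(iv) to $T_j$: the corollary guarantees that the critical-value point $(t^j_{-p_k},\,\psi_{p_k}(t^j_{-p_k},\bar s))$ lies on the bundling facet, hence
\[
\sum_{i\in[k-1]} t^j_{p_i} \;+\; \psi_{p_k}(t^j_{-p_k},\bar s) \;=\; c.
\]
Subtracting the $j=1$ equation from the $j=2$ equation would then yield
\[
\psi_{p_k}(t^1_{-p_k},\bar s)-\psi_{p_k}(t^2_{-p_k},\bar s) \;=\; \sum_{i\in[k-1]} (t^2_{p_i}-t^1_{p_i}),
\]
and because $t^2_{p_i}>t^1_{p_i}$ strictly for every $p_i\in P$, the right-hand side equals $\sum_{i\in[k-1]} |t^1_{p_i}-t^2_{p_i}| = |t^1_{-p_k}-t^2_{-p_k}|_1$, which is the claim.

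The only obstacle I anticipate is a minor technicality: Corollary~\ref{prop:tech2}(iv) is stated with the strict double inequality $t^\nu_P<t'_P<t^{\nu'}_P$, whereas here $T_j$ is only weakly sandwiched between $T_\nu$ and $T_{\nu'}$ and may coincide with either endpoint in some coordinates. Since $t^1_{p_i}<t^2_{p_i}$ strictly, there is positive room to perturb $T_1$ upward and $T_2$ downward by arbitrarily small amounts so that both become strictly interior to $[T_\nu, T_{\nu'}]$, apply the facet identity there, and then pass to the limit using the $1$-Lipschitz continuity of $\psi_{p_k}$ from Proposition~\ref{prop:lipschitz}. The main conceptual step — extracting an equality rather than merely the Lipschitz inequality — comes for free once the two boundary points are pinned to the same bundling hyperplane, which is precisely what Corollary~\ref{prop:tech2}(iv) provides.
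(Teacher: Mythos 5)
Your proposal is correct, but it takes a genuinely different route from the paper's own proof. You pin both critical-value points $(t^j_{-p_k},\psi_{p_k}(t^j_{-p_k},\bar s))$, $j=1,2$, to the single non-redundant bundling facet $\sum_{i=1}^k t_{p_i}=c$ of $R_P$ (which is the same polyhedron for $T$, $T_1$, $T_2$ since they agree outside the $t_P$-coordinates), via Corollary~\ref{prop:tech2}(i) and (iv), and then simply subtract the two facet equations; the endpoint issue (the hypotheses of Proposition~\ref{prop:tech1} require strict sandwiching, while here $T_1$ or $T_2$ may touch $T_\nu$ or $T_{\nu'}$) is handled by an interior perturbation plus the $1$-Lipschitz continuity of $\psi_{p_k}$ (Proposition~\ref{prop:lipschitz}), which is legitimate because $\psi_{p_k}$ does not depend on the $t_{p_k}$-coordinate you also perturb. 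The paper instead re-derives the equality from first principles: it uses Corollary~\ref{prop:tech2}(ii) and (iii) to exhibit a perturbed instance $\hat T_1$ near $T_1$ where all of $P$ goes to the leaves and a perturbed instance $\hat T_2$ near $T_2$ where all of $P$ goes to the root, applies weak monotonicity between these two allocations to get the lower bound $\psi_{p_k}(t^1_{-p_k})-\psi_{p_k}(t^2_{-p_k})\geq \sum_{i\in[k-1]}(t^2_{p_i}-t^1_{p_i})-O(\epsilon)$, and matches it with the Lipschitz upper bound. Your argument is shorter and makes the geometric content (two points on one hyperplane) explicit, at the price of leaning on Corollary~\ref{prop:tech2}(iv) as a black box; the paper's WMON computation is more self-contained and never needs to invoke the facet equation itself, only the allocations strictly above and below the critical values. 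Both arguments rest on the same implicit standing hypothesis, namely that $P$ is a strictly chopped-off box for $T$, which is how Corollary~\ref{prop:tech2} becomes available.
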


  \begin{proof}  
    Fix $\bar s.$ We know by Corollary~\ref{prop:tech2} (ii) that in
    the allocation of instance $T_1$ all tasks in $P$ are given to the
    leaves. We create another instance $\hat T_1=[(\hat t^1_P,
    t_{[m]\setminus P}) ,\bar s]$ which is close to $T_1$ that has the
    same allocation for the tasks in $P$: we  set $\hat
    t_{p_i}^1=t_{p_i}^1+\epsilon/n$, for $i\in[k-1]$ and we set $\hat
    t_{p_k}^1=\psi_{p_k}(t_{-p_k}^1)+\epsilon$, for some small
    $\epsilon>0$. The allocation of $P$ remains the same, because (1)
    the boundary $\psi_{p_k}(t_{-p_k}^1)$ changes by less than
    $(n-1)\epsilon/n<\epsilon$ (due to the Lipschitz property), so the new
    $t$-value of $p_k$, $\hat t_{p_k}^1$ is still above the boundary, and (2) by weak
    monotonicity, the allocation of the other tasks remains the
    same.

    Similarly, we create an instance $\hat T_2$ that is very close to
    $T_2$: we set $\hat t_{p_i}^2= t_{p_i}^2-\epsilon/n$, for $i\in
    [k-1]$ and $\hat t_{p_k}^2=\psi_{p_k}(t_{-p_k}^2)-\epsilon$. We
    claim that in $\hat T_2$ all tasks are allocated to the root. To
    see this, observe that for $(t_{-p_k}^2,t_{p_k}=0)$ all tasks are
    allocated to the root (by Corollary~\ref{prop:tech2}), and by the
    Lipschitz property $\psi_{p_k}(t_{-p_k}^2)$ can decrease by less
    than $\epsilon$, so $p_k$ will be again allocated to the root. Again
    the allocation will not change.

  By weak monotonicity we get
  \begin{align*}
    0\geq \sum_{i\in [k]} (1-0)(\hat t_{p_i}^2-\hat t_{p_i}^1)\geq   \sum_{i\in [k-1]} (t_{p_i}^2-t_{p_i}^1) +  \psi_{p_k}( t_{-p_k}^2)-\psi_{p_k}( t_{-p_k}^1)-\left(2+\frac{k-1}{n}\right)\epsilon.\\    
  \end{align*}  

On the other hand, the Lipschitz property implies 
\begin{align*}
\psi_{p_k}( t_{-p_k}^2)-\psi_{p_k}( t_{-p_k}^1)\leq \sum_{i\in [k-1]} (t_{p_i}^2-t_{p_i}^1),
  \end{align*}  
since $T_2\geq T_1$, hence all terms inside the sum on the right hand side are nonnegative. By letting $\epsilon$ tend to 0 we get the equality of the claim.

\end{proof}

The (proof of the) previous proposition shows that
$\psi_{p_k}(t^1)-\psi_{p_k}(t^2)=|t^1-t^2|_1,$ essentially uses the
fact that for $i=1,2$ the points $(t^i_{-p_k},\psi_{p_k}(t^i))\in
\mathbb R^k$ are on the bundling boundary of $R_P$ and
$R_{\emptyset|P}.$ The next proposition and its corollary show that
the reverse also holds: If
$\psi_{p_k}(t_{-p_k}^1)-\psi_{p_k}(t_{-p_k}^2)=|t_{-p_k}^2-t_{-p_k}^1|_1,$
then for all points $t'$ with $t^1\leq t'\leq t^2$ their critical
inputs for task $p_k$ are on a bundling boundary of $R_P$.

\begin{proposition}\label{prop:FarPsiAreBundling} Let $s$ be fixed, let
  $t^1_{p_i}< t^2_{p_i}$ for every $p_i\in P,$ and $t^1_{j}=t^2_j$ for
  $j\in [m]\setminus P.$ If
  $\psi_{p_k}(t_{-p_k}^1)-\psi_{p_k}(t_{-p_k}^2)=\sum_{i\in [k-1]}
  (t_{p_i}^2-t_{p_i}^1),$ then
\begin{enumerate}
\item[(i)]  for  $(t^1_{-p_k},\psi_{p_k}(t^1)-\epsilon)$ all tasks $p_i\in P$ are given to the root for $0<\epsilon< \psi_{p_k}(t^1),$ and 
\item[(ii)] for  $(t^2_{-p_k},\psi_{p_k}(t^2)+\epsilon)$ all tasks $p_i\in P$ are given to the leaves  for $\epsilon>0.$
\end{enumerate}
    
  \end{proposition}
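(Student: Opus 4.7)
The plan is to apply weak monotonicity to the root between two carefully chosen test points, one just below the critical value for $p_k$ at $t^1$ and one just above it at $t^2$, and then extend the resulting equalities to the full ranges of $\epsilon$ by a second invocation of weak monotonicity via Lemma~\ref{lemma:tool}.

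Concretely, I fix small $\epsilon_1,\epsilon_2>0$ and set $T^A = (t^1_{-p_k},\, \psi_{p_k}(t^1_{-p_k})-\epsilon_1)$ and $T^B = (t^2_{-p_k},\, \psi_{p_k}(t^2_{-p_k})+\epsilon_2)$. By the definition of the boundary function, at $T^A$ task $p_k$ is given to the root, and at $T^B$ to a leaf. Writing $a^A_{p_i}, a^B_{p_i}\in\{0,1\}$ for the indicators of the root receiving $p_i$, and noting that only the root's values change between $T^A$ and $T^B$ (by the assumption $t^1_j=t^2_j$ off $P$), weak monotonicity for the root gives $\sum_{i\in[k]}(a^B_{p_i}-a^A_{p_i})(t^B_{p_i}-t^A_{p_i})\leq 0$. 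Substituting $a^A_{p_k}=1$, $a^B_{p_k}=0$, and the hypothesis $\psi_{p_k}(t^1_{-p_k})-\psi_{p_k}(t^2_{-p_k})=\sum_{i\in[k-1]}(t^2_{p_i}-t^1_{p_i})$ into the $p_k$-term, this simplifies to
\[
\sum_{i\in[k-1]}\bigl(a^B_{p_i}-a^A_{p_i}+1\bigr)\bigl(t^2_{p_i}-t^1_{p_i}\bigr)\;\leq\;\epsilon_1+\epsilon_2.
\]
Since $t^2_{p_i}-t^1_{p_i}>0$ and $a^B_{p_i}-a^A_{p_i}+1\in\{0,1,2\}$, every summand on the left is nonnegative. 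Choosing $\epsilon_1+\epsilon_2<\min_{i\in[k-1]}(t^2_{p_i}-t^1_{p_i})$ then forces each summand to vanish, i.e., $a^A_{p_i}=1$ and $a^B_{p_i}=0$ for all $i\in[k-1]$. This proves (i) and (ii) for these sufficiently small $\epsilon_1$ and $\epsilon_2$.

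To upgrade to arbitrary $\epsilon$ in the stated ranges, I apply Lemma~\ref{lemma:tool} with the root as the player. For (i), starting from a point where all of $P$ is allocated to the root, I further decrease $t_{p_k}$ (set $S=P$, $S'=\emptyset$), which preserves the allocation of $P$ to the root and covers all $0<\epsilon<\psi_{p_k}(t^1_{-p_k})$. For (ii), starting from a point where none of $P$ is with the root, I further increase $t_{p_k}$ (set $S=\emptyset$, $S'=P$), which preserves the fact that the root gets no task of $P$, so all of $P$ remains with the leaves for every $\epsilon>0$. I expect no serious obstacle: the argument is a tight application of weak monotonicity made possible by the equality assumption; the only minor subtleties are ensuring $a^A,a^B$ are well-defined on the correct sides of the boundary (immediate from the strict inequalities in Definition~\ref{dfn:boundary function}) and that $\epsilon_1,\epsilon_2$ are small relative to the positive gaps $t^2_{p_i}-t^1_{p_i}$.
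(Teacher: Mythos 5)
Your main computation is exactly the paper's argument: the paper also pairs the two instances $(t^1_{-p_k},\psi_{p_k}(t^1)-\epsilon)$ and $(t^2_{-p_k},\psi_{p_k}(t^2)+\epsilon)$, applies weak monotonicity for the root, substitutes the hypothesis $\psi_{p_k}(t^1)-\psi_{p_k}(t^2)=\sum_{i\in[k-1]}(t^2_{p_i}-t^1_{p_i})$, and concludes (by ``letting $\epsilon$ tend to $0$,'' i.e.\ for $\epsilon$ small compared to the gaps $t^2_{p_i}-t^1_{p_i}$) that $a^1_{p_i}=1$ and $a^2_{p_i}=0$ for all $i\in[k-1]$. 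Your algebra and the forcing of each summand to vanish are correct, so up to this point you have reproduced the paper's proof.

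The flaw is in your upgrade to arbitrary $\epsilon$. You invoke Lemma~\ref{lemma:tool} with $S=P$ (resp.\ $S'=P$) while changing only the coordinate $t_{p_k}$; the lemma requires the player's value to change strictly on \emph{every} task of $S$ (resp.\ $S'$), so this choice is not a legal application. With the legal choice $S=\{p_k\}$ (resp.\ $S'=\{p_k\}$) the lemma only guarantees that the allocation of $p_k$ itself is preserved, and weak monotonicity alone does not prevent the root from losing (or gaining) some other task $p_i$, $i\neq k$, when only $t_{p_k}$ is moved --- this is exactly the phenomenon of quasi-flipping allocations in Figure~\ref{fig:shapesPure}(b), so the step as stated would fail for a general WMON allocation. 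To cover the full ranges of $\epsilon$ one needs a different ingredient, namely the polyhedral description of the regions from Section~\ref{sec:region-r_p}: $R_P$ is cut out by constraints of the form $\sum_{i\in I}t_{p_i}\le c_I$, hence is preserved under decreasing $t_{p_k}$, and symmetrically $R_{\emptyset|P}$ is preserved under increasing $t_{p_k}$; alternatively one can note that the paper's own proof (and its subsequent uses together with this structural fact) is phrased exactly through this small-$\epsilon$ limit. As written, though, your last paragraph asserts a conclusion that Lemma~\ref{lemma:tool} does not give.
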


\begin{proof} The proof is almost the same as of Proposition~\ref{prop:lipschitz}. We can ignore the coordinates $j\in [m]\setminus P,$ since here the values of these tasks stay fixed.

 Consider the instances $(t^1_{-p_k},\psi_{p_k}(t^1_{-p_k})-\epsilon)$ and $(t^2_{-p_k},\psi_{p_k}(t^2_{-p_k})+\epsilon)$. By definition of $\psi_{p_k},$ task $p_k$ is allocated to the root in the first instance, but not in the second one.  Let $a^1$ and $a^2$ denote the allocation for these two instances. By weak monotonicity, we have
\begin{align*}
  \sum_{i\neq k} (a^1_{p_i}-a^2_{p_i}) (t^1_{p_i}-t^2_{p_i})+(1-0)((\psi_{p_k}(t^1)-\epsilon)-(\psi_{p_k}(t^2)+\epsilon)) \leq 0 \\
  \psi_{p_k}(t^1)-\psi_{p_k}(t^2) \leq \sum_{i\neq k} (a^1_{p_i}-a^2_{p_i}) (t^2_{p_i}-t^1_{p_i}) + 2\epsilon.%
\end{align*}
Since $\psi_{p_k}(t_{-p_k}^1)-\psi_{p_k}(t_{-p_k}^2)=\sum_{i\in [k-1]} (t_{p_i}^2-t_{p_i}^1),$ by letting $\epsilon$ tend to 0, we get $a^1_{p_i}=1$ and $a^2_{p_i}=0$ for all $i\in[k-1]$ as well. 
\end{proof}

\begin{corollary}\label{cor:FarPsiAreBundling} Assume that $t^1_{p_i}< t^2_{p_i}$ for every $p_i\in P,$ and $t^1_{j}=t^2_j$ for $j\in [m]\setminus P,$ and $\psi_{p_k}(t_{-p_k}^1)-\psi_{p_k}(t_{-p_k}^2)=\sum_{i\in [k-1]} (t_{p_i}^2-t_{p_i}^1).$ 
Let $t^1\leq t'\leq t^2$  be so that $t^1_{p_i}< t'_{p_i}< t^2_{p_i}\, (i\in [k]).$ Then
\begin{enumerate}
\item[(i)] for  $(t'_{-p_k},\psi_{p_k}(t')-\epsilon)$ all tasks $p_i\in P$ are given to the root, and 
\item[(ii)] for  $(t'_{-p_k},\psi_{p_k}(t')+\epsilon)$ all tasks $p_i\in P$ are given to the leaves. 
\end{enumerate}

\end{corollary}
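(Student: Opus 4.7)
The plan is to deduce this corollary directly from Proposition~\ref{prop:FarPsiAreBundling} by splitting the interval from $t^1$ to $t^2$ at the intermediate point $t'$ and showing that the tight-Lipschitz equation inherits to each of the two sub-pairs.

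First I would observe that, since $t^1_{p_i} < t'_{p_i} < t^2_{p_i}$ for every $p_i \in P$ and all other coordinates agree, the Lipschitz property (Proposition~\ref{prop:lipschitz}) applied to each of the pairs $(t^1, t')$ and $(t', t^2)$ gives
\begin{align*}
  \psi_{p_k}(t^1_{-p_k}) - \psi_{p_k}(t'_{-p_k}) &\leq \sum_{i\in[k-1]}(t'_{p_i}-t^1_{p_i}), \\
  \psi_{p_k}(t'_{-p_k}) - \psi_{p_k}(t^2_{-p_k}) &\leq \sum_{i\in[k-1]}(t^2_{p_i}-t'_{p_i}).
\end{align*}
Adding these yields an upper bound on $\psi_{p_k}(t^1_{-p_k}) - \psi_{p_k}(t^2_{-p_k})$ equal to $\sum_{i\in[k-1]}(t^2_{p_i}-t^1_{p_i})$, which by the hypothesis of the corollary is attained with equality. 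Hence both inequalities must in fact be tight individually.

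Next I would apply Proposition~\ref{prop:FarPsiAreBundling} twice. Applied to the pair $(t', t^2)$, whose tight-Lipschitz identity we just established, part (i) of the proposition gives that $(t'_{-p_k}, \psi_{p_k}(t')-\epsilon)$ allocates every task of $P$ to the root, for $0 < \epsilon < \psi_{p_k}(t')$; this is exactly part (i) of the corollary. Similarly, applying the proposition to the pair $(t^1, t')$, part (ii) gives that $(t'_{-p_k}, \psi_{p_k}(t')+\epsilon)$ allocates every task of $P$ to the leaves; this is part (ii) of the corollary.

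There is no real obstacle here: the only thing to check carefully is that both sub-pairs satisfy the hypotheses of the proposition, namely strict inequalities in every coordinate $p_i \in P$ and equality of all remaining coordinates, which is immediate from $t^1_{p_i} < t'_{p_i} < t^2_{p_i}$ and $t^1_j = t^2_j$ for $j \notin P$ (and hence $t'_j$ equals the common value as well, since $t'$ differs from $t^1$ and $t^2$ only on $P$). The essence of the argument is simply that the tight-Lipschitz identity is preserved under restriction to sub-intervals, so the bundling behaviour at the endpoints propagates to every interior point.
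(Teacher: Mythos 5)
Your proposal is correct and follows essentially the same route as the paper: split at $t'$, use the Lipschitz property (Proposition~\ref{prop:lipschitz}) on each sub-pair, observe that the sum of the two bounds is attained by hypothesis so each is tight, and then apply Proposition~\ref{prop:FarPsiAreBundling} with $t'$ in the role of $t^1$ for part (i) and of $t^2$ for part (ii). The applications of the proposition's two parts to the correct sub-pairs are matched exactly as in the paper's proof, so there is nothing to add.
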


\begin{proof} 
  Let $\psi^1=\psi_{p_k}(t^1),\,\psi^2=\psi_{p_k}(t^2),$ and
  $\psi'=\psi_{p_k}(t').$
  Clearly, $$\psi^1-\psi^2=(\psi^1-\psi')+(\psi'-\psi^2)\leq
  \sum_{i\neq k}(t'_{p_i}-t^1_{p_i})+\sum_{i\neq
    k}(t^2_{p_i}-t'_{p_i})=\sum_{i\neq
    k}(t^2_{p_i}-t^1_{p_i})=\psi^1-\psi^2.$$ The inequality follows
  from the Lipschitz property, and the last equality by the
  assumption. We conclude that equality must hold everywhere, in
  particular $\psi^1-\psi'= \sum_{i\neq k}(t'_{p_i}-t^1_{p_i})$ and
  $(\psi'-\psi^2)= \sum_{i\neq k}(t^2_{p_i}-t'_{p_i}).$ So, $t'$ can
  play the role of $t^1$ in Proposition~\ref{prop:FarPsiAreBundling}, implying (i).
  Symmetrically, $t'$ can take the role of $t^2,$ that implies (ii).
\end{proof}
  
\subsubsection{The $p_k$-$p_k'$ slice}
\label{sec:ind-p_k-p'k-slice}

We now consider a \slicedoffx\ box $P=\{p_1,\ldots, p_k\}$ and choose
a random sibling $p_k'$ of $p_k$.  We will consider all different
possibilities for the ($p_k, p_k'$)-slice mechanism based on the
$2\times 2$ characterization. First, we consider the case that the
mechanism is task-independent, establishing that then $(P_{-k}, p'_k)$
must be a box (Lemma~\ref{lem:crossing}). Then we treat the non
task-independent case, excluding almost surely affine minimizers
(Lemma~\ref{lemma:linearBoundary}) and bounding from above the number of
$p_k'$ siblings for which the mechanism can be relaxed-affine
minimizer (Lemma~\ref{lem:noHalfBundling}).

\paragraph{The case of task-independent (crossing) allocations.}

In this paragraph we treat the case when in at least two points $T_1<T_2$ between $T_\nu$ and $T_{\nu'}$ the allocation of  the $(p_k, p_k')$-slice is crossing. We prove that in this case the set $(P_{-k}, p_k')$ is a box.

\begin{lemma}\label{lem:crossing} Suppose that $P=\{p_1,p_2,\ldots,p_k\}$ is a \slicedoffx\ box for $T=(t,\bar s)$. Fix a sibling $p'_k$ of $p_k.$ Assume that there exist two distinct instances $T_1,\,T_2,$ such that 
\begin{itemize}

\item $T_\nu \leq T_1\leq T_2 \leq T_{\nu'}$  coordinate-wise, and $t^\nu_{p_i}<t^1_{p_i}<t^2_{p_i}< t^{\nu'}_{p_i}$ holds with strict inequality for every $p_i\in P;$
 \item for both $T_1$ and $T_2$, in the $(p_k, p_k')$ slice mechanism  the allocation of the root is crossing.

\end{itemize}
Assuming that the approximation ratio is less than $n,$ then $(P_{-k}, p'_k)$ is a box for $T.$

\end{lemma}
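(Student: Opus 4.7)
The plan is to use the $2\times 2$ characterization (Theorem~\ref{theo:addchar} and Observation~\ref{obs:possibleFigures}) to conclude that the $(p_k,p'_k)$-slice mechanism is task-independent at $T_1$ and $T_2$, and then to exploit this independence to derive the box property for $(P_{-k}, p'_k)$.

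First, I would argue task-independence of the slice. Because $T$ is a standard instance, $\bar s_{p_k}, \bar s_{p'_k} \in (\xi,1)$ are bounded away from zero, so the small-$s$ degenerate case of Observation~\ref{obs:possibleFigures}(ii) does not apply. A non task-independent relaxed affine minimizer would then be non-crossing at $(\bar s_{p_k},\bar s_{p'_k})$, contradicting the crossing assumption at $T_1$ and $T_2$. The continuity requirement excludes the countable exceptional jump-discontinuity points of part~(i). Hence the $(p_k,p'_k)$-slice mechanisms at both $T_1$ and $T_2$ are (relaxed) task-independent: $\psi_{p_k}$ is independent of $t_{p'_k}$ there, and symmetrically.

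Second, I would combine this with Corollary~\ref{prop:tech2}. Applied at $T_2$, part~(iii) says setting $t_{p_k}=0$ puts every task of $P$ at the root, while part~(iv) places the critical value $\psi_{p_k}$ on the bundling facet of $R_P$. Task-independence then lets us raise $t_{p'_k}$ from $0$ toward its own critical value without disturbing the allocation of $p_k$. Using Proposition~\ref{prop:psiAreFar} and Corollary~\ref{cor:FarPsiAreBundling} -- which say that on the bundling facet boundaries shift rigidly with the $\ell_1$-displacement -- together with the Lipschitz property (Proposition~\ref{prop:lipschitz}), I would track how the critical value of $p'_k$ evolves when we simultaneously reduce each $t_{p_i}$ for $i\in[k-1]$ to the box level $\alpha_{p_i}-4^k\nu$ and raise $t_{p'_k}$ to $\alpha_{p'_k}-4^k\nu$. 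The $\nu$-gap between $4^{k-1}\nu$ (the inductive box bound available for $P_{-k}$) and $4^k\nu$ (the target bound for the new box) is precisely designed to absorb these shifts.

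Finally, these ingredients should combine to show that at the target instance $T^*$ with $t_{p_i}=\alpha_{p_i}-4^k\nu$ for $i\in[k-1]$, $t_{p'_k}=\alpha_{p'_k}-4^k\nu$, and all other coordinates (including $t_{p_k}$) equal to $0$, every task of $P_{-k}\cup\{p'_k\}$ is allocated to the root, which is exactly the $4^k\nu$-box property. The main obstacle is the cross-dimensional interaction: task-independence in the $(p_k,p'_k)$-slice directly constrains only the allocations of $p_k$ and $p'_k$, not those of $p_1,\ldots,p_{k-1}$. Resolving this relies on having crossing at \emph{two} distinct instances $T_1<T_2$ (rather than one), which pins down the bundling facet of $R_P$ densely enough that the ``swap'' argument transports the allocations of all $k-1$ non-$p_k$ tasks through the deformation, rather than only those of $p_k$ and $p'_k$.
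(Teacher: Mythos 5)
Your plan follows the same route as the paper --- task independence at the two crossing points, the rigid shift of the bundling facet via Proposition~\ref{prop:psiAreFar} and Proposition~\ref{prop:FarPsiAreBundling}, and a final weak-monotonicity step down to the box corner --- but it stops short at exactly the two places where the work is. First, the transport step is only gestured at. The actual mechanism is: raise $t_{p_k'}$ from $0$ to $\alpha_{p_k'}-\epsilon$ in \emph{both} $T_1$ and $T_2$; crossing at both points means neither critical value $\psi_{p_k}(T_1)$ nor $\psi_{p_k}(T_2)$ changes, so the equality $\psi_{p_k}(\bar T_1)-\psi_{p_k}(\bar T_2)=|t^2_{-p_k}-t^1_{-p_k}|_1$ of Proposition~\ref{prop:psiAreFar} survives in the modified instances; then Proposition~\ref{prop:FarPsiAreBundling}(i) applied to $\bar T_1\leq \bar T_2$ yields that at $\hat T$ (set $t_{p_k}=0$, keep $t_{p_i}=t^1_{p_i}$ and $t_{p_k'}=\alpha_{p_k'}-\epsilon$) \emph{all} tasks of $P$ go to the root. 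This is precisely how the conclusion is carried from the $(p_k,p_k')$-slice to the other $k-1$ coordinates; saying that two crossing points ``pin down the bundling facet densely enough'' is a gesture at this argument, not the argument.

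Second, and more seriously, you simply assert that $p_k'$ itself ends up allocated to the root at the target point. None of your listed ingredients gives this: task independence controls the boundary of $p_k$, not the allocation of $p_k'$, and the critical value of $p_k'$ at $\hat T$ --- where $t_{p_1},\ldots,t_{p_{k-1}}$ are large, near the $\alpha_{p_i}$ --- is not known to be close to $\alpha_{p_k'}$; the Lipschitz property only bounds its change by $\sum_{i\in[k-1]} t^1_{p_i}$, which is of the order of the $\alpha$'s, not of $\nu$. The paper needs a separate weak-monotonicity contradiction here: if $p_k'$ were given to the leaf at $\hat T$, compare with the instance $t^*$ having $t^*_{p_k'}=\alpha_{p_k'}-\epsilon/2$ and all other $P$-coordinates $0$ (where $p_k'$ is at the root by definition of $\alpha_{p_k'}$); weak monotonicity then gives $0\geq \sum_{i\in[k-1]}(1-a^*_{p_i})\hat t_{p_i}+\epsilon/2>0$, a contradiction. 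Without this step (or an equivalent one) your argument establishes the box property only for $p_1,\ldots,p_{k-1}$ and not for the replacement task $p_k'$, so the proof is incomplete. (Your opening derivation of task independence from the $2\times 2$ characterization is harmless but redundant: ``crossing'' is already the task-independent shape, which is all the paper uses.)
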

\begin{proof}
  The proof is analogous to the proof of (\cite{CKK21b} Lemma 27), and
  a generalization of
  Proposition~\ref{lemma:tech2dim2}. %
  We use the notation $t^1=t^1_P$ and $t^2=t^2_P$ for the respective
  $k$-dimensional $t$-vectors over the task set $P.$ All other $t$ values, as well as $\bar s$ are fixed  for all
  considered inputs, except for $t_{p_k'},$ that we will change and
  treat explicitly. We also use the notation $\alpha'=\alpha_{p'_k}.$

  In both $T_1$ and $T_2$ let us now increase the value of task $p'_k$
  from $t_{p'_k}=0$ to $\bar t_{p'_k}=\alpha'-\epsilon$ for some small
  $\epsilon < 4^k\nu,$ to obtain the new instances $\bar T_1$ and $\bar T_2,$
  respectively. Since by assumption of the lemma, in both $T_1$ and
  $T_2$ the $(p_k, p'_k)$ mechanism has crossing (task independent)
  allocation, this change of the $t$-values does not affect the respective critical values for
  task $p_k,$ i.e., $\psi_{p_k}(\bar T_1)=\psi_{p_k}(T_1)$ and
  $\psi_{p_k}(\bar T_2)=\psi_{p_k}(T_2).$

  By Corollary~\ref{prop:tech2} $(iv)$ for $T_1$ and $T_2$ the points
  $(t^1_{-p_k},\psi_{p_k}(t^1))$ and $(t^2_{-p_k},\psi_{p_k}(t^2))$
  are on a bundling boundary facet of $R_P;$ moreover by
  Proposition~\ref{prop:psiAreFar}
  $(\psi_{p_k}(T_1))-\psi_{p_k}(T_2))= |t^2_{-p_k}-t^1_{-p_k}|_1$
  holds. Since for $\bar T_1$ and $\bar T_2$ the critical values do
  not change, $(\psi_{p_k}(\bar T_1))-\psi_{p_k}(\bar T_2))=
  |t^2_{-p_k}- t^1_{-p_k}|_1$ remains true for $\bar
  t_{p'_k}=\alpha'-\epsilon.$

We apply now Proposition~\ref{prop:FarPsiAreBundling} (i) for $\bar T_1\leq \bar T_2$ (for the region $R^{\bar T_1}_P$).
Define  the instance $\hat T$ by reducing $t_{p_k}$ to zero in $\bar T_1,$ i.e., let $\hat t:=(t^1_{-p_k}, t_{p_k}=0)$ and $\hat t_{p_k'}=\alpha'-\epsilon.$ 
By Proposition~\ref{prop:FarPsiAreBundling} (i), we obtain that for input $\hat T$ all tasks in $P$ are given to the root. 

We show that also $p_k'$ is given to the root for $\hat T.$ Assume for
contradiction that this is not the case. Then we increase $\hat
t_{p'_k}$ to $t^*_{p'_k}=\alpha'-\epsilon/2,$ and decrease all other
values from $\hat t_{p_i}$ to $t^*_{p_i}=0$ and obtain $t^*.$ Omitting the
coordinate $t_{p_k}$ (which is $0$), by weak monotonicity we get

\begin{align*}0 &\geq \sum_{i\in [k-1]} (\hat a_{p_i}-a_{p_i}^*) (\hat t_{p_i}^{ }-t_{p_i}^*)+ (\hat a_{p_k'}-a_{p_k'}^*) (\hat t_{p_k'}^{ }-t_{p_k'}^*)\\
 & = \sum_{i\in [k-1]} (1-a_{p_i}^*) (\hat t_{p_i}^{ })+ (0-1)
  (-\epsilon/2).\end{align*}

Here the $t$-values follow from the
definition of $t^*;$ $\hat a_{p_i}=1$ by the previous paragraph; $\hat
a_{p_k'}=0$ by the indirect assumption, and $a^*_{p_k'}=1$ by the
definition of $\alpha'=\alpha_{p_k'},$ since $t^*_{p_k'}<\alpha'$ and
the other $t$-values are $0$ in $t^*.$ Since $\hat t_{p_i}>0$ this is
a contradiction.  All in all, this proves that for input $\hat T$ all
tasks in $P_{-k}\cup \{p'_k\}$ are allocated to the root.

Finally, observe that by definition $\hat t_{p_i}=t^1_{p_i}>\alpha_{p_i}-4^k\nu, \,\, (i\in [k-1]),$ and $\hat t_{p_k'}=\alpha_{p'_{k}}-\epsilon> \alpha_{p'_k}-4^k\nu.$ This implies by weak monotonicity that also for $T_\nu(P_{-k}\cup \{p'_k\})$   the task set $P_{-k}\cup \{p'_k\}$ is given to the root, and therefore $(P_{-k}, p'_k)$ is a box for $T.$   \end{proof}

\paragraph{The case of non-crossing allocations.}

In the remaining part, we need to exclude the case that between
$T_\nu$ and $T_{\nu'}$ there are `many' points where the allocation of
the slice mechanism $(p_k, p_k')$ is non-crossing.  Allocations of
$2\times 2$-mechanisms that are non-crossing for the root, occur (1)
in the linear part (linear as function of $s_{p_k}$) of relaxed affine
minimizers\footnote{Constant mechanisms are special affine minimizers;
  or we can exclude them immediately using
  Lemma~\ref{prop:noconstant}}, or (2) they must be half-bundling at
$p_k$ (including completely bundling)\footnote{The case of being at some
  singular point $(s_{p_k}, s_{p_k'})$ of a relaxed task independent
  mechanism is excluded since the inputs $T$ that we consider satisfy
  the continuity requirement almost surely.}.

The following paragraphs deal with both of these cases; for any given
point $T,$ case (1) -- linear boundary functions -- will be excluded
to occur in $k$ different points, positioned appropriately, hence
linear functions are excluded almost surely; case (2) is excluded with
high probability over the choice of $p_k'$.

\paragraph{Case 1: Linear boundary functions.}

Here we exclude the case that in $k$ carefully selected points $T_0,
\, T_1, \ldots , T_{k-1}$ the $(p_k, p_k')$-slice mechanism is a
(relaxed) affine minimizer, and so that in each of these points the
critical value function $\psi_k(s_{p_k})$ is truncated linear. This
implies that linear boundary functions are excluded almost surely. The
next lemma is a variant of Lemma 21 in \cite{CKK21b}.

\begin{lemma} \label{lemma:linearBoundary} Assume that the
  approximation ratio is at most $n.$ Suppose that
  $P=(p_1,\ldots,p_k)$ is a \slicedoffx\ box for standard instance
  $T=(t,\bar s).$ Fix a sibling $p_k'$ of $p_k.$ Finally let
  $h=(5/8)\cdot 4^{k}\nu,$ and let $0<\eta< \nu/4n$ be some small
  number. There exist no distinct instances $T_0=(t^0,\bar s)$,
  $T_1=(t^1,\bar s),\ldots, T_{k-1}=(t^{k-1},\bar s)$ such that
  \begin{itemize}
  \item $T_\nu\leq T_j\leq T_{\nu'}$ for every $j = 0, \ldots, k-1;$ 
  \item $t^0_{p_j}+h<t^j_{p_j}<t^0_{p_j}+h+\eta$ for all $j\in [k-1];$
  \item $t^0_{p_i}+\eta/2<t^j_{p_i}<t^0_{p_i}+\eta$ for all $j\in [k-1],$ and all $i\neq j;$
  \item the boundary function $\psi_{p_k}$ is truncated linear in $s_{p_k},$ that is,
    \begin{align*}
      \psi_{p_k}(t^j_{-p_k},\bar s)=\max(0\,,\, \lambda(t^j_{-p_k},\bar s_{-p_k}) s_{p_k} - \gamma(t^j_{-p_k}, \bar s_{-p_k})),
    \end{align*}
    for  each of the instances $T_j,\quad$ $j = 0, 1,\ldots, k-1.$ 
  \end{itemize}
\end{lemma}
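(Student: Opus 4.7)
The plan is to suppose, for contradiction, that $k$ distinct instances $T_0,\ldots,T_{k-1}$ with the listed properties exist, and to use the $k$ linearity assertions to translate the bundling facet of $R_P$ rigidly in the $t_{p_k}$-direction by lowering the leaf value $s_{p_k}$; a judicious choice of shift will make the box-test point for $P_{-k}$ at the shifted leaf value fail the bundling constraint, contradicting the second clause of Definition~\ref{def:potentially-good}.

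First I establish a common slope for the truncated-linear $\psi_{p_k}$'s. By construction each $T_j$ strictly exceeds $T_0$ on every $P$-coordinate and lies in $[T_\nu,T_{\nu'}]$, so by Corollary~\ref{prop:tech2}(iv) every critical point $(t^j_{-p_k},\psi_{p_k}(T_j,\bar s_{p_k}))$ sits on the bundling facet $\sum t_{p_i}=c$ of $R_P$, and Proposition~\ref{prop:psiAreFar} yields the equalities
\[
\psi_{p_k}(T_0,\bar s_{p_k})-\psi_{p_k}(T_j,\bar s_{p_k})=\sum_{i\neq k}(t^j_{p_i}-t^0_{p_i}),\quad j=1,\ldots,k-1.
\]
Writing $\psi_{p_k}(T_j,s_{p_k})=\max(0,\lambda_j s_{p_k}-\gamma_j)$ and perturbing $s_{p_k}\to\bar s_{p_k}+\epsilon$ for small $\epsilon$ of either sign, any discrepancy $\lambda_0\neq\lambda_j$ would drive the difference above the Lipschitz bound of Proposition~\ref{prop:lipschitz}; hence all $\lambda_j$ coincide with a common $\lambda>0$, with positivity from $\psi_{p_k}(\cdot,0)=0$ (Lemma~\ref{obs:alphas}(ii)) combined with the strict positivity of $\psi_{p_k}(T_j,\bar s_{p_k})$.

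Next I shift the facet and derive the contradiction. Let $s^*_{p_k}=\bar s_{p_k}-q\nu/(4n)$ for an integer $q$ to be chosen. Linearity at each $T_j$ lowers $\psi_{p_k}(T_j)$ by exactly $\lambda q\nu/(4n)$ while keeping the other coordinates fixed, so each of the $k$ shifted critical points lies on the hyperplane $\sum t_{p_i}=c-\lambda q\nu/(4n)$. The points $T_0,\ldots,T_{k-1}$ are affinely independent because the differences $T_j-T_0$ are dominated by $h\,e_{p_j}$ with only $O(\eta)$-perturbations in the remaining directions, and $\eta\ll h$; they therefore pin down a unique $(k-1)$-dimensional hyperplane which must carry the translated bundling facet. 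Now pick $q$ so that $\lambda q\nu/(4n)$ strictly exceeds $c-\sum_{i\neq k}\alpha_{p_i}+(k-1)4^{k-1}\nu$; this is achievable because the step size $\nu/(4n)$ is much smaller than the window of admissible shifts (whose length is at least $(k-1)h/\lambda$, using the gap $\sum_{i\neq k}(\alpha_{p_i}-t^0_{p_i})>(k-1)(4^{k-1}\nu+h)$ guaranteed by the positioning of $T_0$ relative to $T_{\nu'}$). At the resulting $s^*_{p_k}$, the $P_{-k}$ box-test point given by $t_{p_i}=\alpha_{p_i}-4^{k-1}\nu$ for $i\in[k-1]$ and $t_{p_k}=0$ violates the shifted bundling constraint; since $t_{p_k}=0<\psi_{p_k}$ at $s^*_{p_k}$, task $p_k$ is still allocated to the root, so the violation forces some task in $P_{-k}$ to its leaf, i.e., $P_{-k}$ is not a box at $s^*_{p_k}$, contradicting Definition~\ref{def:potentially-good}.

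The main obstacle is promoting pointwise linearity at just $k$ instances into a rigid translation of the entire $(k-1)$-dimensional bundling facet of $R_P$. This relies on the carefully engineered affine independence of the $T_j$'s (one coordinate shifted by $h$, the others by at most $\eta\ll h$) and on the equality case of Lipschitz forced by Proposition~\ref{prop:psiAreFar}. A subsidiary technical issue is ensuring $s^*_{p_k}$ stays inside the linear (non-truncated) regime of each $\psi_{p_k}(T_j,\cdot)$; the scale $h=(5/8)\cdot 4^k\nu$ together with the parameter bounds from Definition~\ref{def:parameters} provides the required slack.
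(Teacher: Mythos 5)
Your overall strategy is the same as the paper's (common slope for the $\psi_{p_k}$'s via Proposition~\ref{prop:psiAreFar} and the Lipschitz property, then lowering $s_{p_k}$ in steps of $\nu/(4n)$ to drag the bundling boundary toward the plane $t_{p_k}=0$ and contradict the second bullet of Definition~\ref{def:potentially-good}), but your final contradiction step has genuine gaps. First, the box test for $P_{-k}$ at the shifted instance $T^*$ must be performed at the point $t^{*\nu}_{p_i}=\alpha^*_{p_i}-4^{k-1}\nu$, where $\alpha^*_{p_i}=\psi_{p_i}[T^*](\bar s_{p_i})$ are the critical values \emph{after} $s_{p_k}$ has been lowered; you instead choose $q$ so that the point built from the \emph{original} $\alpha_{p_i}$ violates the translated constraint. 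The paper explicitly warns that the $\alpha$'s may change and devotes Claim (v) (via the auxiliary points $\underline T_j$, giving $\alpha^*_{p_j}>t^0_{p_j}+h$) to controlling them; without some bound relating $\alpha^*$ to the geometry you have, your claimed violation says nothing about the actual box-test point. Second, your conclusion ``since $t_{p_k}=0<\psi_{p_k}$ at $s^*_{p_k}$, task $p_k$ is still allocated to the root'' is unsupported: linearity and positivity of $\psi_{p_k}$ are hypotheses only at the specific points $t^j_{-p_k}$, not at the box-test coordinates (where the other entries are near $\alpha^*$). If $p_k$ is the task that leaves the root there, the failure of the bundling constraint gives no contradiction with $P_{-k}$ being a box. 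The paper avoids this by proving directly that \emph{all} tasks of $P_{-k}$ go to the leaves at $t^{*\nu}$: it takes points $\overline T_j$ just above each shifted boundary point, shows each lies in $R_{\emptyset|P_{-k}}$ by a weak-monotonicity computation using $\psi^j_k(s^*_{p_k})<\nu$, and then compares the \emph{average} $t^c$ of these points coordinatewise with $t^{*\nu}$; this averaging is precisely where the $k$ instances with the $h$-versus-$\eta$ geometry and the constant $h=(5/8)4^k\nu$, $k\ge 3$, are used, and your proposal never performs it.

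There are also missing quantitative checks that the paper handles in Claims (iii) and (iv): you need $\gamma\ge 0$ and $\lambda<2n$ so that the discrete steps $\nu/(4n)$ move each $\psi^j_k$ by at most $\nu/2$ and some admissible $q^*\in\{0,\dots,4n/\nu\}$ (the only shifts for which Definition~\ref{def:potentially-good} guarantees box-ness of $P_{-k}$) lands all $\psi^j_k(s^*_{p_k})$ in $(0,\nu)$ while $s^*_{p_k}>0$; you assert ``the required slack'' exists but prove neither the slope bound nor that your chosen $q$ lies in the admissible discrete range. Finally, your claim that affine independence of the $k$ points forces the whole facet to translate rigidly is both unnecessary (a single boundary point on a constraint of the form $\sum_{i\in[k]}t_{p_i}\le c$ pins down its constant) and insufficient as stated: what must be argued is that at $s^*_{p_k}$ each shifted critical point still lies on a bundling boundary, which is exactly what Proposition~\ref{prop:FarPsiAreBundling} and Corollary~\ref{cor:FarPsiAreBundling} (the paper's Claim (ii)(c)) provide and what your write-up only gestures at.
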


\begin{proof}
Towards a contradiction, suppose that such instances $T_0, T_1,\ldots , T_{k-1}$ exist. For simplicity of notation, throughout the proof we write $\psi_k^j (s_{p_k})$ for the critical value function $\psi[T_j]_{p_k}(s_{p_k}),$ defined for the fixed values $t_{-p_k}^j$ and  $\bar s_{-p_k}.$ 

The high-level idea is the following. Let us ignore all other
dimensions, except for those $k$ coordinates of tasks $P.$ Due to
Corollary~\ref{prop:tech2}((ii),(iii)and (iv) respectively), we know
that in the allocation of $T_j,$ $j=0,\,1,\ldots, k-1$, all tasks
are given to the leaf, when we change the $t_{p_k}$-value to $0,$ all
tasks $p_i\in P$ are given to the root, and the respective boundary
points $(t^j_{-p_k},\psi_k^j(\bar s_{p_k}))\in \mathbb R^k$ are on a
bundling boundary facet of $R_P.$ Now, as we reduce $s_{p_k},$ due to
the linearity of each boundary function $\psi^j_k(s_{p_k}),$ these $k$
boundary points move by the same speed $\lambda$ towards the
coordinate-plane $t_{p_k}=0,$ so that all the time they remain on a
bundling boundary.  For some positive $s^*_{p_k}$ this bundling
boundary facet will reach the plane $t_{p_k}=0,$\footnote{Since we
  discretize (in order to have finitely many reduction steps for
  $s_{p_k}$), we have to 'stop' the considered boundary points
  slightly above the plane $t_{p_k}=0.$} and there it will `form' a
bundling boundary of dimension $k-1$ for the task-set $P_{-k}.$ But
this will prove that for this $s^*_{p_k}$ the set $P_{k}$ cannot be
a box, contradicting the fact that $P$ is \slicedoff\ box.

We start with some observations about the critical values $\psi_k^j
(\bar s_{p_k}),$ and the respective boundary points of $R_P$ when
$s_{p_k}=\bar s_{p_k}$.  Note that for every $j\in [k-1]$ by
definition $ t^0_{p_i}< t^j_{p_i}$ holds for all $p_i\in P.$
Therefore, by Proposition~\ref{prop:psiAreFar} $$\psi^0_k (\bar
s_{p_k})-\psi_k^j (\bar s_{p_k})=|t_{-p_k}^0-t_{-p_k}^j|_1=\sum_{i\in
  [k-1]}(t^j_{p_i}-t^0_{p_i}).$$ Moreover, for every $j\in [k-1]$ this
difference roughly equals $h,$ in particular
$$\, h<\sum_{i\in [k-1]}(t^j_{p_i}-t^0_{p_i})<h+n\cdot \eta.$$ 

Therefore, all these critical values are almost the same, that is, for
any pair of indices $j,j'\in[k-1]$
it holds $$|\psi_k^j (\bar s_{p_k})- \psi_k^{j'} (\bar s_{p_k})|<
n\cdot\eta.$$ Next we show that these relative positions remain the
same {\em for every $s_{p_k}.$} The following is a generalization of Claim
(i) in the proof of Theorem~\ref{thm:basecase}.

\begin{claim*}[i]

The boundary functions for $t^j$, $j=0,\,1,\ldots , k-1$,
  \begin{align*}
      \psi_k^j (s_{p_k}) & =\lambda(t_{-p_k}^j) s_{p_k} - \gamma(t_{-p_k}^j)
  \end{align*}
  have the same linear coefficient, i.e., $\lambda(t_{-p_k}^0)=\lambda(t_{-p_k}^1)= \ldots =\lambda(t_{-p_k}^{k-1})=\lambda$. 
\end{claim*}
\begin{proof} We prove that all linear coefficients are equal to $\lambda(t_{-p_k}^0).$ 
  Suppose not, and assume that $\lambda(t_{-p_k}^0)>\lambda(t_{-p_k}^j)$ for some $j\in [k-1].$ For $s_{p_k}=\bar s_{p_k}+\epsilon$, for some $\epsilon>0$, we get
  \begin{align*}
    \psi_k^0 (\bar s_{p_k}+\epsilon)-\psi_k^j (\bar s_{p_k}+\epsilon)&=\psi_k^0 (\bar s_{p_k})-\psi_k^j (\bar s_{p_k})+\left(\lambda(t_{-p_k}^0)-\lambda(t_{-p_k}^j)\right) \epsilon >|t_{-p_k}^0-t_{-p_k}^j|_1,
  \end{align*}
  which violates the Lipschitz property (Proposition~\ref{prop:lipschitz}). The case of $\lambda(t_{-p_k}^0)<\lambda(t_{-p_k}^j)$ is similar, only that we now consider $s_{p_k}=\bar s_{p_k}-\epsilon.$
\end{proof}
As a corollary we obtain that the difference between (positive) critical values $\psi_k^0$ and $\psi_k^j$ remains the same \emph{for every} $s_{p_k}$.

  \begin{claim*}[ii] Let $j\in [k-1].$ \emph{For every} $s_{p_k}$, for which $\psi_k^j ( s_{p_k})>0,$ the following hold:  
\begin{enumerate}
\item[(a)]  $ \psi_k^0 (s_{p_k})-\psi_k^j (s_{p_k})=|t_{-p_k}^0-t_{-p_k}^j|_1;$
  
\item[(b)]  $|\psi_k^j (s_{p_k})- \psi_k^{j'} ( s_{p_k})|\leq n\cdot \eta$ for every $j'\in [k-1];$ 
\item[(c)] the point $(t^j_{-p_k}, \psi_k^j(s_{p_k}))\in \mathbb R^k$ is on a bundling boundary, in particular on a boundary of $R^{T'}_P,$ where $T'$ is obtained from $T$ by replacing $\bar s_{p_k}$ by $s_{p_k},$ for $j=0, 1, \ldots, k-1;$
\end{enumerate}
\end{claim*}

\begin{proof} (a) and (b) follow from Claim (i); (c)  follows from (a) and from Proposition~\ref{prop:FarPsiAreBundling} and Corollary~\ref{cor:FarPsiAreBundling}.
\end{proof}

Let $\psi^1_k(s_{p_k}) = \max(0, \lambda\, s_{p_k} - \gamma).$ Next we prove lower and upper bounds for $\gamma$ and  $\lambda.$

\begin{claim*}[iii] Let $\psi^1_k(s_{p_k})=\max(0, \lambda\, s_{p_k} -
  \gamma)$ for some $\lambda$ and $\gamma$ that do not depend on
  $t_{p_k}$
  and $s_{p_k}$. Suppose further that the approximation
  ratio is less than $n,$ then 
  \begin{enumerate}
\item[(a)] $0\leq  \gamma < \lambda; $
\item[(b)] $\frac{1}{2n}<\lambda<2n.$ 
\end{enumerate}
\end{claim*}

\begin{proof} (a) By Corollary~\ref{prop:tech2} (iv) holds that $\psi^1_k (\bar s_{p_k})>0$. Now notice that $\lambda\cdot 1-\gamma=\psi^1_k(1)\geq \psi^1_k(\bar s_{p_k}) > 0.$ This implies $\gamma< \lambda$. 

  Next we show $\gamma\geq 0.$ We claim that $\psi_k^1(s_{p_k}=0)=0.$
  Assume the contrary, that $\psi_k^1(s_{p_k}=0)>0.$ Then, by Claim
  (ii) (c), the $(t_{-p_k}^1\,, \, \psi_k^1(0))$ is a point with
  positive coordinates on a bundling boundary of $R_P^{T'}$ where $T'$
  is obtained from $T$ by setting $s_{p_k}=0.$ On the other hand, then
  by Observation~\ref{obs:Rnonempty} $s_{p_k}>0$ must hold,
  contradiction. So we have $\psi_k^1(s_{p_k}=0)=0.$ Consequently,
  $\lambda\cdot 0 - \gamma\leq 0,$ so $0\leq \gamma.$

(b) We prove $\lambda<2n$ first. 
Assume that $\lambda\geq 2n.$ We set $s_{p_k}=n+1,\, t_{-p_k}=t^1_{-p_k}$ and $t_{p_k}=\psi_k^1(s_{p_k})-\epsilon=\lambda s_{p_k}-\gamma-\epsilon.$ Then task $p_k$ is allocated to the root, and the makespan is $\mech\geq \lambda s_{p_k}-\gamma-\epsilon \geq \lambda s_{p_k}-\lambda-\epsilon\geq \lambda (s_{p_k}-1)-\epsilon\geq 2n\cdot n-\epsilon.$

The contribution of the tasks $[m]\setminus P$ to the optimal makespan is $0,$ since the other leaves are trivial. We have $\opt\leq s_{p_k}+\sum_{i\neq k}\bar s_{p_i}\leq n+1+n-1=2n.$ We obtain $\mech/\opt\rightarrow n$ when $\epsilon \rightarrow 0,$ which concludes the proof of the upper bound.

In order to show the lower bound for $\lambda,$ we  assume for contradiction $\lambda\leq  1/2n,$ and set $s_{p_k}=2n^2,$ and $t_{p_k}=\lambda s_{p_k}-\gamma+\epsilon,$ so that the leaf gets task $p_k.$

Then $\mech\geq 2 n^2,$ and because the contribution of the other tasks is at most $n-1,$ using $\gamma\geq 0,$ we  obtain $\opt\leq t_{p_k}+n-1\leq (1/2n)\cdot 2n^2-\gamma+\epsilon+n-1\leq 2n.$ The approximation factor is at least $n,$ a contradiction.
\end{proof}

\begin{claim*}[iv]
  There exists $q^*\in \{0,1,\ldots,4n/\nu\}$ such that $\bar s_{p_k}\geq q^*  \nu/(4n),$ and $\psi^j_k(\bar s_{p_k}-q^*  \nu/(4n))$ is in the interval $(0, \nu)$, for every $j\in [k-1].$ 
\end{claim*}
\begin{proof} By Corollary~\ref{prop:tech2} (iv) it holds for every
  $j\in[k]$ that $\psi^j_k(\bar s_{p_k})>0.$ If also $\psi^j_k(\bar
  s_{p_k})<\nu$ holds for every $j\in[k-1],$ then the claim holds
  trivially with $q^*=0.$
  
  Otherwise let $\bar q\leq 4n/\nu$ be the largest integer such that
  $\bar s_{p_k}\geq \bar q \nu/(4n)$ (recall that $\bar s_{p_k}\leq
  1$). Consider the sequence of values $\psi^1_k(\bar s_{p_k}-q
  \frac{\nu}{4n}),$ for $q=0,1,\ldots,\bar q.$ By Claim (iii),
  $\lambda< 2n$, so successive values in this sequence differ by at
  most $\lambda (\nu/4n)\leq 2n\nu/4n=\nu/2.$ Moreover, because
  $\gamma\geq 0,$ we have $\psi^1_k(0)=0.$ Therefore for some $0\leq
  q^*\leq \bar q,$ the $\psi^1_k(\bar s_{p_k}-q^*\cdot\frac{\nu}{4n})$
  falls between $\nu/4$ and $3\nu/4.$ Finally, by Claim (ii) (b)
  it follows that $\psi^j_k(\bar s_{p_k}-q^*\cdot\frac{\nu}{4n})\in (0,
  \nu)$ for \emph{every} $j\in[k-1],$ given that $n\eta< \nu/4.$
\end{proof}

Let $s^*_{p_k}:=\bar s_{p_k}-q^* \nu/(4n)$ be the value from the
previous claim for which $\psi^j_k(s^*_{p_k}) \in (0, \nu)$, for every
$j\in[n-1].$ For the rest of the proof we fix $(s^*_{p_k}, \bar
s_{-p_k}).$ We know from Claim (ii) (c) that for every $j\in [k-1],$
in an arbitrarily small neighborhood of the point $(t^j_{-p_k},
\psi^j_k(s^*_{p_k}))\in \mathbb R^k$ there are points $t_P$ so that
all tasks $p_i\in P$ are allocated to the root, and also there are
points so that none of the tasks $p_i\in P$ is allocated to the root.

Recall that  $t_{p_i}=0$ for every $p_i\in P$ in the instance $T=(t,\bar s).$ Let $T^*=(t, (s^*_{p_k}, \bar s_{-p_k}))$ denote the instance that we obtain from $T$ by replacing $\bar s_{p_k}$ by $s^*_{p_k}.$  Notice that $T^*$ is standard instance for the star $P_{-k}=P\setminus {p_k}.$ We need to be careful, because all we know about the allocation of instance $T^*$ are the $k$ shifted points $(t^j_{-p_k},
\psi^j_k(s^*_{p_k}))$ of a bundling boundary (in particular, the $\alpha_{p_j}$ values may have changed).

Here is how we finish the proof. For every $j\in [k-1]$ we will define
two nearby instances to $(t^j_{-p_k}, \psi^j_k(s^*_{p_k})),$ called
$\underline T_j$ and $\overline T_j$ so that for each of them the
$t_{p_k}$ entry is zero, the $s_{p_k}$ entry equals $s^*_{p_k},$ and
for $\underline T_j$ all tasks $p_i\in P_{-k}$ are given to the root,
and for $\overline T_j$ all tasks $p_i\in P_{-k}$ are given to the
leaves. This will prove in the end, that for $t_{p_k}=0$ and
$s_{p_k}=s^*_{p_k}$ the task set $P_{-k}$ is not a box, and thus
contradict the assumption that $P$ is a \slicedoff\ box (the second
bullet of Definition~\ref{def:potentially-good}).

 We will use the notation $\alpha^*_{p_j}%
 $ for the (new) $\alpha$ values of the tasks in the instance $T^*$
 (see Definition~\ref{def:alphai}).  In the next claim we define and
 use the $\underline T_j$ instances to lower bound
the $\alpha^*_{p_j}$ values for tasks in $P_{-k}$ and for the instance
 $T^*.$
 
\begin{claim*}[v]
  $\alpha^*_{p_j}> t^0_{p_j}+h$  for every $j\in [k-1].$ 
\end{claim*}
\begin{proof} Fix an arbitrary $j\in [k-1].$ For simplicity, we ignore
  the coordinates $[m]\setminus P.$ Define $\underline T_j=(\underline
  t^{j}, (s^*_{p_k}, \bar s_{-p_k})),$ so that $ \underline
  t^{j}_{p_k}=0,\quad$ and $\underline
  t^{j}_{p_i}=t^j_{p_i}-\epsilon,\,$ for some small $\epsilon$ for
  every $p_i\in P_{-p_k}.$ Since $\underline t^{j}_{p_i}<t^j_{p_i}$
  for each $p_i\in P_{-k},$ and $\underline t^{j}_{p_k}=0<
  \psi^j_k(s^*_{p_k}),$ every $p_i\in P$ must be allocated to the root
  for $\underline T_j,$ (by Claim (ii) (c), and weak
  monotonicity). This implies $\underline t^{j}_{-p_k}\in
  R^{T^*}_{P_{-k}},$ and thus $\alpha^*_{p_j}\geq \underline  t^{j}_{p_j}.$ Given that for $\epsilon$ small enough $\underline  t^{j}_{p_j}=t^j_{p_j}-\epsilon> t^0_{p_j}+h,$ we obtain $\alpha^*_{p_j}\geq \underline  t^{j}_{p_j}=t^j_{p_j}-\epsilon> t^0_{p_j}+h.$   
\end{proof}

In the last claim we define the $\overline T_j$ instances and use them to prove that for $T^*_\nu(P_{-k})$ the jobs of $P_{-k}$ are all allocated to the leaves (see Definition~\ref{def:nu} for $T^*_\nu$). This contradicts the assumption that for $s^*_{p_k}$ the set $P_{-k}$ is a box, and thus also that $P$ is a \slicedoff\ box, which will conclude the proof.

\begin{claim*}[vi]
For the instance $T^*_\nu(P_{-k})$ none of the tasks in  $P_{-k}$ is allocated to the root.
\end{claim*}
\begin{proof} We ignore the coordinates $[m]\setminus P,$ and
  $(s^*_{p_k}, \bar s_{-p_k}))$ is fixed.  Take first an arbitrary
  $j\in[k-1].$ Let $t'=(t^j_{-p_k}, \psi_k^j(s_{p_k}^*))_{P},$ denote
  the point on the bundling boundary (by Claim (ii) (c)) and $a'$ the
  respective allocation for $t'.$ Assume w.l.o.g. that $a'$ assigns
  all tasks in $P$ to the leaves (otherwise we slightly increase all
  $t$-values in $P$ by an infinitesimally small amount).

We define $\overline T_j=(\overline t^{j}, (s^*_{p_k}, \bar
s_{-p_k}))\,$ as follows: let $ \overline t^{j }_{p_k}=0>
\psi^j_k(s^*_{p_k})-\nu,$ and $\overline t^{j
}_{p_i}=t^j_{p_i}+\nu,\,$ for each $p_i\in P_{-p_k}.$ We show first
that for $\overline t^{j }$ every $p_i\in P_{-k}$ is allocated to the
leaves, i.e., $\overline t^{j }\in R^{T^*}_{\emptyset|P_{-k}}.$ Let
$\overline t=\overline t^{j }_P$ and $\overline a$ be the allocation
for $\overline T_j.$ Then by weak monotonicity for the root

\begin{align*}0\geq \sum_{i\in [k]} (\overline a_i-a_i') (\overline t_i^{ }-t_i')
= \sum_{i\in [k]} (\overline a_i-0) (\overline t_i^{ }-t_i')=\sum_{i\in [k-1]} \overline a_i \cdot \nu+\overline a_k (0-t_k').\end{align*} 

The first equality holds because $a_i'=0$ by assumption that all tasks
go to the leaves in $t';$ the second equality holds by the definition
of $\overline t_i=\overline t^{j }_{p_i},$ and $t'_{i}=t^j_{p_i}.$

Since $t'_k=\psi^j_k(s^*_{p_k})<\nu$ it must hold

\begin{align*}
\sum_{i\in [k-1]} \overline a_i \cdot \nu < \overline a_k\cdot \nu.
\end{align*}

But since  $\overline a_i\in\{0,1\},$ the above implies $\overline a_i=0$ for every $i\in[k-1],$ concluding that
$\overline t^{j }\in R^{T^*}_{\emptyset|P_{-k}}.$

Now let $t^c$ be the mean of the points $\{\overline t^{1 }, \overline t^{2 }, \ldots ,\overline  t^{(k-1) }\}.$ That is, $$t^c_{p_i}=\frac{\sum_{j=1}^{k-1} \overline t^{j }_{p_i}}{k-1}\qquad \forall p_i\in P.$$ Then  $t^c_{p_k}=0,$ and for all $p_i\in P_{-k}$
\begin{equation}\label{eq:five} t^c_{p_i}<t^0_{p_i}+h/(k-1)+\eta+\nu<t^0_{p_i}+h/2+2\nu.\end{equation} 
Moreover, since $\overline t^{j }\in R^{T^*}_{\emptyset|P_{-k}}$ for every $j,$ the same holds for their convex combination $t^c.$ We use the notation $T^*_\nu(P_{-k})=(t^{*\nu}, (s_{p_k}^*, \bar s_{-p_k}));$ by Definition~\ref{def:nu} $t^{*\nu}_{p_i}=\alpha^*_{p_i}-4^{k-1}\nu$ for $p_i\in P_{-k}.$ 

We conclude the proof by showing that $t^c_{p_i}< t^{*\nu}_{p_i}$ for every $p_i\in P_{-k}.$
This will prove (by weak monotonicity) that also $t^{*\nu}\in
R^{T^*}_{\emptyset|P_{-k}},$ and therefore $P_{-k}$ is not a box for
$T^*,$ a contradiction.  We have 
\begin{align*}t^{*\nu}_{p_i}=\alpha^*_{p_i}-4^{k-1}\nu>t^0_{p_i}+h-4^{k-1}\nu>
  t^0_{p_i}+h/2+2\nu>t^c_{p_i}.\end{align*} The first inequality holds
by Claim (v); the second inequality holds if $h=(5/8)4^{k}\nu$ and
$k\geq 3;$ the last inequality holds by inequality~\ref{eq:five}. This
concludes the proof.
\end{proof}
\end{proof}

\begin{figure}
  \centering
  
\newcommand{\Depth}{2}
\newcommand{\Height}{2}
\newcommand{\Width}{2}
\newcommand{\Fraction}{0.5}

\begin{tikzpicture}[scale=1.5]
\coordinate (O) at (0,0,0);
\coordinate (A) at (0,\Width,0);
\coordinate (B) at (0,\Width,\Height);
\coordinate (C) at (0,0,\Height);
\coordinate (D) at (\Depth,0,0);
\coordinate (E) at (\Depth,\Width,0);
\coordinate (F) at (\Depth,\Width,\Height);
\coordinate (G) at (\Depth,0,\Height);

\coordinate (BF) at (\Fraction*\Depth,\Width,\Height);
\coordinate (EF) at (\Depth,\Width,\Fraction*\Height);
\coordinate (GF) at (\Depth,\Fraction*\Width,\Height);

\draw[blue,fill=yellow!80] (O) -- (C) -- (G) -- (D) -- cycle;%
\draw[blue,fill=blue!30] (O) -- (A) -- (E) -- (D) -- cycle;%
\draw[blue,fill=red!10] (O) -- (A) -- (B) -- (C) -- cycle;%
\draw[blue,fill=red!20,opacity=0.8] (D) -- (E) -- (EF) -- (GF) -- (G) -- cycle;%
\draw[blue,fill=red!20,opacity=0.6] (C) -- (B) -- (BF) -- (GF) -- (G) -- cycle;%
\draw[blue,fill=red!20,opacity=0.8] (A) -- (B) -- (BF) -- (EF) -- (E) -- cycle;%

\draw[blue,fill=blue!40,opacity=1] (BF) -- (EF) -- (GF);

\coordinate (AA) at (0,1.4*\Width,0);
\coordinate (CC) at (0,0,1.4*\Height);
\coordinate (DD) at (1.4*\Depth,0,0);

\draw[->] (O) -- (AA);
\draw[->] (O) -- (CC);
\draw[->] (O) -- (DD);

\draw (2,0,0) node[anchor=north west] {$\alpha_1$};
\draw (0,2,0) node[anchor=south west] {$\alpha_3$};
\draw (0,0,2) node[anchor=north west] {$\alpha_2$};

\end{tikzpicture}
\begin{tikzpicture}[scale=1.5]
\coordinate (O) at (0,0,0);
\coordinate (A) at (0,\Width,0);
\coordinate (B) at (0,\Width,\Height);
\coordinate (C) at (0,0,\Height);
\coordinate (D) at (\Depth,0,0);
\coordinate (E) at (\Depth,\Width,0);
\coordinate (F) at (\Depth,\Width,\Height);
\coordinate (G) at (\Depth,0,\Height);

\coordinate (BF) at (\Fraction*\Depth,\Width,\Height);
\coordinate (EF) at (\Depth,\Width,\Fraction*\Height);
\coordinate (GF) at (\Depth,\Fraction*\Width,\Height);

\coordinate (T0) at (1.5,1.5,2);

\coordinate (BGF) at (1.5,1.5,2);
\coordinate (EGF) at (2,1.5,1.5);

\coordinate (Aa) at (0,\Width-1.7,0);
\coordinate (Ba) at (0,\Width-1.7,\Height);
\coordinate (Ea) at (\Depth,\Width-1.7,0);
\coordinate (BFa) at (\Fraction*\Depth,\Width-1.7,\Height);
\coordinate (EFa) at (\Depth,\Width-1.7,\Fraction*\Height);
\coordinate (GFa) at (\Depth,\Fraction*\Width-1.7,\Height);

\coordinate (BGFa) at (1.3,0,2);
\coordinate (EGFa) at (2,0,1.3);

\draw[blue,fill=yellow!80] (O) -- (C) -- (BGFa) -- (EGFa) -- (D) -- cycle;%
\draw[blue,fill=blue!30] (O) -- (Aa) -- (Ea) -- (D) -- cycle;%
\draw[blue,fill=red!10] (O) -- (Aa) -- (Ba) -- (C) -- cycle;%
\draw[blue,fill=red!20,opacity=0.8] (Aa) -- (Ba) -- (BFa) -- (EFa) -- (Ea) -- cycle;%

\draw[blue,fill=blue!40,opacity=1] (BFa) -- (EFa) -- (EGFa) -- (BGFa);

\draw[blue] (O) -- (C) -- (G) -- (D) -- cycle;%
\draw[blue] (O) -- (A) -- (E) -- (D) -- cycle;%
\draw[blue] (O) -- (A) -- (B) -- (C) -- cycle;%
\draw[blue,opacity=0.8] (D) -- (E) -- (EF) -- (GF) -- (G) -- cycle;%
\draw[blue,opacity=0.6] (C) -- (B) -- (BF) -- (GF) -- (G) -- cycle;%
\draw[blue,opacity=0.8] (A) -- (B) -- (BF) -- (EF) -- (E) -- cycle;%

\coordinate (AA) at (0,1.4*\Width,0);
\coordinate (CC) at (0,0,1.4*\Height);
\coordinate (DD) at (1.4*\Depth,0,0);

\draw[->] (O) -- (AA);
\draw[->] (O) -- (CC);
\draw[->] (O) -- (DD);

\draw (2,0,0) node[anchor=north west] {$\alpha^*_1$};
\draw (0,0.3,0) node[anchor=south west] {$\alpha^*_3$};
\draw (0,0,2) node[anchor=north west] {$\alpha^*_2$};

\draw[red,thick,dotted] (0,1.8,1.8) -- (1.8,1.8,1.8) node [anchor=south ]{${t}^{\nu'}$} -- (1.8,1.8,0) ;
\draw[red,thick,dotted] (1.8,0,1.8) -- (1.8,1.8,1.8) ;

\draw[red,thick,dotted] (0,0,1.8) -- (1.8,0,1.8) node [anchor=north]{${t^*}^\nu$} -- (1.8,0,0) ;

\end{tikzpicture}

\caption{The figure is a high-level illustration of the main argument
  of the proof of Lemma~\ref{lemma:linearBoundary} for $k=3$. The left
  figure corresponds to a \slicedoffx\ box, for $\bar s$ while the right figure
  illustrates the shift of the boundaries when $s_3$ is decreased to $s^*_3$,
  which leads to a contradiction, as the last figure is not a box for
  tasks $1,2$.}
\label{fig:3dshift-induction-step}
\end{figure}
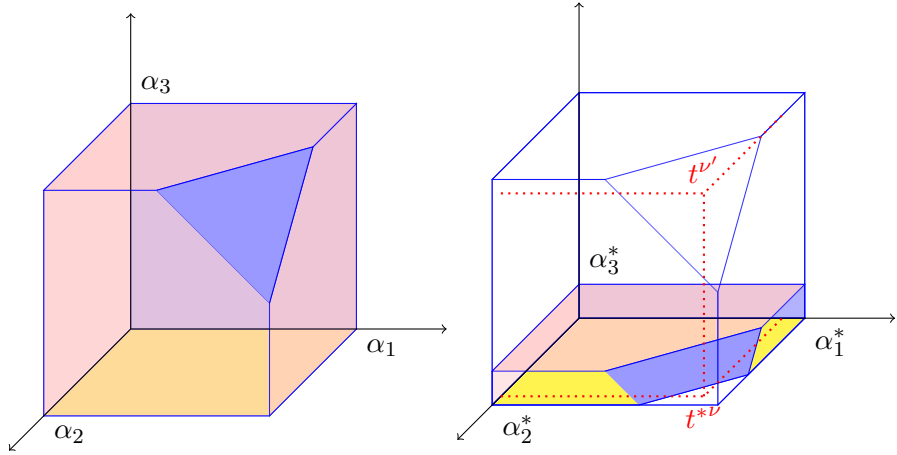

\paragraph{Case 2: Half-bundling boundary functions.}

We consider the case that for some $T_\nu(P)\leq \hat T \leq
T_{\nu'}(P),$ the slice mechanism $(p_k,p_k')$ is non-crossing, and so
that it has a boundary half-bundling at $p_k$ (or even fully bundling
at both $p_k$ and $p_k'$). Recall that this includes one-dimensional
bundling mechanisms as well as relaxed affine minimizers that might
become nonlinear for small $s_{p_k}.$ Using a more technically
involved argument, it is possible to show that even in this case, the
boundary function $\psi_k(s_{p_k})$ must be linear, or else the
(whole) mechanism has high approximation factor.\footnote{Exploiting
  the fact that $R_{\{p_k,p_k'\}}$ must be crossing, and $T$ is in
  general position, one can argue similarly as in
  Lemma~\ref{prop:tech2dim}. } However, for the sake of simpler
presentation, instead we argue the same way as in \cite{CKK21b}: For
arbitrary \emph{fixed} $T_\nu(P)\leq \hat T \leq T_{\nu'}(P),$ we
exclude that the slice mechanism $(p_k,p_k')$ is half-bundling at
$p_k,$ \emph{for many } different siblings $p_k'\in Q_k.$ The next
lemma is a variant of (\cite{CKK21b} Lemma 23).

\begin{lemma}\label{lem:noHalfBundling} Suppose that $P=\{p_1,p_2,\ldots ,p_k\}$ is a \slicedoffx\ box for standard instance $T=(t, \bar s),$ and corresponding leaf set $\mathcal C=\{Q_1, Q_2, \ldots Q_k\},$ and the approximation ratio is less than $n.$ Let 
  $T_\nu(P)\leq \hat T \leq T_{\nu'}(P),$ be an arbitrary input. Then
  fewer than $n^2/\xi$ siblings $p_k'\in Q_k$ of $p_k$ have a $2\times
  2$ slice mechanism $(p_k, p_k')$ in $\hat T$ that is half-bundling
  at $p_k.$
\end{lemma}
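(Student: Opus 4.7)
The plan is to derive a contradiction with the approximation bound by locating the allocation of each half-bundling sibling on the leaf $Q_k$ in $\hat T$, and then comparing the resulting mechanism makespan with the optimum.

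Fix any sibling $p_k':=p_k^{(j)}$ of $p_k$ whose $(p_k,p_k')$-slice mechanism at $\hat T$ is half-bundling at $p_k$. By Corollary~\ref{prop:tech2}(iv), the critical value $\hat\psi_{p_k}:=\psi_{p_k}(\hat t_{-p_k},\bar s)$ is strictly positive and the point $(\hat t_{P\setminus\{p_k\}},\hat\psi_{p_k})$ lies on the bundling facet of $R_P$. Restricted to the $(p_k,p_k')$-slice this forces the bundling line to have equation $t_{p_k}+t_{p_k'}=\hat\psi_{p_k}$. Since Corollary~\ref{prop:tech2}(ii) gives $\hat t_{p_k}>\hat\psi_{p_k}$, the value-profile $(\hat t_{p_k},0)$ of the slice in $\hat T$ lies strictly above the bundling line.

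The key geometric claim is that in a half-bundling allocation at $p_k$ the complement of $R_{\{p_k,p_k'\}}$ across the bundling boundary is entirely $R_{\emptyset|\{p_k,p_k'\}}$: any putative region $R_{\{p_k'\}}$ with $p_k'$ on the root and $p_k$ on the leaf would need a vertical facet $t_{p_k}=c_{p_k}$ to separate it from $R_{\{p_k,p_k'\}}$, which is precisely forbidden by the half-bundling hypothesis $c_{p_k}>c=\hat\psi_{p_k}$. Combining this with the $2\times 2$ characterization (Theorem~\ref{theo:addchar}, Observation~\ref{obs:possibleFigures}) and with Lemma~\ref{prop:quasi} applied to the sibling pair $(p_k,p_k')$ in a suitable perturbation of $\hat T$ within the standard regime --- which would force unbounded approximation ratio if any such residual region persisted --- the mechanism must allocate $p_k'$ to the leaf $Q_k$ on input $\hat T$.

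Now suppose for contradiction that $N\geq n^2/\xi$ distinct siblings $p_k^{(1)},\dots,p_k^{(N)}\in Q_k$ have this half-bundling property. Each of them, together with $p_k$ itself (which is on $Q_k$ by Corollary~\ref{prop:tech2}(ii)), is allocated to $Q_k$ by the mechanism on input $\hat T$. Because $\bar s_{p_k^{(j)}}>\xi$ for every task of a standard leaf, the makespan of the mechanism satisfies
\begin{align*}
  \mech(\hat T)\;\geq\;\bar s_{p_k}+\sum_{j=1}^N \bar s_{p_k^{(j)}}\;>\;N\xi.
\end{align*}
For the optimum, assign each $p_i\in P$ to its own standard leaf $Q_i$ at cost $\bar s_{p_i}<1$ and assign every remaining task (all with $t=0$ in $\hat T$) to the root at zero cost; every machine then has load at most $1$, so $\opt(\hat T)\leq 1$. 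Hence $\mech(\hat T)/\opt(\hat T)>N\xi\geq n^2\geq n$, contradicting the hypothesis that the approximation ratio is less than $n$.

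The principal technical obstacle is the geometric claim above: ruling out a persistent single-task region $R_{\{p_k'\}}$ in a half-bundling slice requires combining the $2\times 2$ characterization with the sibling hypothesis via Lemma~\ref{prop:quasi} (which bans quasi-bundling or quasi-flipping sibling allocations in standard instances) and leveraging the continuity requirement of $\hat T$ to avoid singular points. This is essentially the content of Lemma~23 of~\cite{CKK21b} adapted to the present notation.
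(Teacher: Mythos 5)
Your overall strategy is the paper's: at a single input you force every half-bundling sibling onto the leaf $Q_k$, so that $\mech\geq (n^2/\xi)\cdot\xi= n^2$ while $\opt\leq 1$, contradicting the approximation bound. However, the justification that the siblings actually land on $Q_k$ has two problems as written. First, the appeal to Lemma~\ref{prop:quasi} is not legitimate: that lemma is stated for standard instances, i.e.\ all root values equal to $0$, while in $\hat T$ the coordinates $\hat t_{p_i}$, $i\in[k-1]$, sit near the critical values $\alpha_{p_i}$; there is no ``perturbation of $\hat T$ within the standard regime'' that preserves the $(p_k,p_k')$-slice, and the very reason this case needs a separate argument is that Lemma~\ref{prop:quasi} is unavailable at $\hat T$. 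Your preceding facet argument --- half-bundling means no facet $t_{p_k}=c_1$, hence no region across the bundling boundary in which the root keeps only $p_k'$ --- is the correct reason and is all the paper uses (note also that the intercept $\hat\psi_{p_k}$ of the slice's bundling line follows from the definition of half-bundling at $t_{p_k'}=0$, not from the bundling facet of $R_P$, which concerns only tasks of $P$).

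Second, and this is the genuine gap, you evaluate the allocation at $\hat T$ itself and assert $\hat t_{p_k}>\hat\psi_{p_k}$ ``by Corollary~\ref{prop:tech2}(ii)''. That corollary is established for points strictly between $T_\nu$ and $T_{\nu'}$, whereas the lemma allows an arbitrary $\hat T$ in the closed box; and even in the interior it only yields that $p_k$ goes to the leaf, i.e.\ $\hat t_{p_k}\geq\hat\psi_{p_k}$. If equality holds, or if $\hat T$ lies on the boundary of the box, the point $(\hat t_{p_k},0)$ is not strictly beyond the bundling line $t_{p_k}+t_{p_k'}=\hat\psi_{p_k}$, and your geometric step no longer pins the allocation of $p_k'$ (nor of $p_k$) to the leaves, so the makespan count can fail for such $\hat T$. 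The paper sidesteps exactly this by never using $\hat t_{p_k}$: it evaluates at $t^+=(\hat t_{P_{-k}},\,t_{p_k}=\psi_{p_k}+\epsilon,\,t_{-P})$, which is strictly above the threshold for every admissible $\hat T$ and lies in the same slice (the slice's shape depends only on the values of the other tasks), and runs the identical counting argument there. With that one modification, and dropping the detour through Lemma~\ref{prop:quasi}, your proof coincides with the paper's.
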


\begin{proof} Let $\hat T=((\hat t_P, t_{-P}), \bar s),$ (since by
  definition $t_{-P}=\hat t_{-P}$). We fix and omit $\bar s$ from the
  notation in the rest of the proof.  Let $\psi_{p_k}=\psi_{p_k}[\hat
  T]$ be the critical value of $t_{p_k}$ in $\hat T.$

  Let $p_k'$ be an arbitrary sibling of $p_k,$ such that the slice
  $(p_k,p_k')$ in $\hat T$ is half-bundling at $p_k$ (see
  Figure~\ref{fig:case2-half-bundling}). Then, for the input point
  slightly above the boundary $t^+=(\hat t_{P_{-k}},
  t_{p_k}=\psi_{p_k}+\epsilon, t_{-P})$ neither of the tasks $p_k$ or
  $p_k'$ is allocated to the root. This follows from the definition of
  $\psi_{p_k},$ the fact that $t_{p_k'}=0$ (because $T$ is standard
  instance for $P$), and that the boundary for $p_k$ in the slice
  $(p_k,p_k')$ is locally bundling at $t_{p_k'}=0$ by the definition of
  half-bundling allocations (Definition~\ref{def:half-bundling}).

  Assuming by contradiction that $p_k$ has at least $n^2/\xi$
  different siblings $p_k'$ with which it is half-bundling, for input
  $t^+$ we obtain that all these $p'_k$ tasks are allocated to the
  same leaf player, incurring a cost of at least $\xi$ for each of
  them (because $Q_k$ is standard leaf, hence $\bar
  s_{p_k'}>\xi$). Therefore the makespan achieved by the mechanism is
  at least $(n^2/\xi)\cdot \xi \geq n^2$.  On the other hand, since
  tasks in $[m]\setminus P$ are trivial, we have that the makespan of the
  optimal allocation is at most $\max_{p_i\in P}\bar s_{p_i}\leq 1.$
  This would imply approximation ratio higher than $n,$
  contradiction. \end{proof}

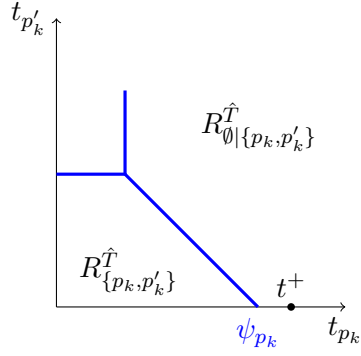
\begin{figure}
\centering
  \begin{tikzpicture}[scale=0.48]

    \draw[->] (0,0) -- (8,0) node[anchor=north] {$t_{p_k}$};
    \draw[->] (0,0) -- (0,8) node[anchor=east] {$t_{{p'_k}}$};
    \draw[very thick, blue] (1.9,6) -- (1.9,3.68) -- (5.58,0) node[anchor=north]{$\psi_{p_k}$};%
    \draw[very thick, blue] (0, 3.68) -- (1.9,3.68) ;

\draw (2,1) node[anchor=center] {$R^{\hat T}_{\{p_k,p'_k\}}$}; 
\draw (7.5,5) node[anchor=east] {$R^{\hat T}_{\emptyset|\{p_k,p'_k\}}$};
\draw (6.5,0) node[anchor=south] {$t^+$};
 \fill (6.5cm,0cm)    circle (3pt);

  \end{tikzpicture}

  \caption{Proof of Lemma~\ref{lem:noHalfBundling}: The slice $(p_k,p_k')$ in $\hat T$ is half-bundling at
    $p_k$. For input $(t^+,\bar s)$ neither $p_k$ nor $p'_k$ is allocated to the root.}
\label{fig:case2-half-bundling}
\end{figure}

\subsubsection{The main inductive lemma.}
\label{sec:ind-main-lemma}
The results of Section~\ref{sec:ind-p_k-p'k-slice} culminate in the following lemma.

\begin{lemma} \label{lemma:many-good-sets} Suppose that
  $P=(p_1,\ldots,p_k)$ is a \slicedoffx\ box for $T.$ Then either $P$ is a box for $T$ or %
  the number of different siblings $p_k'$ of $p_k$ for which 
$(P_{-k},p_k')$ is a box for $T$,  is at least $\ell-2n^3/\xi.$
\end{lemma}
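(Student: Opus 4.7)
The trivial disjunct (that $P$ is a box) would conclude the statement immediately, so I would focus on the substantive case where $P$ is strictly a \slicedoffx\ box and I must bound by $2n^3/\xi$ the number of \emph{bad} siblings $p_k'$, i.e., those for which $(P_{-k},p_k')$ is not a box.

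My plan is to deploy the three main ingredients developed in Section~\ref{sec:ind-p_k-p'k-slice} as follows. First, I would pick $k$ test instances $T^0,T^1,\ldots,T^{k-1}$ with rational coordinates (so the continuity requirement is preserved) in the strict interior of $[T_\nu,T_{\nu'}]$, positioned exactly as in the hypothesis of Lemma~\ref{lemma:linearBoundary}: $T^j$ differs from $T^0$ predominantly in coordinate $p_j$ by an amount $h=(5/8)4^k\nu$, with all other coordinates perturbed by less than $\eta<\nu/(4n)$. The only strict coordinate-wise orderings in this collection are the $k-1$ pairs $(T^0,T^j)$ for $j\ge 1$; the $T^j$'s are pairwise incomparable for $j\ge 1$.

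Second, for each bad sibling $p_k'$ and each test instance $T^j$, I would classify the $(p_k,p_k')$-slice mechanism via Theorem~\ref{theo:addchar} and Observation~\ref{obs:possibleFigures} into one of three types: \emph{crossing} (relaxed task-independent, forced to be crossing at the rational $T^j$ by the continuity requirement), \emph{linear} (non-crossing with truncated linear $\psi_{p_k}$, the generic case for a relaxed affine minimizer), or \emph{half-bundling at $p_k$} (the fully bundling configuration where linearity of $\psi_{p_k}$ can fail). The degenerate constant-mechanism case is ruled out by Lemma~\ref{prop:noconstant}.

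Third, I would extract a half-bundling witness for each bad sibling. The contrapositive of Lemma~\ref{lem:crossing} says a bad $p_k'$ cannot admit two strictly ordered crossing instances; given the ordering structure above, this forces either $T^0$ to be non-crossing or all of $T^1,\ldots,T^{k-1}$ to be non-crossing. Meanwhile, Lemma~\ref{lemma:linearBoundary} forbids all $k$ test instances from simultaneously being linear. Combining these, for each bad sibling at least one $T^j$ must fall into the half-bundling type. Finally, applying Lemma~\ref{lem:noHalfBundling} at each of the $k$ test instances and union-bounding caps the number of bad siblings by $k\cdot n^2/\xi\le 2n^3/\xi$, as required.

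The hard part will be handling the borderline case in which $T^0$ is crossing (hence not linear) while $T^1,\ldots,T^{k-1}$ are all linear: here Lemma~\ref{lemma:linearBoundary}'s hypothesis of having $k$ linear instances is not quite met, and no half-bundling witness appears among the $T^j$. I expect to resolve this either by slightly perturbing $T^0$ to a linear configuration, exploiting that the set of $(t^0_P)$ for which the slice is relaxed task independent is forced by the continuity requirement into a lower-dimensional exceptional set, or by introducing one extra test instance strictly below $T^0$ so that $T^0$-crossing propagates the non-crossing condition on both sides. The factor $2$ in the stated bound $2n^3/\xi$ leaves exactly the slack needed to absorb this additional case.
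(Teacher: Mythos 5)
Your overall strategy (fix test instances in $[T_\nu,T_{\nu'}]$, classify the $(p_k,p_k')$-slice at each one as crossing / linear / half-bundling, charge half-bundling siblings via Lemma~\ref{lem:noHalfBundling} plus a union bound, rule out the all-linear pattern via Lemma~\ref{lemma:linearBoundary}, and convert crossing into a box via Lemma~\ref{lem:crossing}) is exactly the paper's, but with only $k$ single test points the case analysis does not close, and the gap is larger than the one you flag. Besides your borderline case ($T^0$ crossing, $T^1,\ldots,T^{k-1}$ linear), the configuration ``$T^0$ linear, some $T^{j}$ crossing, the rest linear'' is also consistent with a bad sibling: there is no pair of \emph{strictly ordered} crossing instances (the elevated $T^j$'s are pairwise incomparable, so a single crossing $T^j$ is useless for Lemma~\ref{lem:crossing}), not all $k$ instances are linear, and no instance is half-bundling, so such a sibling is neither charged nor shown to be good. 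Your patch (b) (one extra base below $T^0$) only addresses the flagged case and leaves this one open, and your patch (a) rests on a false premise: crossing is the \emph{generic} (relaxed task-independent) behaviour, not an exceptional set, and the slice mechanism's type genuinely varies with $t^0_P$ because the $2\times 2$ characterization is applied for fixed values of the other tasks; moreover the test instances must be fixed before the union bound over siblings, since Lemma~\ref{lem:noHalfBundling} counts half-bundling siblings at a \emph{fixed} instance, so you cannot perturb $T^0$ sibling by sibling.

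The paper's missing idea is to double the test points: it fixes $2k$ instances, namely $k$ strictly nested pairs $T_\nu\leq T_j\leq T_j^*\leq T_{\nu'}$ with $t^{j*}_{p_i}=t^{j}_{p_i}+\epsilon$ in \emph{every} coordinate of $P$, arranged so that the relative-position hypothesis of Lemma~\ref{lemma:linearBoundary} holds for any selection of one instance from each pair. For a sibling with no half-bundling at any of the $2k$ points, Lemma~\ref{lemma:linearBoundary} then forces some index $j$ at which \emph{both} $T_j$ and $T_j^*$ fail to be truncated linear; excluding constants (Lemma~\ref{prop:noconstant}) and discontinuities (continuity requirement at rational points), both must be crossing, and since $T_j<T_j^*$ strictly coordinate-wise, Lemma~\ref{lem:crossing} applies to that single pair and certifies that $(P_{-k},p_k')$ is a box. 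This is what makes ``crossing'' usable without comparing across the base/elevated structure, and it is also where the constant comes from: the union bound over $2k\leq 2n$ fixed instances gives the $2kn^2/\xi\leq 2n^3/\xi-1$ half-bundling siblings, not slack reserved for an extra borderline case as you conjecture. Without this pairing device (or an equivalent way to obtain two strictly ordered crossing instances per surviving sibling), your argument as written does not prove the lemma.
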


\begin{proof}
  Suppose that $P$ is not a box, and take a random sibling $p_k'$ of $p_k.$
  For $j=0, 1, \ldots k-1,$ we fix  instances $T_j, T_j^* $  such that 
  \begin{itemize}
  
  \item[(i)] $T_\nu \leq T_j\leq T_j^*\leq T_{\nu'}$ and all $t$-coordinates of $T_j, \,T_j^*$ are rational;

  \item[(ii)] $t^{j*}_{p_i}=t^{j}_{p_i}+\epsilon$ for some small $\epsilon$ for every $i\in[k];$
  \item[(iii)] the relative position of $T_0, T_1, \ldots T_{k-1}$ is as defined in the statement of Lemma~\ref{lemma:linearBoundary}, moreover this property holds even if we replace any subset of the $T_j$ by the respective $T_j^*$ instances.
  \end{itemize}
  Observe first that such instances exist:  in  Lemma~\ref{lemma:linearBoundary} $h=(5/8)\cdot 4^{k}\nu,$ and $\eta$ is arbitrarily small; on the other hand, $t^{\nu'}_{p_i}-t^{\nu}_{p_i}=(4^k-4^{k-1})\nu=(3/4)4^k\nu>(5/8)4^k\nu.$ So there is enough space for instances in the prescribed relative position.

  By Lemma~\ref{lem:noHalfBundling} and using the union bound, the
  number of siblings $p_k'\in Q_k\setminus \{p_k\}$ for which the
  $(p_k,p_k')$ slice mechanism is half-bundling in at least one of
  these $2k$ instances, is less than $2kn^2/\xi\leq 2n^3/\xi-1.$
  Therefore, most of the siblings $p'_k$ (at least $\ell-1-(2n^3/\xi-1)=\ell-2n^3/\xi$ of them)
  do not have a half-bundling boundary with $p_k$ in any of the
  $2k$ fixed instances.  We focus now on such a sibling $p_k'.$ By
  Lemma~\ref{lemma:linearBoundary}, there must exist at least one
  $j\in\{0,1,2,\ldots, k-1\}$ such that in both $T_j$ and $T_j^*$ the
  $\psi[T_j]_{p_k}(s_{p_k})$ and $\psi[T_j^*]_{p_k}(s_{p_k})$
  functions are not truncated linear. Altogether, this excludes
  (non-task independent) relaxed affine minimizer mechanisms,
  constant, and one-dimensional bundling mechanisms as well as
  discontinuities in these two points. The remaining possibility is
  that the allocation of the slice $(p_k,p_k')$ is crossing for both
  $T_j$ and $T_j^*.$ Then, by Lemma~\ref{lem:crossing} the star
  $(P_{-k}, p_k')$ is a box for $T.$\end{proof}

\subsection{Existence of a box of size $n-1.$}
\label{sub:existence}
We are now ready to show the main result of this section
(Corollary~\ref{cor:bound-on-b}) that proves existence of a box of
size $n-1$. The proof is by induction on the size of the box, and the
next definition is handy to define `bad' events on $k$ leaves.

\begin{definition}[Probability $b_k$] Fix a mechanism, and let $1\leq k\leq n-1.$ Suppose that $T$ is a standard instance for a set $\mathcal C$ of $k$ leaves. We denote by $b(T,\mathcal C)$ the probability that a random star $P$ of $k$ tasks from $\mathcal C$ is not a box. Let $b_k$ denote the supremum of all possible $b(T,\mathcal C)$ for all choices of $T$ and $\mathcal C$ with $|\mathcal C|=k$ for the given mechanism.
\end{definition}
Next we upper bound the probability that a random star $P$ of $k$ tasks is not a \slicedoff\ box, in terms of  $b_{k-1}.$

\begin{lemma} \label{lemma:potentially-good} Fix any instance $T$
  which is standard for a set $\cal C$ of $k\geq 3$ leaves, and let
  $P$ be a random star of $k$ tasks from $\cal C$. The
  probability that $P$ is not a \slicedoff\ box for $T$ is at most $(5n/\nu-1)b_{k-1}.$
 
\end{lemma}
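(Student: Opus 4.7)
The plan is to apply a union bound over the families of ``bad'' events corresponding to the two bullets in the definition of \slicedoff\ box (Definition~\ref{def:potentially-good}). The first bullet gives $k$ events $A_1,\ldots,A_k$, where $A_i$ is the event that $P_{-i}$ fails to be a box for $T$. The second bullet gives at most $4n/\nu$ events $B_1,\ldots,B_{4n/\nu}$, where $B_q$ is the event that $\bar s_{p_k}>q\nu/(4n)$ and $P_{-k}$ fails to be a box for the instance $T^q$ obtained from $T$ by replacing $\bar s_{p_k}$ with $\bar s_{p_k}-q\nu/(4n)$ (keeping $t_{p_k}=0$). Failing to be a \slicedoff\ box is exactly $\bigcup_i A_i\cup\bigcup_q B_q$, so it suffices to show that each summand is at most $b_{k-1}$ and then sum.

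For the first family, observe that $T$ is automatically a standard instance for the $k-1$ leaves $\mathcal C\setminus\{Q_i\}$: all task values in these leaves still lie in $(\xi,1)$, all $t$-values remain $0$, and the continuity requirement is just the same property of the same instance. Since $P$ is a uniformly random star on $\mathcal C$ (one task drawn uniformly and independently from each $C_j$), the projection $P_{-i}$ is a uniformly random star on $\mathcal C\setminus\{Q_i\}$. Hence the definition of $b_{k-1}$ yields $\Pr[A_i]\le b_{k-1}$, contributing a total of $k\cdot b_{k-1}$.

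For the second family, fix $q$ and condition on the choice of $p_k$. The instance $T^q$ differs from $T$ only by subtracting the rational number $q\nu/(4n)$ from a single coordinate belonging to the leaf $Q_k$; therefore $T^q$ is a rational translation of $T$, and by Definition~\ref{def:continuity} it inherits the continuity requirement. Since the retained leaves $\mathcal C\setminus\{Q_k\}$ are untouched, $T^q$ is a standard instance for these $k-1$ leaves. The star $P_{-k}$ is independent of $p_k$ and is uniformly distributed on $\mathcal C\setminus\{Q_k\}$, so conditionally on $p_k$ we obtain $\Pr[P_{-k}\text{ not a box for }T^q\mid p_k]\le b_{k-1}$; averaging over $p_k$ gives $\Pr[B_q]\le b_{k-1}$, and summing over $q$ produces $(4n/\nu)\cdot b_{k-1}$.

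Adding the two contributions yields the bound $(k+4n/\nu)\cdot b_{k-1}$. Since $\mathcal C$ consists of leaves of a multi-star on $n$ machines (one of which is the root) we have $k\le n-1$, and since $\nu<1$ we have $n\le n/\nu$, whence $k+4n/\nu\le(n-1)+4n/\nu\le 5n/\nu-1$, which is the claimed bound. The proof is essentially bookkeeping via the union bound; the only point that requires a short verification is that rational shifts preserve the continuity requirement, so that $T^q$ is an admissible standard instance against which $b_{k-1}$ may be invoked.
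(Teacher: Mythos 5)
Your proof is correct and follows essentially the same route as the paper: a union bound over the $k$ events from the first bullet and the at most $4n/\nu$ events from the second bullet of the \slicedoff-box definition, each bounded by $b_{k-1}$, giving $(k+4n/\nu)b_{k-1}\le(5n/\nu-1)b_{k-1}$. Your extra verifications (that $T$ and the shifted instances remain standard for the retained leaves, and that the rational shift $q\nu/(4n)$ preserves the continuity requirement) are points the paper leaves implicit, and they are handled correctly.
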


\begin{proof} Take a random star $P=\{p_1, p_2,\ldots , p_k\}$ of $k$ tasks from $\mathcal C.$ For each $i\in [k]$ the probability that $P_{-i}$ is not a box for $T$ is at most $b_{k-1}.$ Also the probability that $P_{-k}$ is not a box when we set task $p_k$ to $(t_{p_k}=0, s_{p_k}=\bar s_{p_k}- q\nu/(4n)),$ is at most $b_{k-1}$ for (at most) each of  $q= 1, \ldots , 4n/\nu. $ By the union bound, the probability that some of these bad events happens is at most $(k+4n/\nu)b_{k-1}< (n+4n/\nu)b_{k-1}<(5n/\nu-1)b_{k-1}.$  \end{proof}

Finally, the next lemma combines the obtained probabilities.

\begin{lemma} \label{lemma:recurrence}
  Fix a mechanism with approximation ratio less than $n,$ then for $3\leq k\leq n-1$ $$b_k\leq \left (\frac{5n}{\nu}-1 \right )b_{k-1}+ \frac{2n^3}{\xi \sqrt{\ell} }.$$ 
\end{lemma}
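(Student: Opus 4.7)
The plan is to decompose the event that a uniformly random star $P=(p_1,\ldots,p_k)$ (drawn by picking one task uniformly from each leaf of $\mathcal C=\{Q_1,\ldots,Q_k\}$) fails to be a box into two disjoint sub-events: (i) $P$ is not a \slicedoff\ box, and (ii) $P$ is a \slicedoffx\ box (i.e., a \slicedoff\ box that is not itself a box). The first contributes at most $(5n/\nu-1)\,b_{k-1}$ directly by Lemma~\ref{lemma:potentially-good}, which already matches the first term of the recurrence, so the entire task reduces to showing $\Pr[\text{(ii)}]\leq 2n^3/(\xi\sqrt{\ell})$.

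For this, I will use a slice-wise counting argument. Condition on the choice of $(p_1,\ldots,p_{k-1})\in Q_1\times\cdots\times Q_{k-1}$ and consider the $\ell$ candidate stars $\mathcal S=\{(p_1,\ldots,p_{k-1},q):q\in Q_k\}$. I claim that $\mathcal S$ contains at most $2n^3/\xi$ \slicedoffx\ boxes. If no element of $\mathcal S$ is a \slicedoffx\ box, the claim is trivial. Otherwise, pick any \slicedoffx\ box $P^{*}=(p_1,\ldots,p_{k-1},p_k^{*})\in\mathcal S$ and apply Lemma~\ref{lemma:many-good-sets} to $P^{*}$: among the $\ell-1$ siblings $q\in Q_k\setminus\{p_k^{*}\}$, at least $\ell-2n^3/\xi$ of them yield a box $(P^{*}_{-k},q)$, so at most $(\ell-1)-(\ell-2n^3/\xi)=2n^3/\xi-1$ siblings yield non-boxes; adding $P^{*}$ itself (a non-box) gives at most $2n^3/\xi$ non-boxes in $\mathcal S$, and hence at most that many \slicedoffx\ boxes.

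Averaging this per-slice bound over the uniform choice of $q=p_k\in Q_k$ and then over $(p_1,\ldots,p_{k-1})$ gives $\Pr[\text{(ii)}]\leq (2n^3/\xi)/\ell\leq 2n^3/(\xi\sqrt{\ell})$. Summing with the bound on (i) and taking the supremum over all standard instances $T$ and all $k$-leaf sets $\mathcal C$ yields $b_k\leq (5n/\nu-1)\,b_{k-1}+2n^3/(\xi\sqrt{\ell})$, as claimed. The only subtle step is the slice-wise counting bound: it works because a single invocation of Lemma~\ref{lemma:many-good-sets} on any one representative \slicedoffx\ box $P^{*}$ controls the non-box count across the whole slice $\mathcal S$, since the $\ell-1$ siblings of $p_k^{*}$ the lemma quantifies over are precisely the other elements of $\mathcal S$. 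Everything else is routine bookkeeping given Lemmas~\ref{lemma:potentially-good} and~\ref{lemma:many-good-sets}.
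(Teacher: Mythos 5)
Your proof is correct and follows essentially the same route as the paper: condition on the $(k-1)$-prefix $P^*=P_{-k}$, use a single invocation of Lemma~\ref{lemma:many-good-sets} to control the whole slice of $\ell$ extensions, and combine with Lemma~\ref{lemma:potentially-good}. The only difference is bookkeeping: the paper lower-bounds the conditional probability of a box multiplicatively, $\mathbb P(X\mid E_{P^*})\geq (1-2n^3/(\xi\ell))\,\mathbb P(Y\mid E_{P^*})$, whereas you bound additively the per-slice number of strictly \slicedoff\ boxes by $2n^3/\xi$; both yield the same recurrence (and both end by loosening $1/\ell$ to $1/\sqrt{\ell}$).
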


\begin{proof} Let $T$ be a standard instance for some set of leaves $\mathcal C=\{Q_1,\ldots , Q_k\},$ and let $\mathcal C_{-k}=\{Q_1, \ldots Q_{k-1}\}.$ Consider a star $P^*=\{p_1,\ldots, p_{k-1}\}$ from $\mathcal C_{-k}.$ %
  Let's call the sets $P^*\cup\{p_k\}$, $p_k\in Q_k$, extensions of $P^*$.
  
  Let $0\leq y\leq \ell$ be the number of the extensions of $P^*$ that are
  \slicedoff\ boxes. How many of these extensions are boxes? Either all $y$
  extensions are boxes, in which case the number of box extensions is at least
  $y$ or some extension is not a box and $\ell-2n^3/\xi$ other extensions are boxes
  (Lemma~\ref{lemma:many-good-sets}). Therefore the number of extensions that
  are boxes is at least  $\min\{y, \ell-2n^3/\xi\}\geq y (\ell-2n^3/\xi)/\ell= y (1-2n^3/\ell\xi).$ Since this holds for every star $P^*$, the probability that a random star of size $k$ is a box is at
  least $1-2n^3/\xi\ell$ times the probability that a random star of size
  $k$ is a \slicedoff\ box.
  
  For a rigorous proof, let ${\mathcal R}$ be the set of different stars of $k-1$ tasks from $\mathcal C_{-k}.$ 
We select a random star $P$ of $k$ tasks from $\mathcal C,$ and define the events:
$\quad Y:\,$ "$P$ is a \slicedoff\ box"; $\quad X:\,$  "$P$ is a box"; $\quad E_{P^*}:\,$  "$P_{-k}=P^*\,$".  Note that  in general $X\not\subset Y.$
By the above argument $\mathbb P(X|E_{P^*})\geq (1-2n^3/\xi\ell)\cdot \mathbb P(Y|E_{P^*}).$ Therefore we obtain

\begin{align*}\mathbb P(X)=\sum_{P^*\in \mathcal R} \mathbb P(X| E_{P^*})\mathbb P(E_{P^*})\\
\geq (1-2n^3/\xi\ell) \sum_{P^*\in \mathcal R} \mathbb P(Y|E_{P^*}) \mathbb P(E_{P^*})\\
=(1-2n^3/\xi\ell)\cdot \mathbb P(Y).\end{align*}
By Lemma~\ref{lemma:potentially-good}, the
  probability that a random star of size $k$ is a \slicedoff\ box is at
  least $1-(5n/\nu - 1) b_{k-1}; $ so we get $$\mathbb P(X)\geq \left (1-\frac{2n^3}{\xi\ell}\right )\left [1-\left (\frac{5n}{\nu} - 1\right ) b_{k-1}\right ]> 1-\frac{2n^3}{\xi\ell}-\left (\frac{5n}{\nu} - 1\right ) b_{k-1}.$$  
This further yields $$b_k=1-\mathbb P(X)< \left (\frac{5n}{\nu} - 1\right ) b_{k-1}+ \frac{2n^3}{\xi\ell} < \left (\frac{5n}{\nu} - 1\right ) b_{k-1}+ \frac{2n^3}{\xi\sqrt{\ell}}.$$
\end{proof}

\begin{corollary} \label{cor:bound-on-b}
  Fix a mechanism with approximation ratio less than $n,$ then for $2\leq k\leq n-1$ $$b_k\leq \left (\frac{5n}{\nu}\right )^{k-2}\cdot \frac{2n^3}{\xi \sqrt{\ell} }.$$ 
\end{corollary}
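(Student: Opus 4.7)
The plan is to prove the corollary by induction on $k$, using Theorem~\ref{cor:b2} as the base case and Lemma~\ref{lemma:recurrence} to drive the inductive step. Let me abbreviate $C := 2n^3/(\xi\sqrt{\ell})$ and $M := 5n/\nu$; the claim becomes $b_k \leq M^{k-2}C$.

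For the base case $k=2$, Theorem~\ref{cor:b2} gives $b_2 \leq 2/\sqrt{\ell}$. Since $\xi \in (0,1)$ and $n \geq 2$, we have $C = 2n^3/(\xi\sqrt{\ell}) \geq 2/\sqrt{\ell} \geq b_2$, so the bound $b_2 \leq M^{0}C = C$ holds.

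For the inductive step, suppose $b_{k-1} \leq M^{k-3}C$. Applying Lemma~\ref{lemma:recurrence},
\begin{align*}
b_k &\leq (M-1)\,b_{k-1} + C \\
    &\leq (M-1)\,M^{k-3}C + C \\
    &= M^{k-2}C - \bigl(M^{k-3} - 1\bigr)C.
\end{align*}
Since $M = 5n/\nu > 1$ (indeed $\nu < \xi/(n^2 4^n) < 1$ by Definition~\ref{def:parameters}, so $M$ is very large) and $k \geq 3$, we have $M^{k-3} \geq 1$, hence the correction term is nonnegative and we conclude $b_k \leq M^{k-2}C$, as desired.

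There is no real obstacle here: the corollary is just the explicit solution of the linear recurrence provided by Lemma~\ref{lemma:recurrence}, anchored at the quadratic-size bound of Theorem~\ref{cor:b2}. The only minor points to check are that $C$ dominates the base bound $2/\sqrt{\ell}$ (which follows from $\xi < 1$ and $n \geq 2$) and that $M \geq 1$ so that the telescoping term in the inductive step has the right sign.
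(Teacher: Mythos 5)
Your proof is correct and follows essentially the same route as the paper: base case from Theorem~\ref{cor:b2} (noting $2/\sqrt{\ell}\leq 2n^3/(\xi\sqrt{\ell})$ since $\xi<1$, $n\geq 2$), then unrolling the recurrence of Lemma~\ref{lemma:recurrence} with the same algebraic simplification using $(5n/\nu)^{k-3}\geq 1$. No gaps.
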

\begin{proof} For $k=2,$ using Theorem~\ref{cor:b2}  $b_2\leq 3/\sqrt{\ell}< 2n^3/(\xi\sqrt{\ell}),$ so the base case holds. Now for $k\geq 3$ by induction we obtain
\begin{align*}
b_k & \leq \left (\frac{5n}{\nu}-1 \right )b_{k-1}\,+ \,\frac{2n^3}{\xi \sqrt{\ell} }\\
& \leq 
\left (\frac{5n}{\nu}\,-\, 1 \right )\left (\frac{5n}{\nu}\right )^{k-3}\cdot \frac{2n^3}{\xi \sqrt{\ell} }\,+ \,\frac{2n^3}{\xi \sqrt{\ell} }\\
& = 
\left (\left (\frac{5n}{\nu}\right )^{k-2}\,-\,\left (\frac{5n}{\nu}\right )^{k-3}\,+\,1 \right )\cdot \frac{2n^3}{\xi \sqrt{\ell} }\\
& \leq  
\left (\frac{5n}{\nu}\right )^{k-2}\cdot \frac{2n^3}{\xi \sqrt{\ell} }.
\end{align*}
\end{proof}

Setting $k=n-1,$ and using $\nu<\xi,$ we obtain the following rough lower bound for $\ell$.

\begin{corollary} If $\ell >\left (\frac{5n}{\nu}\right )^{2n},$ we get $b_{n-1}<1,$ and there exists at least one box of size $n-1$ for every instance $T=(t,\bar s)$ standard for $\{C_i\}_{i\in [n-1]}.$ 
\end{corollary}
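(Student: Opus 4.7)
The plan is to derive the bound directly from Corollary~\ref{cor:bound-on-b} applied at $k=n-1$, followed by a straightforward probabilistic argument. Substituting $k=n-1$ yields
\[
  b_{n-1} \;\leq\; \left(\frac{5n}{\nu}\right)^{n-3} \cdot \frac{2n^3}{\xi\sqrt{\ell}}.
\]
Using the hypothesis $\ell > (5n/\nu)^{2n}$, we have $\sqrt{\ell} > (5n/\nu)^{n}$, so
\[
  b_{n-1} \;<\; \frac{(5n/\nu)^{n-3}}{(5n/\nu)^{n}} \cdot \frac{2n^3}{\xi} \;=\; \frac{2n^3}{\xi} \cdot \left(\frac{\nu}{5n}\right)^{3} \;=\; \frac{2\nu^3}{125\,\xi}.
\]
From Definition~\ref{def:parameters} we have $\nu < \xi/(n^2 4^n) < \xi$, hence $\nu^3 < \xi^3 \leq \xi$ (recall $\xi < 1$), which gives $b_{n-1} < 2/125 < 1$.

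Having established that $b_{n-1}<1$, the conclusion follows by the probabilistic method. Indeed, by definition $b_{n-1}$ is the supremum over all standard instances $T$ for a set $\cal C$ of $n-1$ leaves of the probability that a uniformly random star of $n-1$ tasks from $\cal C$ is not a box. Since this supremum is strictly less than $1$, for every standard $T$ the probability that a random star \emph{is} a box is strictly positive, and consequently at least one star of $n-1$ tasks must be a $\delta$-box with $\delta=4^{n-1}\nu$. No step here is an obstacle; the work has already been done in Corollary~\ref{cor:bound-on-b} and in the inductive buildup that preceded it, so the statement amounts to a verification that the chosen lower bound on the multiplicity $\ell$ is large enough to defeat the exponential blow-up $(5n/\nu)^{n-3}$ in the recurrence.
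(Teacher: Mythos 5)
Your proposal is correct and follows exactly the paper's (implicit) argument: plug $k=n-1$ into Corollary~\ref{cor:bound-on-b}, use $\sqrt{\ell}>(5n/\nu)^n$ together with $\nu<\xi<1$ to get $b_{n-1}<1$, and conclude existence of a box by the probabilistic method via the definition of $b_{n-1}$ as a supremum over standard instances. The arithmetic (the bound $2\nu^3/(125\xi)<1$) is a slightly more explicit version of the paper's remark that $\nu<\xi$ suffices, but it is the same proof.
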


\bibliographystyle{plain} 
\bibliography{biblio}

\begin{thebibliography}{10}

\bibitem{AT01}
Aaron Archer and {\'E}va Tardos.
\newblock Truthful mechanisms for one-parameter agents.
\newblock In {\em Proc. of the 42nd IEEE Symposium on Foundations of Computer
  Science (FOCS)}, pages 482--491, 2001.

\bibitem{ADL09}
Itai Ashlagi, Shahar Dobzinski, and Ron Lavi.
\newblock Optimal lower bounds for anonymous scheduling mechanisms.
\newblock {\em Mathematics of Operations Research}, 37(2):244--258, 2012.

\bibitem{Auletta0P15}
Vincenzo Auletta, George Christodoulou, and Paolo Penna.
\newblock Mechanisms for scheduling with single-bit private values.
\newblock {\em Theory Comput. Syst.}, 57(3):523--548, 2015.

\bibitem{BCR+06}
Sushil Bikhchandani, Shurojit Chatterji, Ron Lavi, Ahuva Mu'alem, Noam Nisan,
  and Arunava Sen.
\newblock Weak monotonicity characterizes deterministic dominant strategy
  implementation.
\newblock {\em Econometrica}, 74(4):1109--1132, 2006.

\bibitem{ChawlaHMS13}
Shuchi Chawla, Jason~D. Hartline, David~L. Malec, and Balasubramanian Sivan.
\newblock Prior-independent mechanisms for scheduling.
\newblock In {\em {STOC}}, pages 51--60. {ACM}, 2013.

\bibitem{ChenDZ15}
Xujin Chen, Donglei Du, and Luis~Fernando Zuluaga.
\newblock Copula-based randomized mechanisms for truthful scheduling on two
  unrelated machines.
\newblock {\em Theory Comput. Syst.}, 57(3):753--781, 2015.

\bibitem{CKK10}
George Christodoulou, Elias Koutsoupias, and Annam{\'a}ria Kov{\'a}cs.
\newblock Mechanism design for fractional scheduling on unrelated machines.
\newblock {\em ACM Transactions on Algorithms}, 6(2), 2010.

\bibitem{CKK20}
George Christodoulou, Elias Koutsoupias, and Annam{\'{a}}ria Kov{\'{a}}cs.
\newblock On the {N}isan-{R}onen conjecture for submodular valuations.
\newblock In {\em {STOC}}, pages 1086--1096. {ACM}, 2020.
\newblock (Full version arXiv:1907.12733).

\bibitem{CKK21b}
George Christodoulou, Elias Koutsoupias, and Annam{\'{a}}ria Kov{\'{a}}cs.
\newblock On the {N}isan-{R}onen conjecture.
\newblock In {\em {FOCS}}, pages 839--850. {IEEE}, 2021.

\bibitem{CKK21}
George Christodoulou, Elias Koutsoupias, and Annam{\'{a}}ria Kov{\'{a}}cs.
\newblock Truthful allocation in graphs and hypergraphs.
\newblock In {\em {ICALP}}, volume 198 of {\em LIPIcs}, pages 56:1--56:20.
  Schloss Dagstuhl - Leibniz-Zentrum f{\"{u}}r Informatik, 2021.

\bibitem{ChristodoulouKV08}
George Christodoulou, Elias Koutsoupias, and Angelina Vidali.
\newblock A characterization of 2-player mechanisms for scheduling.
\newblock In {\em {ESA}}, volume 5193 of {\em Lecture Notes in Computer
  Science}, pages 297--307. Springer, 2008.

\bibitem{ChrKouVid09}
George Christodoulou, Elias Koutsoupias, and Angelina Vidali.
\newblock A lower bound for scheduling mechanisms.
\newblock {\em Algorithmica}, 55(4):729--740, 2009.

\bibitem{CK13}
George Christodoulou and Annam{\'a}ria Kov{\'a}cs.
\newblock A deterministic truthful ptas for scheduling related machines.
\newblock {\em SIAM J. Comput.}, 42(4):1572--1595, 2013.

\bibitem{Cla71}
Edward~H. Clarke.
\newblock Multipart pricing of public goods.
\newblock {\em Public Choice}, 8, 1971.

\bibitem{Cunningham1971}
F.~Cunningham and Nathaniel Grossman.
\newblock On {Young}'s {Inequality}.
\newblock {\em The American Mathematical Monthly}, 78(7):781--783, August 1971.
\newblock Publisher: Taylor \& Francis.

\bibitem{DaskalakisW15}
Constantinos Daskalakis and S.~Matthew Weinberg.
\newblock Bayesian truthful mechanisms for job scheduling from bi-criterion
  approximation algorithms.
\newblock In {\em SODA}, pages 1934--1952. SIAM, 2015.

\bibitem{DS20}
Shahar Dobzinski and Ariel Shaulker.
\newblock {Improved Lower Bounds for Truthful Scheduling}, 2020.

\bibitem{DS08}
Shahar Dobzinski and Mukund Sundararajan.
\newblock On characterizations of truthful mechanisms for combinatorial
  auctions and scheduling.
\newblock In {\em {EC}}, pages 38--47. {ACM}, 2008.

\bibitem{giannakopoulos2020}
Yiannis Giannakopoulos, Alexander Hammerl, and Diogo Po{\c{c}}as.
\newblock A new lower bound for deterministic truthful scheduling.
\newblock In {\em {SAGT}}, volume 12283 of {\em Lecture Notes in Computer
  Science}, pages 226--240. Springer, 2020.

\bibitem{GiannakopoulosK17}
Yiannis Giannakopoulos and Maria Kyropoulou.
\newblock The {VCG} mechanism for {B}ayesian scheduling.
\newblock {\em {ACM} Trans. Economics and Comput.}, 5(4):19:1--19:16, 2017.

\bibitem{Ram90}
Ronald~L. Graham, Bruce~L. Rothschild, and Joel~H. Spencer.
\newblock {\em Ramsey Theory, 2nd Edition}.
\newblock Wiley Series in Discrete Mathematics and Optimization, 1990.

\bibitem{Gro73}
Theodore Groves.
\newblock Incentives in teams.
\newblock {\em Econometrica}, 41(4):617--631, 1973.

\bibitem{KV07}
Elias Koutsoupias and Angelina Vidali.
\newblock A lower bound of 1+{$\phi$} for truthful scheduling mechanisms.
\newblock {\em Algorithmica}, pages 1--13, 2012.

\bibitem{LaviS09}
Ron Lavi and Chaitanya Swamy.
\newblock Truthful mechanism design for multidimensional scheduling via cycle
  monotonicity.
\newblock {\em Games and Economic Behavior}, 67(1):99--124, 2009.

\bibitem{LuYu08}
Pinyan Lu and Changyuan Yu.
\newblock An improved randomized truthful mechanism for scheduling unrelated
  machines.
\newblock In {\em {STACS}}, volume~1, pages 527--538, 2008.

\bibitem{LuY08a}
Pinyan Lu and Changyuan Yu.
\newblock Randomized truthful mechanisms for scheduling unrelated machines.
\newblock In {\em {WINE}}, volume 5385 of {\em Lecture Notes in Computer
  Science}, pages 402--413. Springer, 2008.

\bibitem{MualemS18}
Ahuva Mu'alem and Michael Schapira.
\newblock Setting lower bounds on truthfulness.
\newblock {\em Games and Economic Behavior}, 110:174--193, 2018.

\bibitem{NR01}
Noam Nisan and Amir Ronen.
\newblock Algorithmic mechanism design.
\newblock {\em Games and Economic Behavior}, 35:166--196, 2001.

\bibitem{SY05}
Michael~E. Saks and Lan Yu.
\newblock Weak monotonicity suffices for truthfulness on convex domains.
\newblock In {\em {EC}}, pages 286--293. {ACM}, 2005.

\bibitem{Vic61}
William Vickrey.
\newblock Counterspeculations, auctions and competitive sealed tenders.
\newblock {\em Journal of Finance}, 16:8--37, 1961.

\bibitem{Vid09}
Angelina Vidali.
\newblock The geometry of truthfulness.
\newblock In {\em {WINE}}, volume 5929 of {\em Lecture Notes in Computer
  Science}, pages 340--350. Springer, 2009.

\bibitem{Yu09}
Changyuan Yu.
\newblock Truthful mechanisms for two-range-values variant of unrelated
  scheduling.
\newblock {\em Theoretical Computer Science}, 410(21-23):2196--2206, 2009.

\end{thebibliography}

\end{document}